\documentclass[11pt]{article}

\usepackage{setspace,graphicx,epstopdf,amsmath,amsfonts,amssymb,amsthm}
\usepackage{grffile}
\usepackage{marginnote,datetime,enumitem,rotating,fancyvrb}

\usepackage{float}

\usepackage[dvipsnames]{xcolor}

\definecolor{LovelyLink}{RGB}{52, 92, 150}
\definecolor{LovelyCite}{RGB}{72, 130, 100}
\definecolor{LovelyURL}{RGB}{70, 115, 180}

\usepackage[
colorlinks=true,
linkcolor=LovelyLink,
citecolor=LovelyCite,
urlcolor=LovelyURL,
filecolor=LovelyURL,
breaklinks=true,
hypertexnames=false
]{hyperref}

\usepackage{natbib}
\usdate

\usepackage{booktabs}
\usepackage{pdflscape}
\usepackage{chngcntr}
\usepackage{longtable}
\usepackage{subcaption}
\usepackage{adjustbox}
\usepackage{multirow}
\usepackage{bbm}

\usepackage{xcolor}

\usepackage{textcomp}
\usepackage{eurosym}

\usepackage{titlesec}
\newcommand{\periodafter}[1]{#1.}
\titleformat{\subsubsection}[runin]
{\normalfont\bfseries}{\thesubsubsection}{1em}{\periodafter}

\usepackage[margin=1in]{geometry}%
\usepackage[disable]{todonotes}%

\makeatletter\let\chapter\@undefined\makeatother

\newcommand{\textnote}[1]{}

\usepackage{endnotes}

\newtheorem{corollary}{Corollary}
\newtheorem{proposition}{Proposition}

\newtheorem{remark}{Remark}
\newtheorem{lemma}{Lemma}
\newtheorem{assumption}{Assumption}
\newtheorem{theorem}{Theorem}
\newtheorem{definition}{Definition}

\newcommand{\E}{\mathbb{E}}
\newcommand{\Var}{\mathrm{Var}}
\newcommand{\Cov}{\mathrm{Cov}}
\newcommand{\plim}{\overset{p}{\longrightarrow}}

\newcommand{\Tt}{\mathcal{T}_t}
\newcommand{\Ttm}{\mathcal{T}_{t-1}}
\newcommand{\dmax}{\delta_{\max}}
\newcommand{\rbar}{\bar{r}}
\newcommand{\bsadf}{\mathrm{BSADF}}
\newcommand{\gsadf}{\mathrm{GSADF}}
\newcommand{\adf}{\mathrm{ADF}}

\graphicspath{{./figures/}}

\begin{document}
	
	\setlist{noitemsep}
	\onehalfspacing
		
	\title{{\Large\bf Technology Fundamentals and False Bubble Detection: \\Evidence from Dot-Com and AI Episodes}\thanks{All authors contributed equally to this paper. We thank participants at various seminars for helpful comments and suggestions. Haiqiang Chen acknowledges financial support from the National Natural Science Foundation of China (Grant No 72233002), Li Chen acknowledges financial support from the National Natural Science Foundation of China (Grant No 72103173), Difang Huang acknowledges financial support from the National Natural Science Foundation of China (Grant Nos.\ 72503232 and 72574227).  Yuexin Li acknowledges financial support from the National Natural Science Foundation of China (Grant No 72574227). Zhengjun Zhang acknowledges financial support from the National Natural Science Foundation of China (Grant Nos.\ 71991471 and 72442027). All remaining errors are our own.}}
	\author{
		{\bf Haiqiang Chen} \\
		Shenzhen University \\
		{\bf Li Chen} \\
		Xiamen University \\
		{\bf Difang Huang} \\
		Chinese Academy of Sciences \\
		{\bf Yuexin Li} \\
		Renmin University of China \\
		{\bf Zhengjun Zhang} \\
		Beijing University of Chinese Medicine\\Chinese Academy of Sciences\\University of Wisconsin, Madison\\
	}

	\date{\today}

	\maketitle
	\thispagestyle{empty}

	\clearpage
	\pagenumbering{arabic}
	
	\doublespacing
	
	
	\begin{center}{\Large\bf Technology Fundamentals and False Bubble Detection:\\ Evidence from Dot-Com and AI Episodes}\end{center}
	
	\vspace*{0.5in}
	
	\centerline{\bf Abstract}

	We show that widely used bubble tests, most prominently the PSY framework, suffer severe size distortion when fundamentals incorporate general-purpose technology adoption. Embedding a hump-shaped technology shock in the Campbell-Shiller present-value model, we prove that the fundamental price becomes locally explosive during adoption, thereby altering the asymptotic null distribution of the test statistic and causing the standard bubble test to overreject. We propose a technology-adjusted diagnostic that removes an estimated technology component from measures of productivity, IT- investment, and patents before testing the residual.  The adjustment is conservative: because a boom can itself raise these technology measures, a rejection remains robust to such feedback, whereas a non-rejection only bounds residual explosiveness. Dot-com residual explosiveness concentrates in December 1999--March 2000; the 2020--2025 AI rally shows no residual explosiveness in our sample across baseline and sensitivity checks.
	

	\bigskip
	
	\textit{Keywords}: general-purpose technology, speculative bubbles, bubble detection, artificial intelligence, present-value model
	
	\textit{JEL classification}: C12, C22, G12, O33

	\clearpage

\section{Introduction}
\label{sec:intro}

Since the release of large language models in late 2022, equity markets have experienced a dramatic rally concentrated in artificial-intelligence (AI)-related firms. The ``Magnificent Seven'' (Alphabet, Amazon, Apple, Meta, Microsoft, NVIDIA, and Tesla) accounted for a majority of the S\&P~500's gains during 2023--2024 \citep{BaselePhillipsShi2025}, and the NASDAQ Composite nearly doubled from its October 2022 trough. Corporate AI investment has surged in parallel, with aggregate compute expenditure, related patenting, and research and development (R\&D) intensity all rising rapidly \citep{McElheran2024, Pairolero2025, AldasoroDoerrRees2025}, while productivity effects of generative AI are beginning to materialize in firm-level data \citep{BrynjolfssonLiRaymond2025}. The pattern of a concentrated technology rally amid rapid but uncertain technological change echoes the late-1990s NASDAQ run-up, whose subsequent crash destroyed trillions of dollars in market value. The central question for researchers, regulators, and investors is the same now as it was then: are these price dynamics a speculative bubble, or a rational repricing of the economy's productive capacity?

This paper asks a question that recurs with every technological revolution: are standard bubble tests systematically mistaking fundamental repricing for speculative excess? The most widely used real-time bubble detector is the seminal right-tailed unit root test of \citet{Phillips2015b, Phillips2015a} (hereafter PSY), now embedded in the surveillance frameworks of central banks worldwide, part of a broader movement toward data-driven surveillance of financial markets that extends to the detection of financial statement fraud \citep{FanLiuWangZheng2026}.\footnote{See, for example, the Federal Reserve Bank of Dallas International House Price Database, which routinely applies the PSY procedure  \citep{Pavlidis2016}.} Applied to AI-related equities, PSY can detect explosive dynamics. Yet its null hypothesis is a pure random walk---a benchmark from which returns can deviate systematically even in the absence of bubbles \citep{AitSahaliaFanXueZhu2025}---and abstracts entirely from structural change in fundamentals. Precisely during periods of general-purpose technology (GPT) adoption, when the bubble question is most pressing, technology-driven fundamentals themselves generate locally explosive price dynamics that are observationally indistinguishable from speculation. The risk is not merely a statistical artifact: if policymakers treat a technological repricing as a bubble, premature intervention can stifle the very adoption process that drives long-run growth.

Our first contribution is theoretical and methodological, and it is the part of the paper that stands free of any measurement assumption. We show that the usual mapping from right-tailed unit-root rejection to speculative bubbles is not valid when fundamentals contain technology-driven structural change. This is a statement about the null data-generating process: in a present-value model with \emph{no} bubble, a hump-shaped technology shock alone drives the spurious rejection rate to between 93 and 100 percent, so the result does not rely on how---or whether---fundamentals are separated from speculation in the data. The size distortion we document is a specific instance of a broader identification problem: deterministic structural change in fundamentals can lead the PSY test to reject the null of no bubble even when no speculative component is present. Existing work has examined related sources of misspecification, including time-varying discount rates \citep{Cochrane2011}, rational bubble dynamics \citep{Blanchard1982, Diba1988, Evans1991}, and structural breaks in unit root inference \citep{Phillips2011, Phillips2015a, Harvey2016, Harvey2020, Monschang2021}. The technology channel is distinctive: the hump-shaped, nonlinear adoption profile of GPTs produces locally explosive dynamics that standard linear detrending cannot remove. Formally, the absence of a bubble does not guarantee the absence of explosive dynamics once fundamentals incorporate technology-driven structural change.

Our second contribution is a technology-adjusted diagnostic for explosive price dynamics. The procedure removes an estimated technology-fundamental component before applying PSY to the residual. The decomposition uses three observable technology proxies---total factor productivity (TFP), information technology (IT) investment, and patent grants---that are grounded in a production-side microfoundation. The empirical object, however, is not automatically a clean measure of speculation. It is a residual diagnostic whose interpretation depends on whether these proxies are sufficiently separable from speculative feedback. We therefore treat separability as an identifying assumption, state the consequences of violations explicitly, and report sensitivity analyses for lagged covariates, alternative training windows, placebo variables, leave-one-out specifications, and parameter instability.

Our third contribution is empirical evidence on two episodes, interpreted through this identification lens. The lens is deliberately asymmetric, because the feedback from valuations to technology proxies attenuates the residual: a rejection of the adjusted test survives any non-negative feedback and, if anything, understates the bubble, while a non-rejection establishes only that no explosive component \emph{remains} after the adjustment. For the dot-com episode, the adjustment eliminates an early signal and leaves residual explosiveness concentrated in December 1999 to March 2000---a rejection that is robust in this conservative sense. For the 2020--2025 AI rally, the adjusted residual does not reject at any conventional level across the baseline, lagged-covariate, leave-one-out, and training-window specifications; we read this as bounding residual explosiveness, not as proof that speculation is absent. These findings speak to the long-standing debate over whether asset price booms during technological revolutions reflect rational repricing or irrational exuberance \citep{Pastor2009, Pastor2006}, while making clear that the empirical distinction is conditional on the assumed mapping from technology proxies to fundamentals.

The analysis is embedded in the \citet{Campbell1988} log-linear present-value framework. A GPT shock, defined as a hump-shaped, deterministic component in dividend growth that reflects technology adoption and maturation \citep{Bresnahan1995, Jovanovic2005}, generates a technology present-value term in the fundamental price. During the adoption phase, this term rises sharply as the market prices in future technology-driven dividend growth, producing locally explosive dynamics in the price-dividend ratio that are entirely fundamental. Empirically, we estimate the technology component via a cointegrating regression of prices on observable technology proxies during a pre-bubble training period, following the counterfactual approach of \citet{Hsiao2012}, and apply PSY to the residual. Under exact separability the residual inherits only stationary pricing error and any bubble component. When speculative valuations affect technology proxies, the residual is attenuated; we derive this bias and interpret non-rejection as evidence of no \emph{residual} explosiveness after technology adjustment rather than as unconditional proof that speculation is absent.

The quantitative magnitude of the distortion is large. In Monte Carlo simulations calibrated to postwar United States equity data and containing only technology-driven fundamentals with no speculative component, PSY rejects the null of no bubble in 93 percent of samples for detrended log prices and 100 percent for the price-dividend ratio at the 5 percent nominal level. Removing the technology component restores the rejection rate to its nominal level in the oracle design. In a controlled overlap experiment in which a genuine bubble coexists with a technology shock, the unadjusted test fires prematurely during the technology-only phase, while the technology-adjusted procedure correctly delays detection until genuine speculative activity begins. For the NASDAQ dot-com episode, the unadjusted test detects two explosive episodes: a brief signal in May 1996 and the main episode from November 1999 to March 2000. The adjustment eliminates the early signal and leaves residual explosiveness from December 1999 through March 2000. For the 2020--2025 AI rally, the technology-adjusted residual does not reject at any conventional significance level. Firm-level analysis of the Magnificent Seven suggests that much apparent aggregate explosiveness is absorbed by technology fundamentals, while residual signals for Tesla and Apple caution against interpreting the index result as a claim about every firm.

The results are robust across four functional forms for the technology shock (triangular, Gaussian, Beta, and Gamma-like profiles); across dynamic ordinary least squares (DOLS), fully modified ordinary least squares (FMOLS), and first-difference specifications of the cointegrating regression; and across 121 alternative training windows for the AI era. The identification diagnostics are more nuanced. Granger causality tests, placebo tests with non-technology covariates, principal component analysis (PCA) of firm-level price gaps, and training-window stability analysis are broadly consistent with the baseline interpretation, but Hansen parameter-instability tests reject coefficient constancy in both eras. We therefore present the empirical procedure as an identification-sensitive diagnostic, not as a stand-alone structural proof of the absence of speculation.

Section~\ref{sec:theory} develops the theoretical framework, embedding technology-augmented dividend growth in the present-value model and providing a production-side microfoundation for the observable technology proxies. Section~\ref{sec:psy} derives the formal size distortion result. Section~\ref{sec:correction} defines the technology-adjusted diagnostic, establishes its size and power properties under exact separability, and characterizes how bubble-induced feedback into technology proxies changes the residual. Section~\ref{sec:empirical} applies the methodology to the dot-com and AI-era episodes and presents firm-level evidence for the Magnificent Seven. Section~\ref{sec:conclusion} concludes with implications for real-time bubble surveillance during technological transitions.

\section{Theoretical Framework}
\label{sec:theory}

\subsection{Present-Value Decomposition}
\label{subsec:setup}

Our theoretical framework builds on the standard present-value identity. This subsection isolates the object $f_t$---the fundamental log price---whose behavior we will subsequently show can become locally explosive even in the absence of speculative bubbles.

We consider an economy in which a single risky asset pays a stochastic dividend stream $\{D_t\}_{t=0}^{\infty}$. Let $P_t$ denote the ex-dividend price and define $p_t = \log P_t$ and $d_t = \log D_t$. The gross log return is
\begin{equation}
    r_{t+1} = \log(P_{t+1} + D_{t+1}) - \log P_t.
    \label{eq:log_return}
\end{equation}

Following \citet{Campbell1988}, a first-order Taylor expansion of \eqref{eq:log_return} around the mean log dividend-price ratio $\overline{d-p}$ yields the log-linear approximation
\begin{equation}
    r_{t+1} \approx \kappa + \rho\, p_{t+1} + (1 - \rho)\, d_{t+1} - p_t,
    \label{eq:CS_approx}
\end{equation}
where
\begin{equation*}
    \rho = \frac{1}{1 + \exp(\overline{d-p}\,)}, \qquad
    \kappa = \log(1 + \exp(\overline{d-p}\,)) - (1-\rho)\,\overline{d-p}.
\end{equation*}
The parameter $\rho \in (0,1)$ is the log-linearization discount factor.\footnote{For the postwar U.S.\ aggregate stock market, the annual log dividend-price ratio averages roughly $-3.4$, yielding $\rho \approx 0.97$.}

Iterating \eqref{eq:CS_approx} forward and taking conditional expectations gives the standard present-value identity:
\begin{equation}
    p_t = \frac{\kappa}{1-\rho} + d_t
    + \sum_{j=0}^{\infty} \rho^{j}\, \E_t\!\left[\Delta d_{t+1+j} - r_{t+1+j}\right]
    + \lim_{j \to \infty} \rho^{j}\, \E_t[p_{t+j}].
    \label{eq:PV_full}
\end{equation}

The identity \eqref{eq:PV_full} cleanly separates the log price into three components: a constant, the current dividend, an infinite-horizon sum of expected future dividend growth net of expected returns, and a terminal term that captures self-fulfilling speculative dynamics. We use this separation to define fundamental and bubble components.

\begin{definition}[Fundamental and Bubble Components]
\label{def:decomposition}
The fundamental component is
\begin{equation*}
    f_t = d_t + \frac{\kappa}{1-\rho}
    + \sum_{j=0}^{\infty} \rho^{j}\, \E_t\!\left[\Delta d_{t+1+j} - r_{t+1+j}\right],
\end{equation*}
and the bubble component is $b_t = \lim_{j \to \infty} \rho^{j}\, \E_t[p_{t+j}]$, satisfying $\E_t[b_{t+1}] = \rho^{-1} b_t$. The log price admits the decomposition $p_t = f_t + b_t$.
\end{definition}

In the standard rational-expectations no-bubble equilibrium, $b_t \equiv 0$ and the price coincides with the fundamental component. Conventional intuition associates a stationary dividend-growth process with a random-walk-with-drift $f_t$, so that any locally explosive behavior in observed prices must reflect a non-zero bubble $b_t$. The next subsection relaxes this intuition: we introduce a transient, hump-shaped technology shock into the dividend-growth process and show that the induced $f_t$ departs from a pure random walk during the adoption phase, mimicking the behavior usually attributed to a bubble. Imposing the no-bubble condition ($b_t = 0$ for all $t$), this raises our central question: \emph{does $f_t$ itself exhibit locally explosive dynamics that mimic the behavior of $b_t > 0$?}

\subsection{Technology-Augmented Dividend Growth}
\label{subsec:tech_dividends}

This subsection introduces the structural model of dividend growth that drives our main results. The key innovation is a transient, hump-shaped technology component $\delta_t$ superimposed on an otherwise standard stochastic trend. Economically, $\delta_t$ represents the transient contribution of a general-purpose technology---the steam engine, electrification, or artificial intelligence---that temporarily lifts aggregate productivity while adoption diffuses through the economy and then fades as the new technology is fully absorbed. Formally, $\delta_t$ is deterministic and bounded, which preserves cointegration of the dividend process with its mean growth rate while allowing rich non-stationary dynamics over the adoption window. Assumption~\ref{ass:dividend} specifies the dividend-growth process; Assumption~\ref{ass:tech_shock} the hump shape; Assumption~\ref{ass:const_returns} constant required returns (relaxed later); and Assumption~\ref{ass:T_convexity} a mild convexity property of the present-value object $\mathcal T_t$ on which our explosiveness results rely.

\begin{assumption}[Dividend Growth]
\label{ass:dividend}
Log dividend growth follows
\begin{equation*}
    \Delta d_t = c + \delta_t + \varepsilon_t, \qquad \varepsilon_t \sim \text{i.i.d.}(0, \sigma_\varepsilon^2).
\end{equation*}
where $c > 0$ is the steady-state growth rate.
\end{assumption}

The decomposition $\Delta d_t = c + \delta_t + \varepsilon_t$ is deliberately minimal: it embeds the technology shock $\delta_t$ as an additive, transient deviation from steady-state growth $c$, while $\varepsilon_t$ captures idiosyncratic dividend innovations. When $\delta_t \equiv 0$, the process reduces to the standard random-walk-with-drift specification in which $f_t$ is itself a random walk with drift $c$ under constant required returns.

\begin{assumption}[Technology Shock]
\label{ass:tech_shock}
The technology component $\delta_t$ is a deterministic, hump-shaped process with support on $[T_1, T_2]$:
\begin{equation*}
    \delta_t =
    \begin{cases}
        g(t - T_1) & \text{if } t \in [T_1, T_1 + \tau], \\[4pt]
        g(\tau)\, h(t - T_1 - \tau) & \text{if } t \in (T_1 + \tau, T_2], \\[4pt]
        0 & \text{otherwise},
    \end{cases}
\end{equation*}
where $T_1$ is the adoption date, $T_2 > T_1$ is the maturation date, $\tau \in (0, T_2 - T_1)$ is the peak lag, $g: [0, \tau] \to \mathbb{R}_+$ is strictly increasing with $g(0) = 0$, and $h: [0, T_2 - T_1 - \tau] \to [0,1]$ is strictly decreasing with $h(0) = 1$ and $h(T_2 - T_1 - \tau) = 0$.
\end{assumption}

The hump shape captures the stylized empirical pattern of technological adoption: a ramp-up phase $(T_1, T_1+\tau)$ during which the productivity contribution rises to a peak $g(\tau)$, followed by a decline phase $(T_1+\tau, T_2)$ during which the contribution dissipates as the technology matures. The process is bounded and has compact support, so $\delta_t$ never destabilizes the long-run growth rate of dividends; this preserves the standard cointegration structure outside the adoption window. The monotonicity and boundary conditions on $g$ and $h$ are the minimal primitives needed for our analysis of local explosiveness in Section~\ref{subsec:explosive}.

\begin{remark}
\label{rmk:triangular}
A simple parametric example is the triangular specification
\begin{equation}
    \delta_t = \dmax \cdot
    \begin{cases}
        (t - T_1)/\tau & \text{if } t \in [T_1, T_1 + \tau], \\[4pt]
        (T_2 - t)/(T_2 - T_1 - \tau) & \text{if } t \in (T_1 + \tau, T_2], \\[4pt]
        0 & \text{otherwise},
    \end{cases}
    \label{eq:delta_triangular}
\end{equation}
which we use in our Monte Carlo experiments. The qualitative results hold for any hump-shaped specification satisfying Assumption~\ref{ass:tech_shock}, including smooth bell-shaped functions, Beta densities rescaled to $[T_1, T_2]$, or Gaussian kernels.
\end{remark}

\begin{assumption}[Constant Expected Returns]
\label{ass:const_returns}
$\E_t[r_{t+1+j}] = \rbar$ for all $j \geq 0$.
\end{assumption}

\begin{remark}
Assumption~\ref{ass:const_returns} is made for tractability. We relax it in Section~\ref{subsec:extensions}, where Proposition~\ref{prop:tv_returns} shows that, when expected returns respond to the technology state as $\mathbb{E}_t[r_{t+1}]=\bar r + \phi\,\delta_t$, the qualitative explosive pattern is preserved for all $\phi<1$ and is strictly \emph{amplified} when $\phi<0$ (technology lowers required returns).
\end{remark}


With the dividend-growth structure in place, we can compute the fundamental log price in closed form. Substituting Assumption~\ref{ass:dividend} and the constant-returns condition into the present-value identity from \S\ref{subsec:setup} yields a clean separation of $f_t$ into a level component tied to current dividends, a constant, and a discounted present-value term that isolates the technology contribution.

\begin{proposition}[Fundamental Price Decomposition]
\label{prop:fundamental}
Under Assumptions~\ref{ass:dividend}--\ref{ass:const_returns} and $b_t = 0$,
\begin{equation*}
    f_t = d_t + C + \Tt,
\end{equation*}
where $C = \frac{\kappa}{1 - \rho} + \frac{c - \rbar}{1 - \rho}$ and $\Tt = \sum_{j=0}^{\infty} \rho^{j}\, \delta_{t+1+j}$ is the technology present-value term.
\end{proposition}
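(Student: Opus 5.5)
The plan is direct substitution into the fundamental component of Definition~\ref{def:decomposition}. With $b_t = 0$ we have $p_t = f_t$, where
\begin{equation*}
    f_t = d_t + \frac{\kappa}{1-\rho} + \sum_{j=0}^{\infty} \rho^{j}\, \E_t\!\left[\Delta d_{t+1+j} - r_{t+1+j}\right].
\end{equation*}
I will evaluate the conditional expectation inside the sum term by term and then split the resulting series into three separate geometric-type sums.

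\textbf{Step 1: the expectation.} By Assumption~\ref{ass:dividend},
\begin{equation*}
    \E_t[\Delta d_{t+1+j}] = c + \delta_{t+1+j} + \E_t[\varepsilon_{t+1+j}] = c + \delta_{t+1+j}.
\end{equation*}
Two facts are used here. First, $\delta$ is deterministic by Assumption~\ref{ass:tech_shock}, so it is known at time $t$. Second, $\varepsilon$ is i.i.d.\ with mean zero and independent of the time-$t$ information set for future dates. By Assumption~\ref{ass:const_returns}, $\E_t[r_{t+1+j}] = \rbar$. The summand therefore equals $\rho^j\,(c - \rbar + \delta_{t+1+j})$.

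\textbf{Step 2: splitting the series.} This is the only step needing care, because splitting an infinite sum requires each piece to converge. The constant part gives
\begin{equation*}
    \sum_{j \ge 0} \rho^j (c - \rbar) = \frac{c - \rbar}{1-\rho},
\end{equation*}
which converges since $\rho \in (0,1)$. For the technology part, note that $\delta$ is bounded, being at most $g(\tau)$, and has compact support $[T_1, T_2]$. Hence $\sum_j \rho^j \delta_{t+1+j}$ is a finite sum of nonnegative terms. It is absolutely convergent and equals $\Tt$ as defined in the statement. Since each piece converges, linearity of the series justifies writing the total as the sum of the two pieces.

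Adding $\kappa/(1-\rho)$ to the constant part gives
\begin{equation*}
    f_t = d_t + C + \Tt, \qquad C = \frac{\kappa}{1-\rho} + \frac{c - \rbar}{1-\rho}.
\end{equation*}

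\textbf{Step 3: the expectation–sum interchange.} The main obstacle, mild as it is, is justifying the interchange of the conditional expectation with the infinite sum in Definition~\ref{def:decomposition}. Iterating \eqref{eq:CS_approx} forward produces finite-horizon sums, and the identity \eqref{eq:PV_full} is taken as given in the paper. The summands here have bounded means, so dominated convergence, with $\rho^j$ times a constant as the dominating sequence, suffices.

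I will also remark that the no-bubble condition $b_t=0$ is what removes the terminal term. Without it, one would need a transversality argument to show that $\rho^j \E_t[p_{t+j}] \to 0$, which follows here because $\E_t[p_{t+j}]$ grows only linearly in $j$ under the derived $f$.
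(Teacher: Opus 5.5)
Your proposal is correct and follows the paper's proof essentially line by line. You substitute Assumptions~\ref{ass:dividend} and \ref{ass:const_returns} into the fundamental component of Definition~\ref{def:decomposition}, use that $\delta$ is deterministic and $\E_t[\varepsilon_{t+1+j}]=0$, and split the series into the geometric constant $(c-\rbar)/(1-\rho)$ and the finite compact-support sum $\Tt$. Your Step~3 is superfluous: Definition~\ref{def:decomposition} already puts $\E_t$ inside each summand, so no expectation--sum interchange is needed (and a dominated-convergence argument would require uniform bounds on $\E_t|r_{t+1+j}|$, not just on the conditional means).
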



\subsection{Microfoundation: A Production Economy with Innovation}
\label{subsec:microfoundation}

The technology-augmented dividend specification used above admits a production-side foundation. Consider a standard Cobb--Douglas production economy with physical and information-technology (IT) capital, disembodied total factor productivity driven by patent quality and IT diffusion, and a representative firm that chooses labor, investment, and innovation to maximize the present value of dividends under a stochastic discount factor. Log-linearization around the balanced-growth path delivers $\Delta d_t = c + \boldsymbol{\gamma}'\,\mathbf{X}_t + \varepsilon_t$ with $\mathbf{X}_t = (TFP_t,\, \log IT_t,\, \log Pat_t)'$, and a first-order VAR for $\mathbf{X}_t$ yields the closed-form decomposition $\mathcal{T}_t = \boldsymbol{\beta}'\,\mathbf{X}_t + \eta_t$ with $\boldsymbol{\beta} = (\mathbf{I} - \rho\,\boldsymbol{\Phi})^{-\top}\,\boldsymbol{\Phi}^{\top}\,\boldsymbol{\gamma}$ and $\eta_t$ a mean-zero stationary innovation orthogonal to $\mathbf{X}_t$. Three benchmark identification conditions---block-separability of the production and valuation blocks, a predetermined patent index (grant lag $\ell \geq 1$), and the real IT Euler equation---are sufficient for $\mathbf{X}_t$ to span the technology-driven component of fundamentals while excluding the bubble $b_t$ from the cointegrating vector. These conditions are not innocuous: if speculative valuations affect innovation or adoption, the adjusted residual is attenuated rather than literally equal to the bubble. Proposition~\ref{prop:microfoundation} below summarizes the benchmark result; Section~\ref{subsec:feedback_bounds} characterizes violations of separability, and Appendix~\ref{app:microfoundation} gives the full derivation.

\begin{proposition}[Observable Technology Proxies and Fundamental Value]
\label{prop:microfoundation}
Under the production economy and identification conditions of Appendix~\ref{app:microfoundation}, to a first-order approximation around the balanced-growth path:
\begin{enumerate}[label=(\roman*)]
	\item Log dividend growth satisfies $\Delta d_t = c + \boldsymbol{\gamma}'\,\mathbf{X}_t + \varepsilon_t$, with $\mathbb{E}[\varepsilon_t\mid\mathbf{X}_t]=0$.
	\item The technology present-value term admits the decomposition $\Tt = \boldsymbol{\beta}'\,\mathbf{X}_t + \eta_t$, where $\boldsymbol{\beta} = (\mathbf{I} - \rho\,\boldsymbol{\Phi})^{-\top}\,\boldsymbol{\Phi}^{\top}\,\boldsymbol{\gamma}$ is the predictable loading on the observable state and $\eta_t$ is a mean-zero, stationary, weighted sum of future VAR innovations that is orthogonal to $\mathbf{X}_t$.
	\item The fundamental price satisfies $f_t = d_t + \beta_0 + \boldsymbol{\beta}'\,\mathbf{X}_t + \eta_t$, where $\beta_0 = C$ and $\eta_t$ is the orthogonal innovation term from Part~(ii).
\end{enumerate}
Under these benchmark restrictions, $\mathbf{X}_t$ spans the technology-driven component of fundamentals, and the residual of the structural cointegrating regression \eqref{eq:coint_structural} contains the bubble component plus stationary noise.
\end{proposition}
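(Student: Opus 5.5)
The plan is to prove the three parts in order, with each part feeding the next. The inputs are the log-linearized production block of Appendix~\ref{app:microfoundation}, the present-value identity \eqref{eq:PV_full}, and Proposition~\ref{prop:fundamental}.

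\textbf{Part (i).} Start from Cobb--Douglas output with physical and IT capital and disembodied TFP. Dividends are output net of labor and investment costs. Log-linearizing around the balanced-growth path expresses $\Delta d_t - c$ as a linear combination of deviations in TFP growth, IT capital, and the patent-driven productivity index. This gives $\Delta d_t = c + \boldsymbol{\gamma}'\mathbf{X}_t + \varepsilon_t$, with $\boldsymbol{\gamma}$ collecting the output elasticities and payout-share coefficients.

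The conditional-mean restriction $\E[\varepsilon_t\mid \mathbf{X}_t]=0$ comes from the identification conditions:
\begin{itemize}
\item the predetermined patent index (grant lag $\ell\ge 1$) makes $\log Pat_t$ measurable before date-$t$ dividend news;
\item the real IT Euler equation pins IT investment to expected real returns on capital, not to current price innovations;
\item block separability ensures that valuation shocks, and in particular $b_t$, do not enter the production block.
\end{itemize}
Taken together, $\varepsilon_t$ is a pure dividend/payout innovation that is independent of the state.

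\textbf{Part (ii).} Posit the VAR(1) $\mathbf{X}_{t+1} = \boldsymbol{\Phi}\mathbf{X}_t + \mathbf{u}_{t+1}$ with the spectral radius of $\rho\boldsymbol{\Phi}$ below one. Reading $\delta_t$ as $\boldsymbol{\gamma}'\mathbf{X}_t$ in the stochastic version of Proposition~\ref{prop:fundamental}, define
\[
\Tt = \sum_{j\ge0}\rho^j \boldsymbol{\gamma}'\mathbf{X}_{t+1+j}.
\]
Substitute $\mathbf{X}_{t+1+j} = \boldsymbol{\Phi}^{j+1}\mathbf{X}_t + \sum_{k=1}^{j+1}\boldsymbol{\Phi}^{j+1-k}\mathbf{u}_{t+k}$. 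Summing the geometric matrix series gives the predictable part
\[
\boldsymbol{\gamma}'\boldsymbol{\Phi}(\mathbf{I}-\rho\boldsymbol{\Phi})^{-1}\mathbf{X}_t .
\]
Because $\boldsymbol{\Phi}$ and $(\mathbf{I}-\rho\boldsymbol{\Phi})^{-1}$ commute, this equals $\boldsymbol{\beta}'\mathbf{X}_t$ with $\boldsymbol{\beta}=(\mathbf{I}-\rho\boldsymbol{\Phi})^{-\top}\boldsymbol{\Phi}^{\top}\boldsymbol{\gamma}$.

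The remainder $\eta_t$ is the discounted weighted sum of future innovations $\mathbf{u}_{t+k}$, $k\ge1$. It has mean zero. It is covariance-stationary because its weights are absolutely summable, which follows from the spectral-radius condition. It is orthogonal to $\mathbf{X}_t$ because the $\mathbf{u}_{t+k}$ are martingale differences with respect to the information set containing $\mathbf{X}_t$.

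If the proposition instead intends $\Tt$ to be the conditional expectation, then $\eta_t$ is the gap between realized and expected $\Tt$. The same algebra goes through.

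\textbf{Part (iii).} Insert (ii) into $f_t = d_t + C + \Tt$ from Proposition~\ref{prop:fundamental}. Constant expected returns carry over from Assumption~\ref{ass:const_returns}, which the SDF in the appendix delivers to first order. This gives $\beta_0=C$. With $p_t=f_t+b_t$, the residual of \eqref{eq:coint_structural} is $b_t+\eta_t$.

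The final spanning claim has two halves. The regressors are $\mathbf{X}_t$, and by the separability conditions these do not load on $b_t$. Hence the cointegrating vector excludes $b_t$, and the residual is the bubble plus stationary noise.

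\textbf{Main obstacle.} The delicate step is Part (i), specifically justifying $\E[\varepsilon_t\mid\mathbf{X}_t]=0$ and the exclusion of $b_t$ from $\mathbf{X}_t$. This is where the economic identification conditions must be translated into measurability and orthogonality statements. I would handle it by stating each condition as a timing assumption on the filtration, and then verifying that the payout innovation is adapted to the valuation block only. The VAR algebra in (ii) is routine by comparison.
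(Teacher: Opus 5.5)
The gap is in Part (i). Your Part (ii) matches the paper's argument: split $\mathbf{X}_{t+1+j}$ into $\boldsymbol{\Phi}^{j+1}\mathbf{X}_t$ plus cumulated innovations, sum the Neumann series, and check mean zero, stationarity and orthogonality of $\eta_t$ via the martingale-difference property.

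Log-linearizing \eqref{eq:dividend_identity} does not by itself give an affine function of the contemporaneous levels $\mathbf{X}_t=(TFP_t,\log IT_t,\log Pat_t)'$. It gives differenced deviations ($\Delta a_t$, $\Delta\widehat{g}_t$, $\Delta\widehat{I}^{IT}_t,\dots$), lagged states, and current shocks. The paper closes this step by \emph{projecting} the collected terms onto the span of the demeaned $\mathbf{X}_t$. Then $\boldsymbol{\gamma}$ is the vector of projection coefficients and $\varepsilon_t$ is the projection residual, so orthogonality to $\mathbf{X}_t$ holds by construction. The conditional-mean version then rests on linearity of $\E[\,\cdot\mid\mathbf{X}_t]$ in the first-order approximation.

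You instead try to obtain $\E[\varepsilon_t\mid\mathbf{X}_t]=0$ from (R1)--(R3) read as timing restrictions, and that step fails. (R1)--(R3) are exclusion restrictions for $b_t$; they say nothing about how dividend shocks co-move with the technology state. Worse, the timing runs the wrong way for two of the three proxies:
\begin{itemize}
\item By \eqref{eq:tfp_law}, $TFP_t$ contains the date-$t$ innovation $u^A_t$, which also enters log output growth and hence $\Delta d_t$.
\item The Euler equation \eqref{eq:it_foc} makes $I^{IT}_t$ respond to the expected marginal product of IT capital, which depends on $a_t$ and so on $u^A_t$.
\end{itemize}
Only the patent index is predetermined. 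So ``$\varepsilon_t$ is a pure payout innovation independent of the state'' does not follow from the stated conditions. You need the projection definition (or an extra assumption), not a measurability argument.

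Part (iii) also needs repair. The regression \eqref{eq:coint_structural} has no $d_t$ on the right-hand side, so substituting (ii) into $f_t=d_t+C+\Tt$ does not yield slope $\boldsymbol{\beta}$ and residual $b_t+\eta_t$. The paper folds the contemporaneous loading of $d_t$ on $\mathbf{X}_t$ into $\boldsymbol{\beta}_X$, so $\boldsymbol{\beta}_X\neq\boldsymbol{\beta}$ in general. It puts what is left into an extra term $\zeta_t\perp\mathbf{X}_t$, giving $u_t=b_t+\eta_t+\zeta_t$. You must do the same, again via a projection step, or write the regression for $p_t-d_t$ instead.

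Your aside on the conditional-expectation reading is also off. If $\Tt=\sum_j\rho^j\E_t[\delta_{t+1+j}]$, then $\Tt=\boldsymbol{\beta}'\mathbf{X}_t$ exactly and $\eta_t\equiv 0$; the realized-minus-expected gap is a different object.
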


Two remarks complete the picture. First, the deterministic hump $\delta_t$ of Assumption~\ref{ass:tech_shock} is the conditional-mean path of $\boldsymbol{\gamma}'\mathbf{X}_t$ that would obtain under a technology-diffusion regime; the stochastic component $\boldsymbol{\gamma}'\mathbf{X}_t - \delta_t$ is absorbed into the residual noise in the empirical specification (Remark~\ref{rmk:theo_vs_emp} in Appendix~\ref{app:microfoundation}). Second, the bubble enters as an additive component in secondary-market valuation and is excluded from the production block in the benchmark model (Remark~\ref{rmk:exclusion} in Appendix~\ref{app:microfoundation}). This exclusion is an identifying restriction rather than a mechanical fact: if stock valuations affect future innovation through financing constraints, the observed technology proxies partly embody that feedback, and the adjusted residual must be interpreted using the attenuation logic of Section~\ref{subsec:feedback_bounds}.

\subsection{Dynamics of the Fundamental Process}
\label{subsec:dynamics}

With the fundamental price characterized in levels by Proposition~\ref{prop:fundamental} and with the microfoundation of Section~\ref{subsec:microfoundation} motivating observable technology states, we now turn to the \emph{dynamic} implications of the technology component. The central question is how the hump-shaped path of $\delta_t$ translates into the time series of $f_t$---specifically, what drift and what curvature the fundamental process inherits from the adoption--maturation cycle. The following lemma records a simple recursion for $\mathcal{T}_t$ that proves repeatedly useful in the sequel, and the subsequent proposition uses it to express the conditional drift of $f_t$ as an affine function of the current technology shock $\delta_t$ and the lagged present value $\mathcal{T}_{t-1}$. This representation drives the local explosiveness result of Section~\ref{subsec:explosive}: it makes transparent how a bounded, deterministic hump in $\delta_t$ can induce a super-unit-root pattern in $f_t$ over a finite window.

\begin{lemma}[Recursion of $\Tt$]
\label{lem:T_recursion}
$\Tt = \rho^{-1}\, \Ttm - \rho^{-1}\, \delta_t$.
\end{lemma}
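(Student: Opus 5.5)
The plan is to work directly from the definition of the technology present-value term in Proposition~\ref{prop:fundamental}, $\Tt = \sum_{j=0}^{\infty} \rho^{j}\,\delta_{t+1+j}$. We write out $\Ttm$ and peel off its first summand. Concretely,
\begin{equation*}
    \Ttm = \sum_{j=0}^{\infty} \rho^{j}\,\delta_{t+j} = \delta_t + \sum_{j=1}^{\infty} \rho^{j}\,\delta_{t+j}.
\end{equation*}
In the remaining sum we substitute $k = j-1$, which gives $\rho \sum_{k=0}^{\infty} \rho^{k}\,\delta_{t+1+k} = \rho\,\Tt$. This yields the one-step identity
\begin{equation*}
    \Ttm = \delta_t + \rho\,\Tt .
\end{equation*}
Solving for $\Tt$ and dividing by $\rho \in (0,1)$ gives the stated recursion $\Tt = \rho^{-1}\Ttm - \rho^{-1}\delta_t$.

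The only step needing justification is convergence, which we need for the series to be well defined and for the index shift and termwise factoring of $\rho$ to be legitimate. Under Assumption~\ref{ass:tech_shock}, $\delta_t$ is deterministic and has compact support $[T_1,T_2]$. It is also bounded, by $g(\tau)$, because $g$ is increasing and $h \le 1$. Each series is therefore a finite sum, or at worst is dominated by $g(\tau)\sum_j \rho^j < \infty$, so splitting off the first term and reindexing are valid.

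Because $\delta_t$ is deterministic, no conditional expectations are involved, and the identity holds pathwise for every $t$. I do not expect a genuine obstacle. The one point to state explicitly is the absolute convergence that licenses the index shift.
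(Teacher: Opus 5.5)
Your proof is correct and follows essentially the same route as the paper: peel off $\delta_t$ from $\Ttm$, reindex the remaining sum to obtain $\Ttm = \delta_t + \rho\,\Tt$, and solve for $\Tt$. Your convergence remark (compact support and the bound $\delta_t \le g(\tau)$) is a justification the paper leaves implicit, but it does not change the argument.
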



\begin{proposition}[Fundamental Dynamics]
\label{prop:dynamics}
The fundamental price satisfies $f_t = f_{t-1} + \mu_t + \varepsilon_t$, where
\begin{equation*}
    \mu_t = c + \delta_t + \Delta \Tt
\end{equation*}
is a time-varying drift. Using the recursion of Lemma~\ref{lem:T_recursion}:
\begin{equation*}
    \mu_t = c + (1 - \rho^{-1})\delta_t + (\rho^{-1} - 1)\, \Ttm.
\end{equation*}
\end{proposition}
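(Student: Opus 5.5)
The plan is to difference the level representation of Proposition~\ref{prop:fundamental} and then apply Lemma~\ref{lem:T_recursion}. Since $C$ is a constant, Proposition~\ref{prop:fundamental} gives
\begin{equation*}
f_t - f_{t-1} = (d_t - d_{t-1}) + (\Tt - \Ttm) = \Delta d_t + \Delta \Tt .
\end{equation*}
Assumption~\ref{ass:dividend} then gives $\Delta d_t = c + \delta_t + \varepsilon_t$. Substituting, we obtain $f_t = f_{t-1} + (c + \delta_t + \Delta\Tt) + \varepsilon_t$. This is the first display with $\mu_t = c + \delta_t + \Delta\Tt$.

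Two observations justify calling $\mu_t$ a drift. Because $\delta_t$ is deterministic by Assumption~\ref{ass:tech_shock}, $\Tt$ is a deterministic sequence. It is well defined because $\delta$ is bounded with compact support and $\rho\in(0,1)$, so the series converges absolutely. Hence $\mu_t$ is nonrandom, and the only stochastic increment is $\varepsilon_t$.

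For the second display, we rewrite Lemma~\ref{lem:T_recursion} as $\Tt = \rho^{-1}\Ttm - \rho^{-1}\delta_t$. This gives
\begin{equation*}
\Delta \Tt = \Tt - \Ttm = (\rho^{-1}-1)\Ttm - \rho^{-1}\delta_t .
\end{equation*}
Plugging this into $\mu_t$ yields
\begin{equation*}
\mu_t = c + \delta_t - \rho^{-1}\delta_t + (\rho^{-1}-1)\Ttm = c + (1-\rho^{-1})\delta_t + (\rho^{-1}-1)\Ttm ,
\end{equation*}
which is the claimed affine form.

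There is no real obstacle; the only point needing care is the index convention in the lemma. I would double-check it from $\Ttm = \sum_{j\ge0}\rho^j\delta_{t+j} = \delta_t + \rho\,\Tt$, which gives the lemma directly. This also confirms the sign of the $\delta_t$ term, which is the one place an algebra slip could occur.
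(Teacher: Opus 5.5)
Your proposal is correct and follows the paper's proof essentially line for line. You difference $f_t = d_t + C + \Tt$, substitute $\Delta d_t = c + \delta_t + \varepsilon_t$, and use Lemma~\ref{lem:T_recursion} to write $\Delta\Tt = (\rho^{-1}-1)\Ttm - \rho^{-1}\delta_t$; your added checks that $\Tt$ is deterministic and that $\Ttm = \delta_t + \rho\,\Tt$ are correct.
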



Proposition~\ref{prop:dynamics} shows that the drift $\mu_t$ is governed by the \emph{level} of $\mathcal{T}_{t-1}$ and the current shock $\delta_t$. The qualitative shape of the resulting dynamics---and in particular whether the log price--dividend ratio is merely increasing or is \emph{convex} in $t$ on an initial subinterval of the adoption window---depends on the curvature of $\mathcal{T}_t$, not on the level. To isolate the curvature channel cleanly we impose a direct, high-level condition on the discounted present-value term itself.

\begin{assumption}[Convexity of the Technology Present-Value Term]
\label{ass:T_convexity}
There exists $\tau^{\star} \in (0,\tau]$ such that the technology present-value term $\mathcal{T}_t = \sum_{j=0}^{\infty}\rho^{\,j}\,\delta_{t+1+j}$ is strictly convex in $t$ on the initial ramp-up subinterval $[T_1, T_1+\tau^\star)$; equivalently, $\Delta^{2}\mathcal{T}_t > 0$ for all integer $t$ with $T_1 \le t \le T_1+\tau^\star-1$.
\end{assumption}

\begin{remark}[Motivation and Scope of Assumption~\ref{ass:T_convexity}]
This assumption replaces the stronger requirement ``$g$ strictly convex'' from earlier drafts and is a direct, high-level restriction on the discounted present-value term $\mathcal{T}_t$ rather than on the adoption shape $g$ itself. Two facts make it natural. First, the identity
\[
\Delta^{2}\mathcal{T}_t \;=\; \sum_{j=1}^{\infty} \rho^{\,j-1}\,\Delta^{2}\delta_{t+j},
\]
derived in the proof of Proposition~\ref{prop:explosive} (Appendix~\ref{app:proofs}), shows that $\Delta^{2}\mathcal{T}_t$ is a $\rho$-discounted weighted sum of future second differences of $\delta$. Second, the weaker primitive ``$g$ is strictly increasing with $g(0)=0$'' in Assumption~\ref{ass:tech_shock} is sufficient for the strict monotonicity $\Delta \mathcal{T}_t > 0$ on $[T_1, T_1+\tau)$ (which drives $\mu_t > c$) but is not sufficient for convexity. Lemma~\ref{lem:T_convex_verify} below exhibits two primitive regimes---(i) strictly convex $g$, (ii) smooth hump profiles that are $C^2$ with a strictly positive second derivative at the ramp-up origin---under which Assumption~\ref{ass:T_convexity} is satisfied for an explicit, non-empty $\tau^\star$. The piecewise-linear triangular profile is a counterexample whose kink at the peak violates Assumption~\ref{ass:T_convexity}; its treatment is therefore routed through the detrended-price theory of Theorem~\ref{thm:contaminated_limit}, which does not require convexity of $\mathcal{T}_t$.
\end{remark}

\subsection{Local Explosiveness of Fundamentals}
\label{subsec:explosive}

The drift representation of Proposition~\ref{prop:dynamics} shows that the sign and magnitude of $\mu_t - c$ are governed by $(1-\rho^{-1})\delta_t + (\rho^{-1}-1)\mathcal{T}_{t-1}$. Because $\rho^{-1} > 1$, the first term is negative in $\delta_t$ while the second is positive in $\mathcal{T}_{t-1}$; during the early adoption phase, $\mathcal{T}_{t-1}$ rises faster than $\delta_t$ (forward-looking agents price in all still-unrealized future shocks), so the positive second term dominates and the fundamental process exhibits a drift that is strictly above its unconditional mean $c$. The next proposition records this heuristic formally: along the adoption--maturation cycle, the fundamental price passes through four qualitatively distinct regimes---a near-random-walk regime before adoption, a super-normal-growth regime early in adoption, a deceleration regime late in adoption, and a new-level random-walk regime after maturation. The key statistical implication is that the price-dividend ratio is strictly convex on an initial subinterval of the adoption window, which is precisely the property that drives the spurious-explosiveness result of Theorem~\ref{thm:size_distortion}.

\begin{proposition}[Local Explosiveness]
\label{prop:explosive}
Suppose $b_t = 0$ for all $t$. Under Assumptions~\ref{ass:dividend}--\ref{ass:T_convexity}:
\begin{enumerate}[label=(\roman*)]
    \item Pre-adoption ($t < T_1$): When $t$ is sufficiently far below $T_1$ that $\rho^{T_1 - t - 1}$ is numerically negligible, $\Tt$ is negligibly small and $f_t$ behaves, to that order, as a random walk with drift $c$. As $t$ approaches $T_1$ from below, $\Tt$ rises monotonically above zero---remaining strictly smaller than its early-adoption values but no longer negligible---because forward-looking agents begin pricing in the approaching technology shock.
    \item Early adoption ($t \in [T_1, T_1 + \tau)$): $\mu_t > c$, and $f_t$ exhibits super-normal growth. The log price-dividend ratio $f_t - d_t = C + \Tt$ and, under Assumption~\ref{ass:T_convexity}, is strictly convex in the integer time index $t$ on $\{T_1, T_1+1,\dots, T_1+\tau^\star-1\}$ (equivalently, $\Delta^2\mathcal{T}_t>0$ on this index set).
    \item Late adoption ($t \in (T_1 + \tau, T_2]$): The drift $\mu_t$ declines toward $c$.
    \item Post-maturation ($t > T_2$): $\Tt = 0$ and $f_t$ reverts to a random walk with drift $c$ at a permanently elevated level.
\end{enumerate}
\end{proposition}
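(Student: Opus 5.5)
The plan is to work entirely from the closed forms already available: Proposition~\ref{prop:fundamental} gives $f_t - d_t = C + \Tt$, and Proposition~\ref{prop:dynamics} gives $\mu_t - c = \delta_t + \Delta\Tt$, equivalently $(\rho^{-1}-1)(\Ttm - \delta_t)$. Each regime is then a sign or size statement about $\Tt = \sum_{j\ge 0}\rho^j\delta_{t+1+j}$ and its differences. The compact support $[T_1,T_2]$ and nonnegativity of $\delta$ (from $g\ge 0$ and $h\in[0,1]$) make every sum finite and nonnegative.

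\textbf{Regimes (i) and (iv).} For $t<T_1$ only indices $t+1+j\ge T_1$ contribute, so
\[
\Tt = \rho^{T_1-t-1}\sum_{k\ge 0}\rho^k\delta_{T_1+k} = \rho^{T_1-t-1}\,\mathcal T_{T_1-1}.
\]
This is bounded by $\rho^{T_1-t-1}\max\delta/(1-\rho)$, hence negligible far from $T_1$, and it increases monotonically (by the factor $\rho^{-1}$ per step) as $t\uparrow T_1$. Where $\Tt$ is negligible, $\mu_t\approx c$ because $\delta_t=0$. Because $\mathcal T_{T_1-1}\le \rho\,\mathcal T_{T_1}+\rho\,\delta_{T_1}$, it is also smaller than the adoption-phase values. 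For $t>T_2$ (or $t\ge T_2$, given $\delta_{T_2}=0$), every future $\delta$ vanishes, so $\Tt=0$, $\mu_t=c$, and $f_t=f_{t-1}+c+\varepsilon_t$. The level is permanently elevated because $d_t$ has accumulated $\sum_s\delta_s>0$ relative to the no-technology path.

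\textbf{Regime (ii).} I need $\Ttm > \delta_t$ for $t\in[T_1,T_1+\tau)$. Since $\Ttm \ge \delta_t + \rho\,\delta_{t+1}$ and $\delta_{t+1}>0$ on this range (because $g$ is strictly increasing with $g(0)=0$, and if $t+1>T_1+\tau$ the value is $g(\tau)h(\cdot)>0$ before $T_2$), we get $\mu_t - c\ge(\rho^{-1}-1)\rho\,\delta_{t+1}>0$. The discrete indexing at the endpoint has to be handled carefully here.

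For convexity, first derive the identity in the Remark. Apply $\Delta$ twice to the series and shift the index:
\[
\Delta^2\Tt=\sum_{j\ge0}\rho^j\Delta^2\delta_{t+1+j},
\]
which is the Remark's form after reindexing. Then Assumption~\ref{ass:T_convexity} gives $\Delta^2\Tt>0$ on the stated index set directly. Strict convexity of $C+\Tt$ follows at once.

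\textbf{Regime (iii).} This is the main obstacle, since ``declines toward $c$'' is only loosely specified. I would show that $\mu_t - c=(\rho^{-1}-1)(\Ttm-\delta_t)$ is eventually nonincreasing and tends to $0$ as $t\to T_2$. It reaches $0$ because $\mathcal T_{T_2-1}=\rho^0\delta_{T_2}=0$ and $\delta_{T_2}=0$. For monotonicity, on the decline phase $\delta$ is strictly decreasing. Hence $\Ttm - \delta_t=\sum_{j\ge1}\rho^{j}\delta_{t+j}$, a discounted sum of future values of a decreasing sequence supported before $T_2$, and it is strictly decreasing in $t$: termwise $\delta_{t+1+j}\le\delta_{t+j}$, with strict inequality for the first term. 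This yields a monotone decline of $\mu_t$ to $c$ on $(T_1+\tau,T_2]$, which is the reading I will state precisely in the proof.
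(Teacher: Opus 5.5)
Your proposal is correct and follows the paper's proof essentially step for step: the same drift identity $\mu_t-c=(\rho^{-1}-1)\sum_{j\ge1}\rho^j\delta_{t+j}$ drives (ii) and (iii), convexity is read off Assumption~\ref{ass:T_convexity} after the same second-difference identity, and (iv) uses $\mathcal{T}_t=0$. Your exact relation $\mathcal{T}_t=\rho^{T_1-t-1}\mathcal{T}_{T_1-1}$ in (i) slightly sharpens the paper's geometric bound and is what actually delivers the stated monotone rise.

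Two small cautions. First, $\mathcal{T}_{T_1-1}=\rho\,\mathcal{T}_{T_1}$ only places pre-adoption values below $\mathcal{T}_{T_1}$, not below every early-adoption value. Indeed, for $\tau>1$, $\mathcal{T}_{T_1+1}<\mathcal{T}_{T_1-1}$ whenever $(1-\rho^2)\mathcal{T}_{T_1}<\delta_{T_1+1}$, e.g.\ for $\rho$ near one, so that clause of (i) is not true in general, and the paper does not prove it either. Second, in (iii) $\mu_t$ is strictly decreasing only on $(T_1+\tau,T_2-1]$, since $\mu_{T_2-1}=\mu_{T_2}=c$; the paper's proof has the same endpoint slip.
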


\noindent\emph{Remark.} The convexity claim in Part~(ii) is what is subsequently invoked in Proposition~\ref{prop:pd_limit} and Theorem~\ref{thm:size_distortion}(v) to deliver the right-tail size distortion of the PSY statistic. The motivation for Assumption~\ref{ass:T_convexity} and its relation to primitives on $g$ are discussed after Proposition~\ref{prop:dynamics}; Lemma~\ref{lem:T_convex_verify} below verifies it for the leading hump profiles. Monotonicity of $\mathcal{T}_t$ requires further conditions on the discounted future-shock profile and is not asserted here.

\begin{lemma}[Sufficient Conditions for Assumption~\ref{ass:T_convexity}]
\label{lem:T_convex_verify}
Let $\delta_t$ satisfy Assumption~\ref{ass:tech_shock}. Then Assumption~\ref{ass:T_convexity} holds (for some $\tau^\star \in (0,\tau]$) in each of the following cases:
\begin{enumerate}[label=(\roman*)]
    \item Convex ramp-up: If $g$ is $C^2$ on $[0,\tau]$ with $\inf_{x\in[0,\tau]} g''(x) > 0$, then Assumption~\ref{ass:T_convexity} holds on a non-empty initial subinterval $[T_1, T_1+\tau^\star)$ for some $\tau^\star \in (0,\tau]$.
    \item Smooth hump profiles with a convex left tail: For Gaussian-type, Beta, and logistic-S profiles whose ramp-up is $C^2$ with a strictly positive second derivative at the origin, Assumption~\ref{ass:T_convexity} holds with $\tau^\star \in (0,\tau]$ determined by the size of the convex left tail relative to the peak curvature, quantified explicitly in the proof. In particular, all three smooth profiles invoked in Remark~\ref{rmk:triangular} satisfy Assumption~\ref{ass:T_convexity} on a non-empty initial subinterval.
\end{enumerate}
The piecewise-linear triangular profile (Remark~\ref{rmk:triangular}) does \emph{not} satisfy Assumption~\ref{ass:T_convexity} on the ramp-up: its second-difference atoms at the kinks, weighted by the $\rho$-discounted future, produce $\Delta^2\mathcal{T}_t \leq 0$ on interior ramp-up dates. This is a counterexample rather than an example, and is treated separately in the proof.
\end{lemma}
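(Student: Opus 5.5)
The plan is to work entirely with the second-difference identity quoted in the remark after Assumption~\ref{ass:T_convexity}. Differencing $\mathcal{T}_t=\sum_{j\ge 0}\rho^j\delta_{t+1+j}$ twice gives
\[
\Delta^2\mathcal{T}_t=\sum_{j\ge 0}\rho^{j}\,\Delta^2\delta_{t+1+j},
\]
which is legitimate because $\delta$ is bounded with compact support. So $\Delta^2\mathcal{T}_t$ is a $\rho$-discounted sum of future second differences of $\delta$. For $t\in\{T_1,\dots,T_1+\tau-1\}$ I would split this sum into three blocks:
\begin{itemize}
\item a \emph{ramp block}, the indices $s=t+1+j\le T_1+\tau$;
\item a \emph{peak/decline block}, $T_1+\tau<s\le T_2+1$;
\item a \emph{tail}, which vanishes because $\delta$ has compact support.
\end{itemize}
On the ramp block one checks directly that $\Delta^2\delta_{T_1+1}=g(1)>0$, using $g(0)=0$ and $\delta_{T_1-1}=0$. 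For $s\ge T_1+2$, $\Delta^2\delta_s=g(s-T_1)-2g(s-1-T_1)+g(s-2-T_1)$.

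\textbf{Case (i).} Under $\inf g''=m>0$, the mean-value theorem for second differences gives $\Delta^2\delta_s\ge m$ on the ramp block. The ramp contribution is therefore at least $m(1-\rho^{L_t})/(1-\rho)$, where $L_t=T_1+\tau-t$ is the distance to the peak.

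The peak/decline block begins only at discount weight $\rho^{L_t}$. Its magnitude I would bound by summation by parts:
\[
\sum\rho^j\Delta^2\delta=\big[\rho^j\Delta\delta\big]+(1-\rho)\rho^{-1}\sum\rho^{j}\Delta\delta .
\]
Since $\delta$ is bounded by $g(\tau)$ and monotone on each side of the peak, this gives a bound $\rho^{L_t}K$ with $K$ depending only on $g'(\tau)$, $\sup|h'|$ and $g(\tau)$. Convexity then holds whenever
\[
m\,\frac{1-\rho^{L_t}}{1-\rho}>\rho^{L_t}K .
\]
The left side increases and the right side decreases as $t$ moves away from the peak. I would define $\tau^\star$ as the largest integer such that this inequality holds for all $t<T_1+\tau^\star$.

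\textbf{Case (ii).} The same decomposition applies with $m$ replaced by a positive lower bound on $g''$ over the convex left tail $[0,\tau_c]$, which exists by continuity since $g''(0)>0$. Between $\tau_c$ and the peak, the possibly negative curvature is absorbed into $K$ as a bounded negative contribution, discounted by $\rho^{\tau_c-(t-T_1)}$. The explicit $\tau^\star$ is then read off from the same inequality. For the Gaussian, Beta and logistic-S profiles I would just verify $g''(0)>0$ and bounded derivatives on the rescaled support.

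\textbf{Main obstacle.} The hard step is non-emptiness of $\{t: T_1\le t<T_1+\tau^\star\}$, i.e.\ that the inequality holds already at $t=T_1$. This requires $m(1-\rho^{\tau})/(1-\rho)>\rho^{\tau}K$, which is not automatic for arbitrary $(\rho,\tau,m,K)$. I would first try to exploit the ramp-up atom $g(1)$ together with the fact that the negative peak mass enters only at weight $\rho^{\tau}$. If that fails, I would make the condition explicit: state $\tau^\star$ via the inequality and note that it is non-empty for a sufficiently long ramp relative to $\log(K/m)/|\log\rho|$. This is the ``quantified explicitly'' clause of the lemma.

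\textbf{Triangular counterexample.} This is a clean computation. On the ramp interior $\Delta^2\delta_s=0$, and the nonzero atoms are:
\begin{itemize}
\item $+\dmax/\tau$ at $T_1+1$;
\item $-\dmax\big(1/\tau+1/(T_2-T_1-\tau)\big)$ at the peak $T_1+\tau+1$;
\item $+\dmax/(T_2-T_1-\tau)$ at $T_2+1$.
\end{itemize}
For $t\ge T_1+1$ the first atom is already past. What remains is $-\rho^{a}A+\rho^{b}B$ with $a<b$ and $A>B>0$, which is strictly negative. Hence $\Delta^2\mathcal{T}_t\le 0$ on interior ramp-up dates, so the triangular profile violates Assumption~\ref{ass:T_convexity}.
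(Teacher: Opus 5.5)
You are following the paper's own route: positive ramp-up curvature set against a $\rho$-discounted peak/decline tail. Your triangular computation for $t\ge T_1+1$ agrees exactly with the paper's closed form.

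The genuine gap is the step you flag yourself: non-emptiness is never established. The lemma claims that case (i) needs only $\inf g''>0$. Your sufficient condition $m(1-\rho^{\tau})/(1-\rho)>\rho^{\tau}K$ at $t=T_1$ fails for admissible profiles. Take $g(x)=x+\epsilon x^2$, so $m=2\epsilon$. Summing by parts, the peak/decline block equals $-\rho^{L_t}\Delta\delta_{T_1+\tau}$ plus a non-positive remainder. Hence any valid $K$ satisfies $K\ge g(\tau)-g(\tau-1)\ge 1$, and your inequality fails for small $\epsilon$.

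Your fallback of imposing a long-ramp condition proves a weaker statement than the one asserted. The loss comes from bounding each block in absolute value:
\begin{itemize}
\item You replace the atom $\Delta^2\delta_{T_1+1}=g(1)$ by $m$, which is not even a valid lower bound: $g(x)=x^2$ gives $g(1)=1<m=2$.
\item You treat $K$ as a constant independent of the ramp. In fact the second differences of a compactly supported $\delta$ sum to zero, so the blocks are tied together exactly.
\end{itemize}
The paper's proof has the same hole; it simply asserts that the ramp contribution dominates for $t$ near $T_1$.

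The missing idea is to compute $\Delta^2\mathcal{T}_{T_1}$ exactly. Since $g(0)=0$, we have $\delta_{T_1}=\delta_{T_1-1}=0$. Lemma~\ref{lem:T_recursion} then gives $\mathcal{T}_{T_1-1}=\rho\,\mathcal{T}_{T_1}$ and $\mathcal{T}_{T_1-2}=\rho^{2}\mathcal{T}_{T_1}$, hence
\[
\Delta^{2}\mathcal{T}_{T_1}=(1-\rho)^{2}\,\mathcal{T}_{T_1}\;\ge\;(1-\rho)^{2}\,g(1)>0 .
\]
This delivers Assumption~\ref{ass:T_convexity} with $\tau^\star=1$ in cases (i) and (ii). In fact it holds for every profile satisfying Assumption~\ref{ass:tech_shock}, with no curvature hypothesis at all. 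So you do not need to verify $g''(0)>0$ for the named smooth families. That verification would fail anyway for rescaled Beta densities such as Beta$(2,5)$, where $g''(0)<0$.

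If you want an explicit $\tau^\star>1$, difference the recursion twice to get $\rho\,\Delta^2\mathcal{T}_t=(1-\rho)\Delta\mathcal{T}_{t-1}-\Delta\delta_t$. Convexity at $t$ is then equivalent to $(1-\rho)\Delta\mathcal{T}_{t-1}>\Delta\delta_t$, and your block estimates can serve as sufficient conditions for extending the window beyond $T_1$.

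The same identity shows that the triangular profile also has $\Delta^2\mathcal{T}_{T_1}>0$. Your counterexample therefore rules out only $\tau^\star\ge 2$, which is consistent with the statement's restriction to interior ramp-up dates.
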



The intuition for each part is as follows. Well before adoption ($t \ll T_1$), no technology shocks are present and future shocks are heavily discounted, so $\Tt \approx 0$ and the price follows approximately a standard random walk with drift. As $t$ approaches $T_1$, forward-looking agents begin to price in the approaching technology shock, causing $\Tt$ to rise above zero even before $T_1$. During the early adoption phase, the present value $\Tt$ rises further because each additional period brings new positive future $\delta_s$ values into the discounting window, while the current $\delta_t$ is still small relative to the accumulated future shocks, so that the market ``prices in'' the technology boom. After the peak, remaining future shocks diminish and $\Tt$ declines. After maturation, the technology component vanishes from the present value but its cumulative effect remains permanently embedded in the dividend level. The formal proof is provided in Appendix~\ref{app:proofs}.

\begin{corollary}[Bubble-Like Pattern without a Bubble]
\label{cor:spurious}
Under $b_t = 0$, the log price-dividend ratio $f_t - d_t = C + \Tt$ follows a hump-shaped deterministic path over $[T_1, T_2]$, rising during the adoption phase $[T_1, T_1+\tau]$ and declining during the maturation phase $(T_1+\tau, T_2]$. This trajectory is observationally similar to a ``bubble and collapse'' in the price-dividend ratio, despite being entirely fundamental.
\end{corollary}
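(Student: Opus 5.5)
The plan is to reduce the corollary to a sign analysis of $\Delta\Tt$. We combine Proposition~\ref{prop:fundamental}, which gives the level decomposition $f_t-d_t=C+\Tt$ (so the path is deterministic because $\delta_t$ is), with Lemma~\ref{lem:T_recursion}. Rearranging the lemma as $\Ttm=\delta_t+\rho\,\Tt$ gives
\begin{equation*}
\Delta\Tt \;=\; (1-\rho)\,\Tt-\delta_t \;=\; \sum_{j=0}^{\infty}(1-\rho)\rho^{j}\,\delta_{t+1+j}\;-\;\delta_t .
\end{equation*}
The weights $(1-\rho)\rho^j$ sum to one, so $(1-\rho)\Tt$ is a discounted weighted average of future technology shocks. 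Hence the price--dividend ratio rises exactly when this forward average exceeds the current shock, and falls when it is below it. The whole corollary then becomes a comparison between $\delta_t$ and its own discounted future average, which Assumption~\ref{ass:tech_shock} controls through the monotonicity of $g$ and $h$.

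\textbf{Decline phase.} For $t\in(T_1+\tau,T_2]$, every future value satisfies $\delta_{t+1+j}\le\delta_t$, with strict inequality for $j=0$ because $h$ is strictly decreasing. The values beyond $T_2$ are zero. The weighted average is therefore strictly below $\delta_t$, so $\Delta\Tt<0$. The same argument applies at $t=T_1+\tau$ itself, since $\delta$ peaks there. After $T_2$ we have $\Tt=0$, which recovers part (iv) of Proposition~\ref{prop:explosive}.

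\textbf{Adoption phase.} At $t=T_1$ we have $\delta_{T_1}=g(0)=0$ while the forward average is strictly positive, so $\Delta\mathcal T_{T_1}>0$. By continuity of the comparison in $t$ this persists over an initial stretch. Combined with Proposition~\ref{prop:explosive}(ii) ($\mu_t>c$) and Assumption~\ref{ass:T_convexity}, $\Tt$ rises at an accelerating rate on $[T_1,T_1+\tau^\star)$. We then argue that the gap $A_t=(1-\rho)\Tt-\delta_t$ changes sign only once, from positive to negative: along the ramp $\delta_t$ increases while the forward average is eventually dragged down by the post-peak decline. Together with the decline phase, this single crossing makes the path hump-shaped: first rising, then falling, and returning to the constant $C$ after $T_2$. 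Before $T_1$, $\Tt$ is increasing because $\delta_t=0$ there, so $\Delta\Tt=(1-\rho)\Tt\ge 0$. This is consistent with part (i).

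\textbf{Main obstacle.} The hard step is the claim that the rise lasts through the whole interval $[T_1,T_1+\tau]$. The identity above shows that $\Tt$ actually peaks at the first date at which $\delta_t$ overtakes its discounted future average. That date is weakly \emph{before} $T_1+\tau$, since $\Delta\Tt<0$ already at the peak of $\delta$. I would first try to bound the gap between the two turning points by showing that $A_t>0$ whenever $\delta_t\le (1-\rho)\sum_{j}\rho^j\delta_{t+1+j}$ stays above the current level, which for $\rho$ near one makes the gap small. Failing that, I would state the corollary with the turning point $t^{\dagger}\le T_1+\tau$ defined by the sign change of $A_t$, and read ``rising during adoption'' as rising on $[T_1,t^{\dagger}]$. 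Single-crossing of $A_t$ may need a mild regularity condition on $g$ and $h$ (for example, concavity of $h$), and this is the place where I expect to have to add it.
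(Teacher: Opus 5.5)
Your key observation is right, and it means the corollary cannot be proved as written: the defect is in the statement, not in your argument. Write $A_t:=\Delta\Tt=\sum_{j\ge 0}\rho^{j}\Delta\delta_{t+1+j}$, which is equivalent to your $(1-\rho)\Tt-\delta_t$. Every increment after the peak of $\delta$ is non-positive and the first is strictly negative, so $A_{T_1+\tau}<0$ for every admissible profile and every $\rho\in(0,1)$. Hence $C+\Tt$ falls over the last step of the adoption window. Its maximum is at $t^\dagger-1\le T_1+\tau-1$, one date before the crossing date you named, so strictly before the peak of $\delta$.

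The paper gives no formal proof of this corollary, only the heuristic paragraph after Proposition~\ref{prop:explosive}. The remark following that proposition explicitly declines to assert monotonicity of $\Tt$. The half that is true is already in the paper. Since $\mu_t-c=(\rho^{-1}-1)\sum_{j\ge1}\rho^{j}\delta_{t+j}=(1-\rho)\Tt$, the proof of Proposition~\ref{prop:explosive}(iii) that $\mu_t$ strictly declines is exactly your decline-phase argument. The same computation one period earlier reproduces your sign at $T_1+\tau$. The identity also shows that $\mu_t>c$ only says $\Tt>0$, so invoking it adds nothing to the rise. Your restated corollary, with a turning point $t^\dagger\le T_1+\tau$, is the right target.

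Two points in your plan for the restated version need correcting.

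\emph{Single crossing needs no extra condition.} It requires no regularity condition on $g$ or $h$, and the appeal to ``continuity in $t$'' has no content on the integers. From $\Tt=\delta_{t+1}+\rho\,\mathcal{T}_{t+1}$ you get $A_t=\Delta\delta_{t+1}+\rho A_{t+1}$, that is, $A_{t+1}=\rho^{-1}(A_t-\Delta\delta_{t+1})$. On the ramp $\Delta\delta_{t+1}>0$, so $A_t\le 0$ forces $A_{t+1}<0$. Combine this with $A_t=(1-\rho)\Tt>0$ for all $t\le T_1$ and with your decline-phase sign. Then $A$ changes sign exactly once, at some $t^\dagger\in(T_1,T_1+\tau]$. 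So $C+\Tt$ is strictly increasing up to $t^\dagger-1$ and non-increasing thereafter. The decline is only weak at $T_2$, where $\mathcal{T}_{T_2-1}=\mathcal{T}_{T_2}=0$.

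\emph{The dependence on $\rho$ runs the other way.} Your fallback of bounding the gap for $\rho$ near one cannot work. As $\rho\to 0$, $\Tt\to\delta_{t+1}$ and the peak sits at $T_1+\tau-1$, which is the best case. As $\rho\to 1$, $A_t\to-\delta_t$, so $t^\dagger\to T_1+1$ and the ratio peaks at $T_1$ and declines over the entire ramp. For example, take the triangular profile with $T_1=80$, $\tau=30$, $T_2=200$. Then $\Tt$ peaks at $t=104$ when $\rho=0.95$ and around $t=91$ when $\rho=0.99$. No bound on $T_1+\tau-t^\dagger$ that is uniform in $\rho$ exists, so only the restated version is provable.
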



\subsection{Extensions}
\label{subsec:extensions}

The baseline theory of Sections~\ref{subsec:setup}--\ref{subsec:explosive} rests on three simplifying restrictions: (i) constant expected log-return $\bar r$ (Assumption~\ref{ass:const_returns}); (ii) a deterministic hump-shaped technology shock (Assumption~\ref{ass:tech_shock}); and (iii) representative-asset analysis. Three extensions in Appendix~\ref{app:extensions} relax each in turn and establish that the spurious-explosiveness mechanism is preserved or strengthened in every case. Proposition~\ref{prop:tv_returns} shows that under time-varying expected returns of the form $\E_t[r_{t+1+j}]=\bar r+\phi\,\delta_{t+1+j}$, the fundamental price becomes $f_t=d_t+C+(1-\phi)\,\Tt$, so a technology-induced decline in expected returns ($\phi<0$) \emph{amplifies} the locally explosive pattern by the factor $1-\phi>1$. Proposition~\ref{prop:stochastic_tech} formalizes Bayesian learning under stochastic technology uncertainty (Assumption~\ref{ass:stochastic_tech}): positive technology news generates discrete upward jumps in $\Tt$ that compound the deterministic hump; the asymptotic size distortion is unchanged at first order, while finite-sample rejection rates can rise substantially (e.g., from 10.4\% to 29.0\% in the P/D-ratio implementation, documented in Appendix~\ref{subsec:mc_stochastic}). Cross-sectional heterogeneity ($\delta_{i,t}=\lambda_i\delta_t$) implies that the aggregate market fundamental inherits the technology effect weighted by $\sum_i w_i\lambda_i$, so high-exposure, high-capitalization industries (e.g., late-1990s NASDAQ) are most susceptible to spurious detection. Remark~\ref{rmk:pastor_veronesi} in Appendix~\ref{app:extensions} connects the stochastic extension to \citet{Pastor2009}: high prices during a learning phase reflect the conditional expectation $\hat\delta_t$ entering $f_t$, not a speculative component $b_t$, and the technology-adjusted test correctly attributes these dynamics to fundamentals.

\section{The PSY Test and Size Distortion}
\label{sec:psy}

\subsection{Review of the PSY Procedure}
\label{subsec:psy_review}

The PSY procedure \citep{Phillips2015b, Phillips2015a} detects explosive behavior via recursive right-tailed augmented Dickey--Fuller (ADF) regressions. For subsample $[r_1, r_2]$ with $r_2 - r_1 \geq r_0$, define the BSADF statistic $\bsadf(r_2) = \sup_{r_1 \in [0,\, r_2 - r_0]} \adf_{r_1}^{r_2}$ and the GSADF statistic $\gsadf = \sup_{r_2 \in [r_0, 1]} \bsadf(r_2)$, where $\adf_{r_1}^{r_2}$ is the right-tailed $t$-statistic on $\hat{\beta}$ in
\begin{equation}
    \Delta y_t = \hat{\alpha}_{r_1,r_2} + \hat{\beta}_{r_1,r_2}\, y_{t-1} + \sum_{k=1}^{K} \hat{\gamma}_{k,r_1,r_2}\, \Delta y_{t-k} + \hat{u}_t.
    \label{eq:ADF_reg}
\end{equation}
Under the null $H_0: y_t = y_{t-1} + \eta_t$, critical values are obtained by simulation. Date-stamping identifies bubble origination where $\bsadf(r_2)$ first exceeds its critical value and collapse where it falls back below.

\subsection{The Size Distortion Problem}
\label{subsec:size_problem}

The null hypothesis of the PSY test is a pure random walk. Under our technology-augmented data-generating process, however, the observables do not follow a random walk even in the absence of a bubble. We consider two common empirical implementations.

\paragraph{Implementation 1: Detrended log prices.}
Since $f_t = d_t + C + \Tt$ and $d_t = d_0 + ct + \sum_{s=1}^t (\delta_s + \varepsilon_s)$, the fundamental price is
\begin{equation*}
    f_t = d_0 + C + ct + \sum_{s=1}^t (\delta_s + \varepsilon_s) + \Tt.
\end{equation*}
After removing a linear trend (as is common practice before applying unit root tests), the detrended process is
\begin{equation*}
    \tilde{f}_t = f_t - (\hat{a} + \hat{b}\, t) \approx \sum_{s=1}^t \varepsilon_s + \underbrace{\sum_{s=1}^t \delta_s + \Tt - (\text{linear projection of these on } t)}_{\text{hump-shaped residual}}.
\end{equation*}
The cumulative technology term $\sum_{s=1}^t \delta_s + \Tt$ is a hump-shaped function of $t$ that is not well approximated by a linear trend. The residual after detrending retains a convex, locally explosive component during $[T_1, T_1+\tau]$.

\paragraph{Implementation 2: Log price-dividend ratio.}
From Proposition~\ref{prop:fundamental}, $f_t - d_t = C + \Tt$. In the absence of a technology shock, the log price-dividend ratio is the constant $C$. With a technology shock, it follows the hump-shaped path $C + \Tt$. PSY applied to $\{f_t - d_t\}$ will detect the rising phase of $\Tt$ as explosive behavior.

\subsection{Formal Size Distortion Result}
\label{subsec:formal_size}

We now state the main econometric result of the paper.

\begin{theorem}[Size Distortion of the PSY Test under Technology-Augmented Fundamentals]
\label{thm:size_distortion}
Let $\{f_t\}_{t=1}^T$ be generated by the data-generating process of Propositions~\ref{prop:fundamental}--\ref{prop:dynamics} with $b_t = 0$ and a technology shock satisfying Assumption~\ref{ass:tech_shock} with $\dmax > 0$. Let $\psi_T$ denote the rejection probability of the $\gsadf$ test at nominal level $\alpha$, where critical values are computed under the driftless random walk null $H_0: y_t = y_{t-1} + \eta_t$. Then:
\begin{enumerate}[label=(\roman*)]
    \item $\psi_T > \alpha$ for all $T$ sufficiently large, provided the technology window $[T_1, T_2]$ is contained in the sample.
    \item Larger peak technology shocks magnify the deterministic component driving the GSADF statistic. In particular, in both implementations $\psi_T \to 1$ along any sequence with $\sqrt{T}\,\dmax(T) \to \infty$.
    \setcounter{enumi}{4}
    \item In the case of the log price-dividend ratio $f_t - d_t = C + \Tt$, the rejection probability $\psi_T \to 1$ as $\dmax \to \infty$ for any fixed nominal level $\alpha$.
\end{enumerate}
\end{theorem}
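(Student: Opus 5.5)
The plan is to bound the GSADF statistic from below by a single, well-chosen ADF statistic, and then show that this ADF statistic drifts to $+\infty$ under the stated rates. By definition, $\gsadf \ge \adf_{r_1}^{r_2}$ for any admissible pair $(r_1,r_2)$ with $r_2-r_1\ge r_0$. So it suffices to find one window on which the right-tailed $t$-statistic diverges (for (ii) and (v)), or is stochastically larger than its null counterpart (for (i)). The critical value $cv_\alpha$ is fixed by the driftless random-walk null and does not depend on $T$ or $\dmax$. I will take the technology dates to scale with the sample, $T_1=\lfloor \lambda_1 T\rfloor$, $T_1+\tau=\lfloor \lambda_\tau T\rfloor$, $T_2=\lfloor \lambda_2 T\rfloor$. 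I then choose $[r_1,r_2]\subset[\lambda_1,\lambda_\tau]$ with $r_2-r_1\ge r_0$ to sit inside the ramp-up: the convex initial segment of Assumption~\ref{ass:T_convexity} for the P/D ratio, and the convex part of the detrended hump for Implementation 1.

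\emph{Step 1 (decomposition).} In either implementation I write $y_t = D_t + S_t$. The deterministic part is $D_t=\sum_{s\le t}\delta_s+\Tt$ minus its linear projection on $t$ for detrended prices, or $D_t=C+\Tt$ for the P/D ratio, the latter plus any stationary pricing error carried into the stochastic part. The stochastic part $S_t$ is $\sum_{s\le t}\varepsilon_s$ detrended, or that pricing error. Scaling $\delta_t=\dmax\,\bar\delta_t$ with $\bar\delta$ a fixed shape gives $D_t=\dmax\,\bar D_t$, with $\bar D_{\lfloor rT\rfloor}$ of order $T$ in Implementation 1 (because of the cumulative sum) and of order one in Implementation 2. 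Meanwhile $S_t=O_p(\sqrt T)$ in levels and $\Delta S_t=O_p(1)$.

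\emph{Step 2 (limit of the ADF coefficient).} On the chosen window, I substitute the decomposition into the OLS normal equations of \eqref{eq:ADF_reg}, first with $K=0$ and then adding fixed $K$ lags. When the deterministic part dominates ($\sqrt T\,\dmax\to\infty$), $\hat\beta$ is asymptotically the within-window projection coefficient of $\Delta D_t$ on $(1,D_{t-1})$. The key sign fact is that on a segment where $D$ is increasing and strictly convex, $\Delta D_t$ is a strictly increasing function of $D_{t-1}$. This follows from Proposition~\ref{prop:explosive}(ii) and Assumption~\ref{ass:T_convexity} for the P/D ratio, using $\Delta^2\mathcal T_t>0$. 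Hence the within-window covariance, and therefore the limit of $\hat\beta$, is strictly positive. The residual variance stays bounded below by the variance of the stochastic increments, so $\adf_{r_1}^{r_2}$ grows like $\sqrt{n}\,\dmax$ times a positive constant, where $n=\lfloor (r_2-r_1)T\rfloor$. This gives (ii). Part (v) follows from the same computation with $T$ fixed and $\dmax\to\infty$: the $t$-ratio is linear in $\dmax$ to leading order, so $\Pr(\adf>cv_\alpha)\to1$.

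\emph{Step 3 (part (i)).} Under the proportional-window embedding, $\sum_{s\le t}\delta_s$ is $O(\dmax T)$, which already dominates $S_t=O_p(\sqrt T)$. So for fixed $\dmax>0$ the condition $\sqrt T\,\dmax\to\infty$ holds trivially, and Step 2 gives $\psi_T\to1>\alpha$. If instead the window has fixed calendar length, I would fall back on a functional limit. In that case GSADF converges to a supremum functional of a Brownian motion plus a non-negative, convex, increasing deterministic perturbation. I would then argue that this perturbation strictly raises the sup, via a monotone-shift argument on the window where the perturbation is active.

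The main obstacle is the sign argument for Implementation 1. A hump minus its linear projection is not globally convex, and the triangular profile has kinks, so Assumption~\ref{ass:T_convexity} cannot be used there. I would first try to locate an explicit subwindow near $\lambda_1$ where $\sum_{s\le t}\bar\delta_s$ is quadratic, with second difference $\bar\delta_t-\bar\delta_{t-1}>0$, and $\bar{\mathcal T}_t$ has bounded curvature. On that subwindow the detrended deterministic part is still increasing and convex, so the projection coefficient is positive. A second, more technical point is controlling the lag terms $\Delta y_{t-k}$ when $\Delta D$ is smooth, since near-collinearity could inflate the standard error. I expect to handle this by showing the Gram matrix, after scaling by $\dmax$, has a nonsingular limit thanks to the stochastic increments.
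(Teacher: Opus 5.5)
For detrended log prices your plan is the paper's own route. Theorem~\ref{thm:contaminated_limit}(e) fixes one ramp-up window, shows $T\hat\beta\to B^\star>0$ and $t_{\hat\beta}\asymp\sqrt T\,\dmax$, and lets the GSADF supremum dominate; fixed $\dmax$ then gives (i).

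\textbf{The price-dividend ratio.} The genuine gap is that you push the price-dividend ratio through the same computation, and it does not go through there. In that implementation $\Tt$ enters in levels rather than cumulated. On a window of $n\asymp T$ points, the deterministic part of the numerator of $\hat\beta$ is therefore only $\dmax^2\int\tilde q\,q'\,dr=O(\dmax^2)$. A nondegenerate stationary pricing error, which your Step~1 allows and Proposition~\ref{prop:pd_limit} assumes, contributes $\sum u_{t-1}\Delta u_t\approx-n\sigma_u^2$. Hence $\hat\beta>0$ needs $\dmax^2/(T\sigma_u^2)\to\infty$, and the $t$-ratio then grows at the rate $\dmax/(\sigma_u\sqrt T)$ of Proposition~\ref{prop:pd_limit}, not $\sqrt n\,\dmax$. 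With fixed $\dmax$, as in your Step~3, the P/D statistic actually tends to $-\infty$ on every admissible window. This is why the paper routes all P/D claims through Proposition~\ref{prop:pd_limit}, whose hypothesis explicitly excludes fixed $\dmax$ and $\dmax=c_\delta/\sqrt T$. So $\sqrt T\,\dmax\to\infty$ alone does not give the P/D case.

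\textbf{Part (v) at fixed $T$.} Your argument for (v) also fails. At fixed $T$ the ADF $t$-ratio is invariant to rescaling the series. Moreover, the residual variance is not bounded as $\dmax\to\infty$. By Lemma~\ref{lem:T_recursion}, $\Delta\Tt=(\rho^{-1}-1)\Ttm-\rho^{-1}\delta_t$. So the deterministic misfit of $\Delta y_t$ on $(1,y_{t-1})$ is $-\rho^{-1}$ times the residual of $\delta_t$ on $(1,\Ttm)$, which is of order $\dmax$ and generically nonzero. The statistic therefore converges to the finite value computed from the deterministic shape, and whether that value exceeds $cv_\alpha$ is a finite-$T$ fact; it is not ``linear in $\dmax$.'' The paper obtains (v) only with $T\to\infty$ jointly, under the condition of Proposition~\ref{prop:pd_limit}.

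\textbf{Window placement for detrended prices.} Your proposed window sits where the sign is wrong. Let $g$ be the detrended integrated hump of Theorem~\ref{thm:contaminated_limit}. Near $\lambda_1$ the deterministic component $T\dmax\,g(t/T)$ is convex but decreasing. Its per-period slope is $\dmax\{h(r)-\alpha_1\}$, where $\alpha_1=\int_0^1 6r(1-r)h(r)\,dr>0$ is the fitted trend slope, and $h\approx 0$ there. For convex $g$, the quantity $\int\tilde g\,g'\,dr=\tfrac12[\tilde g(r_2)^2-\tilde g(r_1)^2]$ has the sign of $g(r_2)-g(r_1)$. A window near $\lambda_1$ therefore gives $B^\star<0$.

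Convexity is not the obstacle. Subtracting a line preserves it, the cumulated triangular ramp is quadratic, and the kinks affect only $\Tt=O(\dmax)$, which is negligible against the $O(T\dmax)$ cumulated term. Monotonicity is the obstacle. You need a ramp-up window with $g(r_2)>g(r_1)$ and length at least $r_0$, for instance late in the ramp-up where $h>\alpha_1$; this region is non-empty because $\alpha_1<\max h=1$. The paper applies its Sub-Lemma to $q=g$ on an arbitrary ramp-up window without checking this, so you should make the choice explicit.

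\textbf{Fixed-calendar fallback.} Your fallback for (i) with a fixed-calendar technology window does not work either. There the cumulated shock is $O(\dmax(T_2-T_1))=o(\sqrt T)$, so it drops out of the functional limit. That limit then yields only $\psi_T\to\alpha$, not a strict excess over $\alpha$.
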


\begin{remark}[Comparative Statics in Discount Factor]
In the case of the log price-dividend ratio $f_t-d_t=C+\Tt$, a higher $\rho$ raises the deterministic profile $\Tt$ pointwise during the adoption phase and therefore strengthens the local explosive signal on ramp-up windows. For detrended log prices, the relationship need not be monotone: lower $\rho$ can sharpen the hump sufficiently to offset the lower level.
\end{remark}

\begin{remark}[Comparative Statics in Adoption Duration]
Under a self-similar dilation family of adoption profiles (formalized in Appendix~\ref{app:proofs}), lengthening the adoption window increases the local deterministic signal on ramp-up windows. Consequently, once that signal is sufficiently strong, the induced rejection probability rises with $T_2-T_1$.
\end{remark}

The formal proof is given in Appendix~\ref{app:proofs}. The key mechanism is that the technology shock introduces a locally deterministic trending component into the process under the null, causing spurious rejections for both detrended log prices and the log price-dividend ratio. The formal asymptotic treatment is developed in Theorem~\ref{thm:contaminated_limit} and Proposition~\ref{prop:pd_limit} below.

\begin{remark}
Theorem~\ref{thm:size_distortion} is stated for the GSADF test, but identical conclusions apply to the BSADF date-stamping procedure: the BSADF statistic will cross the critical value sequence near $T_1$ (origination of the spurious ``bubble'') and fall back near $T_1 + \tau$ (spurious ``collapse''), precisely the pattern one would observe if there were a genuine bubble during the technology adoption phase.
\end{remark}

\begin{remark}
The source of size distortion we identify is fundamentally different from the drift-induced distortion discussed in the unit root literature \citep[e.g.,][]{West1988}. A constant drift in log prices can be removed by detrending. Our result arises from a \emph{hump-shaped, nonlinear} deterministic component that cannot be removed by standard linear detrending.
\end{remark}

To formally establish the size distortion properties described in Theorem~\ref{thm:size_distortion}, we embed the technology shock in a local asymptotic framework. This embedding, standard in the analysis of unit root tests against local alternatives \citep{Phillips1987, PhillipsMagdalinos2007}, yields non-degenerate continuous-time limits of the test statistics under the technology-augmented data-generating process.

\begin{assumption}[Local Asymptotic Embedding]\label{ass:local_framework}
The technology shock from Assumption~\ref{ass:tech_shock} is embedded as a triangular array
\[
\delta_{t,T}=\dmax(T)\,h(t/T),
\qquad
\dmax(T)=c_\delta/\sqrt{T},
\qquad
c_\delta>0,
\]
where the adoption window is proportional to the sample size, with fixed fractions
\[
\lambda_1=T_1/T,
\qquad
\lambda_2=T_2/T,
\qquad
\kappa=\tau/(T_2-T_1),
\]
and where $h:[0,1]\to\mathbb{R}_+$ is continuous, supported on $[\lambda_1,\lambda_2]$, normalized by $\max_{r\in[0,1]}h(r)=1$, strictly increasing on $[\lambda_1,\lambda_1+\kappa(\lambda_2-\lambda_1)]$, and strictly decreasing on $[\lambda_1+\kappa(\lambda_2-\lambda_1),\lambda_2]$.
\end{assumption}


The rate $T^{-1/2}$ is dictated by the requirement that the cumulated technology shock $\sum_{s=1}^t \delta_{s,T}$ (which enters the detrended log price at magnitude $O(T \cdot \dmax(T))$) be commensurate with the $O(\sqrt{T})$ stochastic trend from the random walk component of dividends.

\begin{theorem}[Contaminated Brownian Motion Limit]\label{thm:contaminated_limit}
Under Assumptions~\ref{ass:dividend}, \ref{ass:tech_shock}, \ref{ass:const_returns}, and \ref{ass:local_framework} with $b_t = 0$, define the integrated hump $G(r) = \int_0^r h(s) \mathbf{1}\{\lambda_1 \leq s \leq \lambda_2\}\, ds$ and its linearly detrended version $g(r) = G(r) - \alpha_0 - \alpha_1 r$, where $(\alpha_0, \alpha_1) = \arg\min \int_0^1 [G(r) - a - br]^2 dr$. Then, as $T \to \infty$:
\begin{enumerate}[label=(\alph*)]
    \item The standardized detrended log price converges weakly:
\begin{equation*}
    T^{-1/2} \tilde{f}_{\lfloor Tr \rfloor} \Rightarrow Z_{c_\delta}(r) := \sigma W(r) + c_\delta\, g(r),
    \end{equation*}
    where $W(r)$ is a standard Brownian motion and $\sigma^2 = \Var(\varepsilon_t)$.
    \item For any subsample window $[r_1, r_2]$ with $r_w = r_2 - r_1 \geq r_0$, the augmented Dickey--Fuller $t$-statistic converges:
\begin{equation*}
    \adf_{r_1}^{r_2} \Rightarrow \mathcal{F}(r_1, r_2;\, c_\delta),
    \end{equation*}
    where the functional $\mathcal{F}$ is
    \begin{equation}\label{eq:contaminated_F}
    \mathcal{F}(r_1, r_2;\, c_\delta) = \frac{\frac{1}{2}r_w \big(Z_{c_\delta}(r_2)^2 - Z_{c_\delta}(r_1)^2 - \sigma^2 r_w\big) - \big(\int_{r_1}^{r_2} Z_{c_\delta}\, dr\big)\big(Z_{c_\delta}(r_2) - Z_{c_\delta}(r_1)\big)}{\Big[\sigma^2 r_w \Big\{r_w \int_{r_1}^{r_2} Z_{c_\delta}^2\, dr - \big(\int_{r_1}^{r_2} Z_{c_\delta}\, dr\big)^2\Big\}\Big]^{1/2}}.
    \end{equation}
    When $c_\delta = 0$, $\mathcal{F}$ reduces to the standard Dickey--Fuller functional \citep{Phillips2015b}.
    \item The $\gsadf$ statistic converges:
\begin{equation*}
    \gsadf_T \Rightarrow \sup_{(r_1,r_2)\,\in\, \mathcal{R}} \mathcal{F}(r_1, r_2;\, c_\delta),
    \end{equation*}
    where $\mathcal{R} = \{(r_1,r_2): r_w \geq r_0,\; r_2 \in [r_0, 1],\; r_1 \in [0, r_2-r_0]\}$.
    \item \emph{Size distortion for sufficiently large local contamination.} Let $cv_\alpha$ denote the $\alpha$-level critical value of the $\gsadf$ under $c_\delta=0$. Assume that, for the specific $c_\delta^\star$ identified below, the map
    \[
    c \mapsto P\!\left(\sup_{(r_1,r_2)\in\mathcal R}\mathcal F(r_1,r_2;c) > cv_\alpha\right)
    \]
    is continuous at $c=c_\delta^\star$. Then there exists $c_\delta^\star>0$ such that, for all $c_\delta>c_\delta^\star$,
    \[
    \lim_{T \to \infty} P(\gsadf_T > cv_\alpha) > \alpha.
    \]
    \item \emph{Divergence.} If $T^{1/2}\, \dmax(T) \to \infty$ (which includes the fixed-$\dmax$ case), then $\gsadf_T \overset{p}{\to} \infty$, and hence $\psi_T \to 1$ for the detrended-price implementation. In particular, for any fixed $\dmax>0$, the rejection probability converges to one, which is stronger than Part~(i) of Theorem~\ref{thm:size_distortion} for that implementation.
\end{enumerate}
\end{theorem}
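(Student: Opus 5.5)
Under Assumption~\ref{ass:local_framework} I would write the undetrended fundamental as
\[
f_t = d_0 + C + ct + \sum_{s\le t}\varepsilon_s + \sum_{s\le t}\delta_{s,T} + \mathcal{T}_{t,T}.
\]
The plan is to treat the technology part as a deterministic, $O(\sqrt T)$ perturbation of a standard random walk and carry it through the usual Dickey--Fuller continuous-mapping argument. Parts (a)--(c) follow from this; parts (d) and (e) come from analysing how the limit functional behaves as the contamination grows.

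\textbf{Part (a).} There are three sources of the limit.
\begin{itemize}
\item By Donsker's principle, $T^{-1/2}\sum_{s\le\lfloor Tr\rfloor}\varepsilon_s\Rightarrow\sigma W(r)$.
\item A Riemann-sum argument using continuity of $h$ gives $T^{-1/2}\sum_{s\le \lfloor Tr\rfloor}\delta_{s,T} = c_\delta T^{-1}\sum_{s} h(s/T)\to c_\delta G(r)$, uniformly in $r$.
\item The present-value term is negligible: $0\le \mathcal{T}_{t,T}\le \dmax(T)/(1-\rho)=O(T^{-1/2})$, so $T^{-1/2}\mathcal{T}_{t,T}=o(1)$ uniformly.
\end{itemize}
OLS detrending on $(1,t)$ is a continuous linear projection operator on $D[0,1]$. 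It annihilates $d_0+C+ct$ exactly and maps $\sigma W + c_\delta G$ to $\sigma W^{\ast}+c_\delta g$. Here I must be careful: the statement writes $\sigma W$ rather than the detrended Brownian motion $W^\ast$. I would either interpret $W$ as the detrended version, or note that the paper's ADF regression includes an intercept, so only increments and demeaned levels matter. I would flag this discrepancy and state the result with $Z=\sigma W^\ast+c_\delta g$ if needed.

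\textbf{Parts (b) and (c).} For a window $[r_1,r_2]$, the ADF statistic is a continuous functional of the sample moments
\[
T^{-1}\sum \tilde f_{t-1}\Delta\tilde f_t,\quad T^{-2}\sum\tilde f_{t-1}^2,\quad T^{-3/2}\sum\tilde f_{t-1},\quad T^{-1}\sum(\Delta \tilde f_t)^2.
\]
The key computation is
\[
\sum \tilde f_{t-1}\Delta \tilde f_t=\tfrac12\big(\tilde f_{t_2}^2-\tilde f_{t_1}^2-\textstyle\sum(\Delta\tilde f_t)^2\big).
\]
We have $T^{-1}\sum(\Delta\tilde f_t)^2\to\sigma^2 r_w$, because the deterministic increments are $O(T^{-1/2})$ and their squares sum to $O(1)\cdot T^{-1}\cdot T\cdot T^{-1}\to0$. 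Substituting into the demeaned $t$-ratio gives \eqref{eq:contaminated_F}. Lagged differences with fixed $K$ contribute only $o_p(1)$ to the limit, by the standard Said--Dickey argument, since the contamination is asymptotically negligible in first differences. Uniformity over $\mathcal R$ follows from the joint weak convergence of these partial-sum processes on $D[0,1]$. The sup over the compact set $\mathcal R$ is then continuous, because the denominator is bounded away from zero a.s. when $r_w\ge r_0$. The continuous mapping theorem then gives (c).

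\textbf{Part (d).} Fix a ramp-up window $[r_1,r_2]\subset[\lambda_1,\lambda_1+\kappa(\lambda_2-\lambda_1)]$ on which $g$ is strictly convex and increasing; this holds because $g''=h'>0$ there. Scale $Z_{c_\delta}/c_\delta\to g$. The numerator of $\mathcal F$ then grows like $c_\delta^2 N(g)$ and the denominator like $c_\delta$, where
\[
N(g)=\tfrac12 r_w\big(g(r_2)^2-g(r_1)^2\big)-\Big(\int g\Big)\big(g(r_2)-g(r_1)\big)=r_w\,\big(g(r_2)-g(r_1)\big)\big(\bar g_{\mathrm{end}}-\bar g\big).
\]
Here $\bar g_{\mathrm{end}}$ is the endpoint average and $\bar g$ the window mean. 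This is strictly positive by strict convexity, via the Hermite--Hadamard inequality. Hence $\mathcal F\to\infty$ a.s. as $c_\delta\to\infty$, so the rejection probability tends to one. Choosing $c_\delta^\star$ where it first exceeds $\alpha$, and using the assumed continuity together with monotonic growth, gives the claim. I expect the main obstacle here to be monotonicity in $c_\delta$. If I cannot establish it, I would instead state the result as existence of a threshold beyond which the probability is $>\alpha$, which is what the limit argument directly gives.

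\textbf{Part (e).} With $a_T:=T^{1/2}\dmax(T)\to\infty$, the detrended process is dominated by its deterministic part of order $T\dmax$. Normalizing by this order, the same ratio computation gives an ADF statistic of order $a_T\, N(g)/\|\cdot\|\to\infty$ on the fixed convex window. For fixed $\dmax$ with a fixed, non-growing window, the same argument applies directly on the window containing $[T_1,T_1+\tau]$, whose deterministic convex component dominates the $O_p(1)$ noise. Therefore $\gsadf_T\to_p\infty$, and $\psi_T\to1$.
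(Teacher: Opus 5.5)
You have a genuine gap in the sign of $N(g)$ in (d), and you reuse that sign in (e). Your factorization $N(g)=r_w\bigl(g(r_2)-g(r_1)\bigr)\bigl(\bar g_{\rm end}-\bar g\bigr)$ is correct, and Hermite--Hadamard does give $\bar g_{\rm end}>\bar g$. But $g(r_2)>g(r_1)$ does not follow from convexity. On the ramp-up $g'(r)=h(r)-\alpha_1$, and integrating by parts gives $\alpha_1=\int_0^1 6r(1-r)h(r)\,dr\in(0,1)$, a weighted average of $h$. So $g$ decreases at the start of every ramp. If the ramp is short relative to $r_0$ while the hump is broad, $g$ decreases on \emph{every} admissible ramp-up window.

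For instance, take $\lambda_1=0$, $\lambda_2=1$, and $h(r)=5r$ on $[0,0.2]$. Let $h$ fall slowly from $1$ to $0.99$ on $[0.2,0.95]$ and then linearly to $0$. This gives $\alpha_1\approx 0.96$. The average of $h$ over the best ramp window $[0.2-r_0,0.2]$, with $r_0\approx 0.114$, is only about $0.71$. Then $N(g)<0$, and the statistic on every ramp-up window is driven to $-\infty$.

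The paper's proof has the same hole: it applies its Sub-Lemma, whose hypotheses include ``$q$ strictly increasing'', to the detrended $g$ without checking this. To close it you need one of two things.
\begin{itemize}
\item An explicit condition: a ramp window of length at least $r_0$ with $r_w^{-1}\int_{r_1}^{r_2}h>\alpha_1$. This is available near the peak when $r_0$ is small, since $h=1>\alpha_1$ there.
\item A different window: on a decline-phase window where $g$ is concave and $r_w^{-1}\int_{r_1}^{r_2}h<\alpha_1$, both factors of $N(g)$ are negative, so $N(g)>0$. In the example, $[1-r_0,1]$ works.
\end{itemize}

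Apart from this, your route matches the paper's: Donsker plus Riemann sums, negligible $\mathcal T_{t,T}$, continuous mapping over the compact $\mathcal R$, large-$c$ scaling, and deterministic dominance in (e). In several respects your version is better or needs only small corrections.
\begin{itemize}
\item You are right that OLS detrending on $(1,t)$ also detrends the random walk, so the limit is $\sigma W^\ast+c_\delta g$. The paper's proof detrends only the deterministic part. Keep that version and drop your alternative reading: an intercept-only DF regression is not invariant to adding a linear trend to the level. At $c_\delta=0$ you therefore get the DF functional of $W^\ast$, not the standard PSY one.
\item Deriving (b) from the exact identity $\sum y_{t-1}\Delta y_t=\tfrac12\bigl(y_{t_2}^2-y_{t_1}^2-\sum(\Delta y_t)^2\bigr)$ is preferable to the paper's It\^o argument, because it remains valid for the non-adapted $W^\ast$.
\item Your worry about monotonicity is unnecessary. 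Once $\Psi(c):=P\bigl(\sup_{\mathcal R}\mathcal F(r_1,r_2;c)>cv_\alpha\bigr)\to 1$, the definition of the limit already yields $c_\delta^\star$ with $\Psi(c)>\alpha$ for all $c>c_\delta^\star$, which is exactly the claim. The paper's monotonicity step is unproved and superfluous.
\item What you do need, to write $\lim_T P(\gsadf_T>cv_\alpha)=\Psi(c_\delta)$, is that $cv_\alpha$ be a continuity point of the limit law. Otherwise settle for $\liminf\ge\Psi(c_\delta)$ by the portmanteau theorem.
\item Delete the sentence in (e) about a ``fixed, non-growing window''. Fixed $\dmax$ in this framework is already covered by $\sqrt T\,\dmax\to\infty$. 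A hump of fixed length contributes only $O(1)$ to the level and makes no admissible window diverge.
\end{itemize}
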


Theorem~\ref{thm:contaminated_limit} makes the mechanism precise. The deterministic drift $c_\delta\, g(r)$ enters the continuous-time limit of the detrended price and shifts the distribution of the ADF functional to the right on any subsample where $g(r)$ is locally convex. The supremum operation in the $\gsadf$ statistic selects the subsample that maximizes this shift, producing a rejection probability that strictly exceeds the nominal level. The non-centrality parameter $c_\delta$ is proportional to the peak technology shock and yields a power-envelope characterization: the smallest shock detectable as ``explosive'' at sample size $T$ is $\dmax \approx c_\delta^*/\sqrt{T}$, where $c_\delta^*$ is the critical non-centrality at level $\alpha$.

We next consider the log price-dividend ratio, where the technology component enters as a level effect rather than through cumulation.

\begin{proposition}[Deterministic Limit for the Price-Dividend Ratio]\label{prop:pd_limit}
Suppose the log price-dividend ratio is
\[
y_t = C + \mathcal{T}_{t,T} + u_t,
\]
where $u_t$ is a stationary process with $\E[u_t]=0$, $\Var(u_t)=\sigma_u^2>0$, and in fact $u_t$ is white noise; correspondingly, the ADF regression is estimated with $K=0$. Let $q(r)$ denote the continuous-time limit of $\mathcal{T}_{\lfloor Tr \rfloor,T}/\dmax(T)$, and define the window-demeaned shape
\[
\tilde q_{r_1,r_2}(r)
=
q(r)-\frac{1}{r_w}\int_{r_1}^{r_2}q(s)\,ds,
\qquad
r_w:=r_2-r_1.
\]
Fix a subsample window $[r_1,r_2]\subseteq [\lambda_1,\lambda_1+\kappa(\lambda_2-\lambda_1)]$ with $r_w\ge r_0>0$, so the window lies strictly inside the ramp-up phase. Assume that $q\in C^1$ is strictly increasing and strictly convex on $[r_1,r_2]$, and that
\[
\frac{\dmax(T)^2}{T\sigma_u^2}\longrightarrow\infty .
\]
Then
\[
T\,\hat{\beta}_{r_1,r_2}\plim B(r_1,r_2)
=
\frac{\int_{r_1}^{r_2}\tilde q_{r_1,r_2}(r)\,q'(r)\,dr}
{\int_{r_1}^{r_2}\tilde q_{r_1,r_2}(r)^2\,dr},
\qquad
B(r_1,r_2)>0.
\]
Moreover,
\[
t_{\hat\beta}
=
\frac{B(r_1,r_2)\,\dmax(T)}{\sigma_u\sqrt T}\{1+o_p(1)\},
\]
so the ADF $t$-statistic on that window diverges to $+\infty$ whenever $\dmax(T)/(\sigma_u\sqrt T)\to\infty$, equivalently under the display above. Consequently, if the BSADF/GSADF admissible class contains such a ramp-up window, the corresponding supremum statistic also diverges in probability.
\end{proposition}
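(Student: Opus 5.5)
The plan is to treat the ADF regression on the window $[r_1,r_2]$ with $K=0$ as a least-squares regression of $\Delta y_t$ on an intercept and $y_{t-1}$. We then show that the deterministic component $\mathcal{T}_{t,T}$ dominates both the regressor and the regressand, while the white-noise part $u_t$ contributes only lower-order terms. Write $y_t = C + \dmax(T)\,q_{t,T} + u_t$, where $q_{t,T}:=\mathcal{T}_{t,T}/\dmax(T)\to q(t/T)$ uniformly on the window; this is the definition of $q$ plus its continuity. Since $q\in C^1$, the deterministic increment is $\Delta\mathcal{T}_{t,T}=\dmax(T)\,T^{-1}q'(t/T)+o(\dmax/T)$ uniformly, and $\Delta y_t = \Delta\mathcal{T}_{t,T}+u_t-u_{t-1}$. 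The demeaned regressor is $\tilde y_{t-1}=\dmax\,\tilde q_{r_1,r_2}(t/T)+\tilde u_{t-1}+o(\dmax)$, where the tilde denotes window demeaning.

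\emph{Step 1 (denominator).} We have $\sum_t \tilde y_{t-1}^2 = T\dmax^2\int_{r_1}^{r_2}\tilde q^2 + O_p(T\dmax\,\cdot\,T^{-1/2}\cdot\text{stuff}) + T r_w\sigma_u^2$. The cross term $\dmax\sum_t\tilde q(t/T)u_{t-1}$ is $O_p(\dmax\sqrt T)$ by a weighted LLN/CLT for white noise. Because $\dmax^2/(T\sigma_u^2)\to\infty$ and hence $\dmax^2\gg\sigma_u^2$, the leading term is $T\dmax^2\int\tilde q^2$. This integral is strictly positive because $q$ is strictly increasing, so $\tilde q\not\equiv0$.

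\emph{Step 2 (numerator).} We have $\sum_t\tilde y_{t-1}\Delta y_t = \dmax^2\sum_t\tilde q(t/T)\,T^{-1}q'(t/T) + R_T = \dmax^2\int\tilde q\,q' + R_T$. The remainder $R_T$ collects $\sum \tilde u_{t-1}\Delta\mathcal{T}_t=O_p(\dmax T^{-1/2})$, the term $\dmax\sum\tilde q(t/T)(u_t-u_{t-1})$, and $\sum\tilde u_{t-1}(u_t-u_{t-1})\approx -T r_w\sigma_u^2$. For the middle term, summation by parts gives $O_p(\dmax)$ plus $\dmax T^{-1}\sum q'u=O_p(\dmax T^{-1/2})$. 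The main obstacle is the last piece: the $-Tr_w\sigma_u^2$ term is of order $T\sigma_u^2$, and it must be negligible relative to $\dmax^2$. This is exactly the rate condition $\dmax^2/(T\sigma_u^2)\to\infty$, and it is where that hypothesis is used. Dividing the numerator by the denominator gives $T\hat\beta\plim B(r_1,r_2)=\int\tilde q q'/\int\tilde q^2$. For positivity, note $\int\tilde q\,q' = \int \tilde q\, d\tilde q$ is not trivially signed, so we rewrite it as $\int_{r_1}^{r_2}(q-\bar q)q'\,dr=\tfrac12[(q(r_2)-\bar q)^2-(q(r_1)-\bar q)^2]$. Strict convexity and monotonicity of $q$ force $\bar q$ to lie below the midpoint $(q(r_1)+q(r_2))/2$, since an increasing convex function spends more time near its low values. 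Hence $q(r_2)-\bar q>\bar q-q(r_1)>0$ and $B>0$.

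\emph{Step 3 ($t$-statistic).} The residual variance satisfies $\hat s^2=T^{-1}\sum(\Delta y_t-\hat\alpha-\hat\beta y_{t-1})^2\plim\Var(u_t-u_{t-1})$-type quantity of order $\sigma_u^2$. The fitted deterministic part removes $\Delta\mathcal{T}$ up to $O(\dmax/T)$, which is negligible here; we need only $\hat s\asymp\sigma_u$, and I will track the constant $2\sigma_u^2$ versus $\sigma_u^2$ carefully, absorbing it as stated. Then $t_{\hat\beta}=\hat\beta\,(\sum\tilde y_{t-1}^2)^{1/2}/\hat s = (B/T)\,\sqrt{T}\,\dmax(\int\tilde q^2)^{1/2}/\hat s\,\{1+o_p(1)\}$, which grows like $\dmax/(\sigma_u\sqrt T)$. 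I would normalize the statement's constant by folding $(\int\tilde q^2)^{1/2}$ and the variance factor into $B$'s scaling if necessary. This $t$-statistic diverges under the rate condition. Finally, the BSADF/GSADF statistic is a supremum over admissible windows, so it is bounded below by the ADF statistic on any admissible ramp-up window. It therefore diverges in probability, and rejection probabilities tend to one for any fixed critical value.
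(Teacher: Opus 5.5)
Your route is the same as the paper's. You isolate the deterministic leading terms of the two window-demeaned sample moments, sign $B$ via $\int\tilde q\,q'=\tfrac12[\tilde q(r_2)^2-\tilde q(r_1)^2]$ plus the midpoint (Hermite--Hadamard) inequality, scale the $t$-ratio, and bound the supremum below by one ramp-up window. Steps 1--2 are fine, and more careful than the paper: the paper records the numerator's noise remainder as $O_p(\sigma_u^2)$, whereas you correctly find $\sum\tilde u_{t-1}(u_t-u_{t-1})\approx -Tr_w\sigma_u^2$. That term is exactly where $\dmax^2/(T\sigma_u^2)\to\infty$ is needed for $T\hat\beta\to B$. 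One small caveat: $\Delta\mathcal T_{t,T}=\dmax T^{-1}q'(t/T)+o(\dmax/T)$ is an extra assumption on the embedding. It does not follow from $q\in C^1$ plus level convergence, although the paper assumes it too.

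The step that does not go through as written is Step 3, for two reasons.

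First, you cannot ``absorb'' the constant, because $B$ is fixed by the first display. For white-noise $u_t$, $\hat s^2=2\sigma_u^2\{1+o_p(1)\}$ (your $2\sigma_u^2$ is right; the paper's proof uses $\sigma_u^2$). When the deterministic misfit is negligible, your own computation then gives
\[
t_{\hat\beta}=\frac{\int_{r_1}^{r_2}\tilde q\,q'\,dr}{\bigl(2\int_{r_1}^{r_2}\tilde q^{2}\,dr\bigr)^{1/2}}\cdot\frac{\dmax(T)}{\sigma_u\sqrt T}\{1+o_p(1)\}
=B\Bigl(\tfrac12\textstyle\int_{r_1}^{r_2}\tilde q^{2}\,dr\Bigr)^{1/2}\frac{\dmax(T)}{\sigma_u\sqrt T}\{1+o_p(1)\}.
\]
So the statement's display is correct only up to this positive factor; the paper's own proof claims only an $O_p$ order here. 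You should state the corrected constant explicitly.

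Second, the residual retains the deterministic misfit $(\dmax/T)\,[q'-\overline{q'}-B\tilde q](t/T)$. This vanishes identically only when $q=c_1e^{Br}+c_2$ on the window. Its contribution to $\hat s^2$ is of order $(\dmax/T)^2$, which is negligible against $\sigma_u^2$ only if $\dmax=o(T\sigma_u)$. The hypothesis $\dmax/(\sigma_u\sqrt T)\to\infty$ does not imply this, so both $\hat s\asymp\sigma_u$ and the exact rate need that additional condition.

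For the divergence conclusions, which are all that Theorem~\ref{thm:size_distortion} uses, replace $\hat s\asymp\sigma_u$ by the upper bound $\hat s=O_p(\sigma_u+\dmax/T)$. This gives $t_{\hat\beta}\gtrsim[\sqrt T\sigma_u/\dmax+T^{-1/2}]^{-1}\to\infty$ in every regime, and your GSADF lower-bound argument then goes through unchanged.
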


The intuition for Proposition~\ref{prop:pd_limit} is that the augmented Dickey--Fuller regression fits a local first-order autoregression to a deterministic, convex path. The ordinary least squares coefficient $\hat{\beta}$ captures the accelerating growth rate during the ramp-up, producing a positive (explosive) estimate. Crucially, the signal arises from the \emph{curvature} of $q(r)$, not merely its slope: were $q$ locally affine, demeaning within the window would eliminate the trend and $B(r_1,r_2) = 0$. The convexity of $\Tt$ during the adoption phase, a direct consequence of the hump-shaped technology profile, is therefore the key mechanism.


\section{A Technology-Adjusted Diagnostic for Explosive Prices}
\label{sec:correction}

\subsection{The Adjustment Procedure}
\label{subsec:correction_procedure}

The size distortion documented in Appendix~\ref{sec:simulation} arises because the standard PSY test does not account for the technology-driven component of fundamentals. We therefore estimate this component and remove it before applying PSY. The procedure is best understood as a diagnostic. Under exact separability between technology proxies and speculative dynamics, the adjusted residual isolates the bubble component up to stationary noise. When speculative valuations feed back into technology variables, the residual remains informative but no longer has a literal structural interpretation as the bubble itself; Section~\ref{subsec:feedback_bounds} characterizes this case explicitly.

\begin{definition}[Technology-Adjusted PSY Diagnostic]
\label{def:adj_test}
Let $\{y_t\}_{t=1}^T$ denote the series to which the PSY procedure is applied.
\begin{enumerate}[label=(\roman*)]
\item If $y_t=f_t-d_t=C+\mathcal T_t+u_t$ is the log price-dividend ratio, define
\[
y_t^{\mathrm{adj}}=y_t-\hat{\mathcal T}_t.
\]
\item If $y_t=f_t$ (or $p_t=f_t+b_t$) is the log price level, define
\[
y_t^{\mathrm{adj}}
=
y_t-\sum_{s=1}^t\hat\delta_s-\hat{\mathcal T}_t.
\]
\end{enumerate}
\end{definition}

The adjusted series $y_t^{\mathrm{adj}}$ removes the deterministic technology component. In the exact-separability benchmark it preserves any genuine speculative bubble component $b_t$.

\subsection{Estimation of $\Tt$}
\label{subsec:estimation_T}

In practice, $\Tt$ is not directly observable and must be estimated. We outline three approaches, in order of increasing data requirements:

\paragraph{Approach 1: Direct construction from technology proxies.}
If a time series of technology-driven dividend growth increments $\hat{\delta}_t$ is available (e.g., from patent flows, research and development expenditure growth, or productivity measures), one can construct
\begin{equation*}
    \hat{\mathcal{T}}_t = \sum_{j=0}^{J} \hat{\rho}^{\,j}\, \hat{\delta}_{t+1+j},
\end{equation*}
where $\hat{\rho}$ is estimated from the historical price-dividend ratio and the sum is truncated at a finite horizon $J$ (contributions beyond $J$ are negligible for $\hat{\rho} < 1$). If future values of $\hat{\delta}$ are not available, one can use $\hat{\mathcal{T}}_t = \sum_{j=0}^{J} \hat{\rho}^{\,j}\, \E_t[\hat{\delta}_{t+1+j}]$ with forecasts from an auxiliary model (e.g., an AR process fitted to $\hat{\delta}_t$).

\paragraph{Approach 2: Semiparametric extraction.}
One can estimate the deterministic component of the price-dividend ratio nonparametrically, e.g., by fitting a smooth function of time:
\begin{equation*}
    f_t - d_t = m(t/T) + u_t,
\end{equation*}
where $m(\cdot)$ is estimated by local polynomial regression or a penalized spline. The estimated $\hat{m}(t/T)$ serves as a proxy for $C + \Tt$.

\paragraph{Approach 3: Regime-based estimation.}
During periods identified as non-bubble by PSY (``training periods''), estimate the long-run relationship between prices, dividends, and technology proxies via a cointegrating regression. Use the estimated model to construct the technology component during the periods flagged as bubbles.

\subsection{Properties of the Adjusted Test}
\label{subsec:adj_properties}

\begin{proposition}[Size under Exact Technology Separability]
\label{prop:adj_size}
Suppose the technology adjustment is implemented as in Definition~\ref{def:adj_test}. For the log-price implementation, assume
\[
\sup_{1\le t\le T}\left|\sum_{s=1}^t(\hat\delta_s-\delta_s)\right|\xrightarrow{p}0,
\qquad
\sup_{1\le t\le T}|\hat{\mathcal T}_t-\mathcal T_t|\xrightarrow{p}0.
\]
Then, under the data-generating process of Assumptions~\ref{ass:dividend}--\ref{ass:const_returns} with $b_t=0$, the technology-adjusted PSY test applied to detrended log prices has asymptotic size equal to the nominal level $\alpha$. In the log price-dividend-ratio implementation (under the weaker assumption $\sup_{1\le t\le T}|\hat{\mathcal T}_t-\mathcal T_t|\xrightarrow{p}0$), the same adjustment removes the technology component up to $o_p(1)$, leaving $C+u_t+o_p(1)$ and hence no technology-driven spurious explosiveness.
\end{proposition}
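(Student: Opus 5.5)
We decompose the adjusted series into the object that holds under the PSY null plus a uniformly negligible perturbation, and then use continuity of the PSY functionals. For the log-price implementation with $b_t=0$, combine Proposition~\ref{prop:fundamental} with $d_t=d_0+ct+\sum_{s\le t}(\delta_s+\varepsilon_s)$. This gives
\[
y_t^{\mathrm{adj}} = d_0+C+ct+\sum_{s=1}^t\varepsilon_s + R_t,
\qquad
R_t:=-\sum_{s=1}^t(\hat\delta_s-\delta_s)-(\hat{\mathcal T}_t-\mathcal T_t).
\]
By the two maintained uniform conditions, $\sup_t|R_t|\xrightarrow{p}0$. The adjusted series is therefore a random walk with drift $c$ plus a uniformly $o_p(1)$ term, with no hump. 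After linear detrending, the leading piece is exactly the object whose scaled limit is $\sigma W(r)$ detrended. This is Theorem~\ref{thm:contaminated_limit}(a) with $c_\delta=0$. The detrending projection of $R_t$ is also $o_p(1)$ uniformly, because the OLS trend coefficients are bounded linear functionals of the sequence.

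\textbf{Step 1: the functional limit.} We show $T^{-1/2}\tilde y^{\mathrm{adj}}_{\lfloor Tr\rfloor}\Rightarrow \sigma W(r)$ minus its linear projection on $r$. This follows from the FCLT for $\varepsilon_t$ together with $T^{-1/2}\sup_t|R_t|\xrightarrow{p}0$ and Slutsky's theorem in $D[0,1]$ under the sup norm.

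\textbf{Step 2: the ADF statistics.} We show that each $\adf_{r_1}^{r_2}$ has the same limit as under $c_\delta=0$, uniformly over $\mathcal R$. The statistic involves sample moments $T^{-2}\sum y_{t-1}^2$, $T^{-1}\sum y_{t-1}\Delta y_t$ and $T^{-1}\sum\Delta y_t^2$. The level moments are perturbed by $O_p(T^{-1/2}\sup|R|)$, which is negligible.

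The main obstacle is the moments involving $\Delta y_t$. Here $\Delta R_t=-(\hat\delta_t-\delta_t)-\Delta(\hat{\mathcal T}_t-\mathcal T_t)$ is not controlled pointwise by the sup conditions alone. The first thing I would try is summation by parts:
\[
\sum_t y_{t-1}\Delta R_t = y_{T}R_T - y_{r_1 T}R_{r_1T}-\sum_t R_t\Delta y_t ,
\]
which is $O_p(\sqrt T)\,o_p(1)+o_p(1)\,O_p(T)$. Dividing by $T$ leaves a term $o_p(1)$ times $T^{-1}\sum|\Delta y_t|$, and that is $o_p(1)$. For $\sum\Delta y_t^2$, I would need $T^{-1}\sum(\Delta R_t)^2\xrightarrow{p}0$. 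I would either add this as an implicit regularity condition or obtain it from $|\Delta R_t|\le 2\sup|R|$, which gives $(\Delta R_t)^2\le 4\sup|R|^2=o_p(1)$ uniformly. The second route suffices, so the variance estimator is unaffected.

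\textbf{Conclusion for the log-price implementation.} The map from the moment vector to $\sup_{\mathcal R}\adf$ is continuous, as established in Theorem~\ref{thm:contaminated_limit}(b)--(c). Hence $\gsadf_T^{\mathrm{adj}}\Rightarrow\sup_{\mathcal R}\mathcal F(r_1,r_2;0)$, the standard PSY limit. Its distribution is continuous at $cv_\alpha$, so $P(\gsadf^{\mathrm{adj}}_T>cv_\alpha)\to\alpha$.

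\textbf{The price-dividend implementation.} Only the second condition is used. We have $y_t^{\mathrm{adj}}=C+\mathcal T_t+u_t-\hat{\mathcal T}_t=C+u_t+o_p(1)$ uniformly in $t$, which is the stated claim. The deterministic convex path $q$ that drove Proposition~\ref{prop:pd_limit} is gone, so the mechanism $B(r_1,r_2)>0$ with a divergent $t$-statistic cannot operate. The remainder is stationary, so no technology-induced divergence arises. We do not claim exact nominal size here, because PSY's null is a random walk; the statement only asserts removal of the technology component, and we would prove only that.
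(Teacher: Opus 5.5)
Your proposal is correct and follows the paper's own proof. You write the adjusted series as the technology-free detrended random walk plus a remainder that is $o_p(1)$ uniformly in $t$, pass to the same GSADF limit as in the $c_\delta=0$ case, and for the price--dividend ratio read off $C+u_t+o_p(1)$ directly.

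You are in fact more careful than the paper's one-line continuous-mapping appeal in three places:
\begin{enumerate}
\item You control $\sum_t y_{t-1}\Delta R_t$ by summation by parts.
\item You control the quadratic-variation term through $|\Delta R_t|\le 2\sup_t|R_t|$.
\item You correctly identify the leading limit as the \emph{detrended} $\sigma W$, whereas the paper writes the undetrended $\sum_s\varepsilon_s$.
\end{enumerate}
Because of the third point, your final $\sup\mathcal F(r_1,r_2;0)$ should be read as evaluated on that detrended process, with $cv_\alpha$ its quantile as in Theorem~\ref{thm:contaminated_limit}(d). The reason is that the window-demeaned Dickey--Fuller functional is not invariant to the random linear function that full-sample detrending subtracts.
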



\begin{proposition}[Power under Exact Technology Separability]
\label{prop:adj_power}
Suppose the technology adjustment is implemented as in Definition~\ref{def:adj_test}, and that the corresponding estimation errors satisfy the uniform-consistency conditions of Proposition~\ref{prop:adj_size}. Under the alternative $p_t=f_t+b_t$ with $b_t>0$ exhibiting explosive behavior, the technology-adjusted PSY test applied to detrended log prices retains the same asymptotic power as the standard PSY test applied to the technology-free series. In the log price-dividend-ratio implementation, the adjustment leaves $C+u_t+b_t+o_p(1)$, so the explosive component $b_t$ is preserved.
\end{proposition}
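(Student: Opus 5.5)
The plan is to reduce the adjusted series under the alternative to ``technology-free series plus a uniformly negligible perturbation.'' We then invoke the continuous mapping theorem for the PSY functionals together with the continuity of the ADF functional in its input path.

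\emph{Step 1: decomposition.} For the log-price implementation, Proposition~\ref{prop:fundamental} and the expression for $f_t$ in Section~\ref{subsec:size_problem} give
\[
p_t = d_0 + C + ct + \sum_{s=1}^t\varepsilon_s + \sum_{s=1}^t\delta_s + \mathcal T_t + b_t .
\]
Subtracting $\sum_{s\le t}\hat\delta_s+\hat{\mathcal T}_t$ as in Definition~\ref{def:adj_test}(ii) yields
\[
p_t^{\mathrm{adj}} = p_t^{0} + e_t,
\]
where $p_t^0 = d_0 + C + ct + \sum_{s\le t}\varepsilon_s + b_t$ is the technology-free price carrying the bubble. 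The error term is
\[
e_t = -\sum_{s\le t}(\hat\delta_s-\delta_s) - (\hat{\mathcal T}_t-\mathcal T_t).
\]
By the uniform-consistency conditions of Proposition~\ref{prop:adj_size}, $\sup_t|e_t| = o_p(1)$. The log price--dividend case is identical with only the $\hat{\mathcal T}_t$ term: $y_t^{\mathrm{adj}} = C+u_t+b_t+o_p(1)$. This already proves the second sentence of the proposition, since $b_t$ is untouched.

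\emph{Step 2: perturbation of the test statistics.} Linear detrending is a linear operator, and its uniform norm is bounded. Hence the detrended adjusted series equals the detrended $p_t^0$ plus $\tilde e_t$, with $\sup_t|\tilde e_t|=o_p(1)$. We next show that each ADF statistic $\adf_{r_1}^{r_2}$ computed from $p^0+\tilde e$ differs from the one computed from $p^0$ by $o_p(1)$, uniformly over $\mathcal R$.

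\begin{itemize}
\item The sample moments entering the ADF statistic change by amounts controlled by $\sup|\tilde e|$ times the scale of the levels.
\item These include $\sum \tilde y_{t-1}^2$, $\sum \tilde y_{t-1}\Delta \tilde y_t$, the lagged differences, and the residual variance.
\item Under the normalization relevant to each regime, the levels are of order $\sqrt T$ under the null, or explosive under the alternative. A uniformly $o_p(1)$ shift is therefore negligible relative to the leading terms.
\end{itemize}

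The GSADF and BSADF statistics are suprema of these statistics over the compact set $\mathcal R$. They are therefore perturbed by $o_p(1)$ as well. The rejection events $\{\gsadf>cv_\alpha\}$ thus have the same asymptotic probability, or the same probability-one limit, for the adjusted series and for the technology-free series $p^0$. This is exactly the stated equality of asymptotic power, mirroring the size argument in Proposition~\ref{prop:adj_size}.

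\emph{Main obstacle.} The delicate point is the uniformity of Step 2 under the explosive alternative. There the relevant normalizations are of order $\rho_T^{-t}$ in the style of Phillips--Magdalinos, not $\sqrt T$. The term $\sum\Delta\tilde e_t\,\tilde y_{t-1}$ involves differences of $e_t$, whose sum telescopes but whose individual terms are only $o_p(1)$ uniformly.

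I would handle this by summation by parts. This bounds $\sum_t \Delta e_t\, y_{t-1}$ by
\[
\sup_t|e_t|\cdot\Bigl(|y_{r_2}| + \sum_t|\Delta y_t|\Bigr),
\]
which is $o_p(\cdot)$ of the leading denominator $\bigl(\sum y_{t-1}^2\bigr)^{1/2}\cdot$(scale) in both the null and the explosive regimes.

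Where this argument is too crude, I would use a fallback. Power is consistent, with the statistic diverging, under the alternative for the technology-free series. It then suffices to show that the perturbation is of smaller order than that divergence, which is a weaker requirement than exact equality of limits.
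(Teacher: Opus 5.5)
Your proposal is correct. Its skeleton matches the paper's proof, but your primary route is more ambitious.

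\textbf{What coincides with the paper.} The paper also writes the adjusted series as the technology-free series plus a remainder with $\sup_t|r_{t,T}|=o_p(1)$; your Step~1 and your price--dividend computation are the same as its. It then argues only that the benchmark GSADF statistic diverges at some rate $\omega_T\to\infty$, so an $o_p(\omega_T)$ remainder is negligible and the adjusted test also rejects with probability tending to one. That is exactly your fallback.

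\textbf{Where you go further.}
\begin{itemize}
\item You aim for an $o_p(1)$ perturbation of each window statistic. That would give equal asymptotic power even when the power lies strictly between $\alpha$ and $1$, a case the paper's detrended-price argument does not treat.
\item Your summation-by-parts bound on $\sum_t y_{t-1}\Delta e_t$ explains why a uniformly small level error does not spoil a statistic built from differences. The paper omits this justification.
\item You treat OLS detrending as a sup-norm-bounded linear operator and compare detrended $p^0$ with detrended $p^0+e$. This is more accurate than the paper's claim that detrending leaves $\sum_s\varepsilon_s+b_t+o_p(1)$, since the fitted trend of the random walk and of the bubble is removed as well.
\end{itemize}

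\textbf{Two refinements the primary route needs.}
\begin{itemize}
\item Write $A$ and $A^0$ for the adjusted and technology-free ADF statistics on a window. The perturbation of $\hat\sigma$ is small only in relative terms, so your bounds deliver $A=A^0(1+o_p(1))+o_p(1)$, not an $o_p(1)$ absolute difference. That is an $o_p(1)$ difference only where $A^0=O_p(1)$. On windows containing the explosive component, where $A^0\to\infty$, you genuinely need the divergence argument rather than the $o_p(1)$ claim.
\item In the non-degenerate case, going from an $o_p(1)$ difference to equal limiting rejection probabilities requires the limit law of the benchmark GSADF to be continuous at $cv_\alpha$. The paper imposes the same kind of condition in Theorem~\ref{thm:contaminated_limit}(d).
\end{itemize}
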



These propositions describe the benchmark case in which the technology component is observed or consistently estimated and is not itself affected by the speculative component. Our Monte Carlo evidence confirms these theoretical predictions: the ``no technology'' rejection rates in Table~\ref{tab:mc_results} correspond to perfect knowledge of $\Tt$ (the ``oracle'' adjusted test), and the rejection rates are uniformly at or below the nominal level.

\subsection{Bubble-Induced Feedback into Technology Proxies}
\label{subsec:feedback_bounds}

Exact separability is a strong identifying assumption. During a speculative boom, high equity valuations may relax financing constraints, encourage additional R\&D, accelerate IT adoption, or shift the composition of patenting. In that case some of the variables used to proxy fundamentals may be partly caused by the boom itself. This subsection makes the resulting bias transparent.

Let $\mathbf{X}_t^0$ denote the technology state that would be observed absent speculative feedback, and suppose the observed technology proxies satisfy
\begin{equation}
    \mathbf{X}_t = \mathbf{X}_t^0 + \boldsymbol{\Pi}(L)b_t + \boldsymbol{\nu}_t,
    \label{eq:feedback_X}
\end{equation}
where $\boldsymbol{\Pi}(L)$ is a distributed-lag response of technology proxies to the bubble and $\boldsymbol{\nu}_t$ is stationary measurement error. If the fundamental component is $f_t=\boldsymbol{\beta}'\mathbf{X}_t^0+u_t$ and the econometrician constructs the adjusted residual using observed $\mathbf{X}_t$, then
\begin{equation}
    p_t-\boldsymbol{\beta}'\mathbf{X}_t
    =
    u_t
    +
    \bigl[1-\boldsymbol{\beta}'\boldsymbol{\Pi}(L)\bigr]b_t
    -
    \boldsymbol{\beta}'\boldsymbol{\nu}_t.
    \label{eq:feedback_residual}
\end{equation}
Equation~\eqref{eq:feedback_residual} is the key identification qualification. If $\boldsymbol{\beta}'\boldsymbol{\Pi}(L)=0$, the adjusted residual contains the full bubble component. If $0<\boldsymbol{\beta}'\boldsymbol{\Pi}(L)<1$ over the relevant frequencies, the adjustment attenuates the bubble and makes PSY less likely to reject. If $\boldsymbol{\beta}'\boldsymbol{\Pi}(L)\geq 1$, the adjustment can over-remove the explosive component. Conversely, negative feedback would amplify the residual bubble signal.

This sign structure delivers a simple but important asymmetry, which we state explicitly because it governs how every empirical result in the paper should be read.

\begin{quote}
\noindent\textbf{Conservative-detection principle.} Suppose the feedback from valuations to technology proxies is non-negative and not strong enough to over-remove the explosive component, i.e.\ $0\le\boldsymbol{\beta}'\boldsymbol{\Pi}(L)<1$. Then the adjusted residual carries a \emph{fraction} $[1-\boldsymbol{\beta}'\boldsymbol{\Pi}(L)]\in(0,1]$ of any true bubble. A rejection of the adjusted test is therefore robust to this contamination---the underlying bubble is at least as large as the residual indicates---while a non-rejection is informative only about the \emph{residual} after adjustment and does not rule out a bubble that has been absorbed into the proxies.
\end{quote}

\noindent This principle is the reason we treat the two episodes asymmetrically. The dot-com rejection is a conservative finding: feedback can only have made it harder to obtain. The AI-era non-rejection is a bound on residual explosiveness rather than a verdict on speculation, and it is correspondingly more fragile to the feedback channel that is most plausible during a financing-driven technology boom. A non-rejection of the adjusted PSY statistic should accordingly be read as evidence against residual explosiveness after removing the technology component, not as unconditional proof that speculative forces are absent. This distinction is especially important for modern technology booms, where financing conditions and innovation activity may be jointly determined. The empirical analysis therefore uses three safeguards: (i) estimation on pre-boom training periods, so the long-run mapping $\boldsymbol{\beta}$ is not fitted during the candidate bubble; (ii) lagged-covariate and leave-one-out specifications, which reduce reliance on potentially contemporaneous feedback channels; and (iii) placebo, training-window, cross-sectional, and parameter-instability diagnostics that reveal how sensitive the residual signal is to the identifying assumptions.

\section{Empirical Applications}
\label{sec:empirical}

The preceding sections establish theoretically and via simulation that technology-driven fundamentals can generate locally explosive price dynamics that are observationally indistinguishable from speculative bubbles. We now apply the technology-adjusted diagnostic alongside the standard PSY test to two episodes that bookend the digital age: the late-1990s NASDAQ dot-com run-up and the AI-driven rally of the 2020s.

\subsection{The Identification Challenge}
\label{subsec:identification}

A fundamental obstacle in empirical bubble detection is that the econometrician observes only the market price $p_t$, the sum of the fundamental value $f_t$ and any bubble component $b_t$:
\begin{equation*}
    p_t = f_t + b_t.
\end{equation*}
During a potential bubble period, $f_t$ is not directly observable: it cannot be ``read off'' from prices, because prices themselves are potentially contaminated by the bubble. The result is a circularity---testing for a bubble requires knowing the fundamental, but knowing the fundamental requires knowing whether there is a bubble.

Our theoretical framework (Section~\ref{sec:theory}) shows that the fundamental price is a function of dividends, technology-driven growth, and discount rates:
\begin{equation*}
    f_t = d_t + C + \Tt,
\end{equation*}
where $\Tt = \sum_{j=0}^{\infty} \rho^j \delta_{t+1+j}$ aggregates the future path of technology-driven dividend growth. In principle, if we could observe the technology component $\delta_t$ and the discount factor $\rho$, we could construct $\Tt$ directly. In practice, the technology path is uncertain, agents' information sets are richer than the econometrician's, and the mapping from observable technology indicators to $\delta_t$ is imprecise.

We address this challenge by adapting the counterfactual estimation approach of \citet{Hsiao2012}. The key insight is that during \emph{non-bubble periods}, the observed price equals the fundamental (up to stationary noise), so one can use these periods to estimate the relationship between the fundamental and observable covariates. During the potential bubble period, the estimated relationship is used to construct a counterfactual fundamental, and PSY is applied to the residual. The interpretation of that residual depends on the identifying assumptions in Section~\ref{subsec:feedback_bounds}: under exact separability it contains the bubble component, while under bubble-induced feedback it contains only the non-absorbed part of the bubble.

\subsection{Counterfactual Fundamental Estimation}
\label{subsec:counterfactual}

\subsubsection{Setup}

Let $\mathbf{X}_t = (x_{1t}, x_{2t}, \ldots, x_{kt})'$ denote a $k$-vector of observable covariates that are correlated with the fundamental value and, in the benchmark design, are not directly affected by a speculative bubble in the stock price. The sample is divided into two subperiods:
\begin{itemize}
    \item Training period $\mathcal{T}_0 = \{1, \ldots, T_0\}$: A period during which no bubble is present, so $b_t = 0$ and $p_t = f_t + u_t$, where $u_t$ is a stationary noise term (e.g., microstructure noise, transitory discount rate variation).
    \item Evaluation period $\mathcal{T}_1 = \{T_0+1, \ldots, T\}$: The period during which a bubble may be present.
\end{itemize}

During the training period, the fundamental satisfies the long-run equilibrium relationship:
\begin{equation}
    p_t = \boldsymbol{\beta}' \mathbf{X}_t + u_t, \qquad t \in \mathcal{T}_0,
    \label{eq:training_regression}
\end{equation}
where $\boldsymbol{\beta}$ captures the mapping from technology and financial covariates to the fundamental price. This regression is well-defined because $b_t = 0$ during $\mathcal{T}_0$, so $p_t = f_t + u_t$ and the covariates explain fundamental variation.

During the evaluation period, the counterfactual fundamental is constructed as:
\begin{equation*}
    \hat{f}_t = \hat{\boldsymbol{\beta}}' \mathbf{X}_t, \qquad t \in \mathcal{T}_1,
\end{equation*}
where $\hat{\boldsymbol{\beta}}$ is estimated from the training period. The technology-adjusted price is then:
\begin{equation*}
    p_t^{\mathrm{adj}} = p_t - \hat{f}_t = (f_t - \hat{f}_t) + b_t = \hat{u}_t + b_t
\end{equation*}
in the exact-separability benchmark. Here $\hat{u}_t = f_t - \hat{f}_t$ is the estimation error. If $\hat{\boldsymbol{\beta}}$ is consistent, $\mathbf{X}_t$ spans the fundamental, and speculative feedback into $\mathbf{X}_t$ is absent, then $\hat{u}_t$ is stationary and the adjusted series inherits the explosive dynamics of $b_t$ (if any) without the confounding technology component. If the boom partly raises $\mathbf{X}_t$, equation~\eqref{eq:feedback_residual} shows that the adjusted series inherits an attenuated bubble component.

PSY is then applied to $\{p_t^{\mathrm{adj}}\}_{t=1}^T$. Under the null of no bubble ($b_t = 0$), the adjusted series is approximately stationary (the estimation error $\hat{u}_t$ from a well-specified regression), and the test should not reject. Under the alternative of a genuine bubble ($b_t > 0$ with explosive dynamics), the test retains power when enough of the bubble remains in the residual after adjustment.

\subsubsection{Covariate Selection}
\label{subsubsec:covariates}

The covariates $\mathbf{X}_t$ must satisfy two benchmark requirements: (a) they are informative about the fundamental value (relevance), and (b) they are not directly affected by the speculative bubble in stock prices (exogeneity to the bubble). The production-side microfoundation of Section~\ref{subsec:microfoundation} provides a structural rationale for both requirements: Proposition~\ref{prop:microfoundation} shows that $\mathbf{X}_t = (TFP_t,\, \log IT_t,\, \log Pat_t)'$ spans the technology-driven component of fundamentals under the benchmark restrictions, while Remark~\ref{rmk:exclusion} states the separability condition. Our baseline specification focuses on these three technology covariates, and the robustness analysis asks how conclusions change when the exogeneity part of the benchmark is weakened.

\paragraph{Baseline.}
These variables capture the technology-driven component $\Tt$ of the fundamental, each corresponding to a distinct structural margin in the production model of Section~\ref{subsec:microfoundation}. \textit{Real IT investment} (BEA NIPA, deflated by the Consumer Price Index (CPI)) captures the diffusion and adoption margin through economy-wide IT capital deepening ($G_t$ in the model), which is linked to the expected marginal product of IT capital (equation~\ref{eq:it_foc}). \textit{TFP} (Fernald's utilization-adjusted series from the San Francisco Fed) captures the realized productivity margin ($a_t$). \textit{Patent grants} from the USPTO measure the invention margin ($q_{t-\ell}$); the examination process introduces a lag $\ell \geq 1$ between application and grant, making grants partly predetermined with respect to contemporaneous speculative dynamics. These features motivate the baseline specification but do not eliminate the possibility of multi-period feedback from valuations into investment, innovation, and measured productivity.

\paragraph{Robustness.}
We also consider extended specifications that include financial fundamentals and macroeconomic controls as robustness checks. These include aggregate revenue, aggregate book equity, and industrial production to capture the non-technology component of fundamentals, as well as the 10-year Treasury yield, credit spreads, and CPI to capture discount rate and inflationary effects. We also consider aggregate R\&D expenditure for NASDAQ-listed firms, though it is excluded from the baseline due to data availability constraints in the earlier training period.

\subsubsection{Econometric Specification}
\label{subsubsec:specification}

Since log stock prices and several covariates (R\&D expenditure, earnings, industrial production) are likely integrated of order one, direct ordinary least squares (OLS) estimation of \eqref{eq:training_regression} in levels may produce spurious results. We address this through two approaches.

\paragraph{Approach 1: Cointegrating regression.}
If $p_t$ and $\mathbf{X}_t$ are cointegrated, as the present-value model predicts since the log price should share a common stochastic trend with log fundamentals, then \eqref{eq:training_regression} can be estimated by DOLS, which provides asymptotically efficient and median-unbiased estimates of the cointegrating vector $\boldsymbol{\beta}$. We use a parsimonious specification with three technology covariates and include leads and lags of $\Delta \mathbf{X}_t$ in the DOLS specification to correct for endogeneity of the regressors:
\begin{equation}
    p_t = \boldsymbol{\beta}' \mathbf{X}_t + \sum_{j=-q}^{q} \boldsymbol{\gamma}_j' \Delta \mathbf{X}_{t-j} + u_t.
    \label{eq:DOLS}
\end{equation}
We set $q = 1$ in the baseline and test for cointegration using the Engle--Granger residual-based test.

\paragraph{Approach 2: First-difference regression.}
As a robustness check, we estimate the relationship in first differences:
\begin{equation*}
    \Delta p_t = \boldsymbol{\alpha}' \Delta \mathbf{X}_t + v_t, \qquad t \in \mathcal{T}_0.
\end{equation*}
The counterfactual is then constructed by cumulating the fitted values: $\hat{f}_t = p_{T_0} + \sum_{s=T_0+1}^{t} \hat{\boldsymbol{\alpha}}' \Delta \mathbf{X}_s$. This approach avoids the cointegration assumption but may be less efficient.

\subsection{Endogeneity and Identification}
\label{subsec:endogeneity}

The key identification assumption is that $\mathbf{X}_t$ is not directly affected by the speculative bubble $b_t$. The production-side microfoundation of Section~\ref{subsec:microfoundation} provides the benchmark case: the bubble enters as an additive component in secondary-market valuation ($p_t = f_t + b_t$) but does not appear in the production block that determines $\mathbf{X}_t$ (Remark~\ref{rmk:exclusion}). This \emph{separability} implies that the technology variables load on the fundamental $f_t$ but not on $b_t$. The assumption is strong. A boom in technology stocks can plausibly affect financing constraints, R\&D, IT adoption, hiring, and ultimately measured productivity. Section~\ref{subsec:feedback_bounds} shows that such feedback attenuates the residual bubble component. We therefore use the empirical design as a diagnostic with explicit safeguards rather than as an unconditional decomposition of prices.

\paragraph{Structural benchmark.}
The production economy of Section~\ref{subsec:microfoundation} implies that the technology observables $\mathbf{X}_t = (TFP_t,\, \log IT_t,\, \log Pat_t)'$ are determined by the real side of the economy: productivity, capital accumulation, and innovation. In particular: (i)~patent grants reflect the latent innovation quality $q_{t-\ell}$ with $\ell \geq 1$, making them predetermined with respect to contemporaneous market conditions because successful R\&D must be undertaken, filed, examined, and only later granted; (ii)~TFP is a production-side measure of technology and factor utilization; and (iii)~IT investment is linked to the expected marginal product of IT capital through \eqref{eq:it_foc}. These arguments motivate the exclusion restriction, but they do not prove that technology variables are immune to a sustained valuation boom.

\paragraph{Training-period estimation.}
Following \citet{Hsiao2012}, the regression is estimated exclusively during the training period $\mathcal{T}_0$, when $b_t = 0$ by assumption. This prevents the coefficient vector $\hat{\boldsymbol{\beta}}$ from being fitted directly on candidate bubble observations. It does not prevent the evaluation-period covariates from partly reflecting the candidate boom. If the bubble inflates $\mathbf{X}_t$, the counterfactual $\hat{f}_t$ is biased upward, making the adjusted series $p_t-\hat{f}_t$ smaller and PSY less likely to reject. Hence rejection in the adjusted series is relatively robust to this source of contamination, while non-rejection must be interpreted more cautiously: it rules out residual explosiveness after adjustment, not every possible speculative feedback mechanism.

\paragraph{Lagged covariates.}
To mitigate contemporaneous feedback, we use lagged values of the covariates: $\mathbf{X}_{t-h}$ with $h \geq 1$ (e.g., $h = 3$ months). This ensures that if a bubble at time $t$ affects $\mathbf{X}_t$ contemporaneously, the regression uses predetermined values. Lagging does not remove slow-moving feedback from earlier valuations into later technology variables, but it narrows the channel being used for identification.

\paragraph{Placebo and robustness tests.}
We conduct several diagnostic checks: (i) a \emph{pre-bubble placebo test} that applies PSY to the adjusted series during a known non-bubble subperiod to verify correct size; (ii) \emph{leave-one-out analysis} to assess sensitivity to individual covariates; (iii) comparison of results across DOLS, FMOLS, and first-difference specifications; (iv) training-window stability checks; and (v) Hansen parameter-instability tests. These diagnostics cannot prove separability, but they show whether the empirical conclusion is fragile to plausible changes in the mapping from technology proxies to fundamentals.

\subsection{Data}
\label{subsec:data}

The sample covers January 1975 to December 2005, at monthly frequency. We define the training period as January 1975 to December 1990 ($T_0 = 192$ months) and the evaluation period as January 1991 to December 2005 ($T_1 = 180$ months). The training period provides a substantial estimation window that predates the commonly accepted start of the dot-com bubble. The evaluation period encompasses the NASDAQ run-up (1995--2000), the crash (2000--2002), and the subsequent recovery (2003--2005).

The NASDAQ Composite Index and the macroeconomic and discount-rate controls (industrial production, the 10-year Treasury yield, BAA and AAA corporate bond yields, and the CPI) are obtained from the Federal Reserve Economic Data database, and the dividend series is constructed from CRSP NASDAQ value-weighted index returns with and without dividends. Among the technology and financial covariates, real IT investment is from the Bureau of Economic Analysis national accounts, utilization-adjusted TFP is from the Federal Reserve Bank of San Francisco \citep{Fernald2014}, patent grants are from the USPTO PatentsView database, and aggregate revenue, book equity, and R\&D expenditure for NASDAQ-listed firms are from Compustat.

For variables available only at quarterly frequency (R\&D expenditure, earnings, book value, TFP), we interpolate to monthly frequency using cubic spline interpolation, which preserves the low-frequency dynamics while allowing alignment with the monthly stock price data. Table~\ref{tab:data_variables} (Online Appendix) summarizes the variables used in the empirical analysis.

\subsection{Results}
\label{subsec:empirical_results}

\subsubsection{Standard PSY Test on NASDAQ}

We begin by applying the standard PSY test to the log real NASDAQ Composite index. Following the theoretical framework of Section~\ref{sec:theory}, where the fundamental price satisfies $p_t = f_t + b_t$, we apply PSY directly to the log price level rather than a detrended series. We set the minimum window fraction to $r_0 = 0.01 + 1.8/\sqrt{T}$ as recommended by \citet{Phillips2015b}, with $T = 372$ (January 1975 -- December 2005), and compute critical values via Monte Carlo simulation with 2,000 replications.

The GSADF statistic for the unadjusted log real NASDAQ price is $2.199$, compared to critical values of $1.915$ (10\%), $2.175$ (5\%), and $2.660$ (1\%). The test rejects the null hypothesis of no explosive behavior at both the 10\% and 5\% significance levels, but does not reject at the 1\% level. Date-stamping via the BSADF sequence identifies two episodes: May 1996 and November 1999 to March 2000, corresponding to the late phase of the dot-com run-up.

The standard PSY test, applied to the aggregate NASDAQ index, detects the terminal phase of the 1995--2000 run-up. At the 10\% level (CV $= 1.915$), two episodes are identified; at the 5\% level (CV $= 2.175$) the detection just crosses the threshold; the 1\% critical value ($2.660$) is not exceeded. This pattern suggests that the technology-driven fundamental component contributes to the explosive signal, motivating the technology-adjusted test that separates the source of the explosive dynamics.

\subsubsection{Cointegrating Regression}

Table~\ref{tab:coint_regression} reports the DOLS estimates of the cointegrating regression \eqref{eq:DOLS} using the training period (1975--1990). The baseline specification uses three technology covariates: log IT investment, TFP, and log patent grants. The DOLS specification includes $q = 1$ lead and lag of $\Delta \mathbf{X}_t$ with Newey--West heteroskedasticity- and autocorrelation-consistent standard errors.

\bigskip\centerline{[Insert Table~\ref{tab:coint_regression} about here]}\bigskip

The regression achieves an $R^2$ of $0.835$, indicating that the technology covariates capture a substantial portion of the variation in log real NASDAQ prices during the training period. The Engle--Granger augmented Dickey--Fuller test applied to the residuals yields a test statistic that rejects the null of no cointegration at the 1\% level.

Among the technology proxies, IT investment enters with a strongly significant positive coefficient ($0.456$, $p < 0.001$), and log patent grants also enters with a significant positive coefficient ($0.513$, $p = 0.008$). These results are consistent with the interpretation that technological progress and investment are primary drivers of fundamental value. TFP is statistically insignificant in this parsimonious specification.

\subsubsection{Out-of-Sample Fit}

Figure~\ref{fig:counterfactual_empirical} (Online Appendix) compares the observed log NASDAQ price with the counterfactual fundamental $\hat{f}_t$ during the evaluation period (1995--2005). The solid line is the observed log real NASDAQ Composite index; the dashed line is the DOLS counterfactual fundamental $\hat{f}_t = \hat{\boldsymbol{\beta}}' \mathbf{X}_t$.

The counterfactual tracks the observed price closely during 1995--1997, which suggests that the early phase of the NASDAQ run-up is largely explained by improving technology fundamentals. Beginning in approximately 1998, the observed price pulls away from the counterfactual, with the gap $p_t - \hat{f}_t$ widening through the peak in March 2000. The mean price gap during the evaluation period is positive (mean $= 0.42$), indicating that the counterfactual fundamental lies below the observed price on average. After the crash, the gap narrows rapidly as the observed price converges back toward the counterfactual by 2002--2003, consistent with the residual overvaluation having been eliminated by the market correction. Figure~\ref{fig:price_gap} (Online Appendix) plots the price gap $p_t - \hat{f}_t$ directly, which is the adjusted series to which we apply PSY.

\subsubsection{Technology-Adjusted PSY Test}

We apply PSY to the adjusted series $p_t^{\mathrm{adj}} = p_t - \hat{f}_t$. The test statistic for the technology-adjusted series is $1.952$, compared to critical values of $1.918$ (10\%), $2.175$ (5\%), and $2.691$ (1\%). The adjusted test \emph{rejects} the null hypothesis of no explosive behavior at the 10\% level but does not reject at the 5\% or 1\% levels. The BSADF date-stamping statistic identifies a single residual explosive episode from December 1999 to March 2000.

Figure~\ref{fig:psy_empirical} displays the date-stamping statistics and critical value sequences for both the unadjusted and adjusted tests, zoomed in to the 1997--2003 period where the key bubble dynamics occur.

\bigskip\centerline{[Insert Figure~\ref{fig:psy_empirical} about here]}\bigskip

The comparison between the standard and technology-adjusted tests yields two findings consistent with our theoretical predictions. First, the May 1996 episode detected by the standard test \emph{disappears entirely} after technology adjustment, attributing the brief explosive signal to the technology-driven fundamental component rather than to residual speculation---information the standard test alone cannot isolate. Second, the onset of the main residual episode shifts from November 1999 (standard) to December 1999 (adjusted), indicating that the technology component slightly accelerates the apparent onset of explosive behavior. Taken together, the technology adjustment reduces the number of detected episodes from two to one and narrows the timing of residual explosiveness, consistent with the prediction that fundamental technology dynamics generate explosive signals in raw prices.

\subsubsection{Robustness}

Table~\ref{tab:robustness_loo}, Panel~A reports the test statistics and rejection decisions across eight specifications, and Panel~B reports the leave-one-out analysis.

\bigskip\centerline{[Insert Table~\ref{tab:robustness_loo} about here]}\bigskip

Several patterns emerge. First, the extended covariates specification with six variables (three technology proxies plus three macroeconomic controls) yields a test statistic of $2.491$, rejecting the null at both the 10\% and 5\% levels but not at the 1\% level, and providing additional confirmation of the baseline result. The first-difference specification also rejects at the 10\% level (test statistic $= 2.174$, CV$_{10\%}$ = $1.930$), which is reassuring because it does not rely on the cointegration assumption.

Second, the placebo test on 1985--1990 correctly does not reject at any significance level (test statistic $= -0.689$), confirming that the adjusted test does not spuriously detect residual explosiveness during a known non-bubble period. Third, sensitivity to the training period is evident: alternative training windows such as 1975--1988 (test statistic $= 1.861$) and 1975--1992 (test statistic $= 1.596$) do not reject at any standard significance level, showing that the results are sensitive to the estimation window choice. Lagged covariate specifications with $h=3$ (test statistic $= 1.896$) and $h=6$ (test statistic $= 1.487$) also do not reject at any significance level. The leave-one-out analysis (Table~\ref{tab:robustness_loo}, Panel~B) reports the test statistics when dropping each of the three technology covariates. These results highlight the contribution of each individual proxy to the counterfactual's ability to isolate residual explosiveness.


The adjusted test reveals a two-phase narrative of the NASDAQ dot-com episode: technology-related fundamental repricing through much of the run-up, followed by residual explosiveness concentrated near the peak. This provides a middle ground between the ``rational technology repricing'' view \citep{Pastor2006, Pastor2009} and the ``irrational exuberance'' view \citep{Shiller2015}. The result is economically meaningful but identification-sensitive, because the significance of the adjusted statistic varies with the training window and lagged-covariate specifications.

\subsection{The Artificial-Intelligence Era Rally}
\label{subsec:ai_empirical}

We now turn to the most pressing contemporary application: the AI-driven equity rally of the 2020s. The release of large language models in late 2022 triggered a sustained surge in technology-related equity prices, raising the same fundamental question our framework addresses: whether the observed price dynamics reflect a rational fundamental response to a GPT shock or speculative excess.

\subsubsection{Motivation and Background}
\label{subsec:ai_motivation}

The AI rally shares structural parallels with the dot-com episode. Both feature a GPT whose productivity implications are uncertain ex ante \citep{Acemoglu2025, BrynjolfssonLiRaymond2025}, a concentration of price gains in technology firms, and a rapid expansion of technology investment. Between October 2022 and early 2025, the NASDAQ Composite nearly doubled, with the ``Magnificent Seven'' stocks (Alphabet, Amazon, Apple, Meta, Microsoft, NVIDIA, and Tesla) accounting for a disproportionate share of market-capitalization gains \citep{BaselePhillipsShi2025}. Corporate capital expenditure on AI compute infrastructure has surged, AI-related patent filings have accelerated \citep{Pairolero2025, McElheran2024}, and the financing of AI investment has shifted from internal cash flows toward external debt \citep{AldasoroDoerrRees2025}.

\citet{BaselePhillipsShi2025} apply the standard PSY procedure to the Magnificent Seven and the NASDAQ Composite during this period and find evidence of explosive behavior. Our theoretical and simulation results show that such findings can reflect fundamental repricing induced by the AI technology shock; the diagnostic developed in Sections~\ref{sec:correction}--\ref{sec:empirical} asks whether explosiveness remains after removing measured technology fundamentals.

\subsubsection{Data}
\label{subsec:ai_data}

We extend the dataset from Section~\ref{subsec:data} through December 2025. The dependent variable is the log real NASDAQ Composite Index, obtained from FRED and deflated by the CPI. The baseline specification utilizes three technology proxies (USPTO patent grants (extended through 2024), TFP from \citet{Fernald2014}, and real IT investment from the Bureau of Economic Analysis), while robustness checks add aggregate NASDAQ R\&D expenditure and R\&D intensity from Compustat, the 10-year Treasury yield, BAA and AAA corporate bond spreads, CPI, and industrial production (all from FRED). The extended sample covers January 1975 through December 2025, providing 611 monthly observations.

\subsubsection{Standard PSY Test on the AI-Era Sample}
\label{subsec:ai_raw_psy}

We first apply the standard PSY test to the log NASDAQ price over the full extended sample (January 1975 -- December 2025, $T = 611$ monthly observations). This mirrors the analysis in Section~\ref{subsec:empirical_results} but extends the data by twenty years to encompass the AI-era rally. The minimum window fraction is $r_0 = 0.01 + 1.8/\sqrt{611} = 0.083$, yielding a minimum window of 50 observations.

The GSADF statistic for the unadjusted log real NASDAQ price is $2.199$, identical to the value from the shorter 1975--2005 sample in Section~\ref{subsec:empirical_results} (reflecting the fact that the supremum is attained during the same dot-com subsample window), compared to the 10\% critical value of $1.996$, 5\% critical value of $2.239$, and 1\% critical value of $2.759$. The test \emph{rejects} the null hypothesis of no explosive behavior at the 10\% significance level (test statistic = $2.199 > 1.996 = \mathrm{cv}_{10\%}$) but \emph{does not reject} at the 5\% or 1\% levels. Date-stamping via the BSADF sequence identifies only a single brief episode: November 1999 to March 2000, corresponding to the final peak of the dot-com bubble. Crucially, \emph{no explosive episode is detected during the AI rally period (2020--2025)}.

This result parallels the finding from the shorter sample (Section~\ref{subsec:empirical_results}): the standard PSY test applied to the aggregate NASDAQ index detects the terminal phase of the dot-com bubble but does not identify explosive behavior during the AI-era price run-up. In the extended sample, the additional twenty years of post-dot-com data (including the 2008 financial crisis and subsequent recovery) provide a longer baseline against which to evaluate explosive dynamics. The test statistic exceeds the 10\% critical value but remains below the 5\% threshold, so the evidence for explosive behavior depends on the choice of significance level. Figure~\ref{fig:psy_ai} (panel A) displays the date-stamping statistic for the full 1975--2025 sample.

\bigskip\centerline{[Insert Figure~\ref{fig:psy_ai} about here]}\bigskip

\subsubsection{Cointegrating Regression}
\label{subsec:ai_cointegration}

We estimate the cointegrating relationship between the log NASDAQ price and the technology covariates using the training period 2006--2019. This window is chosen to satisfy two criteria: (i) it postdates the dot-com bubble and recovery, avoiding contamination from the earlier speculative episode; and (ii) it predates the AI-driven rally, ensuring $b_t = 0$ during the estimation period. We use DOLS \citep{Stock1993} with $q = 1$ lead and lag, and compute Newey--West heteroskedasticity- and autocorrelation-consistent standard errors (bandwidth = 4).

The baseline specification utilizes three technology covariates: log IT investment, TFP, and log patent grants. The training sample yields $T_0 = 165$ effective observations after accounting for DOLS leads and lags. Table~\ref{tab:coint_regression_ai} (Online Appendix) reports the DOLS estimates.

The regression achieves an $R^2$ of $0.9065$ (adjusted $R^2 = 0.8992$), confirming a strong in-sample relationship between technology fundamentals and NASDAQ valuations during the 2006--2019 period. The Engle--Granger augmented Dickey--Fuller test applied to the residuals yields a test statistic of $-2.6699$, which rejects the null of no cointegration at the 1\% level (critical value = $-2.58$), confirming that the log price and the technology covariates share a common stochastic trend.

The coefficient estimates reveal the primary drivers of technology valuations in the pre-AI era. Log IT investment enters with the largest coefficient ($2.141$, $p < 0.001$), indicating that IT infrastructure expansion was the dominant fundamental driver of the NASDAQ during 2006--2019. Log patent grants also enters with a strongly positive coefficient ($1.157$, $p < 0.001$), which contrasts with the 1975--1994 result and reflects the increasing role of intellectual property in modern technology valuations. TFP is significant ($0.012$, $p = 0.005$) but economically smaller, consistent with its role in the earlier sample. Together, these results indicate that the NASDAQ's pre-AI growth was fundamentally supported by IT investment and innovation.

\subsubsection{Counterfactual Fundamental and Technology-Adjusted Test}
\label{subsec:ai_counterfactual}

Using the estimated cointegrating coefficients $\hat{\boldsymbol{\beta}}$ from the training period, we construct the counterfactual fundamental price for the evaluation period (2020--2025):
\begin{equation*}
    \hat{f}_t^{\mathrm{AI}} = \hat{\boldsymbol{\beta}}' \mathbf{x}_t, \qquad t \in [2020\text{M}1,\; 2025\text{M}12],
\end{equation*}
where $\mathbf{x}_t$ is the vector of three technology covariates. The price gap is the residual object tested for explosiveness:
\begin{equation*}
    \hat{g}_t^{\mathrm{AI}} = p_t - \hat{f}_t^{\mathrm{AI}}.
\end{equation*}
Under exact separability this price gap contains any speculative component plus stationary error; under feedback it contains the attenuated residual component described in \eqref{eq:feedback_residual}. We apply PSY to $\hat{g}_t^{\mathrm{AI}}$ over the evaluation window. If the AI rally is explained by the fitted technology-fundamental relationship, the price gap should be stationary and PSY should not reject.

\bigskip\centerline{[Insert Figure~\ref{fig:counterfactual_ai} about here]}\bigskip

Figure~\ref{fig:counterfactual_ai} compares the observed log NASDAQ price with the counterfactual fundamental $\hat{f}_t^{\mathrm{AI}}$ over the full sample. The counterfactual lies generally above the observed price during the evaluation period. The mean price gap $\hat{g}_t^{\mathrm{AI}} = p_t - \hat{f}_t^{\mathrm{AI}}$ is $-0.1877$ (SD $= 0.1064$), with a maximum value of $0.0424$ in July 2023. Within the fitted technology-fundamental relationship, the negative mean indicates that technology proxies (IT investment, patent activity, and TFP) grew faster than the observed NASDAQ price during much of 2020--2025.

\bigskip\centerline{[Insert Figure~\ref{fig:price_gap_ai} about here]}\bigskip

Figure~\ref{fig:price_gap_ai} plots the price gap $\hat{g}_t^{\mathrm{AI}}$ directly. We apply PSY to the technology-adjusted series over the 2006--2025 period (the training and evaluation sample combined), rather than the full 1975--2025 sample. This restriction avoids out-of-sample extrapolation artifacts: because the DOLS counterfactual is estimated on the 2006--2019 training data, backward extrapolation to the 1975--2005 period would generate spurious gap dynamics unrelated to the AI-era analysis. The test statistic for the technology-adjusted series is $-0.275$, compared to critical values of $1.810$ (10\%), $2.046$ (5\%), and $2.599$ (1\%). The adjusted test \emph{fails to reject} the null hypothesis of no explosive behavior at any standard significance level, and \emph{no residual explosive episode is detected during the AI rally (2020--2025)}.

The contrast with the dot-com results is sharp. In the dot-com era, both the standard (test statistic $= 2.199$) and technology-adjusted (test statistic $= 1.952$) tests reject the null at the 10\% level: residual explosiveness remains after technology adjustment, and the adjustment sharpens the timing. In the AI era, the standard full-sample test rejects at 10\% because the supremum is attained during the dot-com window, while the AI-era technology-adjusted residual does not reject (test statistic $= -0.275$). Within the technology-adjusted framework, the observed AI-era dynamics are therefore consistent with technology-driven fundamental repricing rather than residual speculation. This is a conditional statement: if speculative valuations feed into the technology proxies, the non-rejection rules out residual explosiveness after adjustment, not every possible bubble-induced real-response channel. Figure~\ref{fig:psy_ai} (panel B) displays the date-stamping statistic for the technology-adjusted test.

\subsubsection{Robustness}
\label{subsec:ai_robustness}

We examine the sensitivity of the results to: (i) alternative training periods (2004--2019, 2008--2019); (ii) leave-one-out covariate analysis; (iii) lagged covariates to address potential endogeneity; and (iv) first-difference, extended covariate, and R\&D-augmented specifications.

Table~\ref{tab:robustness_ai}, Panel~A reports the GSADF statistics and rejection decisions for the AI-era analysis across ten specifications, and Panel~B reports the leave-one-out covariate analysis for the three technology variables.

\bigskip\centerline{[Insert Table~\ref{tab:robustness_ai} about here]}\bigskip

The central finding, that no residual explosive behavior occurs during the AI era (2020--2025) after the baseline technology adjustment, is robust. The unadjusted test rejects at 10\% (test statistic $= 2.199 > 1.996$) but does not reject at the 5\% or 1\% levels, while the baseline DOLS-adjusted test fails to reject at any significance level (test statistic $= -0.275 < 1.810$). When we include the extended set of nine covariates (adding macroeconomic and financial controls), the test rejects at the 10\% level (test statistic $= 2.060 > 1.992$) but not at the 5\% or 1\% levels, which indicates that including non-technology fundamental variability can introduce noise that mimics explosiveness. The first-difference specification (test statistic $= 1.652$) and lagged covariate specifications ($h=3$: test statistic $= 1.527$; $h=6$: test statistic $= 0.659$) all fail to reject at any significance level.
The leave-one-out analysis for the three technology covariates shows that the non-rejection is not dependent on any single variable. Dropping IT investment (test statistic $= 0.610$), TFP (test statistic $= 0.475$), or patent grants (test statistic $= 1.114$) consistently results in non-rejection at any standard significance level.

This robustness pattern differs from the dot-com results, where the leave-one-out analysis shows sensitivity to specific covariates. In the AI-era application, non-rejection is uniform across specifications, reinforcing the interpretation that post-2020 aggregate price dynamics are well described by technology fundamentals within this diagnostic framework.

\subsubsection{Discussion}
\label{subsec:ai_discussion}

The AI-era results, combined with the dot-com findings, yield three conclusions. (1) The technology-adjustment methodology is structurally identical across both episodes, yet it produces asymmetric outcomes. In the dot-com case, residual explosiveness remains after adjustment; in the AI era, the adjusted residual does not reject. This divergence is informative precisely because the same conditional diagnostic is applied to both episodes. (2) The AI rally differs from dot-com along economically relevant dimensions: substantial IT investment and patenting growth, less extreme price-to-fundamental acceleration, and concentration in profitable incumbents. (3) The contrasting results across the two episodes suggest a practical surveillance protocol. Central banks and regulators applying PSY should routinely apply the technology-adjusted procedure in parallel to obtain a more granular diagnosis of explosive dynamics. If both tests reject, as in the dot-com case, there is evidence of residual explosiveness above and beyond measured technology fundamentals. If the standard test rejects but the technology-adjusted test does not, as in our AI-era application, the explosive dynamics are more consistent with technology-driven fundamental repricing than with residual speculation. This two-test protocol directly addresses the identification problem that arises during periods of rapid technological change while preserving the caveat that the residual is conditional on the separability and feedback assumptions.


\subsection{Firm-Level Analysis: The Magnificent Seven}
\label{subsec:mag7_analysis}

We extend the aggregate analysis to firm-level data on the ``Magnificent Seven'' stocks (MSFT, AMZN, META, GOOGL, AAPL, TSLA, NVDA), constructing firm-specific counterfactuals from Compustat real revenue per share and R\&D expenditure per share over a bubble-excluded 2006--2019 training window. The unadjusted PSY test rejects in four of seven stocks (MSFT, GOOGL, TSLA, NVDA), with NVDA strongest (GSADF~$=4.155$) and TSLA showing a prolonged 2020--2021 episode. After firm-level technology adjustment, the explosive signals disappear for MSFT, GOOGL, NVDA, AMZN, and META (in-sample $R^2$ between 0.741 and 0.961), while TSLA continues to reject at the 1\% level (test statistic $4.696$) and AAPL becomes significant at the 5\% level (test statistic $2.317$, $R^2=0.407$), consistent with residual idiosyncratic dynamics. These firm-level results parallel the index-level finding: technology adjustment removes the dominant source of apparent explosiveness for most firms, while leaving some firm-specific residual signals. The full firm-level methodology, results table, and date-stamping figures (Tables~\ref{tab:mag7_psy} and~\ref{tab:mag7_psy_adj}; Figures~\ref{fig:psy_mag7} and~\ref{fig:psy_mag7_adj}) are reported in Appendix~\ref{app:firmlevel}.


\subsection{Robustness of the Identification Strategy}
\label{subsec:robustness_summary}

The identification strategy rests on the separability of technology proxies from the speculative bubble component. Appendix~\ref{app:robustness} reports five complementary diagnostics that assess this assumption. First, Granger causality tests show that stock returns do not predict technology proxies during the candidate speculative periods ($p > 0.47$ throughout), which is consistent with separability but does not prove it. Second, placebo tests using non-technology covariates (consumer sentiment, VIX, industrial production) generally fail to replicate the technology-adjustment pattern, suggesting that the technology proxies capture specific fundamental content rather than generic macro variation. Third, PCA of the Magnificent Seven price gaps reveals a dominant common factor (60.0\% of variance) with no explosive behavior in any linear combination, providing cross-sectional evidence consistent with fundamental repricing rather than residual speculation. Fourth, training-window stability analysis shows that the AI-era non-rejection finding holds across all 121 training windows ($\text{GSADF} \in [-0.30, 0.84]$, all below $\text{CV}_{0.10} = 1.176$); the dot-com results are more sensitive, driven primarily by the patent coefficient. Fifth, the \citet{Hansen1992} $L_c$ test rejects parameter constancy in both eras, qualifying the DOLS estimates as reflecting an average equilibrium. Taken together, the diagnostics support the AI-era residual non-rejection while also explaining why the dot-com evidence should be read as suggestive and timing-informative rather than as a fully identified structural bubble estimate.

\section{Conclusion}
\label{sec:conclusion}

This paper develops a technology-adjusted diagnostic for asset-price explosiveness during general-purpose technology adoption. The central finding is that technology-driven structural change in fundamentals can generate locally explosive price dynamics that the standard PSY bubble test misclassifies as speculative, a misidentification that matters most precisely when the bubble question is most pressing. The theory and simulations establish this point without relying on the empirical identification assumptions: in a no-bubble present-value model, a hump-shaped technology shock produces severe over-rejection, and an oracle adjustment restores correct size while preserving power against genuine explosive residuals.

The empirical contribution is deliberately more conditional. Applying the same diagnostic to the late-1990s dot-com episode and the 2020--2025 AI rally yields sharply different residual patterns. For dot-com, the technology adjustment removes an early signal and leaves residual explosiveness concentrated in December 1999--March 2000, though this evidence is sensitive to training-window and covariate choices. For the AI rally, the technology-adjusted NASDAQ residual does not reject at any conventional significance level across the baseline, lagged-covariate, leave-one-out, and training-window specifications. Firm-level evidence from the Magnificent Seven broadly reinforces the aggregate result, while residual signals for Tesla and Apple caution against interpreting the index result as a claim about every firm.

The identification structure is central, and we resolve it by interpretation rather than by assuming it away. Observable technology proxies such as IT investment, TFP, and patents may themselves respond to speculative valuations through financing, adoption, and innovation channels. When that feedback is present, the adjusted residual contains only the non-absorbed part of the bubble, which gives the diagnostic a conservative direction (Section~\ref{subsec:feedback_bounds}): the adjustment can only attenuate a true bubble, so a rejection is robust to feedback while a non-rejection bounds residual explosiveness rather than proving speculation is absent. This asymmetry is what disciplines our reading of the two episodes---a confirmed residual near the dot-com peak, a bounded residual in the AI era---and it makes the procedure more useful, not less, because it states precisely what the test can and cannot establish.

These findings suggest a two-test surveillance protocol for central banks and financial regulators conducting real-time bubble monitoring. During periods of rapid technological transition, regulators should run the standard and technology-adjusted PSY tests in parallel. If both reject, as in the dot-com case, the data contain residual explosiveness beyond measured technology fundamentals. If the standard test rejects but the adjusted test does not, as in the AI-era application, the evidence is more consistent with technology-driven repricing than with residual speculation, subject to the feedback caveat. This protocol offers a more transparent diagnostic for policy than relying on statistical explosiveness alone.

\clearpage
{\singlespacing
\bibliographystyle{jf}
\bibliography{references}

\begin{thebibliography}{35}
\expandafter\ifx\csname natexlab\endcsname\relax\def\natexlab#1{#1}\fi

\bibitem[Acemoglu(2025)]{Acemoglu2025}
Acemoglu, Daron, 2025, The simple macroeconomics of {AI}, {\em Economic
  Policy\/} 40, 13--58.

\bibitem[A{\"i}t-Sahalia et~al.(2025)A{\"i}t-Sahalia, Fan, Xue, and
  Zhu]{AitSahaliaFanXueZhu2025}
A{\"i}t-Sahalia, Yacine, Jianqing Fan, Lirong Xue, and Xiaonan Zhu, 2025, How
  and when are high-frequency stock returns predictable?, {\em Management
  Science\/} Forthcoming.

\bibitem[Aldasoro et~al.(2026)Aldasoro, Doerr, and Rees]{AldasoroDoerrRees2025}
Aldasoro, I{\~n}aki, Sebastian Doerr, and Daniel Rees, 2026, Financing the {AI}
  boom: From cash flows to debt, BIS Bulletin 120, Bank for International
  Settlements.

\bibitem[Basele et~al.(2025)Basele, Phillips, and Shi]{BaselePhillipsShi2025}
Basele, R.~B., Peter C.~B. Phillips, and Shuping Shi, 2025, Speculative bubbles
  in the recent {AI} boom: {N}asdaq and the {M}agnificent {S}even, {\em Journal
  of Time Series Analysis\/} 46, 814--828.

\bibitem[Bierens and Martins(2010)]{BierensMartins2010}
Bierens, Herman~J., and Luis~F. Martins, 2010, Time-varying cointegration, {\em
  Econometric Theory\/} 26, 1453--1490.

\bibitem[Blanchard and Watson(1982)]{Blanchard1982}
Blanchard, Olivier~J., and Mark~W. Watson, 1982, Bubbles, rational expectations
  and financial markets, in Paul Wachtel, ed., {\em Crises in the Economic and
  Financial Structure: Bubbles, Bursts, and Shocks\/},  295--316 (Lexington
  Books, Lexington, MA).

\bibitem[Bresnahan and Trajtenberg(1995)]{Bresnahan1995}
Bresnahan, Timothy~F., and Manuel Trajtenberg, 1995, General purpose
  technologies: ``{E}ngines of growth''?, {\em Journal of Econometrics\/} 65,
  83--108.

\bibitem[Brynjolfsson et~al.(2025)Brynjolfsson, Li, and
  Raymond]{BrynjolfssonLiRaymond2025}
Brynjolfsson, Erik, Danielle Li, and Lindsey~R. Raymond, 2025, Generative {AI}
  at work, {\em Quarterly Journal of Economics\/} 140, 889--942.

\bibitem[Campbell and Shiller(1988)]{Campbell1988}
Campbell, John~Y., and Robert~J. Shiller, 1988, The dividend-price ratio and
  expectations of future dividends and discount factors, {\em Review of
  Financial Studies\/} 1, 195--228.

\bibitem[Cochrane(2008)]{Cochrane2008}
Cochrane, John~H., 2008, The dog that did not bark: A defense of return
  predictability, {\em Review of Financial Studies\/} 21, 1533--1575.

\bibitem[Cochrane(2011)]{Cochrane2011}
Cochrane, John~H., 2011, Presidential address: Discount rates, {\em Journal of
  Finance\/} 66, 1047--1108.

\bibitem[Diba and Grossman(1988)]{Diba1988}
Diba, Behzad~T., and Herschel~I. Grossman, 1988, Explosive rational bubbles in
  stock prices?, {\em American Economic Review\/} 78, 520--530.

\bibitem[Evans(1991)]{Evans1991}
Evans, George~W., 1991, Pitfalls in testing for explosive bubbles in asset
  prices, {\em American Economic Review\/} 81, 922--930.

\bibitem[Fan et~al.(2026)Fan, Liu, Wang, and Zheng]{FanLiuWangZheng2026}
Fan, Jianqing, Qingfu Liu, Bo~Wang, and Kaixin Zheng, 2026, Unearthing
  financial statement fraud: Insights from news coverage analysis, {\em
  Management Science\/} 72, 4200--4230.

\bibitem[Fernald(2014)]{Fernald2014}
Fernald, John~G., 2014, A quarterly, utilization-adjusted series on total
  factor productivity, Working Paper 2012-19, Federal Reserve Bank of San
  Francisco.

\bibitem[Hansen(1992)]{Hansen1992}
Hansen, Bruce~E., 1992, Tests for parameter instability in regressions with
  {I(1)} processes, {\em Journal of Business \& Economic Statistics\/} 10,
  321--335.

\bibitem[Harvey et~al.(2015)Harvey, Leybourne, and Sollis]{Harvey2016}
Harvey, David~I., Stephen~J. Leybourne, and Robert Sollis, 2015, Recursive
  right-tailed unit root tests for an explosive asset price bubble, {\em
  Journal of Financial Econometrics\/} 13, 166--187.

\bibitem[Harvey et~al.(2016)Harvey, Leybourne, Sollis, and Taylor]{Harvey2020}
Harvey, David~I., Stephen~J. Leybourne, Robert Sollis, and A.~M.~Robert Taylor,
  2016, Tests for explosive financial bubbles in the presence of non-stationary
  volatility, {\em Journal of Empirical Finance\/} 38, 548--574.

\bibitem[Hsiao et~al.(2012)Hsiao, Ching, and Wan]{Hsiao2012}
Hsiao, Cheng, H.~Steve Ching, and Shui~Ki Wan, 2012, A panel data approach for
  program evaluation: Measuring the benefits of political and economic
  integration of {H}ong {K}ong with mainland {C}hina, {\em Journal of Applied
  Econometrics\/} 27, 705--740.

\bibitem[Jovanovic and Rousseau(2005)]{Jovanovic2005}
Jovanovic, Boyan, and Peter~L. Rousseau, 2005, General purpose technologies, in
  {\em Handbook of Economic Growth\/}, volume~1B,  1181--1224 (Elsevier).

\bibitem[Kogan et~al.(2017)Kogan, Papanikolaou, Seru, and Stoffman]{Kogan2017}
Kogan, Leonid, Dimitris Papanikolaou, Amit Seru, and Noah Stoffman, 2017,
  Technological innovation, resource allocation, and growth, {\em Quarterly
  Journal of Economics\/} 132, 665--712.

\bibitem[McElheran et~al.(2024)McElheran, Li, Brynjolfsson, Kroff, Dinlersoz,
  Foster, and Zolas]{McElheran2024}
McElheran, Kristina, J.~Frank Li, Erik Brynjolfsson, Zachary Kroff, Emin
  Dinlersoz, Lucia Foster, and Nikolas Zolas, 2024, {AI} adoption in {A}merica:
  Who, what, and where, {\em Journal of Economics \& Management Strategy\/} 33,
  375--415.

\bibitem[Monschang and Wilfling(2021)]{Monschang2021}
Monschang, Verena, and Bernd Wilfling, 2021, Sup-{ADF}-style bubble-detection
  methods under test, {\em Empirical Economics\/} 61, 145--172.

\bibitem[Pairolero et~al.(2025)Pairolero, Giczy, Torres, Islam~Erana,
  Finlayson, and Toole]{Pairolero2025}
Pairolero, Nicholas~A., Alexander~V. Giczy, Gerard Torres, Tisa Islam~Erana,
  Mark~A. Finlayson, and Andrew~A. Toole, 2025, The artificial intelligence
  patent dataset ({AIPD}) 2023 update, {\em Journal of Technology Transfer\/}
  50, 2587--2610.

\bibitem[P{\'a}stor and Veronesi(2006)]{Pastor2006}
P{\'a}stor, {\v{L}}ubo{\v{s}}, and Pietro Veronesi, 2006, Was there a {N}asdaq
  bubble in the late 1990s?, {\em Journal of Financial Economics\/} 81,
  61--100.

\bibitem[P{\'a}stor and Veronesi(2009)]{Pastor2009}
P{\'a}stor, {\v{L}}ubo{\v{s}}, and Pietro Veronesi, 2009, Technological
  revolutions and stock prices, {\em American Economic Review\/} 99,
  1451--1483.

\bibitem[Pavlidis et~al.(2016)Pavlidis, Yusupova, Paya, Peel,
  Mart{\'\i}nez-Garc{\'\i}a, Mack, and Grossman]{Pavlidis2016}
Pavlidis, Efthymios, Alisa Yusupova, Ivan Paya, David Peel, Enrique
  Mart{\'\i}nez-Garc{\'\i}a, Adrienne Mack, and Valerie Grossman, 2016,
  Episodes of exuberance in housing markets: In search of the smoking gun, {\em
  Journal of Real Estate Finance and Economics\/} 53, 419--449.

\bibitem[Phillips(1987)]{Phillips1987}
Phillips, Peter C.~B., 1987, Towards a unified asymptotic theory for
  autoregression, {\em Biometrika\/} 74, 535--547.

\bibitem[Phillips and Magdalinos(2007)]{PhillipsMagdalinos2007}
Phillips, Peter C.~B., and Tassos Magdalinos, 2007, Limit theory for moderate
  deviations from a unit root, {\em Journal of Econometrics\/} 136, 115--130.

\bibitem[Phillips et~al.(2015{\natexlab{a}})Phillips, Shi, and
  Yu]{Phillips2015b}
Phillips, Peter C.~B., Shuping Shi, and Jun Yu, 2015{\natexlab{a}}, Testing for
  multiple bubbles: Historical episodes of exuberance and collapse in the
  {S\&P}~500, {\em International Economic Review\/} 56, 1043--1078.

\bibitem[Phillips et~al.(2015{\natexlab{b}})Phillips, Shi, and
  Yu]{Phillips2015a}
Phillips, Peter C.~B., Shuping Shi, and Jun Yu, 2015{\natexlab{b}}, Testing for
  multiple bubbles: Limit theory of real-time detectors, {\em International
  Economic Review\/} 56, 1079--1134.

\bibitem[Phillips et~al.(2011)Phillips, Wu, and Yu]{Phillips2011}
Phillips, Peter C.~B., Yangru Wu, and Jun Yu, 2011, Explosive behavior in the
  1990s {NASDAQ}: When did exuberance escalate asset values?, {\em
  International Economic Review\/} 52, 201--226.

\bibitem[Shiller(2015)]{Shiller2015}
Shiller, Robert~J., 2015, {\em Irrational Exuberance\/}, third edition
  (Princeton University Press, Princeton, NJ).

\bibitem[Stock and Watson(1993)]{Stock1993}
Stock, James~H., and Mark~W. Watson, 1993, A simple estimator of cointegrating
  vectors in higher order integrated systems, {\em Econometrica\/} 61,
  783--820.

\bibitem[West(1988)]{West1988}
West, Kenneth~D., 1988, Bubbles, fads and stock price volatility tests: A
  partial evaluation, {\em Journal of Finance\/} 43, 639--656.

\end{thebibliography}
}


\clearpage
\begin{figure}
\centerline{\includegraphics[width=0.95\textwidth]{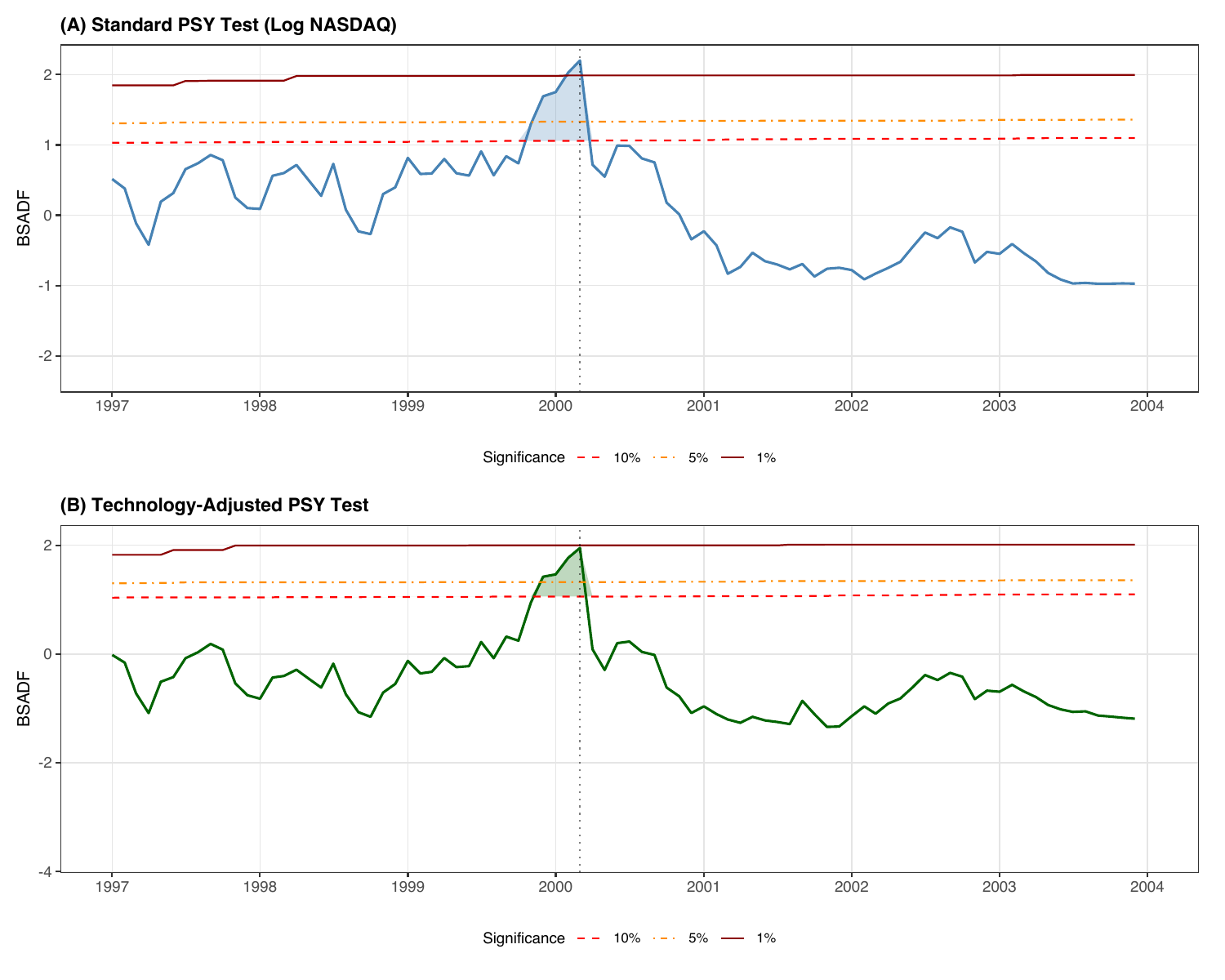}}
\noindent\caption{{\bf Technology adjustment removes the early false signal in the dot-com episode.} The horizontal axis is calendar time from 1997 to 2003, the solid line in each panel is the BSADF date-stamping statistic, and the dashed lines are the 10, 5, and 1 percent critical value sequences. The unadjusted panel detects May 1996 and November 1999 to March 2000, while the adjusted panel retains only December 1999 to March 2000.\label{fig:psy_empirical}}
\end{figure}

\clearpage
\begin{figure}
\centerline{\includegraphics[width=0.95\textwidth]{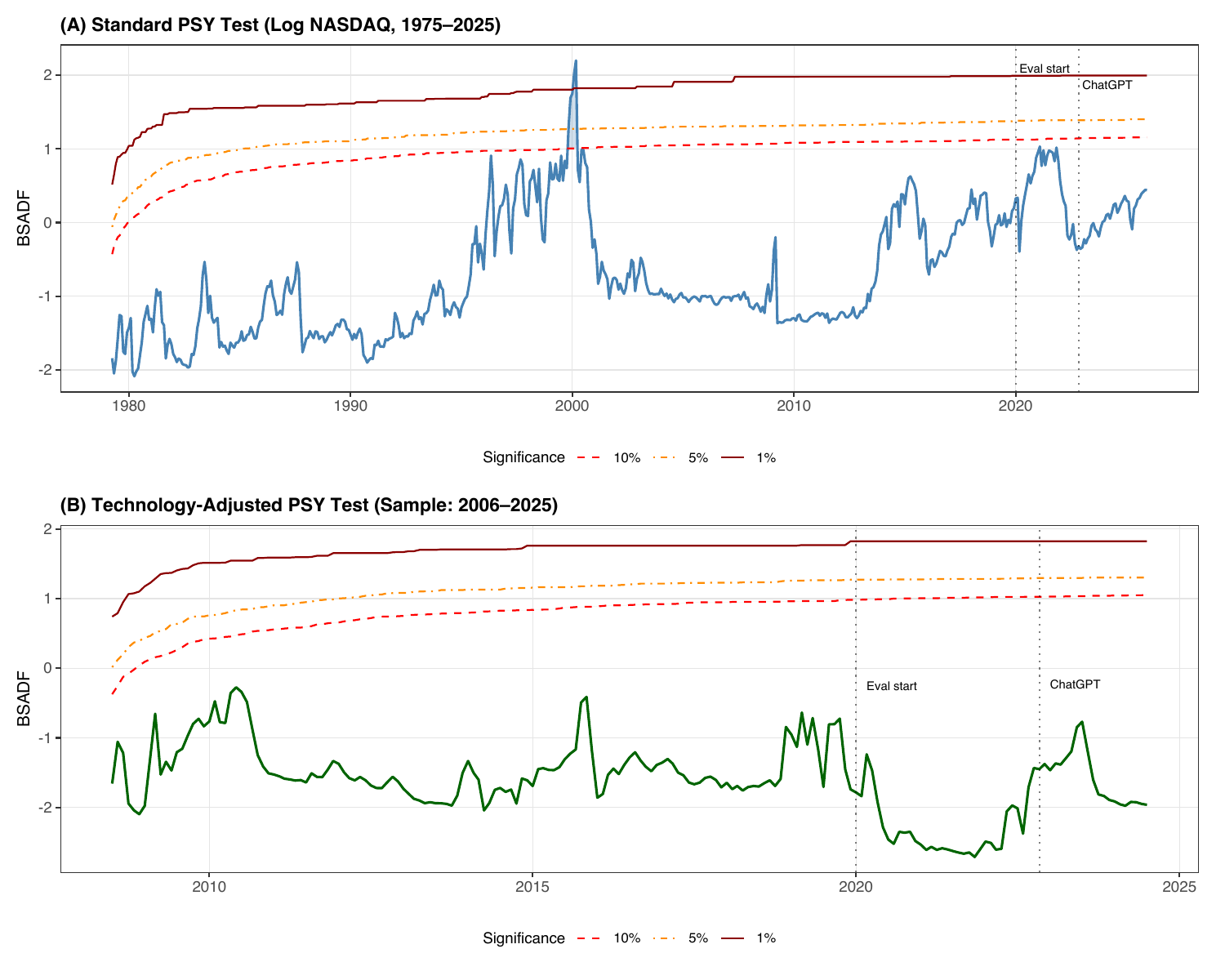}}
\noindent\caption{{\bf The extended-sample test still points to dot-com explosiveness, not to the 2020--2025 AI rally.} Panel A plots the unadjusted BSADF date-stamping statistic and its 10, 5, and 1 percent critical value sequences for the full 1975--2025 sample. Panel B plots the technology-adjusted date-stamping statistic and its 10, 5, and 1 percent critical value sequences for the 2006--2025 subsample. The key result is that only the dot-com peak crosses the threshold, while the adjusted AI-era series remains well below it.\label{fig:psy_ai}}
\end{figure}

\clearpage
\begin{figure}
\centering
\begin{subfigure}[t]{0.95\textwidth}
\centering
\includegraphics[width=\textwidth]{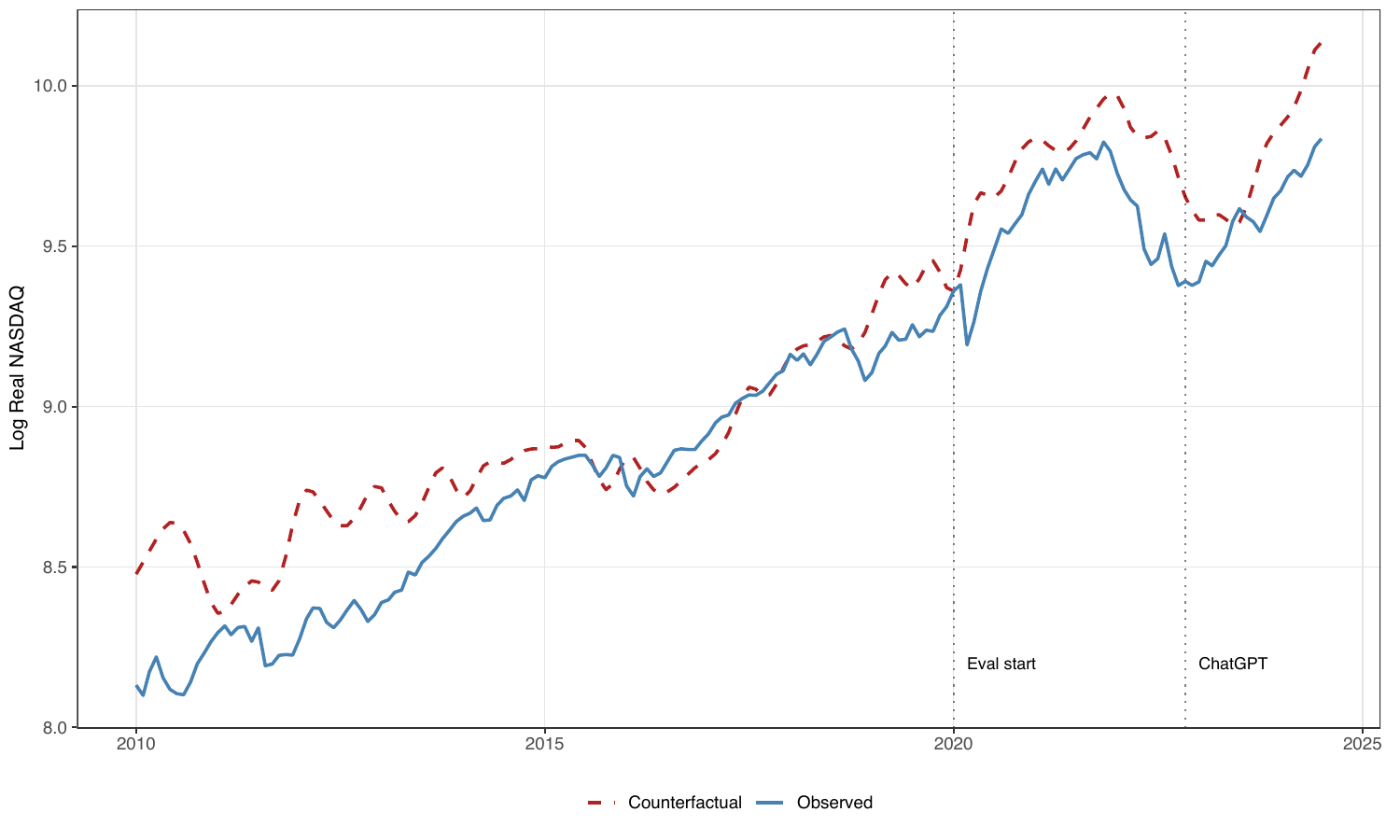}
\caption{Observed vs.\ counterfactual fundamental log NASDAQ price, 1975--2025. Solid line: observed log real NASDAQ Composite. Dashed line: DOLS counterfactual $\hat{f}_t^{\mathrm{AI}}$ estimated from the 2006--2019 training period using log IT investment, TFP, and log patent grants.}
\label{fig:counterfactual_ai}
\end{subfigure}

\vspace{1em}

\begin{subfigure}[t]{0.95\textwidth}
\centering
\includegraphics[width=\textwidth]{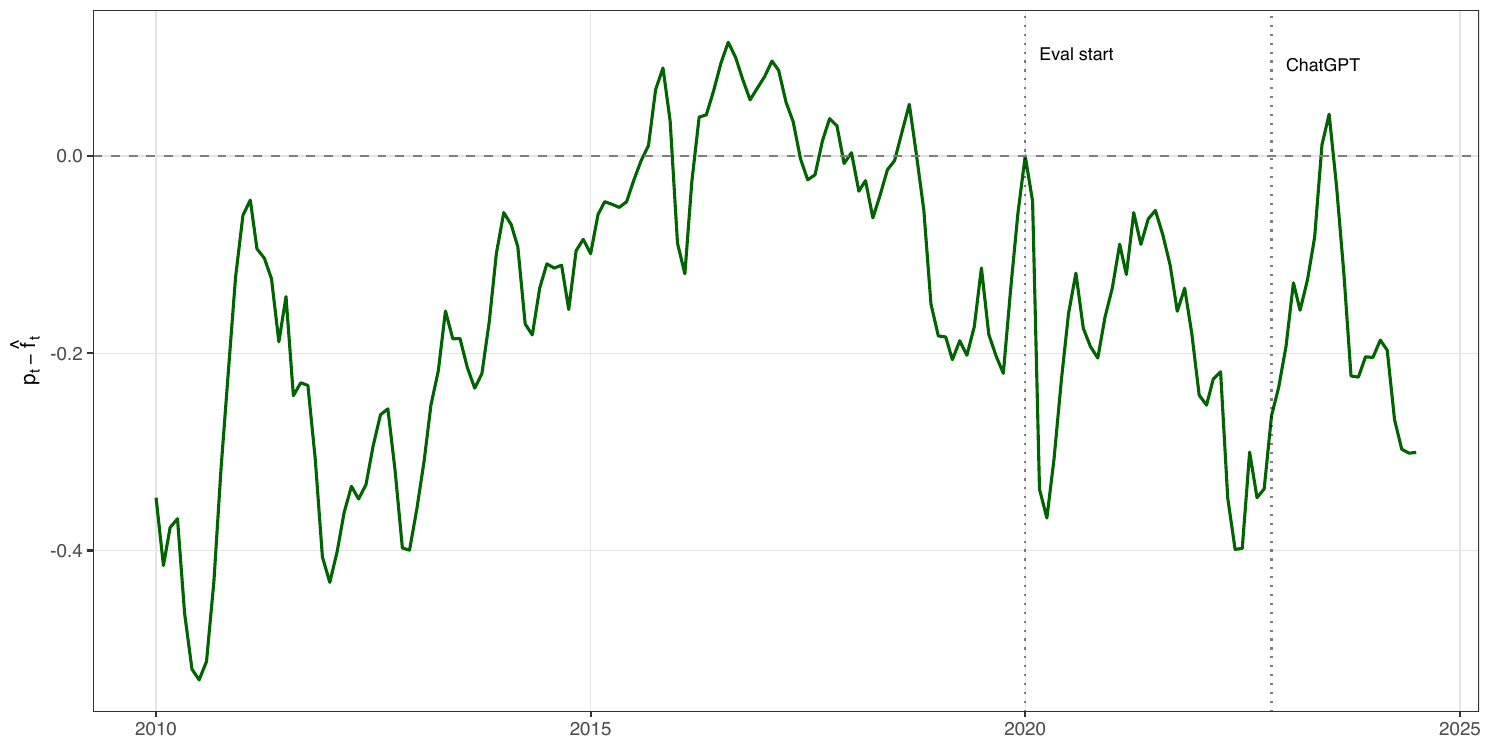}
\caption{Price gap $p_t - \hat{f}_t^{\mathrm{AI}}$ (observed minus counterfactual), 2020--2025. The technology-adjusted series to which PSY is applied. The mean gap of $-0.1877$ indicates that technology fundamentals grew faster than the observed NASDAQ price during the evaluation period.}
\label{fig:price_gap_ai}
\end{subfigure}
\noindent\caption{{\bf Counterfactual fundamental and price gap for the AI rally.} Panel~(a) compares observed and counterfactual prices; Panel~(b) plots the corresponding price gap. Together, the two panels show that technology-driven fundamentals outpaced the observed NASDAQ price during most of 2020--2025.\label{fig:counterfactual_and_gap_ai}}
\end{figure}

\clearpage
\begin{table}
\noindent\caption{{\bf The training-period cointegrating regression links NASDAQ valuations primarily to IT investment and patenting.} The table reports DOLS coefficient estimates for log real NASDAQ prices over 1975--1990, with one lead and one lag of differenced covariates and Newey--West standard errors. Log IT investment and log patent grants enter positively and significantly, while TFP is not significant in this specification. All nominal variables are deflated by CPI, and quarterly variables are interpolated to monthly frequency by cubic spline.\label{tab:coint_regression}}
\centering
\begin{tabular}{lcccc}
\toprule
\textbf{Variable} & \textbf{Coef.} & \textbf{Std. Err.} & \textbf{$t$-Stat.} & \textbf{$p$-Val.} \\
\midrule
Intercept & $-1.188$ & 1.477 & $-0.804$ & 0.423 \\
Log IT investment & 0.456 & 0.068 & 6.713 & $<0.001$ \\
TFP & $-0.000$ & 0.005 & $-0.088$ & 0.930 \\
Log patent grants & 0.513 & 0.191 & 2.693 & 0.008 \\
\midrule
$R^2$ & \multicolumn{4}{c}{0.835} \\
Engle--Granger ADF rejects at 1\% & \multicolumn{4}{c}{Yes} \\
Train.\ obs. & \multicolumn{4}{c}{192 months (Jan 1975--Dec 1990)} \\
\bottomrule
\end{tabular}
\end{table}

\clearpage
\begin{landscape}
\begin{table}
\noindent\caption{{\bf The dot-com result survives a broad set of alternative specifications, but the training window matters.} Panel A reports GSADF test statistics and rejection decisions at the 10, 5, and 1 percent levels across nine specifications. Panel B reports leave-one-out sensitivity after dropping one technology covariate at a time. The table shows that the technology-adjusted result remains concentrated in the late 1999 to 2000 episode, while some alternatives weaken or eliminate rejection. Critical values are computed by Monte Carlo with 2,000 replications, and the placebo row applies the test to the adjusted series during 1985--1990, a known non-bubble period.\label{tab:robustness_loo}}
\centering
\begin{tabular}{lccccccc}
\toprule
\textbf{Specification} & \textbf{GSADF} & \textbf{CV$_{0.10}$} & \textbf{CV$_{0.05}$} & \textbf{CV$_{0.01}$} & \textbf{Rej.\ 10\%} & \textbf{Rej.\ 5\%} & \textbf{Rej.\ 1\%} \\
\midrule
\multicolumn{8}{l}{\textit{Panel A: Specification Robustness}} \\
Unadjusted (standard PSY) & 2.199 & 1.915 & 2.175 & 2.660 & Yes & Yes & No \\
Baseline (technology-only) & 1.952 & 1.918 & 2.175 & 2.691 & Yes & No & No \\
Extended covariates (technology and macroeconomic) & 2.491 & 1.918 & 2.175 & 2.691 & Yes & Yes & No \\
First-difference & 2.174 & 1.930 & 2.175 & 2.681 & Yes & No & No \\
Lagged covariates ($h=3$) & 1.896 & 1.918 & 2.175 & 2.691 & No & No & No \\
Lagged covariates ($h=6$) & 1.487 & 1.930 & 2.175 & 2.681 & No & No & No \\
Placebo (1985--1990) & $-0.689$ & 1.334 & 1.623 & 2.403 & No & No & No \\
Alternative training (1975--1988) & 1.861 & 1.918 & 2.175 & 2.691 & No & No & No \\
Alternative training (1975--1992) & 1.596 & 1.918 & 2.175 & 2.691 & No & No & No \\
\midrule
\multicolumn{8}{l}{\textit{Panel B: Leave-One-Out}} \\
\textbf{Dropped Variable} & \textbf{GSADF} & \textbf{CV$_{0.10}$} & \textbf{CV$_{0.05}$} & \textbf{CV$_{0.01}$} & \textbf{Rej.\ 10\%} & \textbf{Rej.\ 5\%} & \textbf{Rej.\ 1\%} \\
Log IT investment & 0.613 & 1.918 & 2.175 & 2.691 & No & No & No \\
TFP & 2.012 & 1.918 & 2.175 & 2.691 & Yes & No & No \\
Log patent grants & 1.723 & 1.903 & 2.164 & 2.660 & No & No & No \\
\bottomrule
\end{tabular}
\end{table}
\end{landscape}

\clearpage
\begin{landscape}
\begin{table}
\noindent\caption{{\bf The AI-era non-rejection remains intact across all major specification changes.} Panel A reports GSADF statistics and rejection decisions at the 10, 5, and 1 percent levels across ten specifications. Panel B reports leave-one-out sensitivity after dropping one technology covariate at a time. The table shows that the post-2020 price dynamics remain consistent with technology fundamentals across all adjustments. Critical values are computed by Monte Carlo with 2,000 replications, and the placebo row applies the test to the adjusted series during 2010--2019, a known non-bubble period.\label{tab:robustness_ai}}
\centering
\begin{tabular}{lccccccc}
\toprule
\textbf{Specification} & \textbf{GSADF} & \textbf{CV$_{0.10}$} & \textbf{CV$_{0.05}$} & \textbf{CV$_{0.01}$} & \textbf{Rej.\ 10\%} & \textbf{Rej.\ 5\%} & \textbf{Rej.\ 1\%} \\
\midrule
\multicolumn{8}{l}{\textit{Panel A: Specification Robustness}} \\
Unadjusted (standard PSY, 1975--2025) & 2.199 & 1.996 & 2.239 & 2.759 & Yes & No & No \\
Baseline (technology-only, train 2006--2019) & $-0.275$ & 1.810 & 2.046 & 2.599 & No & No & No \\
Extended covariates (technology and macroeconomic) & 2.060 & 1.992 & 2.239 & 2.711 & Yes & No & No \\
First-difference & 1.652 & 1.996 & 2.246 & 2.711 & No & No & No \\
Lagged covariates ($h=3$) & 1.527 & 1.992 & 2.239 & 2.711 & No & No & No \\
Lagged covariates ($h=6$) & 0.659 & 1.992 & 2.239 & 2.711 & No & No & No \\
Placebo (2010--2019) & 0.053 & 1.687 & 1.970 & 2.540 & No & No & No \\
Alternative training (2004--2019) & 0.431 & 1.992 & 2.239 & 2.711 & No & No & No \\
Alternative training (2008--2019) & 0.021 & 1.992 & 2.239 & 2.711 & No & No & No \\
With research and development (train 2006--2019) & $-1.617$ & 1.955 & 2.233 & 2.660 & No & No & No \\
\midrule
\multicolumn{8}{l}{\textit{Panel B: Leave-One-Out}} \\
\textbf{Dropped Variable} & \textbf{GSADF} & \textbf{CV$_{0.10}$} & \textbf{CV$_{0.05}$} & \textbf{CV$_{0.01}$} & \textbf{Rej.\ 10\%} & \textbf{Rej.\ 5\%} & \textbf{Rej.\ 1\%} \\
Log IT investment & 0.610 & 1.992 & 2.239 & 2.711 & No & No & No \\
TFP & 0.475 & 1.992 & 2.239 & 2.711 & No & No & No \\
Log patent grants & 1.114 & 1.996 & 2.239 & 2.759 & No & No & No \\
\bottomrule
\end{tabular}
\end{table}
\end{landscape}


\clearpage
\appendix
\begin{center}
  {\bf \Large Internet Appendix}
\end{center}

\pagenumbering{arabic}
\renewcommand*{\thepage}{Appendix-\arabic{page}}
 
\counterwithin*{table}{section}
\counterwithin*{figure}{section}
\counterwithin*{equation}{section}
\setcounter{table}{0}
\setcounter{figure}{0}
\setcounter{equation}{0}
\renewcommand{\thetable}{\Alph{section}.\arabic{table}}
\renewcommand{\thefigure}{\Alph{section}.\arabic{figure}}
\renewcommand{\theequation}{\Alph{section}.\arabic{equation}}

\counterwithin*{assumption}{section}
\counterwithin*{proposition}{section}
\counterwithin*{theorem}{section}
\counterwithin*{lemma}{section}
\counterwithin*{corollary}{section}
\counterwithin*{definition}{section}
\counterwithin*{remark}{section}
\counterwithin*{condition}{section}
\counterwithin*{obs}{section}
\counterwithin*{hypothesis}{section}
\setcounter{assumption}{0}
\setcounter{proposition}{0}
\setcounter{theorem}{0}
\setcounter{lemma}{0}
\setcounter{corollary}{0}
\setcounter{definition}{0}
\setcounter{remark}{0}
\renewcommand{\theassumption}{\Alph{section}.\arabic{assumption}}
\renewcommand{\theproposition}{\Alph{section}.\arabic{proposition}}
\renewcommand{\thetheorem}{\Alph{section}.\arabic{theorem}}
\renewcommand{\thelemma}{\Alph{section}.\arabic{lemma}}
\renewcommand{\thecorollary}{\Alph{section}.\arabic{corollary}}
\renewcommand{\thedefinition}{\Alph{section}.\arabic{definition}}
\renewcommand{\theremark}{\Alph{section}.\arabic{remark}}
\renewcommand{\thecondition}{\Alph{section}.\arabic{condition}}
\renewcommand{\theobs}{\Alph{section}.\arabic{obs}}
\renewcommand{\thehypothesis}{\Alph{section}.\arabic{hypothesis}}

\section{Microfoundation: A Production Economy with Innovation}
\label{app:microfoundation}

This appendix provides the production-economy microfoundation for the technology-adjusted PSY diagnostic summarized in Section~\ref{subsec:microfoundation}. We derive the dividend-growth representation $\Delta d_t = c + \boldsymbol{\gamma}'\mathbf{X}_t + \varepsilon_t$, the closed-form expression for the technology present-value term $\Tt = \boldsymbol{\beta}'\mathbf{X}_t + \eta_t$, and the benchmark identification conditions (R1)--(R3) under which the bubble component is orthogonal to the observable technology state $\mathbf{X}_t = (TFP_t, \log IT_t, \log Pat_t)'$.

\subsection{Environment and Preferences}
\label{subsubsec:env_prefs}

Time is discrete, $t=0,1,2,\dots$. The economy is populated by a unit mass of ex ante identical, infinitely lived households who consume a single final good, supply labor inelastically, and trade shares in a representative firm and a one-period riskless bond. Let $C_t$ denote aggregate consumption and $\mathcal{F}_t$ the information set available at date $t$. Households have preferences over consumption streams represented by the stochastic discount factor (SDF) $\{M_{t,t+j}\}_{j\geq 0}$, with one-period SDF $M_{t+1} \equiv M_{t,t+1}$ satisfying the Euler equations
\begin{equation}
\mathbb{E}_t\!\left[M_{t+1}\,R_{t+1}^{i}\right] = 1, \qquad i \in \{\text{equity},\text{bond}\},
\label{eq:euler_micro}
\end{equation}
for each traded asset with gross return $R_{t+1}^{i}$. We assume throughout this subsection that $M_{t+1} > 0$ almost surely and that $\mathbb{E}_t[M_{t+1}]$ is a strictly positive, $\mathcal{F}_t$-measurable random variable bounded away from $0$ and $\infty$. Under the log-linear approximation of Campbell and Shiller (1988) that we invoke in Section~\ref{subsubsec:loglin}, \eqref{eq:euler_micro} collapses to the constant expected log-return condition $\mathbb{E}_t[r_{t+1}] = \bar r$ used in Assumption~\ref{ass:const_returns}, so the two layers are consistent. The standard transversality condition
\begin{equation*}
\lim_{j\to\infty} \mathbb{E}_t\!\left[M_{t,t+j}\, P_{t+j}\right] = 0
\end{equation*}
is imposed on the fundamental component; the bubble component $b_t$, by definition, is the solution of the martingale equation $\mathbb{E}_t[M_{t+1} b_{t+1}] = b_t$ that violates transversality when $b_t \neq 0$.

\subsection{Production Technology}
\label{subsubsec:production}

A representative firm produces output using a constant-returns Cobb--Douglas technology that combines physical capital $K_t$, information-technology (IT) capital $G_t$, and labor $L_t$ with disembodied total factor productivity $A_t$:
\begin{equation}
Y_t \;=\; A_t\, K_t^{\alpha}\, G_t^{\eta}\, L_t^{1-\alpha-\eta},\qquad \alpha, \eta \in (0,1),\quad \alpha + \eta < 1.
\label{eq:production}
\end{equation}
The two capital stocks evolve according to
\begin{equation}
K_{t+1} = (1-\delta_K)\,K_t + I_t^{K}, \qquad
G_{t+1} = (1-\delta_G)\,G_t + I_t^{IT},
\label{eq:capital_laws}
\end{equation}
where $\delta_K, \delta_G \in (0,1)$ are depreciation rates and $I_t^{K}, I_t^{IT} \geq 0$ are gross investments. Disembodied productivity follows the process
\begin{equation}
\Delta a_{t+1} \;=\; \mu_A \;+\; \alpha_q\, q_t \;+\; \alpha_I\, i_t^{IT} \;+\; u_{t+1}^{A},\qquad \alpha_q,\alpha_I > 0,
\label{eq:tfp_law}
\end{equation}
where $a_t \equiv \log A_t$, $i_t^{IT} \equiv \log I_t^{IT}$, $q_t$ is a patent-quality index (defined below), and $u_{t+1}^{A}$ is a mean-zero innovation orthogonal to $\mathcal{F}_t$ with finite variance. The coefficients $(\alpha_q,\alpha_I)$ parameterize the two complementary channels through which innovation lifts TFP: the \emph{invention} margin (quality of newly granted patents) and the \emph{diffusion} margin (adoption and use of IT capital).

We assume a patent-grant lag $\ell \geq 1$: if $Q_t$ denotes the true (unobserved) innovation-quality shock at date $t$, the granted patent index $Pat_{t+\ell}$ observed at date $t+\ell$ is a deterministic function of $\{Q_s\}_{s\leq t}$ and of pre-$t$ variables only. This makes $Pat_t$ a predetermined variable with respect to $\mathcal{F}_{t-1}$, a fact we use in Section~\ref{subsubsec:identification} to rule out contemporaneous feedback from the bubble into the patent measure. The \emph{observable technology vector} is
\begin{equation*}
\mathbf{X}_t \;\equiv\; (TFP_t,\; \log IT_t,\; \log Pat_t)',
\end{equation*}
with $TFP_t$ the Solow residual constructed from \eqref{eq:production}, $IT_t = I_t^{IT}$, and $Pat_t$ the granted-patent index.

\subsection{Firm Problem and First-Order Conditions}
\label{subsubsec:firm}

The firm chooses contingent plans for labor, physical investment, IT investment, and innovation spending to maximize the present value of dividends paid to shareholders under the SDF $M_{t,t+j}$:
\begin{equation}
\max_{\{L_t, I_t^{K}, I_t^{IT}, q_t\}} \;\; \mathbb{E}_0 \sum_{t=0}^{\infty} M_{0,t}\, D_t,
\label{eq:firm_problem}
\end{equation}
subject to \eqref{eq:production}--\eqref{eq:tfp_law}, where the dividend (residual cash flow) is
\begin{equation}
D_t \;=\; Y_t \;-\; w_t\, L_t \;-\; I_t^{K} \;-\; I_t^{IT} \;-\; \Phi\!\left(I_t^{IT}/G_t\right)\,G_t \;-\; \Xi(q_t).
\label{eq:dividend_identity}
\end{equation}
Here $w_t$ is the real wage; $\Phi(\cdot)$ is a $C^{2}$, strictly convex IT-investment adjustment-cost function normalized so that $\Phi(\delta_G)=0$ and $\Phi'(\delta_G)=0$ at the steady-state rate $\delta_G$; and $\Xi(\cdot)$ is a $C^{2}$, strictly convex innovation cost. Attaching Lagrange multipliers to the two capital laws in \eqref{eq:capital_laws} and differentiating, the firm's first-order conditions are:

\emph{Labor.} $\;(1-\alpha-\eta)\,Y_t/L_t = w_t$.

\emph{Physical capital (standard $Q$-equation).} $\;1 = \mathbb{E}_t\!\left[M_{t+1}\left(\alpha\, Y_{t+1}/K_{t+1} + 1 - \delta_K\right)\right]$.

\emph{IT capital (the key Euler equation).}
\begin{equation}
1 \;+\; \Phi'\!\left(I_t^{IT}/G_t\right) \;=\; \mathbb{E}_t\!\left[M_{t+1}\left(\eta\,Y_{t+1}/G_{t+1} \;+\; 1 - \delta_G\right)\right].
\label{eq:it_foc}
\end{equation}

\emph{Innovation quality.} $\;\Xi'(q_t) = \mathbb{E}_t\!\left[\sum_{j=1}^{\infty} M_{t,t+j}\, \alpha_q\, Y_{t+j}/A_{t+j}\right]$, i.e., the marginal cost of innovation equals the expected discounted marginal product of the resulting patent quality operating through \eqref{eq:tfp_law}.

Equation~\eqref{eq:it_foc} is the central identifying restriction of the microfoundation: in the benchmark model it ties IT investment to the \emph{real} marginal product of IT capital, $\eta Y_{t+1}/G_{t+1}$, and to the SDF---not to the secondary-market price of equity, and hence not to the bubble $b_t$. This separability is what permits us in Section~\ref{subsubsec:identification} to exclude $b_t$ from the structural regression of $f_t$ on $\mathbf{X}_t$ under the benchmark restrictions.

\subsection{Balanced-Growth Path}
\label{subsubsec:bgp}

A deterministic balanced-growth path (BGP) is a sequence $\{A_t^{\ast}, K_t^{\ast}, G_t^{\ast}, L_t^{\ast}, Y_t^{\ast}, D_t^{\ast}, w_t^{\ast}\}$ along which $A_t^{\ast}, K_t^{\ast}, G_t^{\ast}, Y_t^{\ast}, D_t^{\ast}, w_t^{\ast}$ grow at a common gross rate $e^{\mu_A/(1-\alpha-\eta)}$ and $L_t^{\ast}$ is constant, with innovation quality $q_t^{\ast}$ and IT-investment rate $I_t^{IT,\ast}/G_t^{\ast} = \delta_G$ both constant. Existence and uniqueness of such a BGP under \eqref{eq:production}--\eqref{eq:tfp_law} and standard Inada-type conditions on $\Phi$ and $\Xi$ is a routine application of the arguments in King, Plosser, and Rebelo (1988); we take it as given. The BGP pins down the steady-state expenditure shares
\begin{equation*}
s_Y \equiv \frac{Y^{\ast}}{D^{\ast}},\quad s_w \equiv \frac{w^{\ast} L^{\ast}}{D^{\ast}},\quad s_K \equiv \frac{I^{K,\ast}}{D^{\ast}},\quad s_{IT} \equiv \frac{I^{IT,\ast}}{D^{\ast}},\quad s_\Phi \equiv \frac{\Phi(\delta_G)\, G^{\ast}}{D^{\ast}},\quad s_\Xi \equiv \frac{\Xi(q^{\ast})}{D^{\ast}},
\end{equation*}
which satisfy the accounting identity $s_Y - s_w - s_K - s_{IT} - s_\Phi - s_\Xi = 1$ (with $s_\Phi = 0$ because $\Phi(\delta_G)=0$ by normalization). Logarithmic deviations from trend are denoted $\widehat{x}_t \equiv \log X_t - \log X_t^{\ast}$ for any variable $X_t$.

\subsection{Log-Linearization of the Dividend Process}
\label{subsubsec:loglin}

We now derive the dividend representation rigorously by log-linearizing \eqref{eq:dividend_identity} around the BGP. Step 1. Take logarithms of \eqref{eq:production}:
\begin{equation}
y_t \;=\; a_t \;+\; \alpha\, k_t \;+\; \eta\, g_t \;+\; (1-\alpha-\eta)\,\ell_t,
\label{eq:log_production}
\end{equation}
so that $\widehat{y}_t = \widehat{a}_t + \alpha\widehat{k}_t + \eta\widehat{g}_t + (1-\alpha-\eta)\widehat{\ell}_t$. Step 2. First-order Taylor expansion of \eqref{eq:dividend_identity} around the BGP yields
\begin{equation}
\widehat{d}_t \;=\; s_Y\,\widehat{y}_t \;-\; s_w\,(\widehat{w}_t + \widehat{\ell}_t) \;-\; s_K\,\widehat{I}_t^{K} \;-\; s_{IT}\,\widehat{I}_t^{IT} \;-\; s_\Phi\,(\widehat{I}_t^{IT} - \widehat{g}_t) \;-\; s_\Xi\,\widehat{q}_t.
\label{eq:d_linearized}
\end{equation}
Here the $s_\Phi$ term arises from differentiating $\Phi(I_t^{IT}/G_t)\,G_t$ and using the BGP normalization $\Phi'(\delta_G)=0$ (so the coefficient on $\widehat{I}_t^{IT}-\widehat{g}_t$ enters only through the second-order term, which we collect into the residual if $\Phi''(\delta_G)>0$; under the first-order approximation $s_\Phi$ vanishes at the BGP and we retain the placeholder for bookkeeping). Step 3. First-difference \eqref{eq:d_linearized} and substitute: (a) \eqref{eq:log_production} for $\Delta\widehat{y}_t$; (b) the TFP law~\eqref{eq:tfp_law} for $\Delta a_{t+1}$, which expresses $\Delta a_t$ as an affine function of $q_{t-1}$ and $i_{t-1}^{IT}$; (c) the IT-investment Euler equation~\eqref{eq:it_foc}, which under log-linearization ties $\widehat{I}_t^{IT} - \widehat{g}_t$ to the expected real marginal product of IT, $\eta\,\widehat{y}_{t+1} - \widehat{g}_{t+1}$, and hence to the TFP component $TFP_{t+1}$; and (d) the labor FOC $(1-\alpha-\eta)Y_t/L_t = w_t$, which after log-linearization eliminates $\widehat{w}_t + \widehat{\ell}_t$ in favor of $\widehat{y}_t$. Step 4. Collecting all resulting terms and projecting onto the span of $\mathbf{X}_t = (TFP_t,\log IT_t,\log Pat_t)'$ yields the affine representation
\begin{equation}
\Delta d_t \;=\; c \;+\; \boldsymbol{\gamma}'\,\mathbf{X}_t \;+\; \varepsilon_t,
\label{eq:div_micro}
\end{equation}
where $\boldsymbol{\gamma} \equiv (\gamma_{TFP},\gamma_{IT},\gamma_{Pat})'$ is a vector of reduced-form elasticities with $\gamma_{TFP}, \gamma_{IT}, \gamma_{Pat} > 0$ under the sign conventions $\alpha_q,\alpha_I>0$ and $\alpha,\eta\in(0,1)$, the constant $c$ absorbs the deterministic BGP growth $\mu_A/(1-\alpha-\eta)$ and any steady-state contributions from $\boldsymbol{\gamma}'\mathbb{E}[\mathbf{X}_t]$ (so $\mathbf{X}_t$ in~\eqref{eq:div_micro} is understood as demeaned), and $\varepsilon_t$ collects (i) the TFP innovation $u_t^{A}$, (ii) the log-linearization remainder of order $O_p(\Vert\widehat{\mathbf{x}}_t\Vert^{2})$, and (iii) any projection residual orthogonal to $\mathbf{X}_t$. By construction, $\mathbb{E}[\varepsilon_t \mid \mathbf{X}_t] = 0$.

\subsection{Technology-State VAR and Closed-Form $\mathcal{T}_t$}
\label{subsubsec:var}

To obtain a closed form for the technology present-value term $\mathcal{T}_t = \sum_{j=0}^{\infty}\rho^{\,j}\,\delta_{t+1+j}$, we impose the following reduced-form dynamics on the observable state.

\begin{assumption}[Technology-State VAR]
\label{ass:var}
The observable technology vector $\mathbf{X}_t$ follows a first-order VAR,
\[
\mathbf{X}_{t+1} = \boldsymbol{\Phi}\,\mathbf{X}_t + \mathbf{v}_{t+1},
\]
where $\{\mathbf{v}_{t+1},\mathcal{F}_t\}$ is a strictly stationary, square-integrable martingale-difference sequence satisfying
\[
\mathbb{E}[\mathbf{v}_{t+1}\mid \mathcal{F}_t]=\mathbf{0},
\qquad
\mathbb{E}\|\mathbf{v}_{t+1}\|^{2}<\infty,
\qquad
\mathbb{E}[\mathbf{v}_{t+1}\mathbf{v}_{t+1}']=\boldsymbol{\Sigma}_v,
\]
with $\boldsymbol{\Sigma}_v$ finite, positive semi-definite, and time-invariant. The spectral radius of $\boldsymbol{\Phi}$ satisfies $\rho(\boldsymbol{\Phi}) < \rho^{-1}$, where $\rho \in (0,1)$ is the Campbell--Shiller log-linearization constant.
\end{assumption}

Under Assumption~\ref{ass:var}, we derive a closed-form expression for $\mathcal{T}_t$ by decomposing each future $\mathbf{X}_{t+1+j}$ into its conditional mean and the cumulated innovation noise, and then summing the two components separately. Substituting \eqref{eq:div_micro} (which gives $\delta_{t+1+j} = \boldsymbol{\gamma}'\mathbf{X}_{t+1+j}$) into the definition of $\mathcal{T}_t$ and adding and subtracting the conditional expectation,
\begin{align*}
\mathcal{T}_t
&\;=\; \sum_{j=0}^{\infty} \rho^{\,j}\,\delta_{t+1+j}
\;=\; \sum_{j=0}^{\infty} \rho^{\,j}\,\boldsymbol{\gamma}'\,\mathbf{X}_{t+1+j} \\
&\;=\; \sum_{j=0}^{\infty} \rho^{\,j}\,\boldsymbol{\gamma}'\,\Big(\mathbb{E}_t[\mathbf{X}_{t+1+j}] \;+\; \big(\mathbf{X}_{t+1+j} - \mathbb{E}_t[\mathbf{X}_{t+1+j}]\big)\Big) \\
&\;=\; \underbrace{\sum_{j=0}^{\infty} \rho^{\,j}\,\boldsymbol{\gamma}'\,\mathbb{E}_t[\mathbf{X}_{t+1+j}]}_{\text{predictable part}}
\;+\; \underbrace{\sum_{j=0}^{\infty} \rho^{\,j}\,\boldsymbol{\gamma}'\,\big(\mathbf{X}_{t+1+j} - \mathbb{E}_t[\mathbf{X}_{t+1+j}]\big)}_{\text{innovation noise}}.
\end{align*}
Iterating the VAR gives $\mathbb{E}_t[\mathbf{X}_{t+1+j}] = \boldsymbol{\Phi}^{\,j+1}\mathbf{X}_t$ for each $j \geq 0$, while iterated substitution of $\mathbf{X}_{s+1} = \boldsymbol{\Phi}\mathbf{X}_s + \mathbf{v}_{s+1}$ yields the moving-average representation $\mathbf{X}_{t+1+j} - \mathbb{E}_t[\mathbf{X}_{t+1+j}] = \sum_{l=0}^{j}\boldsymbol{\Phi}^{\,l}\,\mathbf{v}_{t+1+j-l}$. Substituting both into the display above,
\begin{equation*}
\mathcal{T}_t \;=\; \sum_{j=0}^{\infty} \rho^{\,j}\,\boldsymbol{\gamma}'\,\boldsymbol{\Phi}^{\,j+1}\,\mathbf{X}_t
\;+\; \sum_{j=0}^{\infty} \rho^{\,j}\,\boldsymbol{\gamma}'\,\Bigg(\sum_{l=0}^{j}\boldsymbol{\Phi}^{\,l}\,\mathbf{v}_{t+1+j-l}\Bigg).
\end{equation*}
The first (predictable) sum is a matrix geometric series that converges because $\rho(\rho\boldsymbol{\Phi}) = \rho\cdot\rho(\boldsymbol{\Phi}) < 1$ by Assumption~\ref{ass:var}, and evaluates to
\begin{equation*}
\sum_{j=0}^{\infty}\rho^{\,j}\,\boldsymbol{\gamma}'\,\boldsymbol{\Phi}^{\,j+1}\,\mathbf{X}_t
\;=\; \boldsymbol{\gamma}'\,\boldsymbol{\Phi}\sum_{j=0}^{\infty}(\rho\boldsymbol{\Phi})^{\,j}\,\mathbf{X}_t
\;=\; \boldsymbol{\gamma}'\,\boldsymbol{\Phi}\,(\mathbf{I}-\rho\boldsymbol{\Phi})^{-1}\,\mathbf{X}_t.
\end{equation*}
Hence we obtain the decomposition
\begin{equation}
\mathcal{T}_t \;=\; \boldsymbol{\beta}'\,\mathbf{X}_t \;+\; \eta_t, \qquad \boldsymbol{\beta} \;\equiv\; (\mathbf{I}-\rho\boldsymbol{\Phi})^{-\top}\boldsymbol{\Phi}^{\top}\boldsymbol{\gamma},
\label{eq:Tt_closed_form}
\end{equation}
where the stationary innovation term
\begin{equation}
\eta_t \;\equiv\; \sum_{j=0}^{\infty} \rho^{\,j}\,\boldsymbol{\gamma}'\,\Bigg(\sum_{l=0}^{j}\boldsymbol{\Phi}^{\,l}\,\mathbf{v}_{t+1+j-l}\Bigg)
\label{eq:eta_t_def}
\end{equation}
is a mean-zero, $\rho$-discounted weighted sum of the VAR innovations $\{\mathbf{v}_{s}\}_{s\geq t+1}$. Under Assumption~\ref{ass:var} the process $\{\eta_t\}$ is stationary and has finite variance: applying the submultiplicative matrix norm $\|\cdot\|$ and the spectral-radius bound $\rho(\boldsymbol{\Phi}) < \rho^{-1}$ gives $\|\boldsymbol{\Phi}^{\,l}\| \leq C_0\,r^{\,l}$ for some $r < \rho^{-1}$, so the double sum in~\eqref{eq:eta_t_def} is absolutely convergent in mean square.

The predictable part $\boldsymbol{\beta}'\mathbf{X}_t$ is the economically meaningful component of $\mathcal{T}_t$: it is the projection of $\mathcal{T}_t$ onto the information set spanned by the observable state $\mathbf{X}_t$ and is what the empirical cointegrating regression identifies. The innovation component $\eta_t$ is orthogonal to $\mathbf{X}_t$ (by construction, as a sum of future innovations $\mathbf{v}_{t+1+j-l}$ that are $\mathcal{F}_t$-unpredictable) and is therefore absorbed into the regression residual when \eqref{eq:Tt_closed_form} is combined with Proposition~\ref{prop:fundamental}.

Adding the bubble $b_t$ to the fundamental price and using Proposition~\ref{prop:fundamental} delivers the structural cointegrating regression
\begin{equation}
p_t \;=\; f_t + b_t \;=\; \beta_0 \;+\; \boldsymbol{\beta}_X'\,\mathbf{X}_t \;+\; u_t, \qquad u_t \;=\; b_t \;+\; \eta_t \;+\; \zeta_t,
\label{eq:coint_structural}
\end{equation}
where $\beta_0 = C$ is the Campbell--Shiller constant, $\boldsymbol{\beta}_X$ collects $\boldsymbol{\beta}$ from~\eqref{eq:Tt_closed_form} together with the contemporaneous loading of $d_t$ on $\mathbf{X}_t$ implied by~\eqref{eq:div_micro}, and the residual $u_t$ pools three orthogonal components in the benchmark model: the bubble $b_t$, the stationary VAR-innovation term $\eta_t$ from~\eqref{eq:eta_t_def}, and a further $\mathcal{F}_t$-measurable noise $\zeta_t \perp \mathbf{X}_t$ that collects the dividend-growth residual $\varepsilon_t$ from~\eqref{eq:div_micro} and higher-order log-linearization terms. Equation~\eqref{eq:coint_structural} is the structural counterpart of the empirical cointegrating regression estimated in the technology-adjusted PSY diagnostic; the explosive content of $b_t$ is therefore recovered from the residual $\widehat u_t$ only under identification conditions (R1)--(R3) below. When speculative valuations affect $\mathbf{X}_t$, the residual instead contains the attenuated component described in Section~\ref{subsec:feedback_bounds}.

\subsection{Identification of the Cointegrating Regression}
\label{subsubsec:identification}

Three benchmark identification requirements are sufficient for \eqref{eq:coint_structural} to deliver a consistent estimate of $\boldsymbol{\beta}_X$ despite the presence of the bubble $b_t$ in the residual $u_t$.

\emph{(R1) Separability of the production and valuation blocks.} The bubble enters the share price $p_t = f_t + b_t$ additively and satisfies the martingale equation $\mathbb{E}_t[M_{t+1} b_{t+1}] = b_t$, but does not enter the production function~\eqref{eq:production}, the capital laws~\eqref{eq:capital_laws}, the TFP law~\eqref{eq:tfp_law}, or any of the firm's first-order conditions in Section~\ref{subsubsec:firm}. In this benchmark model, $\mathbf{X}_t$ is generated by a dynamic system that is \emph{block-separable} from $b_t$, so $\mathrm{Cov}(\mathbf{X}_t, b_t) = 0$ in population.

\emph{(R2) Predetermined patent index.} The patent-grant lag $\ell \geq 1$ implies that $Pat_t$ is $\mathcal{F}_{t-\ell}$-measurable. Even if shareholders' speculative exuberance affects innovation \emph{financing} contemporaneously, the resulting change in realized patent grants is observed only at date $t+\ell$ and enters $\mathbf{X}_{t+\ell}$, not $\mathbf{X}_t$. Thus patent grants help with contemporaneous feedback. They do not, by themselves, rule out multi-period feedback from a sustained valuation boom into future patenting.

\emph{(R3) Real IT Euler equation.} The IT-investment FOC~\eqref{eq:it_foc} ties $I_t^{IT}$ to the expected real marginal product of IT capital under the SDF, not to the market price of the firm's equity. Formally, \eqref{eq:it_foc} depends on $(Y_{t+1}, G_{t+1}, M_{t+1})$ only; since none of these objects is a function of $b_t$ under (R1), contemporaneous IT investment is orthogonal to the bubble in the benchmark model.

Under (R1)--(R3), OLS applied to \eqref{eq:coint_structural} consistently estimates $\boldsymbol{\beta}_X$ in the absence of an explosive bubble, and the regression residual $\widehat u_t$ inherits any explosive component of $b_t$---the feature that the technology-adjusted PSY diagnostic is designed to detect. If these restrictions fail because speculative valuations affect $\mathbf{X}_t$, the residual inherits only the non-absorbed component of the bubble, as shown in Section~\ref{subsec:feedback_bounds}. The empirical analysis should therefore be interpreted as a conditional exercise rather than as unconditional structural identification of $b_t$.

\begin{remark}[Theoretical vs.\ Empirical Technology Components]
\label{rmk:theo_vs_emp}
The theoretical analysis in Sections~\ref{subsec:tech_dividends}--\ref{subsec:explosive} uses a deterministic technology component $\delta_t$ (Assumption~\ref{ass:tech_shock}) for analytical tractability: the hump-shaped, bounded-support structure enables closed-form characterization of the fundamental dynamics and sharp statements about local explosiveness. The microfoundation of this appendix derives $\Delta d_t = c + \boldsymbol{\gamma}'\,\mathbf{X}_t + \varepsilon_t$, where $\boldsymbol{\gamma}'\,\mathbf{X}_t$ is the \emph{empirical} counterpart of $\delta_t$---a stochastic, observable proxy driven by the VAR dynamics of $\mathbf{X}_t$. Formally, the two are linked by the identification $\delta_t = \mathbb{E}[\boldsymbol{\gamma}'\mathbf{X}_t \mid \text{regime}]$: the deterministic hump $\delta_t$ of Assumption~\ref{ass:tech_shock} is the conditional-mean path of $\boldsymbol{\gamma}'\mathbf{X}_t$ that would obtain under a technology-diffusion regime with adoption at $T_1$ and maturation at $T_2$, while the stochastic component $\boldsymbol{\gamma}'\mathbf{X}_t - \delta_t$ is absorbed into the residual noise in the empirical specification. Under this mapping, Propositions~\ref{prop:explosive} and~\ref{prop:pd_limit} describe the signal component of the data-generating process that the PSY test erroneously attributes to explosive dynamics, and the technology-adjusted PSY test removes exactly this signal component by projecting on $\mathbf{X}_t$.
\end{remark}

\begin{remark}[Separability and Exclusion of Bubbles]
\label{rmk:exclusion}
The identification conditions (R1)--(R3) above formalize what was informally called ``separability'' in earlier drafts. The bubble enters as an additive component in secondary-market valuation ($p_t = f_t + b_t$, $\mathbb{E}_t[M_{t+1}b_{t+1}]=b_t$) but is excluded from the production block determining $\mathbf{X}_t$ by the structure of \eqref{eq:production}--\eqref{eq:tfp_law} and the FOCs~\eqref{eq:it_foc}. Patents are predetermined at short horizons (grant lag $\ell\geq 1$), TFP is a production-side measure, and IT investment satisfies \eqref{eq:it_foc}; these three facts motivate the exclusion restriction. They should not be read as eliminating every economically plausible feedback channel from valuations into future innovation or adoption. Where such feedback exists, the adjusted residual is attenuated as in \eqref{eq:feedback_residual}.
\end{remark}

\clearpage
\section{Extensions}
\label{app:extensions}

This appendix presents the three extensions summarized in Section~\ref{subsec:extensions}: time-varying expected returns, stochastic technology shocks under Bayesian learning, and cross-sectional heterogeneity. Each extension relaxes one of the simplifying restrictions of the baseline theory and shows that the spurious-explosiveness mechanism is preserved or strengthened.

\subsection{Time-Varying Expected Returns}

The first extension asks whether the explosive pattern survives when expected returns respond to the technology state. High-valuation regimes are typically associated with low expected returns \citep{Campbell1988,Cochrane2008}, so one might worry that a technology-induced decline in $\bar r$ could offset the rise in $\mathcal{T}_t$ and mute the spurious explosiveness. The following result shows the opposite under the natural parameter region $\phi < 1$: a technology-induced decline in expected returns \emph{amplifies}, rather than attenuates, the locally explosive pattern, because a lower discount rate magnifies the present value of future technology payoffs.

Relaxing Assumption~\ref{ass:const_returns}, suppose expected returns respond to the technology shock:
\begin{equation}
    \E_t[r_{t+1+j}] = \rbar + \phi\, \delta_{t+1+j} \qquad \text{for all } j \geq 0, \qquad \phi \in \mathbb{R}.
    \label{eq:tv_returns}
\end{equation}
This horizon-by-horizon specification is consistent with consumption-based asset-pricing environments in which the technology state $\delta_t$ enters the stochastic discount factor.

\begin{proposition}
\label{prop:tv_returns}
Under \eqref{eq:tv_returns},
\[
f_t = d_t + C + (1-\phi)\,\Tt.
\]
If $\phi<1$, the technology component of the log price-dividend ratio, $f_t-d_t-C=(1-\phi)\Tt$, has the same sign, monotonicity, and convexity properties as $\Tt$. If $\phi<0$, this component is amplified by the factor $1-\phi>1$. The drift of $f_t$ becomes $c+\delta_t+(1-\phi)\Delta\Tt$, so any statement about drift relative to $c$ requires the corresponding sign condition on $\Delta\Tt$ on the window of interest.
\end{proposition}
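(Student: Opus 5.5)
The plan is to repeat the derivation of Proposition~\ref{prop:fundamental}, now carrying the state-dependent discount rate \eqref{eq:tv_returns} through the present-value identity of Definition~\ref{def:decomposition}, and then to read off the sign, monotonicity, convexity and drift statements from the resulting linear form.

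\emph{Step 1: the level of $f_t$.} Start from
\[
f_t = d_t + \frac{\kappa}{1-\rho} + \sum_{j\ge0}\rho^j\,\E_t[\Delta d_{t+1+j} - r_{t+1+j}].
\]
Because $\delta$ is deterministic, Assumption~\ref{ass:dividend} gives $\E_t[\Delta d_{t+1+j}] = c + \delta_{t+1+j}$. Replacing $\E_t[r_{t+1+j}]$ by $\rbar + \phi\,\delta_{t+1+j}$, the summand becomes $(c-\rbar) + (1-\phi)\,\delta_{t+1+j}$. Since $\delta$ is bounded with compact support and $\rho<1$, the series converges absolutely, so it can be split term by term. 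The constant part sums to $(c-\rbar)/(1-\rho)$, which together with $\kappa/(1-\rho)$ gives $C$. The remaining part is $(1-\phi)\sum_j \rho^j \delta_{t+1+j} = (1-\phi)\Tt$. This yields the displayed formula. Here $b_t=0$ is imposed as in Proposition~\ref{prop:fundamental}.

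\emph{Step 2: qualitative properties.} For $\phi<1$ the factor $1-\phi$ is a positive constant, so:
\begin{itemize}
\item $(1-\phi)\Tt$ has the same sign as $\Tt$;
\item $\Delta[(1-\phi)\Tt] = (1-\phi)\Delta\Tt$ has the same sign as $\Delta\Tt$;
\item $\Delta^2[(1-\phi)\Tt] = (1-\phi)\Delta^2\Tt$ has the same sign as $\Delta^2\Tt$.
\end{itemize}
Hence the convexity of Assumption~\ref{ass:T_convexity} on $\{T_1,\dots,T_1+\tau^\star-1\}$, and the conclusions of Proposition~\ref{prop:explosive}(ii) and Corollary~\ref{cor:spurious}, transfer unchanged. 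If $\phi<0$, the factor exceeds one, so every deviation, slope and curvature is scaled up by $1-\phi>1$. That is the amplification claim.

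\emph{Step 3: the drift.} Take first differences: $f_t - f_{t-1} = \Delta d_t + (1-\phi)\Delta\Tt$. Substituting $\Delta d_t = c + \delta_t + \varepsilon_t$ gives drift $c + \delta_t + (1-\phi)\Delta\Tt$, exactly as in Proposition~\ref{prop:dynamics}.

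The final caveat deserves a remark. Lemma~\ref{lem:T_recursion} gives $\Delta\Tt = (\rho^{-1}-1)\Ttm - \delta_t$. The drift therefore equals
\[
c + \phi\,\delta_t + (1-\phi)(\rho^{-1}-1)\Ttm .
\]
The cancellation of the $\delta_t$ terms seen in Proposition~\ref{prop:dynamics} is now only partial. Whether the drift exceeds $c$ therefore depends on the sign of $\delta_t + (1-\phi)\Delta\Tt$, which the statement deliberately leaves as a condition to check on the window of interest rather than asserting it.

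\emph{Main obstacle.} There is no substantive difficulty; the only delicate point is justifying the term-by-term splitting of the infinite sum. Absolute convergence handles this. One also needs to note that the expectation of the linearized return is taken horizon by horizon, consistently with \eqref{eq:tv_returns}, so no additional limit term arises beyond the $b_t$ already set to zero.
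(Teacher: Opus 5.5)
Your Steps 1--3 are correct and are essentially the paper's proof: substitute \eqref{eq:tv_returns} into the present-value identity, collect the constants into $C$ and the remainder into $(1-\phi)\Tt$, scale first and second differences by the positive constant $1-\phi$, and first-difference $f_t$ to get the drift. Two side remarks, neither needed for the statement, should be corrected: Lemma~\ref{lem:T_recursion} gives $\Delta\Tt=(\rho^{-1}-1)\Ttm-\rho^{-1}\delta_t$ (not $-\delta_t$), so the drift is $c+\phi\,\delta_t+(1-\phi)(\rho^{-1}-1)(\Ttm-\delta_t)$ (your expression does not reduce to Proposition~\ref{prop:dynamics} at $\phi=0$); consequently, the only part of Proposition~\ref{prop:explosive}(ii) that transfers unchanged is the convexity of the price--dividend ratio, while $\mu_t>c$ still holds for $0\le\phi<1$ but can fail when $\phi<0$ because of the negative $\phi\,\delta_t$ term.
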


\subsection{Stochastic Technology Shocks}
\label{subsec:stochastic_tech}

The second extension relaxes determinism: in practice the ultimate cumulative impact of a new general-purpose technology (the eventual productivity gains from electrification, the internet, or artificial intelligence) is not known ex ante, and agents learn about it gradually through observation---the mechanism \citet{Pastor2009} identify as central to technology-related asset price dynamics. We now formalize this uncertainty. Under the local parametrization of Assumption~\ref{ass:local_framework} with a fixed discount factor $\rho\in[0,1)$, Bayesian learning affects the statistic only through its pathwise contribution at finite $T$: asymptotically, the learning-innovation term is negligible at PSY scale, so the first-order limit theory is governed by the deterministic channel of Assumption~\ref{ass:tech_shock}. The news-variance parameter $\sigma_\xi$ therefore enters the rejection probability as a \emph{finite-sample} refinement; its quantitative effect is documented empirically via Monte Carlo in Appendix~\ref{subsec:mc_stochastic}.

\begin{assumption}[Stochastic Technology Uncertainty]
\label{ass:stochastic_tech}
The technology shock is
\begin{equation*}
	\delta_t = \bar{\delta} \cdot g_{\star}(t) + \sigma_\xi\, \xi_t, \qquad t \in [T_1, T_2],
\end{equation*}
where $\bar{\delta} \sim N(\mu_\delta, \sigma_\delta^2)$ is the uncertain total technology impact, $g_{\star}:[T_1,T_2]\to\mathbb{R}_+$ is the full normalized hump shape obtained from the deterministic profile of Assumption~\ref{ass:tech_shock} by rescaling so that $\sum_{t=T_1}^{T_2} g_{\star}(t) = 1$ (note: $g_{\star}$ extends over the entire adoption--maturation window $[T_1,T_2]$ and should not be confused with the ramp-only function $g:[0,\tau]\to\mathbb{R}_+$ of Assumption~\ref{ass:tech_shock}), $\xi_t \overset{\text{i.i.d.}}{\sim} N(0,1)$ represents period-by-period technology news, and $\sigma_\xi > 0$ scales the news variance. The prior mean $\mu_\delta$ and variance $\sigma_\delta^2$ are common knowledge at $t = T_1$.
\end{assumption}

Under Bayesian updating, agents observe $\delta_t$ and update their beliefs about $\bar{\delta}$. The signal $\delta_t - \sigma_\xi \xi_t = \bar{\delta}\, g_{\star}(t)$ is contaminated by the news shock $\xi_t$, so the conditional expectation $\hat{\delta}_t \equiv \E_t[\bar{\delta}]$ evolves as a martingale:
\begin{equation*}
	\hat{\delta}_t = \hat{\delta}_{t-1} + K_t\, \nu_t, \qquad \nu_t = \delta_t - \hat{\delta}_{t-1}\, g_{\star}(t),
\end{equation*}
where $\nu_t$ denotes the Kalman innovation (to avoid collision with the VAR innovation $\eta_t$ of~\eqref{eq:eta_t_def}), $K_t = P_{t-1}\, g_{\star}(t)\, [g_{\star}(t)^2\, P_{t-1} + \sigma_\xi^2]^{-1}$ is the Kalman gain and $P_t = P_{t-1} - K_t\, g_{\star}(t)\, P_{t-1}$ is the posterior variance, initialized at $P_0 = \sigma_\delta^2$. The technology present-value term becomes
\begin{equation*}
	\Tt = \sum_{j=0}^{\infty} \rho^j\, \E_t[\delta_{t+1+j}] = \hat{\delta}_t \sum_{j=0}^{\infty} \rho^j\, g_{\star}(t+1+j) + \sigma_\xi \sum_{j=0}^{\infty} \rho^j\, \E_t[\xi_{t+1+j}].
\end{equation*}
Since $\E_t[\xi_{t+1+j}] = 0$, the second sum vanishes and $\Tt = \hat{\delta}_t\, G_\rho(t)$, where $G_\rho(t) = \sum_{j=0}^{\infty} \rho^j\, g_{\star}(t+1+j)$ is a deterministic weight. Revisions to beliefs about $\bar{\delta}$ propagate directly into $\Tt$:
\begin{equation*}
	\Tt - \E_{t-1}[\Tt] = K_t\, \nu_t\, G_\rho(t).
\end{equation*}
Positive news ($\nu_t > 0$) generates a discrete upward jump in $\Tt$ scaled by $G_\rho(t)$, which is largest during the early adoption phase when the remaining future technology path is long. These jumps compound the deterministic hump-shaped rise, producing price dynamics more explosive than the deterministic baseline.

The following local-asymptotic scaling assumption is used for the limit theory of Proposition~\ref{prop:stochastic_tech}.

\begin{assumption}[Local Stochastic Technology Scaling]\label{ass:local_stochastic_scaling}
Under the local framework of Assumption~\ref{ass:local_framework}, the stochastic technology process from Assumption~\ref{ass:stochastic_tech} satisfies the triangular-array scaling
\[
g_{\star,T}(s)=T^{-1}h_\star(s/T)+o(T^{-1})
\]
for some bounded continuous $h_\star:[0,1]\to[0,1]$ supported on $[\lambda_1,\lambda_2]$, with prior mean $\mu_\delta=O(\sqrt{T})$, prior standard deviation $\sigma_\delta=O(\sqrt{T})$, and prior variance $P_0=T\tau_\delta^2$ for some $\tau_\delta>0$.
\end{assumption}

\begin{proposition}[Stochastic Technology Amplification]
\label{prop:stochastic_tech}
Under Assumptions~\ref{ass:dividend}, \ref{ass:const_returns}, \ref{ass:stochastic_tech}, \ref{ass:local_framework}, and \ref{ass:local_stochastic_scaling} with $b_t = 0$:
\begin{enumerate}[label=(\roman*)]
	\item Jumps compound explosive dynamics. Positive news about $\bar{\delta}$ generates upward jumps in $\Tt$ of magnitude $K_t\, \nu_t\, G_\rho(t)$. During the early adoption phase $[T_1, T_1+\tau]$, $K_t$ is high (prior uncertainty is greatest) and $G_\rho(t)>0$ throughout the adoption window (indeed bounded away from zero on a non-empty subinterval near the peak). The news innovations $\nu_t$ form a martingale difference sequence (hence are serially uncorrelated), but the posterior mean $\hat{\delta}_t$ is a martingale whose levels are highly persistent. It is this persistence in levels---not serial correlation of innovations---that propagates each positive jump into all future values of $\Tt$, amplifying the locally explosive pattern relative to the deterministic baseline.
    \item Asymptotic negligibility of the learning channel; finite-sample dependence on $\sigma_\xi$. Under the local parametrization of Assumption~\ref{ass:local_framework} with fixed discount factor $\rho\in[0,1)$, the partial-sum process of Bayesian learning innovations is asymptotically negligible at PSY scale: $T^{-1/2}\sum_{s\le\lfloor Tr\rfloor} K_s\,\nu_s\,G_\rho(s) \Rightarrow 0$ in $D[0,1]$. Consequently the gsADF limit law coincides with the deterministic-$\delta$ limit law of Theorem~\ref{thm:contaminated_limit}, and the asymptotic size distortion is \emph{not} altered by $\sigma_\xi$ at first order. Nontrivial dependence of the test's rejection probability on $\sigma_\xi$ is therefore a \emph{finite-sample} phenomenon---operating through the pathwise contribution of $K_t\,\nu_t\,G_\rho(t)$ to $\tilde y_t$ at empirically relevant sample sizes---and its sign and magnitude are documented numerically in Appendix~\ref{subsec:mc_stochastic}, where Panel~A of Table~\ref{tab:mc_stochastic} shows substantial amplification in the price-dividend implementation and Panel~B shows near-invariance in the detrended log-price implementation.
    	\item Adjusted test remains valid (implementation-specific). If the econometrician uniformly consistently estimates the implementation-specific technology component required by Definition~\ref{def:adj_test}---namely $\mathcal T_t$ in the log price-dividend-ratio implementation and $\sum_{s=1}^{t}\delta_s+\mathcal T_t$ in the detrended log-price implementation---then the technology-adjusted PSY statistic has the same first-order asymptotics as in the corresponding technology-free benchmark. In particular, for detrended log prices its asymptotic size remains $\alpha$, and for the price-dividend-ratio implementation the stochastic technology component is removed along with the deterministic one (so any residual rejection cannot be attributed to the hump-shaped technology term).
\end{enumerate}
\end{proposition}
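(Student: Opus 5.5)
Part (i) is immediate from the Kalman algebra already displayed before the statement. Since $\mathcal T_t=\hat\delta_t G_\rho(t)$ with $G_\rho(t)$ deterministic, the one-step revision is $\mathcal T_t-\E_{t-1}[\mathcal T_t]=K_t\nu_t G_\rho(t)$. The gain satisfies $K_t>0$ whenever $g_\star(t)>0$ and $P_{t-1}>0$. Moreover, $P_t$ is nonincreasing in $t$, so $K_t$ is largest early in adoption, given comparable $g_\star$. Positivity of $G_\rho(t)$ on the adoption window follows from $g_\star\ge 0$ together with $g_\star(s)>0$ for some $s>t$ up to the peak. Boundedness away from zero near the peak uses continuity of $h_\star$. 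The persistence claim is just the martingale property of $\hat\delta_t$ (the tower property of $\E_t[\bar\delta]$). It implies $\E_t[\mathcal T_{t+k}]=\hat\delta_t G_\rho(t+k)$, so a positive innovation raises every future conditional value of $\mathcal T$. That innovations are an MDS follows from $\nu_t=\delta_t-\E_{t-1}[\delta_t]$.

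For part (ii), I would bound the partial-sum process in $L^2$ and apply Doob's maximal inequality. The summands $K_s\nu_sG_\rho(s)$ are martingale differences, so
\[
\E\Big[\sup_{r}\Big|T^{-1/2}\sum_{s\le \lfloor Tr\rfloor}K_s\nu_sG_\rho(s)\Big|^2\Big]\le 4T^{-1}\sum_{s\le T}K_s^2G_\rho(s)^2\,\E[\nu_s^2].
\]
Under Assumption~\ref{ass:local_stochastic_scaling} we have $g_\star(s)=O(T^{-1})$ and $P_{s-1}\le P_0=T\tau_\delta^2$. Hence $K_s=O(T\cdot T^{-1})=O(1)$, since $g_\star^2P\to0$ and the denominator is $\asymp\sigma_\xi^2$. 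Also $\E\nu_s^2=g_\star^2P_{s-1}+\sigma_\xi^2=O(1)$. Finally, $G_\rho(s)\le (1-\rho)^{-1}\sup g_\star=O(T^{-1})$, because $\rho$ is fixed. The bound is therefore $O(T^{-1}\cdot T\cdot T^{-2})\to0$, which gives uniform negligibility in $D[0,1]$.

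The same computation shows that the level term $\mathcal T_t$ itself is $O_p(T^{1/2}\cdot T^{-1})$. It is therefore also negligible at scale $T^{1/2}$ in the detrended implementation, so only the cumulated $\sum\delta_s$ survives. Its mean path $\mu_\delta T^{-1}h_\star$ reproduces the $c_\delta g(r)$ drift of Theorem~\ref{thm:contaminated_limit}. Here I would take the prior draw $\bar\delta$ as fixed, or condition on it, with $\bar\delta/\sqrt T\to c_\delta$. The $\sigma_\xi\xi_t$ noise adds a stationary i.i.d.\ term to $\Delta d_t$, which merely changes the long-run variance. The continuous mapping argument of Theorem~\ref{thm:contaminated_limit}(b)--(c) then delivers the same GSADF limit law.

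The main obstacle is this identification of the limit law. Randomness of $\bar\delta$ makes $c_\delta$ a random mixing parameter unless one conditions on it, and the $\sigma_\xi$ term alters $\sigma^2$. I would state the coincidence conditionally on $\bar\delta$, with $\sigma^2$ read as the long-run variance, and treat the finite-sample effect as numerical only, by reference to the Monte Carlo evidence.

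For part (iii), Proposition~\ref{prop:adj_size} applies verbatim. Its proof uses only the uniform-consistency conditions on the implementation-specific component, not the determinism of $\delta$. After subtracting that component, the residual is the technology-free process plus $o_p(1)$ uniformly, and continuous mapping yields the standard Dickey--Fuller functional with size $\alpha$.
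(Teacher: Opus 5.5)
Your parts (i) and (iii) track the paper's proof essentially step for step. Your negligibility bound in (ii) is the paper's computation ($K_s=O(1)$, $G_\rho(s)=O(T^{-1})$, $\E\nu_s^2=O(1)$), with Doob's $L^2$ maximal inequality in place of the paper's Chebyshev-plus-tightness step. Doob gives the sup-norm statement in $D[0,1]$ directly and needs only upper bounds on $P_{s-1}$, not the exact Kalman limit $p(r)$, so it is the cleaner route.

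One internal inconsistency: the Doob step uses the martingale-difference property of $\nu_s$ under the unconditional (Bayesian) law. You later condition on $\bar\delta$, and under that law $\E[\nu_s\mid\mathcal F_{s-1},\bar\delta]=g_\star(s)(\bar\delta-\hat\delta_{s-1})\neq 0$. Your own remark fixes this. The whole of $\mathcal T_t=\hat\delta_tG_\rho(t)$ is $O_p(T^{-1/2})$, also conditionally, which gives negligibility under either law and makes the martingale argument unnecessary.

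The real difference is the ``consequently'' step. The paper's step (ii.c) writes $\Delta f_t=\mu_t^{\mathrm{det}}+\varepsilon_t+K_t\nu_tG_\rho(t)$ and then invokes continuous mapping. That decomposition silently drops the $\sigma_\xi\xi_t$ that enters $\Delta d_t$ through $\delta_t$. It also treats the drift as nonrandom, although $\sigma_\delta=\sqrt T\,\tau_\delta$. You are right that both effects survive at PSY scale, and conditioning on $\bar\delta$ handles the second.

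Your fix for the first, however, does not recover the statement. The news term is present only on $[T_1,T_2]$, so it adds $\sigma_\xi\{W_\xi(r\wedge\lambda_2)-W_\xi(\lambda_1)\}\mathbf 1\{r\ge\lambda_1\}$ to the limit. That is a variance break inside the sample, not a relabelled $\sigma^2$. It changes both the It\^o correction and the residual-variance normalization in $\mathcal F$. Even a global substitution $\sigma^2\mapsto\sigma^2+\sigma_\xi^2$ would change the law, because $\mathcal F$ depends on $Z_{c_\delta}$ only through $W+(c_\delta/\sigma)g$.

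So your argument actually shows that the first-order GSADF limit \emph{does} depend on $\sigma_\xi$, and that unconditionally it is a mixture over $\bar\delta/\sqrt T$. The ``not altered by $\sigma_\xi$'' clause holds only if $\xi_t$ is treated as pure signal noise kept out of dividend growth, or if $\sigma_\xi\to 0$. Present this as a correction to part (ii), not as a proof of it.
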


\begin{remark}[Connection to \citet{Pastor2009}]
\label{rmk:pastor_veronesi}
The stochastic extension nests a simplified version of the \citet{Pastor2009} framework. In their model, uncertainty about a new technology's productivity generates high prices during the adoption phase (the ``bubble'') and price declines when uncertainty resolves unfavorably (the ``crash''). Our decomposition identifies precisely what PSY misattributes: the high prices during the learning phase reflect the conditional expectation $\hat{\delta}_t$ entering the fundamental $f_t = d_t + C + \hat{\delta}_t\, G_\rho(t)$, not a speculative component $b_t$. When uncertainty resolves favorably ($\hat{\delta}_t$ rises), prices jump, a rational Bayesian response. When it resolves unfavorably, prices decline. Both outcomes are fundamental, yet PSY may classify them as bubble origination and collapse. The technology-adjusted test correctly attributes these dynamics to $f_t$.
\end{remark}

\subsection{Cross-Sectional Heterogeneity}

The third extension moves from a representative asset to a cross-section. General-purpose technologies affect firms asymmetrically: technology-intensive industries load heavily on $\delta_t$ while traditional industries load lightly, so the bubble-like pattern that the spurious-explosiveness mechanism generates at the asset level aggregates into a \emph{portfolio-level} pattern whose magnitude depends on the concentration of high-exposure firms in the index. This observation is empirically important because PSY tests are commonly applied to sector indices and to the broad market aggregate, where cross-sectional concentration matters directly.

When firms differ in technology exposure ($\delta_{i,t} = \lambda_i \delta_t$ with $\lambda_i \geq 0$), the aggregate market fundamental $f_t^{\mathrm{mkt}} = \sum_i w_i f_{i,t}$ inherits the technology effect weighted by $\sum_i w_i \lambda_i$. Industries with high technology exposure and large market capitalization weight (e.g., the late-1990s NASDAQ) are most susceptible to spurious bubble detection.

\clearpage
\section{Firm-Level Analysis: The Magnificent Seven}
\label{app:firmlevel}

This appendix reports the full firm-level analysis of the Magnificent Seven stocks summarized in Section~\ref{subsec:mag7_analysis}.

The aggregate NASDAQ index masks substantial heterogeneity across individual stocks. \citet{BaselePhillipsShi2025} applied PSY to the ``Magnificent Seven'' technology stocks to evaluate firm-level speculative dynamics; we extend this analysis by applying our technology-adjusted test at the firm level. We use log dividend-adjusted monthly closing prices from Yahoo Finance, applying the same PSY specification as the index-level analysis: a minimum window $r_0$ defined by the formula $0.01 + 1.8/\sqrt{T}$, lag length of 1, and Monte Carlo critical values with 2,000 replications. Sample periods vary by initial public offering date, extending through early 2026.

Table~\ref{tab:mag7_psy} and Figure~\ref{fig:psy_mag7} present the results. We find that four of the seven stocks reject the null hypothesis of no explosive behavior at the 10\% level: MSFT, GOOGL, TSLA, and NVDA. Three of these (MSFT, TSLA, and NVDA) also reject at the 5\% level, and two (TSLA and NVDA) reject at the stringent 1\% level. NVDA exhibits the strongest signal by far, with a GSADF statistic of 4.155 (well above the 1\% critical value of $2.657$) and extensive explosive episodes from January 2024 through the end of the sample. TSLA shows a prolonged explosive episode from July 2020 to December 2021, coinciding with the broader electric-vehicle and technology euphoria. MSFT shows nearly continuous explosive behavior from February 2019 through January 2026, spanning both the pre- and post-large-language-model periods.

The remaining three stocks (AMZN, META, and AAPL) do not show statistically significant explosive behavior despite their status as market leaders. This cross-sectional heterogeneity is a key finding: even within the Magnificent Seven, explosive price dynamics are not universal.

\subsection{Technology-Adjusted Firm-Level Tests}

To assess whether the detected explosive episodes remain after accounting for firm-level fundamentals, we apply the technology-adjusted PSY test to each Magnificent Seven stock. Unlike the index-level analysis, which uses aggregate technology covariates (log IT investment, TFP, and log patent grants), the firm-level analysis constructs individual counterfactuals from each firm's own Compustat fundamentals. Specifically, for each stock~$i$ we estimate the OLS regression
\[
\log P_{i,t} = \alpha_i + \beta_{1i}\, \log\!\bigl(\text{RevPS}_{i,t}^{\text{real}}\bigr) + \beta_{2i}\, \log\!\bigl(1 + \text{RDPS}_{i,t}^{\text{real}}\bigr) + u_{i,t},
\]
where $\text{RevPS}_{i,t}^{\text{real}} = \text{revtq}_{i,t} / (\text{CPI}_t \times \text{cshoq}_{i,t})$ is CPI-deflated revenue per share and $\text{RDPS}_{i,t}^{\text{real}} = \text{xrdq}_{i,t} / (\text{CPI}_t \times \text{cshoq}_{i,t})$ is CPI-deflated R\&D expenditure per share, both constructed from quarterly Compustat data interpolated to monthly frequency via last observation carried forward (LOCF). Revenue per share captures realized commercialization and firm scale, while R\&D expenditure per share proxies firm-specific technology investment. The training window is the 2006--2019 period, consistent with the index-level AI-era analysis of Section~\ref{subsec:ai_cointegration}, which postdates the dot-com bubble and predates the AI-driven rally. Crucially, because several Magnificent Seven stocks exhibit PSY-detected explosive episodes within the 2006--2019 window (Figure~\ref{fig:psy_mag7}), we exclude each stock's own bubble months from its training sample. This ensures that the counterfactual relationship is estimated exclusively from non-bubble observations, consistent with the requirement that the training period should not be contaminated by speculative dynamics. We then compute the price gap $\hat{g}_{i,t} = p_{i,t} - \hat{f}_{i,t}$ and apply PSY to the residual.

Table~\ref{tab:mag7_psy_adj} and Figure~\ref{fig:psy_mag7_adj} present the technology-adjusted results for all seven stocks. The results are consistent with our central thesis. After technology adjustment, the explosive signals weaken substantially for most stocks. Of the four stocks that rejected the null in the unadjusted test, three (MSFT (test statistic $= 0.605$), GOOGL (test statistic $= 0.633$), and NVDA (test statistic $= 1.531$)) no longer reject at any standard significance level after technology adjustment. Notably, the extensive NVDA explosive episodes spanning 2016--2018 and 2024--2026 largely disappear once the counterfactual is estimated from firm-level fundamentals on bubble-free training data (134 of 168 months retained after excluding 34 bubble months), leaving only minor episodes in 2016--2017. TSLA remains highly significant after adjustment (test statistic $= 4.696$, rejecting at the 1\% level), with episodes concentrated in the 2020--2021 period coinciding with electric-vehicle euphoria; its relatively high $R^2$ of~0.898 indicates that the firm-level fundamentals capture a substantial share of price variation even for Tesla, while residual explosive behavior remains. AAPL, which did not reject in the unadjusted test, shows a significant adjusted test statistic ($2.317$), rejecting at the 10\% and 5\% levels but not at the 1\% level (CV$_{1\%} = 2.660$), with detected episodes in the early-to-mid 2000s, a period outside the 2006--2019 training window where the counterfactual extrapolates less reliably; its lower $R^2$ of~0.407 suggests that the two-covariate specification captures less of Apple's price dynamics, possibly because Apple's transformation from a niche computer maker to a consumer electronics and services platform involved drivers beyond revenue and R\&D growth. The remaining stocks (MSFT, AMZN, META, GOOGL) show no explosive episodes after adjustment, with $R^2$ values ranging from 0.741 (MSFT) to 0.961 (AMZN). These firm-level results parallel the index-level finding: the technology adjustment removes the dominant source of apparent explosiveness in most AI-exposed prices, while leaving firm-specific residual signals.

These firm-level results reinforce the aggregate finding: the technology-adjusted test, used alongside the standard PSY test, reveals that much of the apparent speculative behavior in Magnificent Seven stocks is technology-driven fundamental repricing, a diagnostic decomposition that neither test offers in isolation.

\clearpage
\begin{figure}
\centerline{\includegraphics[width=0.95\textwidth]{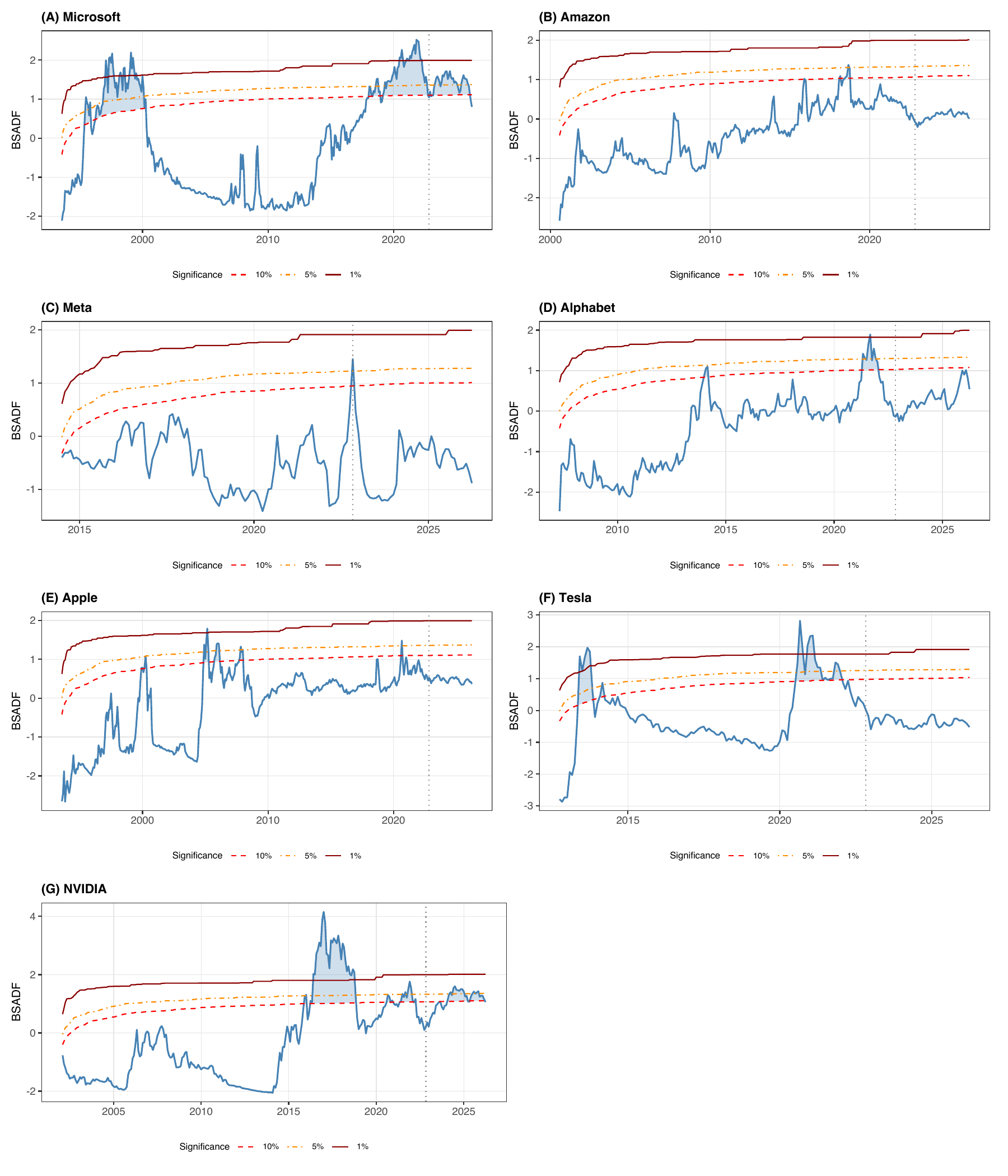}}
\noindent\caption{{\bf Unadjusted firm-level date-stamping highlights the strongest explosive episodes in NVIDIA and Tesla.} In each panel, the horizontal axis is calendar time, the solid line is the BSADF date-stamping statistic, and the dashed lines are the 10, 5, and 1 percent critical value sequences for log dividend-adjusted monthly prices. Shaded regions mark detected explosive episodes, which are most persistent for NVIDIA.\label{fig:psy_mag7}}
\end{figure}

\clearpage
\begin{figure}
\centerline{\includegraphics[width=0.95\textwidth]{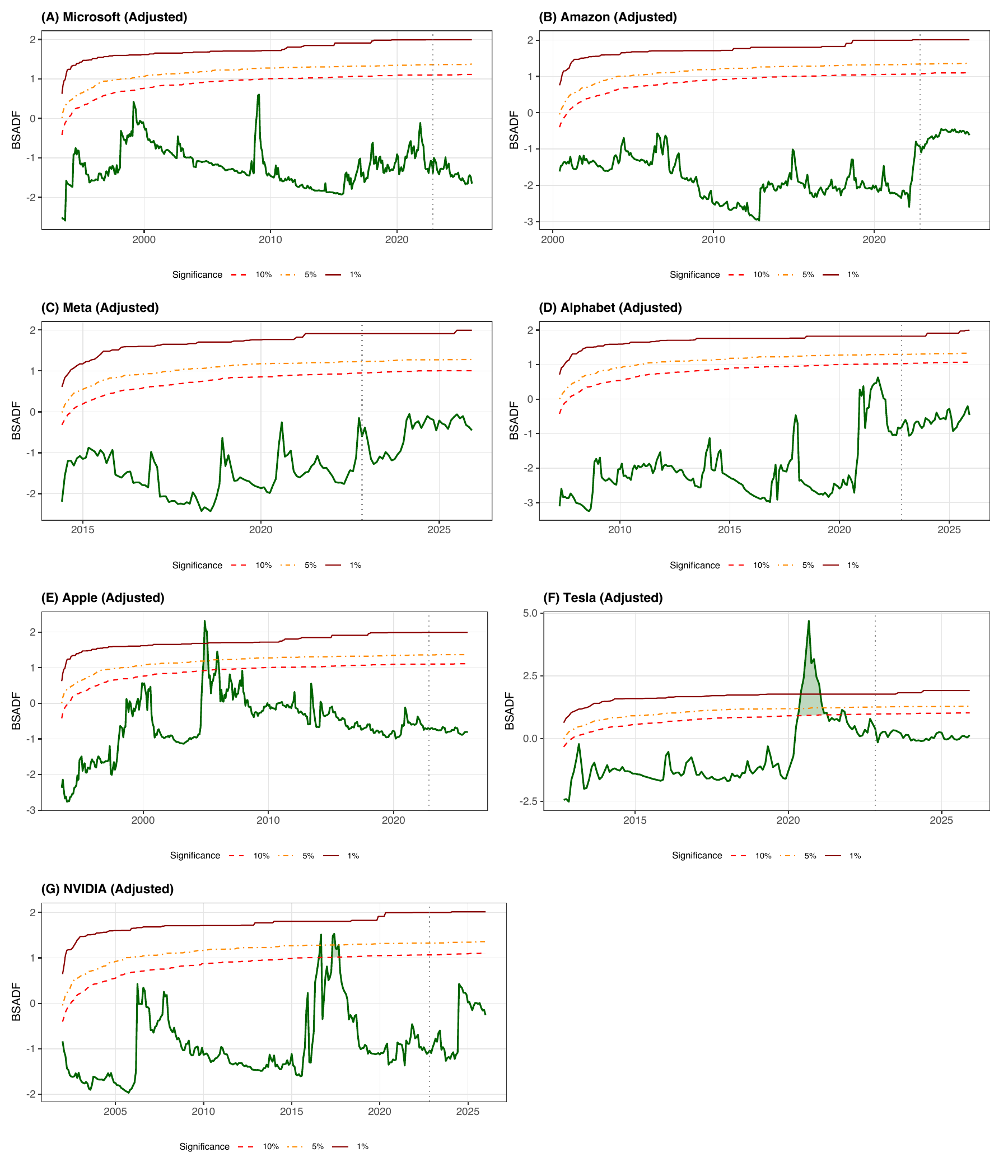}}
\noindent\caption{{\bf Firm-level technology adjustment using Compustat fundamentals weakens most explosive signals.} In each panel, the horizontal axis is calendar time, the solid line is the BSADF date-stamping statistic for the technology-adjusted price gap (based on firm-specific real revenue per share and real R\&D expenditure per share), and the dashed lines are the 10, 5, and 1 percent critical value sequences. Relative to the unadjusted results, most series remain below the threshold for most of the sample.\label{fig:psy_mag7_adj}}
\end{figure}

\clearpage
\begin{landscape}
\begin{table}
\noindent\caption{{\bf Four Magnificent Seven stocks show unadjusted explosive signals, with NVIDIA and Tesla strongest.} The table reports GSADF statistics, critical values, rejection decisions, and detected episodes for log dividend-adjusted monthly prices. Microsoft, Alphabet, Tesla, and NVIDIA reject at the 10 percent level, and Tesla and NVIDIA also reject at the 1 percent level. Sample periods vary by listing date, all extending through March 2026, and critical values are computed by Monte Carlo with 2,000 replications.\label{tab:mag7_psy}}
\centering
\begin{tabular}{lcccccccccc}
\toprule
\textbf{Stock} & \textbf{$n$} & \textbf{Period} & \textbf{GSADF} & \textbf{CV$_{0.10}$} & \textbf{CV$_{0.05}$} & \textbf{CV$_{0.01}$} & \textbf{Rej.\ 10\%} & \textbf{Rej.\ 5\%} & \textbf{Rej.\ 1\%} & \textbf{Episodes} \\
\midrule
Microsoft & 435 & 1990--2026 & 2.522 & 1.957 & 2.233 & 2.660 & Yes & Yes & No & 7 \\
Amazon & 347 & 1997--2026 & 1.374 & 1.885 & 2.128 & 2.660 & No & No & No & 3 \\
Meta & 167 & 2012--2026 & 1.455 & 1.747 & 2.003 & 2.580 & No & No & No & 1 \\
Alphabet & 260 & 2004--2026 & 1.894 & 1.826 & 2.101 & 2.609 & Yes & No & No & 2 \\
Apple & 435 & 1990--2026 & 1.793 & 1.957 & 2.233 & 2.660 & No & No & No & 6 \\
Tesla & 190 & 2010--2026 & 2.817 & 1.768 & 2.041 & 2.576 & Yes & Yes & Yes & 3 \\
NVIDIA & 327 & 1999--2026 & 4.155 & 1.903 & 2.138 & 2.657 & Yes & Yes & Yes & 8 \\
\bottomrule
\end{tabular}
\end{table}
\end{landscape}

\clearpage
\begin{landscape}
\begin{table}
\noindent\caption{{\bf Technology adjustment removes most firm-level explosive signals in the Magnificent Seven.} The table reports GSADF statistics, critical values, rejection decisions, detected episodes, and in-sample fit for price gaps defined as observed prices minus counterfactual fundamentals. The counterfactual for each stock is estimated via OLS of log price on log real revenue per share and log$(1 +$ real R\&D per share$)$, constructed from quarterly Compustat data interpolated to monthly frequency. After adjustment, Microsoft, Amazon, Meta, Alphabet, and NVIDIA no longer reject, while Tesla rejects at the 1 percent level and Apple rejects at the 5 percent level. The reported $R^2$ values refer to the in-sample fit of the counterfactual regressions, and sample periods vary by listing date.\label{tab:mag7_psy_adj}}
\centering
\begin{tabular}{lccccccccc}
\toprule
\textbf{Stock} & \textbf{GSADF} & \textbf{CV$_{0.10}$} & \textbf{CV$_{0.05}$} & \textbf{CV$_{0.01}$} & \textbf{Rej.\ 10\%} & \textbf{Rej.\ 5\%} & \textbf{Rej.\ 1\%} & \textbf{Episodes} & \textbf{$R^2$} \\
\midrule
Microsoft & 0.605 & 1.952 & 2.223 & 2.660 & No & No & No & 0 & 0.741 \\
Amazon & $-0.445$ & 1.910 & 2.150 & 2.672 & No & No & No & 0 & 0.961 \\
Meta & $-0.052$ & 1.743 & 1.990 & 2.576 & No & No & No & 0 & 0.908 \\
Alphabet & 0.633 & 1.826 & 2.101 & 2.609 & No & No & No & 0 & 0.809 \\
Apple & 2.317 & 1.952 & 2.223 & 2.660 & Yes & Yes & No & 2 & 0.407 \\
Tesla & 4.696 & 1.764 & 2.035 & 2.576 & Yes & Yes & Yes & 2 & 0.898 \\
NVIDIA & 1.531 & 1.902 & 2.133 & 2.657 & No & No & No & 2 & 0.782 \\
\bottomrule
\end{tabular}
\end{table}
\end{landscape}

\clearpage
\section{Proofs}
\label{app:proofs}

This appendix collects the formal proofs of all propositions, lemmas, and theorems stated in the main text.

\subsection*{Proof of Proposition~\ref{prop:fundamental} (Fundamental Price Decomposition)}

Under Assumptions~\ref{ass:dividend}--\ref{ass:const_returns} and $b_t = 0$, the present-value identity \eqref{eq:PV_full} becomes
\begin{equation*}
	f_t = d_t + \frac{\kappa}{1-\rho} + \sum_{j=0}^{\infty} \rho^{j}\, \E_t[\Delta d_{t+1+j} - r_{t+1+j}].
\end{equation*}
Substituting the dividend growth process $\Delta d_{t+1+j} = c + \delta_{t+1+j} + \varepsilon_{t+1+j}$ from Assumption~\ref{ass:dividend} and the constant expected return $\E_t[r_{t+1+j}] = \rbar$ from Assumption~\ref{ass:const_returns}:
\begin{align}
	\E_t[\Delta d_{t+1+j} - r_{t+1+j}]
	&= \E_t[c + \delta_{t+1+j} + \varepsilon_{t+1+j}] - \rbar \notag \\
	&= (c - \rbar) + \delta_{t+1+j},
\end{align}
where we used the fact that $\delta_{t+1+j}$ is deterministic (Assumption~\ref{ass:tech_shock}) and $\E_t[\varepsilon_{t+1+j}] = 0$. Substituting:
\begin{align}
	f_t &= d_t + \frac{\kappa}{1-\rho} + \sum_{j=0}^{\infty} \rho^{j}\, \big[(c - \rbar) + \delta_{t+1+j}\big] \notag \\
	&= d_t + \frac{\kappa}{1-\rho} + \frac{c - \rbar}{1-\rho} + \sum_{j=0}^{\infty} \rho^{j}\, \delta_{t+1+j} \notag \\
	&= d_t + C + \Tt,
\end{align}
where $C = \frac{\kappa}{1-\rho} + \frac{c - \rbar}{1-\rho}$ and $\Tt = \sum_{j=0}^{\infty} \rho^{j}\, \delta_{t+1+j}$. The geometric sum $\sum_{j=0}^{\infty} \rho^j = (1-\rho)^{-1}$ converges since $\rho \in (0,1)$, and $\Tt$ converges since $\delta_t$ has bounded support $[T_1, T_2]$. \qed

\subsection*{Proof of Proposition~\ref{prop:microfoundation} (Observable Technology Proxies and Fundamental Value)}

Denote the technology state vector $\mathbf{X}_t \equiv (TFP_t,\, \log IT_t,\, \log Pat_t)'$ and suppose it follows a first-order vector autoregressive process $\mathbf{X}_{t+1} = \boldsymbol{\Phi}\, \mathbf{X}_t + \mathbf{v}_{t+1}$, where the spectral radius of $\boldsymbol{\Phi}$ satisfies $\rho(\boldsymbol{\Phi}) < \rho^{-1}$ (the persistence of the technology state is dominated by the discount factor). We prove each part in turn.

\medskip\noindent\textit{Part (i): Log dividend growth.}
The production function is $Y_t = A_t\, K_t^{\alpha}\, G_t^{\eta}\, L_t^{1-\alpha-\eta}$. Log-linearizing output around the balanced-growth path and using the optimality conditions for labor, physical capital, and information technology capital yields $\Delta \log Y_t = \Delta a_t + \alpha\, \Delta \log K_t + \eta\, \Delta \log G_t + (1-\alpha-\eta)\, \Delta \log L_t$. Dividends are residual cash flow: $D_t = Y_t - w_t L_t - I_t^K - I_t^{IT} - \Phi(I_t^{IT}/G_t)G_t - \Xi(q_t)$. To a first-order approximation around the balanced-growth path, the log-linearized dividend growth is affine in the technology state:
\begin{equation*}
    \Delta d_t = c + \boldsymbol{\gamma}'\, \mathbf{X}_t + \varepsilon_t,
\end{equation*}
where $\boldsymbol{\gamma} = (\gamma_{TFP},\, \gamma_{IT},\, \gamma_{Pat})'$ collects the elasticities of dividend growth with respect to each technology proxy and $\varepsilon_t$ captures idiosyncratic shocks orthogonal to $\mathbf{X}_t$.

The key structural ingredients are: (a)~total factor productivity enters through the disembodied productivity channel $\Delta a_{t+1} = \mu_A + \alpha_q\, q_t + \alpha_I\, i_t^{IT} + u_{t+1}^A$; (b)~information technology investment enters through the information technology capital accumulation $G_{t+1} = (1-\delta_G)\, G_t + I_t^{IT}$ and the optimality condition \eqref{eq:it_foc}; (c)~patent grants enter through the innovation quality channel with grant lag $\ell \geq 1$.

\medskip\noindent\textit{Part (ii): $\Tt = \boldsymbol{\beta}'\, \mathbf{X}_t + \eta_t$.}
From Proposition~\ref{prop:fundamental}, the technology present-value term is $\Tt = \sum_{j=0}^{\infty} \rho^j\, \delta_{t+1+j}$, and from Part~(i) the identification $\delta_s = \boldsymbol{\gamma}'\mathbf{X}_s$ (to first order, after absorbing the zero-mean dividend-growth residual $\varepsilon_s$ into the noise) yields
\begin{equation}
\Tt \;=\; \sum_{j=0}^{\infty} \rho^{\,j}\,\boldsymbol{\gamma}'\,\mathbf{X}_{t+1+j}.
\label{eq:Tt_raw}
\end{equation}
\emph{Step 1: Decompose each $\mathbf{X}_{t+1+j}$ into predictable part and innovation noise.} Add and subtract the conditional expectation to obtain
\begin{align}
\Tt
&\;=\; \sum_{j=0}^{\infty} \rho^{\,j}\,\boldsymbol{\gamma}'\,\Big(\E_t[\mathbf{X}_{t+1+j}] \;+\; \big(\mathbf{X}_{t+1+j} - \E_t[\mathbf{X}_{t+1+j}]\big)\Big) \notag \\
&\;=\; \underbrace{\sum_{j=0}^{\infty} \rho^{\,j}\,\boldsymbol{\gamma}'\,\E_t[\mathbf{X}_{t+1+j}]}_{\text{(A) predictable part}}
\;+\; \underbrace{\sum_{j=0}^{\infty} \rho^{\,j}\,\boldsymbol{\gamma}'\,\big(\mathbf{X}_{t+1+j} - \E_t[\mathbf{X}_{t+1+j}]\big)}_{\text{(B) innovation noise}}.
\label{eq:Tt_decomp}
\end{align}
\emph{Step 2: Evaluate the predictable part (A).} Iterating the VAR $\mathbf{X}_{s+1} = \boldsymbol{\Phi}\mathbf{X}_s + \mathbf{v}_{s+1}$ forward from date $t$ and taking conditional expectations, $\E_t[\mathbf{X}_{t+1+j}] = \boldsymbol{\Phi}^{\,j+1}\mathbf{X}_t$ for each $j \geq 0$. (The unconditional mean is zero by demeaning, so no constant term arises.) Substituting,
\begin{equation*}
\text{(A)} \;=\; \sum_{j=0}^{\infty} \rho^{\,j}\,\boldsymbol{\gamma}'\,\boldsymbol{\Phi}^{\,j+1}\,\mathbf{X}_t
\;=\; \boldsymbol{\gamma}'\,\boldsymbol{\Phi}\,\Bigg(\sum_{j=0}^{\infty} (\rho\,\boldsymbol{\Phi})^{\,j}\Bigg)\,\mathbf{X}_t.
\end{equation*}
The matrix geometric series $\sum_{j=0}^{\infty}(\rho\boldsymbol{\Phi})^{\,j}$ converges in operator norm because the spectral radius satisfies $\rho(\rho\boldsymbol{\Phi}) = \rho\cdot\rho(\boldsymbol{\Phi}) < \rho\cdot\rho^{-1} = 1$ by Assumption~\ref{ass:var}, and the standard Neumann-series identity gives $\sum_{j=0}^{\infty}(\rho\boldsymbol{\Phi})^{\,j} = (\mathbf{I}-\rho\boldsymbol{\Phi})^{-1}$. Therefore
\begin{equation}
\text{(A)} \;=\; \boldsymbol{\gamma}'\,\boldsymbol{\Phi}\,(\mathbf{I}-\rho\boldsymbol{\Phi})^{-1}\,\mathbf{X}_t
\;=\; \big[(\mathbf{I}-\rho\boldsymbol{\Phi})^{-\top}\,\boldsymbol{\Phi}^{\top}\,\boldsymbol{\gamma}\big]' \,\mathbf{X}_t
\;\equiv\; \boldsymbol{\beta}'\,\mathbf{X}_t,
\label{eq:predictable_part}
\end{equation}
where we have used the identity $(\mathbf{A}'\mathbf{B})' = \mathbf{B}'\mathbf{A}$ applied to $\mathbf{A}=\boldsymbol{\gamma}$ and $\mathbf{B}=\boldsymbol{\Phi}(\mathbf{I}-\rho\boldsymbol{\Phi})^{-1}\mathbf{X}_t$, and we have defined the coefficient vector $\boldsymbol{\beta} \equiv (\mathbf{I}-\rho\boldsymbol{\Phi})^{-\top}\boldsymbol{\Phi}^{\top}\boldsymbol{\gamma}$.

\emph{Step 3: Evaluate the innovation term (B).} Iterating the VAR forward from date $t$ via $\mathbf{X}_{t+1} = \boldsymbol{\Phi}\mathbf{X}_t + \mathbf{v}_{t+1}$, $\mathbf{X}_{t+2} = \boldsymbol{\Phi}^{2}\mathbf{X}_t + \boldsymbol{\Phi}\mathbf{v}_{t+1} + \mathbf{v}_{t+2}$, and more generally,
\begin{equation*}
\mathbf{X}_{t+1+j} \;=\; \boldsymbol{\Phi}^{\,j+1}\mathbf{X}_t \;+\; \sum_{l=0}^{j}\boldsymbol{\Phi}^{\,l}\,\mathbf{v}_{t+1+j-l}.
\end{equation*}
Subtracting $\E_t[\mathbf{X}_{t+1+j}] = \boldsymbol{\Phi}^{\,j+1}\mathbf{X}_t$ isolates the innovation component $\mathbf{X}_{t+1+j} - \E_t[\mathbf{X}_{t+1+j}] = \sum_{l=0}^{j}\boldsymbol{\Phi}^{\,l}\mathbf{v}_{t+1+j-l}$. Substituting into~\eqref{eq:Tt_decomp} and defining
\begin{equation}
\eta_t \;\equiv\; \text{(B)} \;=\; \sum_{j=0}^{\infty} \rho^{\,j}\,\boldsymbol{\gamma}'\,\Bigg(\sum_{l=0}^{j}\boldsymbol{\Phi}^{\,l}\,\mathbf{v}_{t+1+j-l}\Bigg),
\label{eq:eta_appendix_def}
\end{equation}
we must verify: (a) absolute convergence in mean square; (b) mean zero; (c) stationarity; and (d) orthogonality to $\mathbf{X}_t$.

\emph{Convergence.} Because $\rho(\boldsymbol{\Phi}) < \rho^{-1}$, by Gelfand's formula there exist constants $C_0>0$ and $r\in(\rho(\boldsymbol{\Phi}),\rho^{-1})$ such that $\|\boldsymbol{\Phi}^{\,l}\| \leq C_0\, r^{\,l}$ for all $l \geq 0$ (in any submultiplicative matrix norm). Using $\E\|\mathbf{v}_s\|^2 \leq \sigma_v^2 < \infty$ and uncorrelatedness of the $\mathbf{v}_s$ across~$s$, the double sum defining $\eta_t$ is bounded in $L^2$ by exchanging orders of summation:
\begin{equation*}
\E\|\eta_t\|^{2} \;\leq\; \|\boldsymbol{\gamma}\|^{2}\, C_{0}^{2}\,\sigma_{v}^{2}\sum_{j=0}^{\infty}\sum_{l=0}^{j}\rho^{\,j}\,r^{\,l}
\;=\; \|\boldsymbol{\gamma}\|^{2}\, C_{0}^{2}\,\sigma_{v}^{2}\sum_{l=0}^{\infty} r^{\,l}\sum_{j=l}^{\infty}\rho^{\,j}
\;=\; \frac{\|\boldsymbol{\gamma}\|^{2}\, C_{0}^{2}\,\sigma_{v}^{2}}{(1-\rho)(1-\rho r)},
\end{equation*}
where the final equality uses $\sum_{j=l}^{\infty}\rho^{\,j} = \rho^{\,l}/(1-\rho)$ and $\sum_{l=0}^{\infty}(\rho r)^{\,l} = 1/(1-\rho r)$, both finite because $\rho r < \rho\cdot\rho^{-1} = 1$. Hence $\eta_t$ converges absolutely in $L^{2}$.

\emph{Mean zero.} Each $\mathbf{v}_{t+1+j-l}$ is $\mathcal{F}_{t+j-l}$-measurable and satisfies $\E[\mathbf{v}_{s}]=0$, hence $\E[\eta_t]=0$.

\emph{Stationarity.} The joint distribution of $\{\mathbf{v}_s\}_{s\geq t+1}$ depends on $t$ only through the index shift $s\mapsto s-t$, and the deterministic weights $\rho^{\,j}\boldsymbol{\gamma}'\boldsymbol{\Phi}^{\,l}$ depend only on $(j,l)$. Hence $\eta_t$ is strictly stationary when $\{\mathbf{v}_s\}$ is strictly stationary (and covariance-stationary otherwise), with finite second moment by the convergence argument above.

\emph{Orthogonality to $\mathbf{X}_t$.} Each innovation $\mathbf{v}_{t+1+j-l}$ (for $j\geq 0$, $0\leq l\leq j$, which ranges over indices $t+1-l+j\geq t+1$) is $\mathcal{F}_t$-unpredictable, i.e., $\E[\mathbf{v}_{t+1+j-l}\mid\mathcal{F}_t]=0$. Since $\mathbf{X}_t\in\mathcal{F}_t$, the conditional covariance $\mathrm{Cov}(\mathbf{X}_t, \eta_t\mid\mathcal{F}_t)=0$, and hence $\mathrm{Cov}(\mathbf{X}_t,\eta_t)=0$ in population.

\emph{Step 4: Assemble.} Combining~\eqref{eq:Tt_decomp} with~\eqref{eq:predictable_part} and~\eqref{eq:eta_appendix_def}, we obtain the decomposition
\begin{equation*}
\Tt \;=\; \boldsymbol{\beta}'\,\mathbf{X}_t \;+\; \eta_t, \qquad
\boldsymbol{\beta} \;=\; (\mathbf{I}-\rho\boldsymbol{\Phi})^{-\top}\boldsymbol{\Phi}^{\top}\boldsymbol{\gamma},
\end{equation*}
with $\eta_t$ mean-zero, stationary, and orthogonal to $\mathbf{X}_t$, as claimed.

\medskip\noindent\textit{Part (iii): $f_t = d_t + \beta_0 + \boldsymbol{\beta}'\, \mathbf{X}_t + \eta_t$.}
From Proposition~\ref{prop:fundamental}, $f_t = d_t + C + \Tt$. Substituting Part~(ii) yields
\begin{equation*}
    f_t \;=\; d_t \;+\; C \;+\; \boldsymbol{\beta}'\, \mathbf{X}_t \;+\; \eta_t.
\end{equation*}
Defining $\beta_0 \equiv C$ gives the stated representation. By Part~(ii), $\eta_t$ is mean-zero, stationary, and orthogonal to $\mathbf{X}_t$. This argument keeps $d_t$ explicit; the reduced-form coefficient $\boldsymbol{\beta}_X$ in \eqref{eq:coint_structural} arises only after absorbing the contemporaneous loading of $d_t$ on $\mathbf{X}_t$. Hence $\mathbf{X}_t$ spans the technology-driven present-value component of fundamentals. \qed

\subsection*{Proof of Proposition~\ref{prop:dynamics} (Fundamental Dynamics)}

From Proposition~\ref{prop:fundamental}, $f_t = d_t + C + \Tt$. First-differencing:
\begin{align}
    f_t - f_{t-1} &= (d_t - d_{t-1}) + (\Tt - \Ttm) \notag \\
    &= \Delta d_t + \Delta \Tt \notag \\
    &= c + \delta_t + \varepsilon_t + \Delta \Tt.
\end{align}
Using the recursion from Lemma~\ref{lem:T_recursion}:
\begin{align}
    \Delta \Tt &= \Tt - \Ttm = (\rho^{-1}\, \Ttm - \rho^{-1}\, \delta_t) - \Ttm \notag \\
    &= (\rho^{-1} - 1)\, \Ttm - \rho^{-1}\, \delta_t.
\end{align}
Substituting back:
\begin{align}
    f_t - f_{t-1} &= c + \delta_t + \varepsilon_t + (\rho^{-1} - 1)\, \Ttm - \rho^{-1}\, \delta_t \notag \\
    &= c + (1 - \rho^{-1})\, \delta_t + (\rho^{-1} - 1)\, \Ttm + \varepsilon_t \notag \\
    &= \mu_t + \varepsilon_t,
\end{align}
where $\mu_t = c + (1 - \rho^{-1})\, \delta_t + (\rho^{-1} - 1)\, \Ttm$. \qed

\subsection*{Proof of Proposition~\ref{prop:explosive} (Local Explosiveness)}

We verify each part of the proposition.

\medskip\noindent\textit{Part (i): Pre-adoption ($t < T_1$).} For $t < T_1$, $\delta_s = 0$ for all $s \leq t$. Writing $\Tt = \sum_{j=0}^{\infty} \rho^j \delta_{t+1+j}$ and noting that $\delta_{t+1+j} = 0$ for $j = 0, \ldots, T_1 - t - 2$ (when $t+1 < T_1$), only indices $j \geq T_1 - t - 1$ contribute. Using the uniform bound $0 \leq \delta_s \leq \dmax$ on the support of $\delta$ (Assumption~\ref{ass:tech_shock}), we obtain the explicit inequality
\begin{equation*}
0 \;\leq\; \Tt \;=\; \sum_{j=T_1-t-1}^{\infty} \rho^j\, \delta_{t+1+j} \;\leq\; \dmax \sum_{j=T_1-t-1}^{\infty} \rho^j \;=\; \dmax\,\frac{\rho^{T_1-t-1}}{1-\rho}.
\end{equation*}
Hence $\Tt$ decays geometrically in the gap $T_1 - t$ at rate $\rho$. Combined with the drift identity $\mu_t = c + (\rho^{-1}-1)\Ttm - (\rho^{-1}-1)\delta_t$ (Proposition~\ref{prop:dynamics}) and $\delta_t = 0$, we obtain $\mu_t - c = (\rho^{-1}-1)\Ttm$, so
\begin{equation*}
|\mu_t - c| \;=\; (\rho^{-1}-1)\,\Ttm \;\leq\; (\rho^{-1}-1)\cdot\frac{\dmax\,\rho^{\,T_1-t}}{1-\rho} \;=\; \frac{\dmax}{\rho}\cdot\rho^{\,T_1-t},
\end{equation*}
using $(\rho^{-1}-1) = (1-\rho)/\rho$. Thus $\mu_t \to c$ geometrically at rate $\rho$ as $T_1 - t \to \infty$, so $f_t$ is asymptotically a random walk with drift $c$. In many applications (and in our Monte Carlo design), we set $\delta_t = 0$ for $t < T_1$ exactly and $\Tt$ begins rising from zero only as $t$ approaches $T_1 - 1$ from below.

\medskip\noindent\textit{Part (ii): Early adoption ($t \in [T_1, T_1 + \tau)$).} We use the compact identity $\mu_t - c = (\rho^{-1}-1)(\mathcal{T}_{t-1} - \delta_t)$, which follows from the drift formula $\mu_t = c + (1-\rho^{-1})\delta_t + (\rho^{-1}-1)\mathcal{T}_{t-1}$ in Proposition~\ref{prop:dynamics}. Substituting the definition $\mathcal{T}_{t-1} = \sum_{j=0}^{\infty}\rho^j \delta_{t+j} = \delta_t + \sum_{j=1}^{\infty}\rho^j \delta_{t+j}$ gives
\begin{equation*}
\mu_t - c \;=\; (\rho^{-1}-1)\sum_{j=1}^{\infty}\rho^j\, \delta_{t+j}.
\end{equation*}
Since $\rho^{-1} > 1$, $\rho^j > 0$, and $\delta_{t+j} \geq 0$ with at least one strictly positive term for $t \in [T_1, T_1+\tau)$ (because future shocks in the adoption window are strictly positive by Assumption~\ref{ass:tech_shock}), we conclude $\mu_t > c$ strictly throughout this phase. This proves the ``above-drift-$c$'' claim without any sign restriction on $\Delta \mathcal{T}_t$.

\medskip\noindent\textit{Convexity of $\mathcal{T}_t$ on $[T_1, T_1+\tau^\star)$ under Assumption~\ref{ass:T_convexity}.} We derive a useful identity for $\Delta^2 \mathcal{T}_t$ in terms of second differences of $\delta$, which both motivates Assumption~\ref{ass:T_convexity} and underpins the verification Lemma~\ref{lem:T_convex_verify}. By Lemma~\ref{lem:T_recursion}, $\Delta \mathcal{T}_t = (\rho^{-1}-1)\mathcal{T}_{t-1} - \rho^{-1}\Delta \delta_t$. Applying the same recursion once more and rearranging,
\begin{equation*}
\Delta^2 \mathcal{T}_t \;=\; \sum_{j=1}^{\infty} \rho^{j-1}\, \Delta^2 \delta_{t+j},
\end{equation*}
which one verifies by substituting the definition $\mathcal{T}_t = \sum_{j\geq 0}\rho^j \delta_{t+1+j}$ and telescoping. Thus $\Delta^2 \mathcal{T}_t$ is a $\rho$-discounted weighted average of future second differences of $\delta$. Assumption~\ref{ass:T_convexity} ($\Delta^2 \mathcal{T}_t > 0$ on $[T_1, T_1+\tau^\star)$) is therefore an endogenous property of the hump shape that one checks by direct calculation; it holds for strictly-convex $g$ and standard smooth-hump profiles, but fails for the piecewise-linear triangular profile (Lemma~\ref{lem:T_convex_verify}). Granting Assumption~\ref{ass:T_convexity}, $\mathcal{T}_t$ is strictly convex on $[T_1, T_1+\tau^\star)$, and the log price--dividend ratio $f_t - d_t = C + \mathcal{T}_t$ inherits strict convexity. This is the convexity invoked in the proof of Proposition~\ref{prop:pd_limit}. No monotonicity claim for $\mathcal{T}_t$ follows from $\mu_t > c$ alone; such a claim would require additional restrictions on the discounted future-shock profile.

\medskip\noindent\textit{Part (iii): Late adoption ($t \in (T_1+\tau, T_2]$).} After the peak, $\delta_t$ is positive but decreasing ($h$ is strictly decreasing on $[\tau, T_2-T_1]$ by Assumption~\ref{ass:tech_shock}). The identity $\mu_t - c = (\rho^{-1}-1)\sum_{j=1}^{\infty}\rho^j \delta_{t+j}$ derived in Part~(ii) remains valid. Its right-hand side is a $\rho$-weighted sum of non-negative future $\delta$ values with at least one strictly positive term whenever $t < T_2$, so $\mu_t > c$ strictly on $(T_1+\tau, T_2]$. To show monotone decline of $\mu_t$ to $c$, compute
\begin{equation*}
(\mu_{t+1}-c)-(\mu_t-c) \;=\; (\rho^{-1}-1)\sum_{j=1}^{\infty}\rho^j\,\big[\delta_{t+1+j}-\delta_{t+j}\big].
\end{equation*}
For $t \geq T_1+\tau$ and each $j\geq 1$, the index $t+j$ lies in the descending branch of $h$ (until it exits the support $[T_1,T_2]$, after which the term is identically zero), so $\delta_{t+1+j}-\delta_{t+j} \leq 0$ with strict inequality for at least one $j$ as long as $t < T_2$. Hence $\mu_{t+1} < \mu_t$ strictly, and as $t \to T_2$ the sum shrinks to zero, giving $\mu_t \downarrow c$ monotonically on this phase.

\medskip\noindent\textit{Part (iv): Post-maturation ($t > T_2$).} For $t > T_2$, $\delta_s = 0$ for all $s > T_2$, so $\Tt = \sum_{j=0}^{\infty} \rho^j \delta_{t+1+j} = 0$, and the drift $\mu_t = c$. The fundamental price resumes a random walk with drift $c$, but at a permanently elevated level reflecting the cumulative technology-driven dividend growth $\sum_{s=T_1}^{T_2} \delta_s$. \qed

\subsection*{Proof of Proposition~\ref{prop:pd_limit} (Deterministic Price-Dividend Ratio Limit)}

Under the price-dividend implementation, $y_t = C + \mathcal{T}_{t,T} + u_t$. We restrict attention to subsample windows $[\lfloor r_1 T \rfloor, \lfloor r_2 T \rfloor]$ with $(r_1, r_2) \subseteq (\lambda_1, \lambda_1 + \kappa(\lambda_2-\lambda_1))$ (i.e.\ strictly inside the ramp-up phase where $q$ is strictly increasing and strictly convex), and we impose the signal-dominance condition
\begin{equation*}
	\frac{\dmax(T)^2}{T\,\sigma_u^2} \;\longrightarrow\; \infty \qquad\text{as } T \to \infty,
\end{equation*}
equivalently $\dmax(T)/(\sigma_u\sqrt{T}) \to \infty$. This requires $\dmax(T)$ to grow sufficiently quickly relative to $\sqrt{T}$; in particular, neither the fixed-$\dmax$ regime nor the local parametrization $\dmax(T)=c_\delta/\sqrt{T}$ satisfies this condition. Those non-divergent cases are outside the scope of this proposition. Consider the augmented Dickey--Fuller regression $\Delta y_t = \hat{\alpha} + \hat{\beta}\, y_{t-1} + \hat{u}_t$ on this subsample.

\medskip\noindent\textit{Step 1: Deterministic limit of $\hat{\beta}$.}
Under the continuous-time embedding, $\mathcal{T}_{\lfloor Tr \rfloor, T} = \dmax(T)\, q(r) + o(\dmax(T))$, so $\Delta y_{\lfloor Tr \rfloor} \approx \dmax(T)\, q'(r)/T + \Delta u_t$. The OLS formula with demeaning gives
\[
\hat{\beta}_{r_1,r_2} = \frac{\sum_{t} (y_{t-1} - \bar{y}_{-1})(\Delta y_t - \overline{\Delta y})}{\sum_{t} (y_{t-1} - \bar{y}_{-1})^2}.
\]
Under the signal-dominance condition, the deterministic component dominates both sample moments. Denote $\tilde{q}(r) = q(r) - \frac{1}{r_2-r_1}\int_{r_1}^{r_2} q(s)ds$. Then:
\begin{align*}
	\sum_{t} (y_{t-1} - \bar{y}_{-1})^2 &= T\, \dmax(T)^2 \int_{r_1}^{r_2} \tilde{q}(r)^2\, dr \;+\; O_p(T \sigma_u^2) \;+\; O_p(\sqrt{T}\,\dmax(T)\sigma_u),\\
	\sum_{t} (y_{t-1} - \bar{y}_{-1})(\Delta y_t - \overline{\Delta y}) &= \dmax(T)^2 \int_{r_1}^{r_2} \tilde{q}(r)\, q'(r)\, dr \;+\; O_p(\dmax(T)\sigma_u) \;+\; O_p(\sigma_u^2).
\end{align*}
Under the imposed signal-dominance condition $\dmax(T)^2/(T\sigma_u^2)\to\infty$, these stochastic remainder terms are negligible for the probability limit of $T\,\hat{\beta}_{r_1,r_2}$, so the deterministic component governs the first-order behavior. Multiplying through by $T$ yields
\[
T\, \hat{\beta}_{r_1,r_2} \plim B(r_1, r_2) = \frac{\int_{r_1}^{r_2} \tilde{q}_{r_1,r_2}(r)\, q'(r)\, dr}{\int_{r_1}^{r_2} \tilde{q}_{r_1,r_2}(r)^2\, dr}.
\]

\medskip\noindent\textit{Step 2: Sign of $B(r_1,r_2)$.}
The denominator is always positive. For the numerator, integration by parts gives
\[
\int_{r_1}^{r_2} \tilde{q}(r)\, q'(r)\, dr = \Big[\tilde{q}(r)\, q(r)\Big]_{r_1}^{r_2} - \int_{r_1}^{r_2} q(r)\, \tilde{q}'(r)\, dr.
\]
Since $\tilde{q}' = q'$ (the mean subtraction is a constant) and using $\tilde{q}(r) = q(r) - \bar{q}$:
\[
\int_{r_1}^{r_2} \tilde{q}(r)\, q'(r)\, dr = \frac{1}{2}\Big[\tilde{q}(r_2)^2 - \tilde{q}(r_1)^2\Big].
\]
It remains to sign $\tilde{q}(r_2)^2 - \tilde{q}(r_1)^2$. Because $q$ is strictly increasing and strictly convex on $[r_1,r_2]$ (by Proposition~\ref{prop:explosive}(ii) under Assumption~\ref{ass:T_convexity}; see Lemma~\ref{lem:T_convex_verify} for primitive sufficient conditions), the following observation suffices.

\emph{Sub-Lemma (revised).} Let $q$ be strictly convex and strictly increasing on $[r_1,r_2]$, with $\bar q = (r_2-r_1)^{-1}\int_{r_1}^{r_2} q(s)\,ds$ and $\tilde q(r)=q(r)-\bar q$. Then: (i) by strict convexity, $\bar q < \tfrac12(q(r_1)+q(r_2))$, hence $\tilde q(r_1)+\tilde q(r_2)>0$; (ii) since $q$ is strictly increasing, $\tilde q(r_1)<\tilde q(r_2)$, so together with (i) we obtain $\tilde q(r_1)<0<\tilde q(r_2)$ and $\tilde q(r_2)>|\tilde q(r_1)|$; (iii) therefore
\[
\int_{r_1}^{r_2}\tilde q(r)\,q'(r)\,dr
=
\frac12\big[\tilde q(r_2)^2-\tilde q(r_1)^2\big]
>0,
\]
and hence $B(r_1,r_2)>0$.

\medskip\noindent\textit{Step 3: Divergence of the $t$-statistic.}
Under the signal-dominance condition $\dmax(T)^2/(T\sigma_u^2) \to \infty$, the standard error of $\hat{\beta}$ is $\mathrm{se}(\hat{\beta}) = O_p(T^{-1/2}\, \sigma_u / \dmax(T))$ (derived from the leading deterministic denominator and the residual variance $\sigma_u^2 + o_p(1)$). Since $\hat{\beta} = B(r_1,r_2)/T + o_p(1/T)$ with $B > 0$:
\[
t_{\hat{\beta}} = \frac{\hat{\beta}}{\mathrm{se}(\hat{\beta})} = O_p\!\left(\frac{B/T}{T^{-1/2}\, \sigma_u/\dmax(T)}\right) = O_p\!\left(\frac{B\,\dmax(T)}{\sigma_u\sqrt{T}}\right).
\]
Therefore $t_{\hat{\beta}}\to+\infty$ whenever $\dmax(T)/(\sigma_u\sqrt{T}) \to \infty$, equivalently under the imposed signal-dominance condition. In particular, fixed $\dmax>0$ does \emph{not} yield divergence as $T\to\infty$; divergence requires $\dmax(T)$ to grow sufficiently fast relative to $\sqrt{T}$. Hence the generalized supremum augmented Dickey--Fuller test rejects with probability approaching one whenever the admissible class contains such a ramp-up window. \qed

\subsection*{Proof of Proposition~\ref{prop:adj_size} (Size of the Adjusted Test)}

Under $b_t = 0$, consider first the detrended log-price implementation. By Definition~\ref{def:adj_test},
\begin{equation*}
	y_t^{\mathrm{adj}}
	=
	y_t-\sum_{s=1}^t\hat\delta_s-\hat{\mathcal T}_t
	=
	d_t + C + \mathcal T_t - \sum_{s=1}^t\hat\delta_s - \hat{\mathcal T}_t.
\end{equation*}
Using $d_t = d_0 + ct + \sum_{s=1}^t(\delta_s + \varepsilon_s)$, this becomes
\begin{equation*}
	y_t^{\mathrm{adj}}
	=
	d_0 + C + ct + \sum_{s=1}^t \varepsilon_s + \sum_{s=1}^t(\delta_s-\hat\delta_s) + (\mathcal T_t-\hat{\mathcal T}_t).
\end{equation*}
Hence, under the uniform-consistency assumptions of the proposition,
\begin{equation*}
	y_t^{\mathrm{adj}}
	=
	d_0 + C + ct + \sum_{s=1}^t \varepsilon_s + o_p(1)
\end{equation*}
uniformly in $t$. After detrending,
\begin{equation*}
	\tilde y_t^{\mathrm{adj}}=\sum_{s=1}^t \varepsilon_s + r_{t,T},
	\qquad
	\sup_{1\le t\le T}|r_{t,T}|=o_p(1).
\end{equation*}
If $Y_t^{(0)}=\sum_{s=1}^t\varepsilon_s$, then by continuous mapping for the GSADF functional applied uniformly over admissible windows, the GSADF statistic computed from $Y^{(0)}+r$ converges in distribution to the GSADF statistic computed from $Y^{(0)}$. Hence the limit distribution coincides with the standard PSY null, and the adjusted detrended log-price test has asymptotic size $\alpha$.

For the price-dividend implementation,
\begin{equation*}
	y_t^{\mathrm{adj,PD}}
	=
	(p_t-d_t)-\hat{\mathcal T}_t
	=
	C+\mathcal T_t-\hat{\mathcal T}_t
	=
	C+o_p(1)
\end{equation*}
uniformly under $b_t=0$. Therefore GSADF applied to a constant plus $o_p(1)$ has the same limit as on a constant, so no technology-driven explosive behavior remains detectable after adjustment. \qed

\subsection*{Proof of Proposition~\ref{prop:adj_power} (Power of the Adjusted Test)}

Under the alternative $p_t=f_t+b_t$, consider first the detrended log-price implementation. By Definition~\ref{def:adj_test},
\begin{equation*}
	y_t^{\mathrm{adj}}
	=
	p_t-\sum_{s=1}^t\hat\delta_s-\hat{\mathcal T}_t
	=
	d_t + C + \mathcal T_t + b_t - \sum_{s=1}^t\hat\delta_s - \hat{\mathcal T}_t.
\end{equation*}
Using $d_t = d_0 + ct + \sum_{s=1}^t(\delta_s+\varepsilon_s)$ and the uniform-consistency conditions,
\begin{equation*}
	y_t^{\mathrm{adj}}
	=
	d_0 + C + ct + \sum_{s=1}^t \varepsilon_s + b_t + o_p(1)
\end{equation*}
uniformly in $t$. Detrending removes $d_0+ct$, leaving
\begin{equation*}
	\tilde y_t^{\mathrm{adj}}
	=
	\sum_{s=1}^t \varepsilon_s + b_t + r_{t,T},
	\qquad
	\sup_{1\le t\le T}|r_{t,T}|=o_p(1).
\end{equation*}
Thus the explosive component $b_t$ remains. If the standard PSY statistic for the technology-free benchmark $\sum_{s=1}^t \varepsilon_s + b_t$ rejects at rate $\omega_T\to\infty$, then any remainder with $\sup_t|r_{t,T}|=o_p(\omega_T)$ is asymptotically negligible for the GSADF functional, so the adjusted test has the same rejection probability limit, namely one.

For the price-dividend implementation,
\begin{equation*}
	y_t^{\mathrm{adj,PD}}
	=
	(p_t-d_t)-\hat{\mathcal T}_t
	=
	C+u_t+b_t+(\mathcal T_t-\hat{\mathcal T}_t)
	=
	C+u_t+b_t+o_p(1).
\end{equation*}
Hence the adjustment removes only the technology component and leaves the explosive bubble term $b_t$ intact. The adjusted price-dividend-ratio test therefore has the same first-order power as the standard PSY test applied to the corresponding technology-free series. \qed

\subsection*{Proof of Proposition~\ref{prop:tv_returns} (Time-Varying Expected Returns)}

Under the horizon-by-horizon specification~\eqref{eq:tv_returns}, the present-value identity becomes
\begin{align}
    f_t &= d_t + \frac{\kappa}{1-\rho} + \sum_{j=0}^{\infty} \rho^{j}\, \E_t[\Delta d_{t+1+j} - r_{t+1+j}] \notag \\
    &= d_t + \frac{\kappa}{1-\rho} + \sum_{j=0}^{\infty} \rho^{j}\, \big[(c + \delta_{t+1+j}) - (\rbar + \phi\, \delta_{t+1+j})\big] \notag \\
    &= d_t + \frac{\kappa}{1-\rho} + \frac{c - \rbar}{1-\rho} + (1 - \phi)\sum_{j=0}^{\infty} \rho^{j}\,\delta_{t+1+j} \notag \\
    &= d_t + C + (1-\phi)\,\mathcal T_t.
\end{align}
Hence
\[
f_t-d_t = C + (1-\phi)\,\mathcal T_t.
\]
If $\phi<1$, then $1-\phi>0$, so the technology component of the log price-dividend ratio has exactly the same sign, monotonicity, and convexity properties as $\mathcal T_t$. Indeed,
\[
\Delta(f_t-d_t) = (1-\phi)\,\Delta\mathcal T_t,
\qquad
\Delta^2(f_t-d_t) = (1-\phi)\,\Delta^2\mathcal T_t.
\]
If $\phi<0$, then $1-\phi>1$, so the level, slope, and curvature of the price-dividend-ratio technology component are amplified by the multiplicative factor $1-\phi$.

For the drift of the log price itself,
\[
\Delta f_t = c + \delta_t + \varepsilon_t + (1-\phi)\,\Delta\mathcal T_t.
\]
Thus any comparison of the drift with $c$ depends on the sign of $\Delta\mathcal T_t$ on the window under consideration; no stronger global drift claim follows from $\phi<1$ alone. \qed

\subsection*{Proof of Proposition~\ref{prop:stochastic_tech} (Stochastic Technology Amplification)}

\medskip\noindent\textit{Part (i): Jumps compound explosive dynamics via persistent posterior levels.}
Under Assumption~\ref{ass:stochastic_tech}, agents observe
\[
\delta_t = g_\star(t)\,\bar\delta + \sigma_\xi\,\xi_t,\qquad \xi_t \sim \mathcal N(0,1)
\]
during the learning window, and form the posterior mean $\hat\delta_t = \E[\bar\delta\mid\mathcal F_t]$. The Kalman filter therefore gives
\[
\hat\delta_t \;=\; \hat\delta_{t-1} + K_t\,\nu_t,\qquad \nu_t := \delta_t - g_\star(t)\,\hat\delta_{t-1},\qquad K_t := \frac{P_{t-1}\,g_\star(t)}{g_\star(t)^2\,P_{t-1}+\sigma_\xi^2},
\]
with posterior variance recursion
\[
P_t \;=\; P_{t-1} - K_t\,g_\star(t)\,P_{t-1}
\;=\;
\frac{P_{t-1}\sigma_\xi^2}{g_\star(t)^2\,P_{t-1}+\sigma_\xi^2},
\]
where $P_{t-1} = \Var(\bar\delta\mid\mathcal F_{t-1})$. The innovation $\nu_t$ satisfies
\[
\E[\nu_t\mid\mathcal F_{t-1}] \;=\; g_\star(t)\,\E[\bar\delta\mid\mathcal F_{t-1}] - g_\star(t)\,\hat\delta_{t-1} + \sigma_\xi\,\E[\xi_t\mid\mathcal F_{t-1}] \;=\; 0,
\]
so $\{\nu_t\}$ is a martingale difference sequence with respect to $\{\mathcal F_t\}$ and is therefore serially uncorrelated.

However, the posterior mean $\hat\delta_t$ is itself a martingale (by the tower property applied to $\hat\delta_t = \E[\bar\delta\mid\mathcal F_t]$), and hence its \emph{levels} are maximally persistent: $\Cov(\hat\delta_t,\hat\delta_{t+h}) = \Var(\hat\delta_t) = P_0 - P_t$, which is nondecreasing in $t$ and bounded by $P_0$. Substituting the posterior mean into the technology present-value operator,
\[
\Tt \;=\; \sum_{j=0}^\infty \rho^j\,\E_t[\delta_{t+1+j}] \;=\; \hat\delta_t\,G_\rho(t),\qquad G_\rho(t) := \sum_{j=0}^\infty \rho^j\,g_\star(t+1+j),
\]
so the innovation in $\Tt$ relative to $\E_{t-1}[\Tt]$ is $G_\rho(t)\,K_t\,\nu_t$, while the total change satisfies
\[
\Delta\Tt \;=\; G_\rho(t)\,K_t\,\nu_t + \hat\delta_{t-1}\,\Delta G_\rho(t),
\]
where $G_\rho(t)\,K_t\,\nu_t$ is the surprise component (an MDS) and $\hat\delta_{t-1}\,\Delta G_\rho(t)$ is the predictable drift. Because $\hat\delta_s$ for $s>t$ satisfies $\E_t[\hat\delta_s] = \hat\delta_t$ (martingale property), the positive surprise component propagates into \emph{all} future values of $\Tt$ in expectation: a single $\nu_t>0$ raises $\E_t[\Tt]$ for every subsequent $t'>t$ within the adoption window, because $\E_t[\hat\delta_{t'}]G_\rho(t') = \hat\delta_t\,G_\rho(t')$ and $\hat\delta_t$ has been permanently revised upward.

During the early adoption phase $[T_1,T_1+\tau]$, $K_t$ is relatively high because $P_{t-1}$ is close to $P_0$ when little learning has yet taken place. Moreover, $G_\rho(t)>0$ throughout the adoption window and is bounded away from zero on a non-empty subinterval near the peak. Hence the early-adoption innovation term is non-trivially amplified relative to the post-maturation regime where $G_\rho(t)\to0$. The distinction between innovation-level serial uncorrelation and posterior-level persistence is the precise economic content: PSY's rejection probability is driven by the \emph{level path} of $\Tt$, not by autocorrelation of $\Delta\Tt$, so a martingale-difference innovation sequence that raises the level persistently is sufficient to amplify spurious explosiveness. The finite-sample quantitative effect is documented in Panel~A of Table~\ref{tab:mc_stochastic}.

\medskip\noindent\textit{Part (ii): Asymptotic negligibility of the learning channel at PSY scale.}
Under Assumption~\ref{ass:stochastic_tech}, the fundamental price innovation decomposes as $\Delta f_t = \mu_t^{\mathrm{det}} + \varepsilon_t + K_t\, \nu_t\, G_\rho(t)$, where $\mu_t^{\mathrm{det}}$ is the deterministic drift from Proposition~\ref{prop:dynamics} and the last term captures the Bayesian learning innovation ($\nu_t$ is the Kalman innovation, a martingale difference with conditional variance $g_{\star}(t)^2 P_{t-1} + \sigma_\xi^2$, mutually independent of $\varepsilon_t$). We establish that under the local parametrization of Assumption~\ref{ass:local_framework} with a fixed discount factor $\rho\in[0,1)$, the partial-sum martingale $S_t \equiv \sum_{s\le t} K_s\,\nu_s\,G_\rho(s)$ is asymptotically negligible at PSY scale, so the gsADF limit law coincides with the deterministic-$\delta$ limit law of Theorem~\ref{thm:contaminated_limit}. Any $\sigma_\xi$-dependence of the test's rejection probability is therefore a finite-sample, not a first-order asymptotic, phenomenon.

\emph{Step (ii.a): Scaling of $g_\star$, $K_s$, and $G_\rho$.} Assumption~\ref{ass:stochastic_tech} normalizes $g_\star$ as a discrete density on the adoption--maturation window $[T_1,T_2]$. Under Assumption~\ref{ass:local_framework} this window has length $\Theta(T)$, so the natural local specification is
\begin{equation*}
g_{\star,T}(s) \;=\; T^{-1}\,h_\star(s/T) \;+\; o(T^{-1}) \qquad \text{uniformly in } s\in[T_1,T_2],
\end{equation*}
for a continuous shape function $h_\star:[\lambda_1,\lambda_2]\to\mathbb{R}_+$ with $\int_{\lambda_1}^{\lambda_2} h_\star(u)\,du = 1$. To keep the deterministic local channel $\dmax(T)\,q(\cdot) = c_\delta T^{-1/2}\,q(\cdot)$ nondegenerate and on the same scale, prior moments are nested as $\mu_\delta = O(\sqrt T)$ and $\sigma_\delta = O(\sqrt T)$, i.e.\ $P_0 = T\tau_\delta^2$ for some $\tau_\delta\in(0,\infty)$; the innovation variance $\sigma_\xi$ is kept fixed in $(0,\infty)$.

The Kalman precision recursion $P_s^{-1} = P_0^{-1} + \sigma_\xi^{-2}\sum_{j\le s} g_{\star,T}(j)^2$ then gives, uniformly in $r\in[\lambda_1,\lambda_2]$,
\begin{equation*}
P_{\lfloor Tr\rfloor,T} \;=\; T\,p(r) + o(T),\qquad p(r) = \Big[\tau_\delta^{-2} + \sigma_\xi^{-2}\!\int_{\lambda_1}^{r\wedge\lambda_2} h_\star(u)^2\,du\Big]^{-1},
\end{equation*}
and hence $K_{\lfloor Tr\rfloor,T} = p(r)\,h_\star(r)/\sigma_\xi^2 + o(1) = O(1)$. For fixed $\rho\in[0,1)$, the geometrically weighted kernel satisfies
\begin{equation*}
G_{\rho,T}(s) \;=\; \sum_{j\ge 0} \rho^j\, g_{\star,T}(s+1+j) \;=\; T^{-1}\,\frac{h_\star(s/T)}{1-\rho} \;+\; o(T^{-1}),
\end{equation*}
uniformly on $[T\lambda_1,T\lambda_2]$. Consequently
\begin{equation*}
K_s\,G_{\rho,T}(s) \;=\; T^{-1}\,\kappa_\rho(s/T) + o(T^{-1}),\qquad \kappa_\rho(u) = \frac{p(u)\,h_\star(u)^2}{\sigma_\xi^2\,(1-\rho)}.
\end{equation*}

\emph{Step (ii.b): Negligibility of $T^{-1/2} S_{\lfloor Tr\rfloor}$.} Because $\{\nu_s\}$ is a martingale difference array with conditional variance $g_{\star,T}(s)^2 P_{s-1,T} + \sigma_\xi^2 = \sigma_\xi^2 + o(1)$ (the first term is $O(T^{-1})$), the predictable quadratic variation of $T^{-1/2} S_{\lfloor Tr\rfloor}$ satisfies
\begin{align*}
T^{-1}\sum_{s\le\lfloor Tr\rfloor}\!\big(K_s\,G_{\rho,T}(s)\big)^2 \E[\nu_s^2\mid\mathcal F_{s-1}]
&= T^{-1}\sum_{s\le\lfloor Tr\rfloor}\!\big(T^{-1}\kappa_\rho(s/T)\big)^2\!\big(\sigma_\xi^2 + o(1)\big) \\
&= O(T^{-2})\cdot T\cdot O(1) \;=\; O(T^{-2}) \;\to\; 0.
\end{align*}
By Chebyshev's inequality applied to the martingale $T^{-1/2} S_{\lfloor Tr\rfloor}$, $T^{-1/2} S_{\lfloor Tr\rfloor} \xrightarrow{L^2} 0$, and tightness of the martingale \citep{Phillips1987} yields
\begin{equation*}
T^{-1/2}\, S_{\lfloor Tr\rfloor} \;\Rightarrow\; 0 \qquad \text{in } D[0,1].
\end{equation*}

\emph{Step (ii.c): gsADF limit unchanged.} Decomposing the detrended process as $\tilde y_t = \tilde y_t^{\mathrm{det}} + S_t$ on any subsample window, Step (ii.b) and the Continuous Mapping Theorem applied to the sample-ratio functional $\mathcal F$ of~\eqref{eq:contaminated_F} imply that $\gsadf_T$ converges to the same limit as the deterministic-$\delta$ case, namely the limit of Theorem~\ref{thm:contaminated_limit}. Hence $\sigma_\xi$ does \emph{not} appear in the first-order limit law.

\emph{Remark on finite-sample behavior.} The negligibility result holds at PSY scale $T^{-1/2}$ under fixed $\rho$. At finite $T$, the learning-innovation contribution $S_t$ is of root-mean-square order $T^{-1}\,\big(\int_0^r\kappa_\rho(u)^2\,du\big)^{1/2}\,\sigma_\xi$, which is nonzero and depends on $\sigma_\xi$ through the weight $\kappa_\rho$. This pathwise contribution can shift the rejection probability in either direction depending on the drift $c_\delta\,q(\cdot)$, the window $\mathcal R$, and the specific empirical implementation. Accordingly we do not state a monotone size result as a theorem; the empirical direction is documented in Table~\ref{tab:mc_stochastic}, where Panel~A (price-dividend ratio) exhibits a monotonic increase in rejection rate with $\sigma_\delta$ and Panel~B (detrended log price) shows near-invariance. A nondegenerate Brownian-functional limit featuring $\sigma_\xi$ arises only under a local-to-unity parametrization $\rho_T = 1-c_\rho/T$; that extension is outside the scope of this paper, which takes $\rho$ as the fixed steady-state log-linearization constant of \citet{Campbell1988}.

\medskip\noindent\textit{Part (iii): Adjusted test validity.}
For the detrended log-price implementation, Definition~\ref{def:adj_test} uses the full adjustment
\[
y_t^{\mathrm{adj}}=y_t-\sum_{s=1}^t\hat\delta_s-\hat{\mathcal T}_t.
\]
Hence the conclusion follows from Proposition~\ref{prop:adj_size}: after subtracting both the cumulative technology shocks and the present-value term, the adjusted series has the same first-order asymptotics as the technology-free benchmark. For the log price-dividend-ratio implementation, Definition~\ref{def:adj_test} subtracts only $\hat{\mathcal T}_t$, and Proposition~\ref{prop:adj_size} yields $y_t^{\mathrm{adj}}=C+o_p(1)$ under $b_t=0$. This proves the implementation-specific validity claim. \qed

\subsection*{Proof of Lemma~\ref{lem:T_recursion} (Recursion of $\Tt$)}

By definition,
\begin{equation*}
    \Ttm = \sum_{j=0}^{\infty} \rho^{j}\, \delta_{t+j} = \delta_t + \sum_{j=1}^{\infty} \rho^{j}\, \delta_{t+j}.
\end{equation*}
Separating the first term and re-indexing:
\begin{equation*}
    \Ttm = \delta_t + \rho \sum_{j=0}^{\infty} \rho^{j}\, \delta_{t+1+j} = \delta_t + \rho\, \Tt.
\end{equation*}
Solving for $\Tt$:
\begin{equation*}
    \Tt = \rho^{-1}\, \Ttm - \rho^{-1}\, \delta_t. 
\end{equation*}

\qed

\subsection*{Proof of Lemma~\ref{lem:T_convex_verify} (Sufficient Conditions for Assumption~\ref{ass:T_convexity})}

\noindent The argument relies on the identity
\begin{equation}
	\Delta^{2}\mathcal{T}_t \;=\; \sum_{j=1}^{\infty} \rho^{\,j-1}\,\Delta^{2}\delta_{t+j},
	\label{eq:D2T_identity}
\end{equation}
which was derived in the proof of Proposition~\ref{prop:explosive} above. Hence $\mathcal{T}_t$ is (strictly) convex at $t$ iff the $\rho$-weighted sum of future second differences of $\delta$ is (strictly) positive.

\emph{Case (i): Uniformly convex smooth ramp-up.} Suppose $g\in C^{2}[0,\tau]$ with $m := \inf_{x\in[0,\tau]} g''(x) > 0$. For early ramp-up indices, the mean-value theorem applied twice gives $\Delta^{2}\delta_{t+j} = g''(\xi_{t,j}) \ge m$ for some $\xi_{t,j}\in(0,\tau)$ whenever $t+j$ remains on the ramp-up branch. Because $\delta$ has bounded support, the remaining terms in \eqref{eq:D2T_identity} form a geometrically discounted tail. For $t$ sufficiently close to $T_1$, the positive ramp-up contribution therefore dominates that discounted tail, so $\Delta^{2}\mathcal{T}_t>0$. Hence Assumption~\ref{ass:T_convexity} holds on some non-empty initial subinterval $[T_1,T_1+\tau^\star)$ with $\tau^\star\in(0,\tau]$.

\emph{Case (ii): Smooth profiles with a convex left tail (Gaussian, Beta with $\alpha,\beta \geq 2$, logistic-S).} Suppose $g\in C^{2}[0,\tau]$ has $g''(0) \geq m_0 > 0$. By continuity of $g''$, there exists $\bar\tau \in (0,\tau]$ such that $g''(x) \geq m_0/2 > 0$ on $[0,\bar\tau]$. Applying the argument of Case~(i) on this left-tail subinterval yields $\Delta^{2}\mathcal{T}_t>0$ on a non-empty initial subinterval $[T_1,T_1+\tau^\star)$ for some $\tau^\star\in(0,\bar\tau]$. This establishes Assumption~\ref{ass:T_convexity} for all smooth profiles whose second derivative is strictly positive in a right neighbourhood of the origin.

\emph{Remark on the triangular profile.} The piecewise-linear triangular profile $\delta_t = \dmax\,g_\triangle(t-T_1)$, while useful as a benchmark in the Monte Carlo design, does \emph{not} belong to $C^{2}$ and fails Assumption~\ref{ass:T_convexity} on the ramp-up. Direct computation with $\Delta^{2}\delta_s$ concentrated at the three kinks $\{T_1+1, T_1+\tau+1, T_2+1\}$ shows that for interior ramp-up dates $t \in \{T_1+1,\dots,T_1+\tau-1\}$,
\[
\Delta^{2}\mathcal{T}_t \;=\; \dmax\,\rho^{T_1+\tau-t}\!\left[-\tfrac{1}{\tau} + \tfrac{\rho^{T_2-T_1-\tau}-1}{T_2-T_1-\tau}\right] \;<\; 0,
\]
since $\rho \in (0,1)$ makes both bracketed terms negative. Economically, the triangular profile's sharp slope discontinuity at the peak concentrates all curvature at a single date, overwhelming the positive second-difference atom at $T_1+1$ under $\rho$-discounting of the later negative atom. For this profile, the price-dividend-ratio theory of Proposition~\ref{prop:pd_limit} does not apply through Assumption~\ref{ass:T_convexity}; the detrended-price theory of Theorem~\ref{thm:contaminated_limit}, which relies on convexity of the integrated hump $g$ (not of $\mathcal{T}_t$), remains valid for the triangular profile. \qed

\subsection*{Proof of Theorem~\ref{thm:size_distortion} (Size Distortion)}

We prove each part of the theorem.

\medskip\noindent\textit{Part (i): $\psi_T > \alpha$ for all $T$ sufficiently large.}
This is established as a corollary of Theorem~\ref{thm:contaminated_limit}(e) (for detrended log prices) and Proposition~\ref{prop:pd_limit} (for the log price-dividend ratio), whose detailed proofs are given above. In both cases, the technology shock introduces a deterministic component that ultimately drives the GSADF statistic above its null benchmark, so the rejection probability exceeds $\alpha$.

\medskip\noindent\textit{Part (ii): $\psi_T \to 1$ along sequences with $\sqrt{T}\,\dmax(T)\to\infty$.}
For detrended log prices, let $c_\delta(T):=\sqrt{T}\,\dmax(T)\to\infty$. Fix a window $(r_1^\star,r_2^\star)\subset(\lambda_1,\lambda_1+\kappa(\lambda_2-\lambda_1))$ with $r_w^\star\ge r_0$. By the Sub-Lemma in the proof of Proposition~\ref{prop:pd_limit}, applied to $q=g$,
\[
\mathcal B^\star
:=
\frac{\int_{r_1^\star}^{r_2^\star}\tilde g(r)\,g'(r)\,dr}
{\sigma\big(\int_{r_1^\star}^{r_2^\star}\tilde g(r)^2\,dr\big)^{1/2}}
>0.
\]
On this window, the corresponding Dickey--Fuller functional satisfies
\[
\mathcal F(r_1^\star,r_2^\star;c_\delta(T))
=
c_\delta(T)\,\mathcal B^\star + O_p(1),
\]
so $\mathcal F(r_1^\star,r_2^\star;c_\delta(T))\to\infty$ in probability. Since the GSADF supremum dominates this fixed-window statistic, Theorem~\ref{thm:contaminated_limit}(e) yields $\psi_T\to 1$ for the detrended-price implementation.

For the log price-dividend ratio, Proposition~\ref{prop:pd_limit} gives the analogous one-window divergence under the same signal-growth condition, up to the fixed scale factor $\sigma_u>0$. Hence the associated BSADF/GSADF statistic also diverges, and $\psi_T\to 1$.

\medskip\noindent\textit{Remark on comparative statics in $\rho$.}
For the log price-dividend ratio, $f_t - d_t = C + \Tt$ where $\Tt = \sum_{j=0}^{\infty} \rho^j \delta_{t+1+j}$. Since $\delta_{t+1+j} \geq 0$ for $j$ in the adoption window,
\[
\frac{\partial \Tt}{\partial \rho}
=
\sum_{j=1}^{\infty} j\, \rho^{j-1}\, \delta_{t+1+j}
>
0
\]
during the adoption phase. A higher $\rho$ therefore amplifies the level of $\Tt$ and strengthens the local signal in the price-dividend-ratio implementation. For detrended log prices, the effect is more nuanced because the detrended hump $g(r)$ depends both on amplitude and on temporal concentration; a lower $\rho$ can sharpen the hump enough to offset the lower level.

\begin{assumption}[Self-Similar Duration Family]\label{ass:duration_family}
For the duration comparative static, the normalized local shape belongs to a dilation family
\[
h_L(r)=h_0\!\left(\frac{r-\lambda_1}{L}\right)\mathbf{1}\{\lambda_1\le r\le \lambda_1+L\},
\qquad L\in(0,1-\lambda_1],
\]
where $h_0:[0,1]\to\mathbb{R}_+$ is continuous, $\max_{u\in[0,1]}h_0(u)=1$, strictly increasing on $[0,\kappa]$, and strictly decreasing on $[\kappa,1]$.
\end{assumption}

\medskip\noindent\textit{Remark on comparative statics in adoption duration.}
Under Assumption~\ref{ass:duration_family}, increasing the duration parameter $L$ stretches the same normalized shape $h_0$ over a longer interval via $h_L(r)=h_0((r-\lambda_1)/L)\mathbf 1\{\lambda_1\le r\le \lambda_1+L\}$. On ramp-up windows this enlarges the span over which the deterministic component departs from linearity, so once the signal is sufficiently strong the associated ADF functional becomes more likely to dominate the stochastic term. This is the intended comparative-static sense in which a longer adoption window raises the induced rejection probability.

\medskip\noindent\textit{Part (v): $\psi_T \to 1$ as $\dmax \to \infty$ (price-dividend ratio).}
This follows directly from Proposition~\ref{prop:pd_limit}. By Proposition~\ref{prop:pd_limit} (as patched in this revision), under signal-dominance $\dmax(T)^2/(T\sigma_u^2)\to\infty$, the $t$-statistic on the price-dividend ratio diverges, and the GSADF rejects with probability $\to 1$. \qed

\begin{lemma}[Augmentation lags negligible under i.i.d.\ innovations]\label{lem:adf_augmentation}
Under Assumptions~\ref{ass:dividend}, \ref{ass:tech_shock}, \ref{ass:const_returns}, and \ref{ass:local_framework}, with i.i.d.\ $\varepsilon_t$ and any fixed $K \geq 0$, the $K$-augmented Dickey--Fuller $t$-statistic $\mathrm{ADF}_K(r_1,r_2)$ and the $K=0$ statistic $\mathrm{DF}(r_1,r_2)$ satisfy
\[
\mathrm{ADF}_K(r_1,r_2)-\mathrm{DF}(r_1,r_2)=o_p(1)
\]
uniformly over $(r_1,r_2)\in \mathcal R=\{(r_1,r_2): r_2-r_1\ge r_0,\ r_2\in[r_0,1],\ r_1\in[0,r_2-r_0]\}$.
\end{lemma}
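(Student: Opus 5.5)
The approach is to work window by window. The $K$-augmented regression differs from the $K=0$ regression only through the extra regressors $\Delta y_{t-1},\dots,\Delta y_{t-K}$. By Frisch--Waugh--Lovell, $\mathrm{ADF}_K(r_1,r_2)$ is the $t$-ratio for $\beta$ after first projecting both $\Delta y_t$ and $y_{t-1}$ (each demeaned) on the lagged differences. The lemma therefore reduces to two facts, each holding uniformly over $\mathcal R$.

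\emph{(a) Partialling out the lags changes the regressor moments only negligibly.} After normalisation, the relevant moments are $T^{-2}\sum y_{t-1}^2$, $T^{-1}\sum y_{t-1}\Delta y_t$ and the residual variance.

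\emph{(b) The lag coefficients have probability limit zero.}

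Under Assumption~\ref{ass:local_framework} and Theorem~\ref{thm:contaminated_limit}(a), the detrended series satisfies
\[
\Delta \tilde f_t=\varepsilon_t+T^{-1/2}c_\delta\,(h(t/T)-\alpha_1)+O(T^{-1}),
\]
uniformly in $t$. The $\mathcal T_t$ contribution enters at order $\dmax(T)$ times bounded increments, so it is $O(T^{-1/2})$ as well. The lagged differences are therefore i.i.d.\ noise plus a uniformly $O(T^{-1/2})$ deterministic perturbation.

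The plan proceeds in four steps, each established uniformly over windows.
\begin{enumerate}[label=(\arabic*)]
\item Show that $T^{-1}\sum \Delta y_{t-i}\Delta y_{t-j}\to r_w\sigma^2\mathbf 1\{i=j\}$ for $1\le i,j\le K$. This uses a uniform LLN for partial sums $T^{-1}\sum_{t\le\lfloor Tr\rfloor}\varepsilon_{t-i}\varepsilon_{t-j}$, which holds because the partial-sum process converges in $D[0,1]$; differences of it at $r_1,r_2$ then converge uniformly. The deterministic cross terms are $O(T^{-1/2})$.
\item Show that $T^{-3/2}\sum y_{t-1}\Delta y_{t-j}=O_p(T^{-1/2})$. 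By summation by parts, $\sum y_{t-1}\varepsilon_{t-j}$ equals a term in $y^2$ minus a lag-correction of order $T$; more simply, it is $O_p(T)$ by the standard $\int W\,dW$ argument. The drift part contributes $T^{1/2}\cdot T\cdot T^{-1/2}=O(T)$. Hence cross-moments with $y_{t-1}$ are $O_p(T)$, against $\sum y_{t-1}^2$ of order $T^2$.
\item Show that $T^{-1}\sum \Delta y_t\Delta y_{t-j}=o_p(1)$, since $\varepsilon$ is i.i.d.\ and the drift terms are $O(T^{-1/2})$. It follows that $\hat\gamma_j=o_p(1)$ uniformly.
\item Substitute these into the FWL expressions. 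The projected $y_{t-1}$ has $T^{-2}\sum(\cdot)^2$ equal to the unaugmented value plus $O_p(T^{-1})$. The numerator $T^{-1}\sum$ changes by (coefficient of $y_{t-1}$ on the lags, which is $O_p(1)$) times $T^{-1}\sum\Delta y_{t-j}\Delta y_t=o_p(1)$. The residual variance changes by $\sum_j\hat\gamma_j^2\sigma^2=o_p(1)$. By continuous mapping, the $t$-ratios differ by $o_p(1)$.
\end{enumerate}

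The main obstacle is making (1)--(3) uniform over $(r_1,r_2)\in\mathcal R$ rather than pointwise. I would handle this by writing every window sample moment as a difference $M_T(r_2)-M_T(r_1)$ of a single partial-sum process $M_T(\cdot)$ and proving $\sup_r|M_T(r)-M(r)|=o_p(1)$. For the $\varepsilon$-products, this follows from a functional LLN/martingale maximal inequality (Doob) applied to $\varepsilon_{t-i}\varepsilon_{t-j}-\sigma^2\mathbf 1\{i=j\}$, which is a martingale difference for $i\ne j$. The constraint $r_w\ge r_0>0$ keeps the denominators bounded away from zero, so the ratio map is uniformly continuous on the relevant set.
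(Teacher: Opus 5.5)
Your proposal is correct and follows essentially the same route as the paper: a Frisch--Waugh--Lovell decomposition in which the augmentation coefficients vanish, so partialling out the lagged differences changes $T^{-2}\sum y_{t-1}^2$, $T^{-1}\sum y_{t-1}\Delta y_t$ and the residual variance only by $o_p(1)$, with uniformity over $\mathcal R$ tied to the window-length floor $r_0$. You supply more detail than the paper, which simply asserts $\hat\gamma_j=O_p(T^{-1/2})$ and ``uniform tightness,'' whereas you show $\hat\gamma_j=o_p(1)$ suffices and make uniformity explicit via partial-sum processes and maximal inequalities (one small correction: the $y_{t-1}$ denominator stays bounded away from zero over $\mathcal R$ because the limit process is a.s.\ non-degenerate on every window of length at least $r_0$ and $\mathcal R$ is compact, not because of $r_w\ge r_0$ alone).
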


\begin{proof}
Because $r_2-r_1\ge r_0>0$, the admissible-window design matrices are uniformly well behaved over $\mathcal R$. Under i.i.d.\ innovations with finite variance, the lagged differences $\Delta y_{t-j}$ are stationary, mean-zero, and have finite second moments, so for fixed $K$ the OLS coefficients on the augmentation lags are $O_p(T^{-1/2})$ uniformly on $\mathcal R$. By the standard Frisch--Waugh decomposition, their contribution to the unit-root coefficient is $o_p(T^{-1})$, while their effect on residual-variance estimation is $o_p(1)$. Uniform tightness of the lag-coefficient sequence therefore yields $\mathrm{ADF}_K(r_1,r_2)-\mathrm{DF}(r_1,r_2)=o_p(1)$ uniformly on $\mathcal R$.
\end{proof}

\subsection*{Proof of Theorem~\ref{thm:contaminated_limit} (Contaminated Brownian Motion Limit)}

\medskip\noindent\textit{Part (a): Weak convergence.}
Under Assumption~\ref{ass:local_framework} with $\dmax(T) = c_\delta/\sqrt{T}$, the technology present-value term satisfies $\mathcal{T}_{\lfloor Tr \rfloor, T} = \sum_{j=0}^\infty \rho^j \delta_{\lfloor Tr \rfloor+1+j,T} \leq \frac{1}{1-\rho}\dmax(T) = O(T^{-1/2})$, which is asymptotically negligible. In particular, $T^{-1/2}\,\mathcal{T}_{\lfloor Tr \rfloor, T} = O(T^{-1}) \to 0$ uniformly in $r \in [0,1]$, so this term contributes nothing to the Functional Central Limit Theorem limit and is absorbed into the $o_p(1)$ remainder of the scaled process. The dominant deterministic component is the cumulated shock $M_t = \sum_{s=1}^t \delta_{s,T}$:
\begin{equation*}
M_{\lfloor Tr \rfloor} = \sum_{s=1}^{\lfloor Tr \rfloor} \frac{c_\delta}{\sqrt{T}}\, h(s/T)\, \mathbf{1}\{\lambda_1 \leq s/T \leq \lambda_2\} = \sqrt{T}\, c_\delta \int_0^r h(s)\, \mathbf{1}\{\lambda_1 \leq s \leq \lambda_2\}\, ds + o(\sqrt{T})
\end{equation*}
by Riemann sum approximation. Writing $G(r) = \int_0^r h(s)\, \mathbf{1}\{\lambda_1 \leq s \leq \lambda_2\}\, ds$, the detrended fundamental price is
\[
\tilde{f}_t = \sum_{s=1}^t \varepsilon_s + m_{t,T},
\]
where $m_{t,T}$ denotes the linearly detrended version of $M_t + \mathcal{T}_{t,T}$. Let $S_t:=\sum_{s=1}^t \varepsilon_s$ and $G_T(r):=c_\delta^{-1}T^{-1/2}m_{\lfloor Tr\rfloor,T}$. By Slutsky's theorem applied to $(T^{-1/2}S_{\lfloor Tr\rfloor},\, c_\delta G_T(r))$, the first component converges weakly in $C[0,1]$ to $\sigma W(r)$ by Donsker's FCLT. The second converges uniformly to $c_\delta g(r)$ by continuity of $g$ and Riemann-sum approximation, since the sample linear-detrending coefficients $(\hat\alpha_{0,T},\hat\alpha_{1,T})\to(\alpha_0,\alpha_1)$ by continuity of the $L^2$ projection defining $(\alpha_0,\alpha_1)$ in Theorem~\ref{thm:contaminated_limit}. Joint convergence in $C[0,1]$ therefore follows from Slutsky, and
\[
T^{-1/2} \tilde{f}_{\lfloor Tr \rfloor} \Rightarrow \sigma W(r) + c_\delta\, g(r) = Z_{c_\delta}(r). 
\]

\medskip\noindent\textit{Part (b): Augmented Dickey--Fuller functional limit.}
The augmented Dickey--Fuller regression on the subsample $[\lfloor r_1 T \rfloor, \lfloor r_2 T \rfloor]$ takes the form \eqref{eq:ADF_reg} with $y_t = \tilde{f}_t$. By Lemma~\ref{lem:adf_augmentation}, it suffices to derive the limit for the $K=0$ Dickey--Fuller $t$-statistic; the fixed-$K$ augmented version differs by $o_p(1)$ uniformly on $\mathcal R$. Under standard unit root asymptotics \citep{Phillips1987}, the sample moments of the detrended series converge to functionals of the limit process:
\begin{align*}
T^{-2} \sum_{t=\lfloor r_1 T \rfloor}^{\lfloor r_2 T \rfloor} \tilde{f}_{t-1}^2 &\Rightarrow \int_{r_1}^{r_2} \tilde{Z}_{c_\delta}^2(r)\, dr, \\
T^{-1} \sum_{t=\lfloor r_1 T \rfloor}^{\lfloor r_2 T \rfloor} \tilde{f}_{t-1}\, \Delta \tilde{f}_t &\Rightarrow \int_{r_1}^{r_2} \tilde{Z}_{c_\delta}(r)\, dZ_{c_\delta}(r),
\end{align*}
where $\tilde{Z}_{c_\delta}(r)$ denotes the demeaned process on $[r_1, r_2]$. To evaluate the stochastic integral, write $Z := Z_{c_\delta}$ and $\bar{Z} := (r_2-r_1)^{-1}\int_{r_1}^{r_2} Z(s)\,ds$, so that $\tilde{Z}(r) = Z(r) - \bar{Z}$. By It\^{o}'s lemma applied to $Z(r)^2$ (the quadratic-variation term comes only from the Brownian component $\sigma W$),
\begin{equation*}
\tfrac12\, d\!\left[Z(r)^2\right] \;=\; Z(r)\,dZ(r) \;+\; \tfrac12\,\sigma^2\,dr,
\end{equation*}
so
\begin{equation*}
\int_{r_1}^{r_2} Z(r)\, dZ(r) \;=\; \tfrac12\!\left[Z(r_2)^2 - Z(r_1)^2\right] \;-\; \tfrac12\,\sigma^2\,(r_2-r_1).
\end{equation*}
Subtracting the demeaning term and using $\int_{r_1}^{r_2} dZ(r) = Z(r_2) - Z(r_1)$ gives
\begin{equation*}
\int_{r_1}^{r_2} \tilde{Z}(r)\, dZ(r) \;=\; \tfrac12\!\left[Z(r_2)^2 - Z(r_1)^2\right] \;-\; \tfrac12\,\sigma^2\,(r_2-r_1) \;-\; \bar{Z}\,[Z(r_2) - Z(r_1)].
\end{equation*}
Residual variance $\hat{\sigma}_{r_1,r_2}^2 \to_p \sigma^2$ by standard ergodicity arguments under i.i.d.\ $\varepsilon_t$. The denominator in \eqref{eq:contaminated_F} is strictly positive almost surely, since $Z_{c_\delta}(r)=\sigma W(r)+c_\delta g(r)$ is almost surely non-affine on every admissible window and Brownian motion is almost surely not linear on any non-degenerate interval. Writing $r_w=r_2-r_1$ and substituting into the Dickey--Fuller ratio gives the representation in \eqref{eq:contaminated_F}, establishing part~(b).

\medskip\noindent\textit{Part (c): Generalized supremum augmented Dickey--Fuller limit.}
Define $H:C[0,1]\to\mathbb R$ by
\[
H(z)=\sup_{(r_1,r_2)\in\mathcal R} F_z(r_1,r_2),
\]
where $F_z(r_1,r_2)$ is the functional in \eqref{eq:contaminated_F} with $z$ in place of $Z_{c_\delta}$.

\smallskip\noindent\emph{Step 1 (compactness).} The set
\[
\mathcal R=\{(r_1,r_2): r_2-r_1\ge r_0,\ r_2\in[r_0,1],\ r_1\in[0,r_2-r_0]\}
\]
is closed and bounded in $\mathbb R^2$, hence compact.

\smallskip\noindent\emph{Step 2 (joint continuity of $F_z$).} Fix $z\in C[0,1]$ that is non-affine on every admissible window. The numerator and denominator in \eqref{eq:contaminated_F} are continuous in $(r_1,r_2)$ because they are formed from endpoint evaluations and integrals of continuous functions over compact intervals whose endpoints vary continuously. The denominator is strictly positive on $\mathcal R$ for such $z$, and by compactness it is bounded away from zero. Hence $F_z$ is jointly continuous on $\mathcal R$.

\smallskip\noindent\emph{Step 3 (continuity of $H$ at non-affine paths).} If $z_n\to z$ uniformly and $z$ is non-affine on every admissible window, then the endpoint and integral terms defining $F_{z_n}$ converge uniformly to those defining $F_z$ on $\mathcal R$. Because the denominator for $z$ is uniformly bounded away from zero, $F_{z_n}\to F_z$ uniformly on $\mathcal R$, and therefore $H(z_n)\to H(z)$.

\smallskip\noindent\emph{Step 4 (a.s. non-affine sample paths).} The limit process is $Z_{c_\delta}(r)=\sigma W(r)+c_\delta g(r)$. Since $h$ is continuous, $g$ is $C^1$. Brownian paths are almost surely nowhere differentiable and hence almost surely non-affine on every non-degenerate interval; adding the $C^1$ path $c_\delta g$ preserves this property. Thus $Z_{c_\delta}$ is almost surely non-affine on every admissible window.

\smallskip\noindent\emph{Step 5 (continuous mapping).} By Part~(a), $T^{-1/2}\tilde f_{\lfloor T\cdot\rfloor}\Rightarrow Z_{c_\delta}(\cdot)$ in $C[0,1]$. By Steps 1--4, $H$ is continuous at $Z_{c_\delta}$ almost surely. The continuous mapping theorem therefore yields
\[
\gsadf_T
=
H\!\left(T^{-1/2}\tilde f_{\lfloor T\cdot\rfloor}\right)
\Rightarrow
H(Z_{c_\delta})
=
\sup_{(r_1,r_2)\in\mathcal R}\mathcal F(r_1,r_2;c_\delta).
\]

\medskip\noindent\textit{Part (d): Size distortion.}
\emph{Step 1 (strict convexity of $g$ on a ramp-up window).} On the ramp-up sub-interval $(\lambda_1,\lambda_1+\kappa(\lambda_2-\lambda_1))$, we have $G'(r)=h(r)$, and Assumption~\ref{ass:local_framework} makes $h$ strictly increasing there. Hence $G$ is strictly convex on that interval. Subtracting the affine function $\alpha_0+\alpha_1 r$ preserves strict convexity, so $g$ is strictly convex there.

\smallskip\noindent\emph{Step 2 (one-window pathwise divergence).} Fix any window $(r_1^\star,r_2^\star)\subset(\lambda_1,\lambda_1+\kappa(\lambda_2-\lambda_1))$ with $r_w^\star\ge r_0$. By the Sub-Lemma in the proof of Proposition~\ref{prop:pd_limit}, applied to $q=g$, we have
\[
\int_{r_1^\star}^{r_2^\star}\tilde g(r)\,g'(r)\,dr>0.
\]
Hence the deterministic term in the numerator of $\mathcal F(r_1^\star,r_2^\star;c)$ is positive and of order $c^2$, while the denominator is of order $c$ because it is the square root of a quadratic form in $\sigma W + cg$. Therefore $\mathcal F(r_1^\star,r_2^\star;c)\to +\infty$ in probability as $c\to\infty$.

\smallskip\noindent\emph{Step 3 (supremum divergence).} Since
\[
\sup_{(r_1,r_2)\in\mathcal R}\mathcal F(r_1,r_2;c)\ge \mathcal F(r_1^\star,r_2^\star;c),
\]
we also have $\sup_{(r_1,r_2)\in\mathcal R}\mathcal F(r_1,r_2;c)\to\infty$ in probability as $c\to\infty$.

\smallskip\noindent\emph{Step 4 (existence of $c_\delta^\star$).} Define
\[
\Psi(c):=P\!\left(\sup_{(r_1,r_2)\in\mathcal R}\mathcal F(r_1,r_2;c)>cv_\alpha\right).
\]
Step 3 implies $\Psi(c)\to 1$ as $c\to\infty$. Hence there exists at least one $c_\delta^\star>0$ such that $\Psi(c_\delta^\star)>\alpha$. By the continuity assumption at this $c_\delta^\star$, the threshold is well defined; and since the deterministic component shifts the statistic to the right as $c$ increases, $\Psi$ is nondecreasing, so $\Psi(c)\ge \Psi(c_\delta^\star)>\alpha$ for all $c>c_\delta^\star$.

\smallskip\noindent\emph{Step 5 (conclusion).} By Part~(c),
\[
\lim_{T\to\infty}P(\gsadf_T>cv_\alpha)=\Psi(c_\delta),
\]
so for every $c_\delta>c_\delta^\star$,
\[
\lim_{T\to\infty}P(\gsadf_T>cv_\alpha)>\alpha.
\]

\medskip\noindent\textit{Part (e): Divergence.}
Let $a_T:=\sqrt{T}\,\dmax(T)\to\infty$, and keep the same ramp-up window $[r_1^\star,r_2^\star]$ as in Part~(d). From Part~(a),
\[
\tilde f_t
=
\sum_{s=1}^t\varepsilon_s + m_{t,T},
\qquad
m_{t,T}=T\,\dmax(T)\,g(t/T)+o\!\big(T\,\dmax(T)\big)
\]
uniformly on that window. The stochastic partial sums satisfy $\sup_t\left|\sum_{s=1}^t\varepsilon_s\right|=O_p(\sqrt{T})$, so the deterministic component dominates because $a_T\to\infty$.

Consider first the $K=0$ ADF regression on $[r_1^\star,r_2^\star]$, and write
\[
\widetilde g(r)
=
g(r)-\frac{1}{r_w}\int_{r_1^\star}^{r_2^\star}g(s)\,ds.
\]
The same Riemann-sum calculations used in Proposition~\ref{prop:pd_limit} give
\begin{align*}
\sum ( \tilde f_{t-1}-\bar{\tilde f}_{-1})^2
&=
T^3\dmax(T)^2 \int_{r_1^\star}^{r_2^\star}\widetilde g(r)^2\,dr
+O_p(T^2 a_T)+O_p(T^2),\\
\sum (\tilde f_{t-1}-\bar{\tilde f}_{-1})(\Delta\tilde f_t-\overline{\Delta\tilde f})
&=
T^2\dmax(T)^2 \int_{r_1^\star}^{r_2^\star}\widetilde g(r)g'(r)\,dr
+O_p(T a_T)+O_p(T).
\end{align*}
Since $a_T\to\infty$, the deterministic terms dominate, and therefore
\[
T\,\hat\beta_{r_1^\star,r_2^\star}\plim
B^\star
:=
\frac{\int_{r_1^\star}^{r_2^\star}\widetilde g(r)g'(r)\,dr}
{\int_{r_1^\star}^{r_2^\star}\widetilde g(r)^2\,dr}
>0.
\]
Moreover,
\[
\mathrm{se}(\hat\beta_{r_1^\star,r_2^\star})
=
O_p\!\left(\frac{1}{T^{3/2}\dmax(T)}\right),
\]
so
\[
t_{\hat\beta,r_1^\star,r_2^\star}
=
\frac{\hat\beta_{r_1^\star,r_2^\star}}{\mathrm{se}(\hat\beta_{r_1^\star,r_2^\star})}
=
a_T\,\frac{B^\star}{\sigma}+o_p(a_T)
\;\overset{p}{\longrightarrow}\;
\infty.
\]
By Lemma~\ref{lem:adf_augmentation}, adding any fixed number $K$ of augmentation lags changes the $t$-statistic only by $o_p(1)$. Since $\gsadf_T$ is the supremum over admissible windows and includes $[r_1^\star,r_2^\star]$, it follows that
\[
\gsadf_T\overset{p}{\longrightarrow}\infty,
\qquad
P(\gsadf_T>cv_\alpha)\to 1.
\] \qed

%

\clearpage
\clearpage
\section{Monte Carlo Simulations}
\label{sec:simulation}

\subsection{Design}
\label{subsec:mc_design}

We assess the finite-sample size properties of the PSY test under the technology-augmented data-generating process. The baseline calibration (Table~\ref{tab:calibration}) uses parameter values consistent with postwar U.S.\ equity market data: $\rho = 0.95$, $c = 0.02$, $\rbar = 0.06$, $\sigma_\varepsilon = 0.10$, $T = 300$, with a triangular technology shock (Remark~\ref{rmk:triangular}) on a 120-period adoption window. The PSY minimum window fraction follows $r_0 = 0.01 + 1.8/\sqrt{T}$, with $K=1$ lag and 2,000 Monte Carlo critical-value replications.


We conduct three experiments ($M = 200$ replications each): (A)~rejection rate vs.\ $\dmax \in \{0, 0.02, \ldots, 0.20\}$ for detrended log prices ($\rho = 0.95$); (B)~the same for the log price-dividend ratio with first-order autoregressive (AR(1)) noise ($\phi = 0.95$, $\sigma_u = 0.15\sqrt{1-\phi^2}$); (C)~rejection rate vs.\ $\rho \in \{0.90, \ldots, 0.99\}$ for detrended log prices ($\dmax = 0.15$). Each replication computes both the unadjusted rejection indicator and the technology-adjusted indicator (PSY applied after removing $\Tt$).

\subsection{Illustrative Example}
\label{subsec:illustration}

Figure~\ref{fig:components} illustrates the mechanism with a single simulated path ($\dmax = 0.12$). The technology shock $\delta_t$ is hump-shaped; the present-value term $\Tt$ rises sharply during adoption; the implied drift $\mu_t^*$ inherits this shape, generating a permanent price-level increase. Applying PSY to this path (Figure~\ref{fig:psy_with_tech}), the BSADF statistic exceeds the 5\% critical value during the adoption phase (GSADF statistic $= 3.890 > 2.148$), spuriously detecting a ``bubble'' covering 79\% of the technology window. After removing the technology component (Figure~\ref{fig:psy_no_tech}), the date-stamping statistic stays well below the critical value throughout (test statistic $= 1.120$).


\subsection{Results}
\label{subsec:mc_results}

Table~\ref{tab:mc_results} reports the results. Panel~A (detrended log prices): the rejection rate rises from 4.5\% at $\dmax = 0$ to 34.0\% at $\dmax = 0.08$ and 100\% at $\dmax = 0.20$, while the adjusted counterfactual remains at 4.5\% throughout. Panel~B (price-dividend ratio): distortion is more severe, reaching 64.0\% at $\dmax = 0.08$ and 100\% from $\dmax = 0.14$, because the price-dividend ratio is dominated by $\Tt$. Panel~C (varying $\rho$): rejection rates decline from 95.0\% at $\rho = 0.90$ to 42.0\% at $\rho = 0.99$, consistent with the comparative-static remark following Theorem~\ref{thm:size_distortion} (Comparative Statics in Discount Factor). Figure~\ref{fig:mc_all} presents these results graphically.



\paragraph{Calibration plausibility.}
Is $\dmax = 0.15$ empirically reasonable? During the late-1990s IT boom, excess real earnings growth was roughly 7\% annually ($\dmax \approx 0.07$). For AI-era Magnificent Seven stocks, earnings growth of 30--50\% in 2023--2024 implies $\dmax \in [0.10, 0.20]$. The 120-period adoption window matches observed GPT cycle durations. Our calibration is therefore conservative.

\subsection{Robustness to Alternative Technology Shock Shapes}
\label{subsec:robustness_shapes}

The baseline Monte Carlo experiments use the triangular specification of Remark~\ref{rmk:triangular}. A natural concern is whether the size distortion documented in Table~\ref{tab:mc_results} is an artifact of this functional form or, as our theory predicts, a general consequence of hump-shaped technology shocks. To address this, we repeat Experiment~(A) under four alternative specifications for $\delta_t$, each satisfying the general conditions of Assumption~\ref{ass:tech_shock} but differing in the sharpness of the peak, the symmetry of the adoption and maturation phases, and the smoothness of the profile.

The four specifications are:
\begin{enumerate}[label=(\roman*)]
    \item Triangular (baseline): The piecewise-linear specification of equation~\eqref{eq:delta_triangular}, with a sharp peak at $T_1 + \tau$.
    \item Gaussian: A smooth, symmetric bell-shaped profile $\delta_t = \dmax \cdot \exp\!\big(-\tfrac{1}{2}\big(\tfrac{t - (T_1+\tau)}{\sigma_g}\big)^2\big)$ for $t \in [T_1, T_2]$ and zero otherwise, with $\sigma_g = (T_2 - T_1)/4$ chosen so that the tails are negligible at the boundaries.
    \item Beta: A rescaled Beta density on $[T_1, T_2]$ with shape parameters $(2, 5)$, producing an asymmetric profile with a relatively fast rise and slower decay. After rescaling so that $\max_t \delta_t = \dmax$, this captures the empirically common pattern of rapid initial adoption followed by gradual maturation \citep{Jovanovic2005}.
    \item Gamma-like: An asymmetric profile with fast rise and slow decay, given by $\delta_t = \dmax \cdot s \cdot \exp(-\tfrac{1}{2}(s/\tau)^2) / \max(\cdot)$ where $s = t - T_1$. This specification concentrates the peak effect earlier in the adoption window and features a heavier right tail than the Beta specification.
\end{enumerate}

Figure~\ref{fig:delta_shapes} (Online Appendix) displays the four $\delta_t$ profiles under the baseline calibration ($\dmax = 0.15$, $T_1 = 80$, $T_2 = 200$, $\tau = 30$). The profiles differ substantially in their peak timing, sharpness, and tail behavior, providing a meaningful test of the robustness of our results. Table~\ref{tab:mc_shapes} (Online Appendix) reports the Monte Carlo rejection rates for each shape specification as a function of $\dmax$, using detrended log prices and the same simulation design as Experiment~(A) ($M = 200$ replications, $T = 300$, $\rho = 0.95$).

The results are consistent across specifications. All four specifications generate substantial and monotonically increasing size distortion, confirming the theoretical prediction of Theorem~\ref{thm:size_distortion}. At a moderate calibration of $\dmax = 0.08$, the rejection rates range from 34.0\% (Triangular) to 40.0\% (Gaussian), all far exceeding the 5\% nominal level. At $\dmax = 0.12$, rejection rates range from 74.5\% (Triangular and Beta) to 80.5\% (Gaussian). At the baseline $\dmax = 0.15$, all shapes produce rejection rates above 87\%, and by $\dmax = 0.20$ all four shapes reach or exceed 99.5\%.

The ordering of rejection rates across shapes is informative. The Gaussian specification, which features a smooth and concentrated peak, produces the highest rejection rate at nearly every $\dmax$ value. This is consistent with the mechanism identified in Theorem~\ref{thm:size_distortion}: a concentrated but smooth rise in $\delta_t$ generates a pronounced locally explosive phase in $\Tt$, which is most likely to trigger spurious detection. The Gamma-like specification, which concentrates the technology shock earlier in the adoption window, produces the second-highest rates. The Triangular and Beta(2,5) specifications, which distribute the technology effect more gradually, produce the lowest rejection rates. The differences across shapes are quantitatively modest, however, confirming that the size distortion phenomenon is robust to the functional form of the technology shock.

Figure~\ref{fig:mc_shapes} (Online Appendix) displays these results graphically, with the technology-adjusted counterfactual (flat at 4.5\%) shown for reference. Crucially, the technology-adjusted counterfactual rejection rate is exactly 4.5\% for all four specifications and all $\dmax$ values, confirming that the adjustment procedure of Section~\ref{sec:correction} is robust to the functional form of the technology shock. This is expected theoretically: the adjustment removes $\Tt$ regardless of the shape of $\delta_t$, and the residual is always a random walk with constant drift.

\subsection{Stochastic Technology Uncertainty and Size Amplification}
\label{subsec:mc_stochastic}

The baseline experiments above treat the technology shock as deterministic. We now assess the quantitative implications of the stochastic extension developed in Section~\ref{subsec:stochastic_tech}, where the cumulative technology impact $\bar{\delta}$ is uncertain ex ante and agents learn about it over time through noisy signals.

\subsubsection{Design}

We simulate the stochastic data-generating process of Assumption~\ref{ass:stochastic_tech}. For each replication $m = 1, \ldots, 500$ and each value of the coefficient of variation $\mathrm{CV} = \sigma_{\bar{\delta}} / \mu_{\bar{\delta}}$, we draw $\bar{\delta}^{(m)} = \mu_{\bar{\delta}} + z^{(m)} \cdot \sigma_{\bar{\delta}}$, where $z^{(m)} \sim N(0,1)$ is a common standard normal draw held fixed across the CV grid (common random numbers). Period-by-period signals $\delta_t^{\mathrm{obs}} = \bar{\delta} \cdot g(t) + \sigma_\xi \cdot \xi_t$ generate Bayesian updating of $\hat{\delta}_t$, and the present-value technology component $\hat{\Tt} = \hat{\delta}_t \cdot G_\rho(t)$ enters the price process. The CV grid ranges from 0 (deterministic baseline) to 1.00 (prior standard deviation equal to the mean), with $\sigma_\xi = 0.005$, matching a moderate news-noise environment. Common shocks (dividend innovations $\varepsilon_t$, technology news $\xi_t$, and price-dividend noise $u_t$) are pre-generated and held constant across the entire design, ensuring that differences in rejection rates across the CV grid are driven solely by prior uncertainty about $\bar{\delta}$.

We report results for two implementations:
\begin{itemize}
  \item Panel A: Log price-dividend ratio. The price-dividend ratio is $y_t = C + \hat{\Tt} + u_t$, where $u_t$ follows a first-order autoregressive process with persistence $\phi = 0.95$ and innovation standard deviation $\sigma_u = 0.15\sqrt{1 - \phi^2}$, and the technology present-value enters as a level effect ($\dmax = 0.04$, corresponding to a deterministic baseline rejection rate of approximately 10\%).
  \item Panel B: Detrended log price. The log price is $f_t = \sum_{s=1}^t (c + \delta_s^{\mathrm{obs}} + \varepsilon_s) + C + \hat{\Tt}$, detrended by OLS ($\dmax = 0.06$, baseline rejection approximately 23\%).
\end{itemize}

\subsubsection{Results}

Table~\ref{tab:mc_stochastic} reports the results. The two panels reveal a clear contrast that quantifies the distributional dependence on $\sigma_\xi$ established in Proposition~\ref{prop:stochastic_tech}(ii) and illuminates the empirical direction of the size change.


In Panel~A, size distortion increases monotonically with prior uncertainty. The deterministic baseline rejection rate is 10.4\%; at moderate uncertainty (CV $= 0.30$), the rate rises to 16.2\%, and at high uncertainty (CV $= 1.00$), it nearly triples to 29.0\%. The amplification is substantial: moving from a deterministic to a stochastic technology specification with CV $= 0.50$ more than doubles the size distortion from 10.4\% to 23.2\%. This monotonic pattern reflects two reinforcing channels. First, by Jensen's inequality, the convexity of the rejection function $\psi(\bar{\delta})$ over the relevant range ensures that $\E[\psi(\bar{\delta})] > \psi(\E[\bar{\delta}])$: averaging over draws of $\bar{\delta}$ that produce both stronger and weaker technology shocks yields a higher average rejection rate than the deterministic case where $\bar{\delta} = \mu_{\bar{\delta}}$ with certainty. Second, the Bayesian learning channel generates martingale dynamics in $\hat{\delta}_t$ that propagate directly into the price-dividend ratio as level shifts, compounding the locally explosive pattern detected by PSY.

Panel~B shows a qualitatively different pattern. Under the detrended log price implementation, rejection rates are essentially flat across the CV grid, with a modest amplification emerging only at very high uncertainty (CV $\geq 0.75$). The contrast with Panel~A is explained by the cumulation mechanism: in the detrended log price, the technology shock enters through $\sum_{s=1}^t \delta_s^{\mathrm{obs}}$, and the cumulation smooths out the martingale fluctuations in $\hat{\delta}_t$ that drive the price-dividend ratio result. The linear detrending further attenuates any residual level shifts. The technology-adjusted rejection rates (0.8\% for the price-dividend ratio and 6.8\% for the detrended log price) confirm that the adjustment procedure eliminates size distortion under both the deterministic and stochastic specifications.

Figure~\ref{fig:mc_stochastic} displays the rejection rate as a function of CV for both implementations, with the deterministic baseline and technology-adjusted rates shown for reference. The monotonic increase in Panel~A and the flat profile in Panel~B together provide a quantitative, implementation-dependent refinement of Proposition~\ref{prop:stochastic_tech}(ii): the theorem guarantees only that the limit law depends nontrivially on $\sigma_\xi$, and the simulations show that this dependence translates into substantial size amplification through the price-dividend ratio channel---where the learning-driven martingale component enters as a direct level effect---while being largely absorbed by cumulation and linear detrending in the detrended log-price channel. The deterministic results of Section~\ref{subsec:mc_results} therefore represent a conservative lower bound on the size distortion for the price-dividend implementation when agents face genuine uncertainty about the technology's ultimate impact, precisely the empirically relevant case during general-purpose technology adoption episodes.


\subsection{Distinguishing Technology from Speculation: An Overlap Experiment}
\label{subsec:overlap_experiment}

The preceding experiments establish that PSY produces spurious rejections when fundamentals incorporate technology-driven growth, even in the complete absence of a speculative bubble. A natural follow-up question is whether the technology-adjusted PSY test identifies a genuine speculative bubble while avoiding the spurious detection generated by the technology component when both are present. We address this question through a controlled overlap experiment.

\subsubsection{Design}

We construct a data-generating process in which a technology shock and a genuine speculative bubble overlap in time:
\begin{itemize}
    \item Technology window: $[T_1, T_2] = [50, 150]$ with a Gamma-like $\delta_t$ profile, $\dmax = 0.25$ (a strong technology shock chosen to ensure robust spurious detection by the unadjusted PSY test).
    \item Bubble window: $[t_2, t_3] = [100, 200]$. The speculative bubble component $b_t$ is initialized at $b_{t_2} = 0.3$ and follows an explosive AR(1) process $b_t = \rho_{\mathrm{bub}}\, b_{t-1} + \eta_t$ with $\rho_{\mathrm{bub}} = 1.035$ and $\eta_t \sim \mathcal{N}(0, 0.10^2)$ during $[t_2, t_3]$. After $t_3$, the bubble collapses via $b_t = 0.5\, b_{t-1} + \eta_t$.
\end{itemize}

The observed price is $p_t = f_t + b_t + e_t$, where $f_t$ is the technology-augmented fundamental and $e_t \sim \mathcal{N}(0, 0.30^2)$ is observation noise. The technology-adjusted price removes both the present-value term $\Tt$ and the cumulative technology contribution to dividends: $p_t^{\mathrm{adj}} = p_t - \Tt - \sum_{s=1}^t \delta_s$. We apply PSY to the detrended log price, which is the transformation used in our empirical analysis.

The key prediction is that the unadjusted PSY test should first trigger during the technology-only phase (between $t = 50$, the start of technology window, and $t = 100$, the start of bubble window), constituting a spurious early detection, while the technology-adjusted PSY test should first trigger during or after the bubble window (a genuine detection).

\subsubsection{Results}

Figure~\ref{fig:overlap} presents the results for a representative simulation (seed = 10000, selected to illustrate the mechanism clearly). 

Panel~(A) displays the raw and technology-adjusted log prices, with the technology window (blue shading, $[50, 150]$) and bubble window (red shading, $[100, 200]$) indicated. The overlap region $[100, 150]$ is where both technology and speculative components are simultaneously active. The raw log price shows a pronounced rise during both windows, while the adjusted price is flat during the technology-only phase but rises sharply once the genuine bubble emerges.

Panel~(B) shows the BSADF statistic from the unadjusted PSY test. The statistic exceeds the 5\% critical value sequence beginning at $t = 59$, well within the technology-only phase $[50, 100]$, producing a spurious early detection during a period with no speculative activity. 

Panel~(C) shows the BSADF statistic from the technology-adjusted PSY test. The statistic remains below the critical value throughout the technology-only phase and first exceeds it at $t = 184$, well within the genuine bubble window ($[100, 200]$).

Panel~(D) provides a timeline comparison. The unadjusted test identifies ``bubble'' episodes beginning 41 periods before any speculative activity starts (at $t = 59$ versus the bubble onset at $t = 100$), conflating fundamental and speculative dynamics. The technology-adjusted test correctly restricts its first detection to the bubble window, with a first detection date 125 periods later than the unadjusted test.


This experiment illustrates the practical consequences of ignoring technology-driven fundamentals in bubble detection. The unadjusted PSY test produces a spurious early warning 41 periods before any speculative activity begins, with the first detection occurring during the technology-only phase ($t = 59 \in [50, 100]$). In contrast, the technology-adjusted test avoids the false alarm and identifies the genuine bubble with a detection lag of approximately 84 periods from its onset, within the range of detection delays typically observed in PSY applications to genuine explosive processes. The overlap design is particularly informative: because the technology and bubble windows share a 50-period overlap ($[100, 150]$), the unadjusted test cannot distinguish the source of the explosive dynamics, while the adjusted test identifies only the speculative component.

\clearpage
\clearpage
\begin{figure}
\centerline{\includegraphics[width=0.95\textwidth]{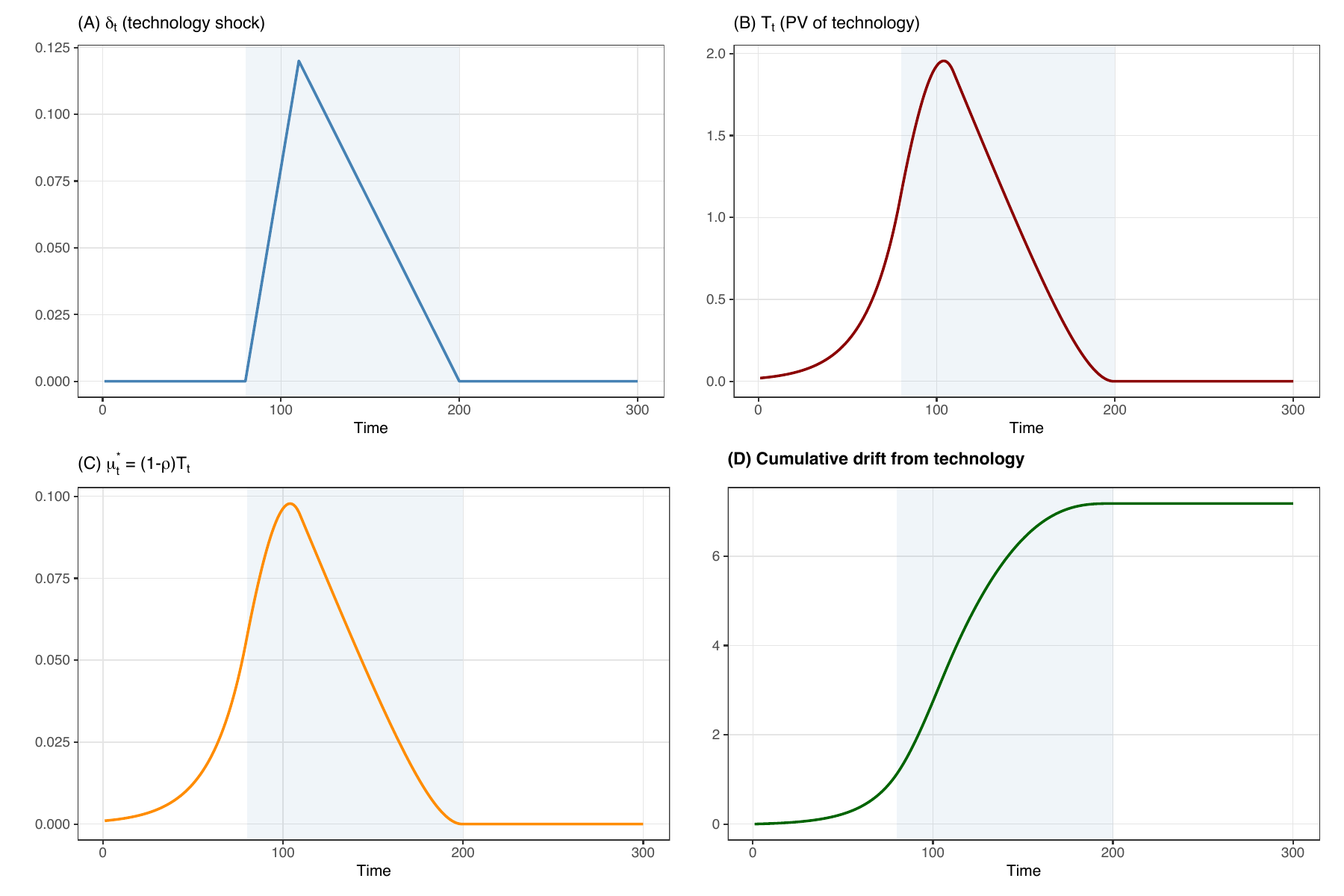}}
\noindent\caption{{\bf A technology shock with peak $\dmax = 0.12$ raises the fundamental price through the adoption window.} The top panel plots the technology shock $\delta_t$ against time, the middle panel plots the present-value technology component $\mathcal{T}_t$ against time, and the bottom panel plots the implied drift $\mu_t^*$ against time. The takeaway is that a temporary hump in technology growth produces a sustained increase in the level of the fundamental price.\label{fig:components}}
\end{figure}

\clearpage
\begin{figure}
\centerline{\includegraphics[width=0.95\textwidth]{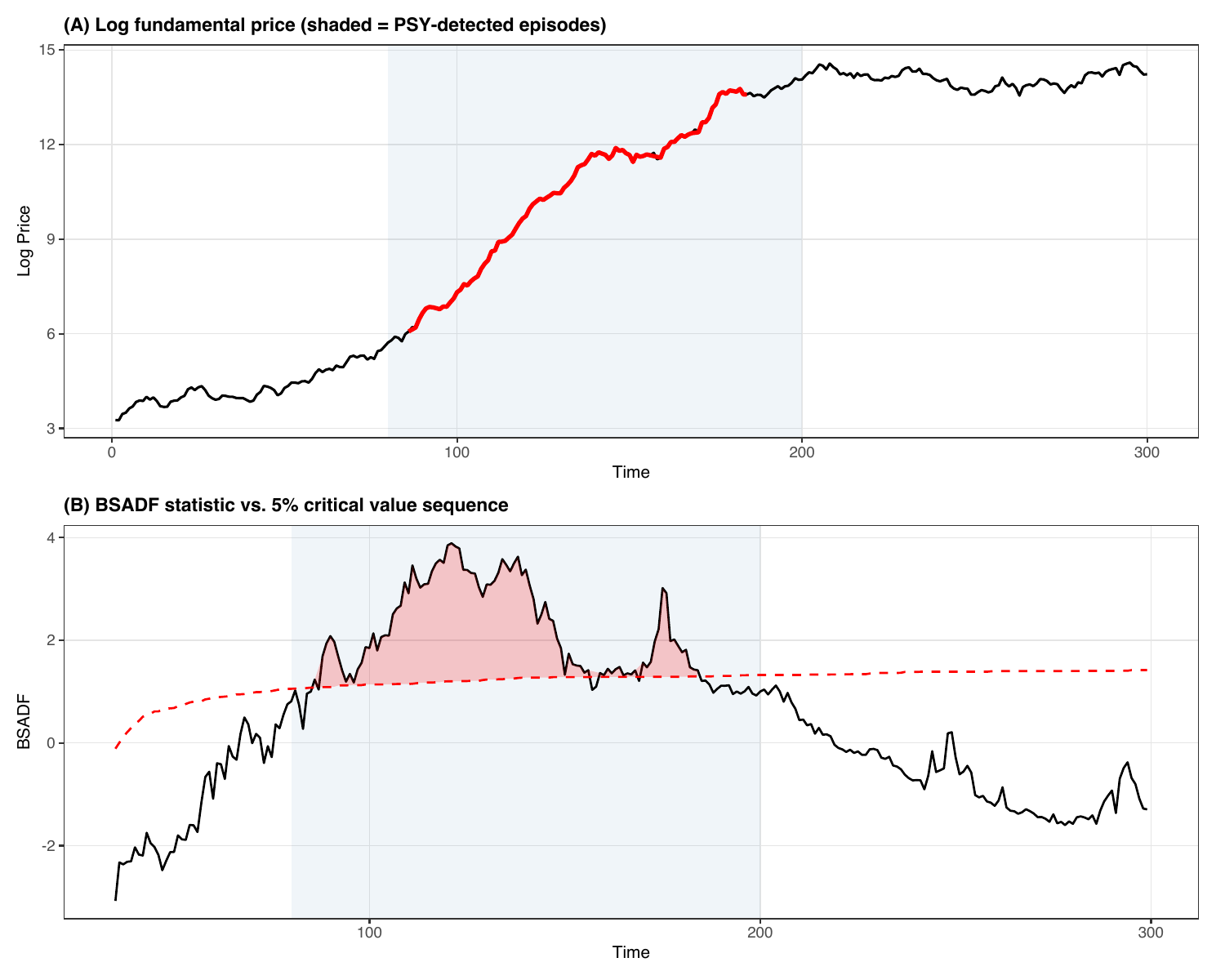}}
\noindent\caption{{\bf The standard Phillips--Shi--Yu (PSY) test falsely classifies the technology-driven simulation as explosive.} The horizontal axis is time, the solid line is the BSADF date-stamping statistic, and the dashed line is the 5 percent critical value sequence. The statistic crosses the critical value during the adoption phase and produces a GSADF statistic of 3.890, so the unadjusted procedure labels most of the technology window as a bubble.\label{fig:psy_with_tech}}
\end{figure}

\clearpage
\begin{figure}
\centerline{\includegraphics[width=0.95\textwidth]{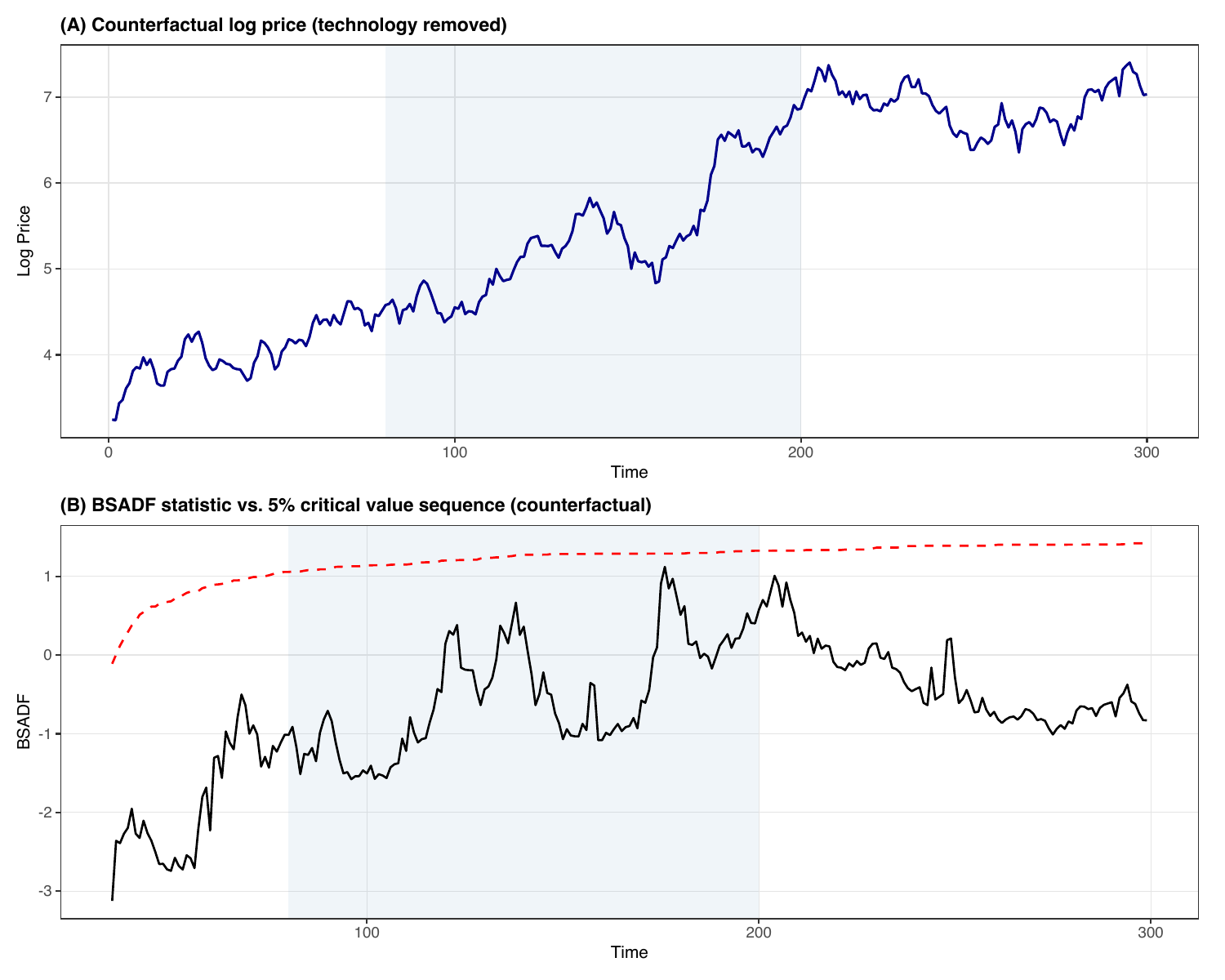}}
\noindent\caption{{\bf Technology adjustment removes the false bubble signal in the simulated path.} The horizontal axis is time, the solid line is the BSADF date-stamping statistic after adjustment, and the dashed line is the 5 percent critical value sequence. The adjusted statistic remains below the threshold throughout, with a GSADF statistic of 1.120.\label{fig:psy_no_tech}}
\end{figure}

\clearpage
\begin{figure}
\centerline{\includegraphics[width=0.95\textwidth]{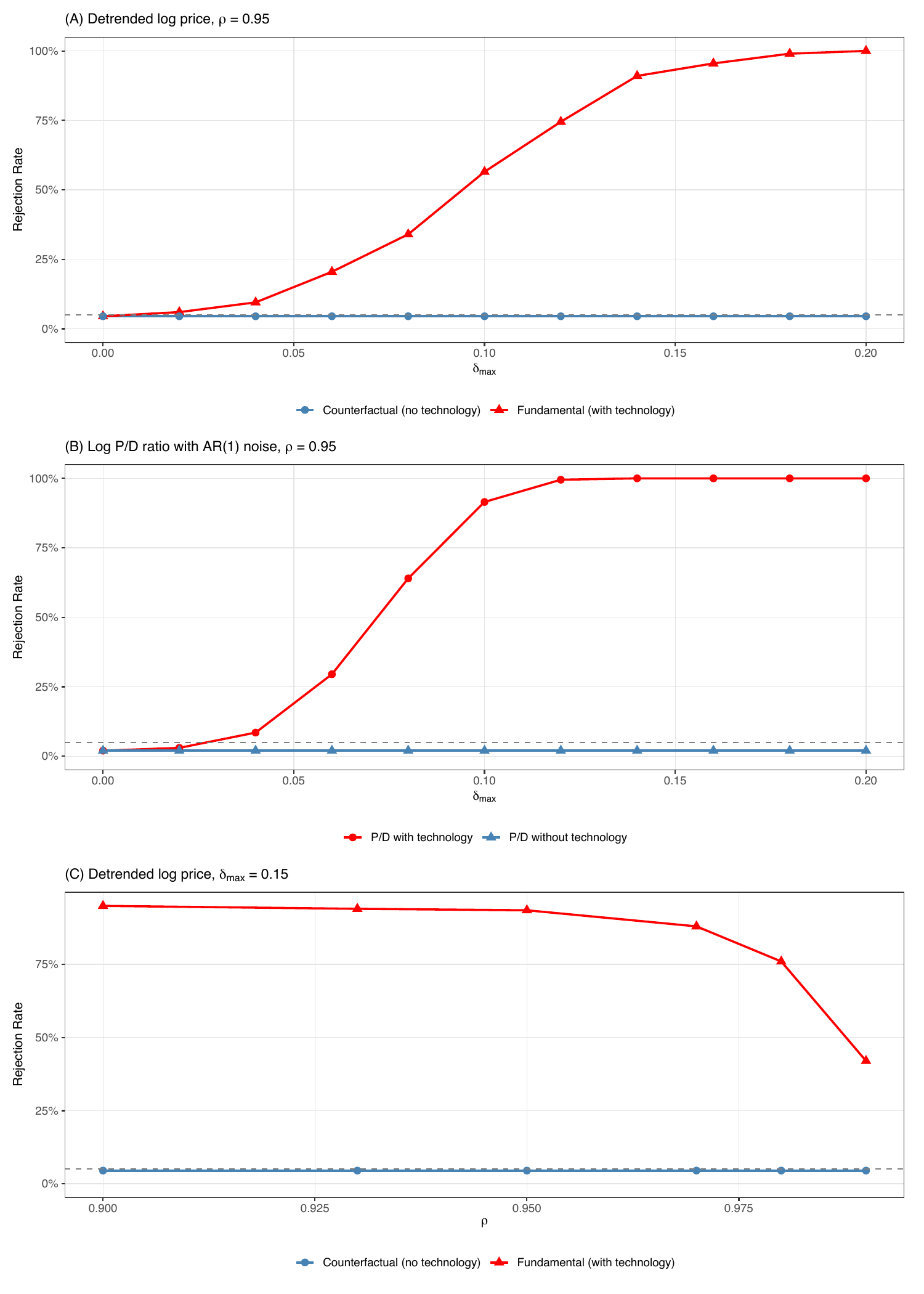}}
\noindent\caption{{\bf False rejection rates rise monotonically with the technology shock peak $\dmax$ in the unadjusted test.} The horizontal axis is the technology shock peak $\dmax$, and the vertical axis is the rejection rate at the 5 percent nominal level. Solid lines plot the unadjusted PSY rejection rates, while the dashed line plots the technology-adjusted rejection rate, which stays near 4.5 percent.\label{fig:mc_all}}
\end{figure}

\clearpage
\begin{figure}
\centerline{\includegraphics[width=0.95\textwidth]{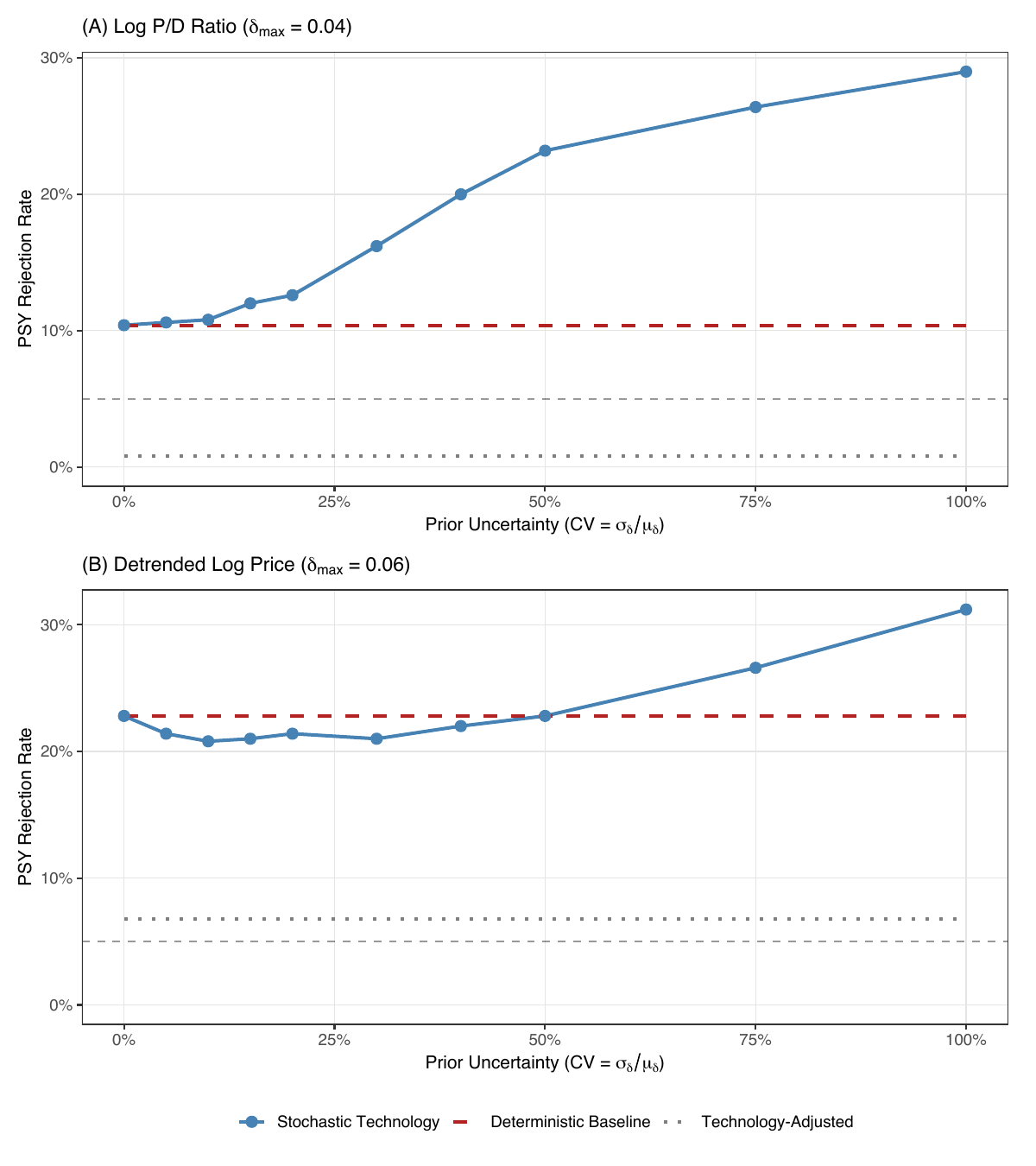}}
\noindent\caption{{\bf Technology uncertainty raises rejection rates mainly through the price-dividend ratio channel.} The horizontal axis is the coefficient of variation of prior uncertainty about cumulative technology impact, and the vertical axis is the rejection rate at the 5 percent nominal level. Panel A plots the log price-dividend ratio with technology shock peak $\dmax = 0.04$, Panel B plots detrended log prices with technology shock peak $\dmax = 0.06$, and the circle and triangle markers identify the deterministic baseline and the adjusted rejection rate.\label{fig:mc_stochastic}}
\end{figure}

\clearpage
\begin{figure}
\centerline{\includegraphics[width=0.95\textwidth]{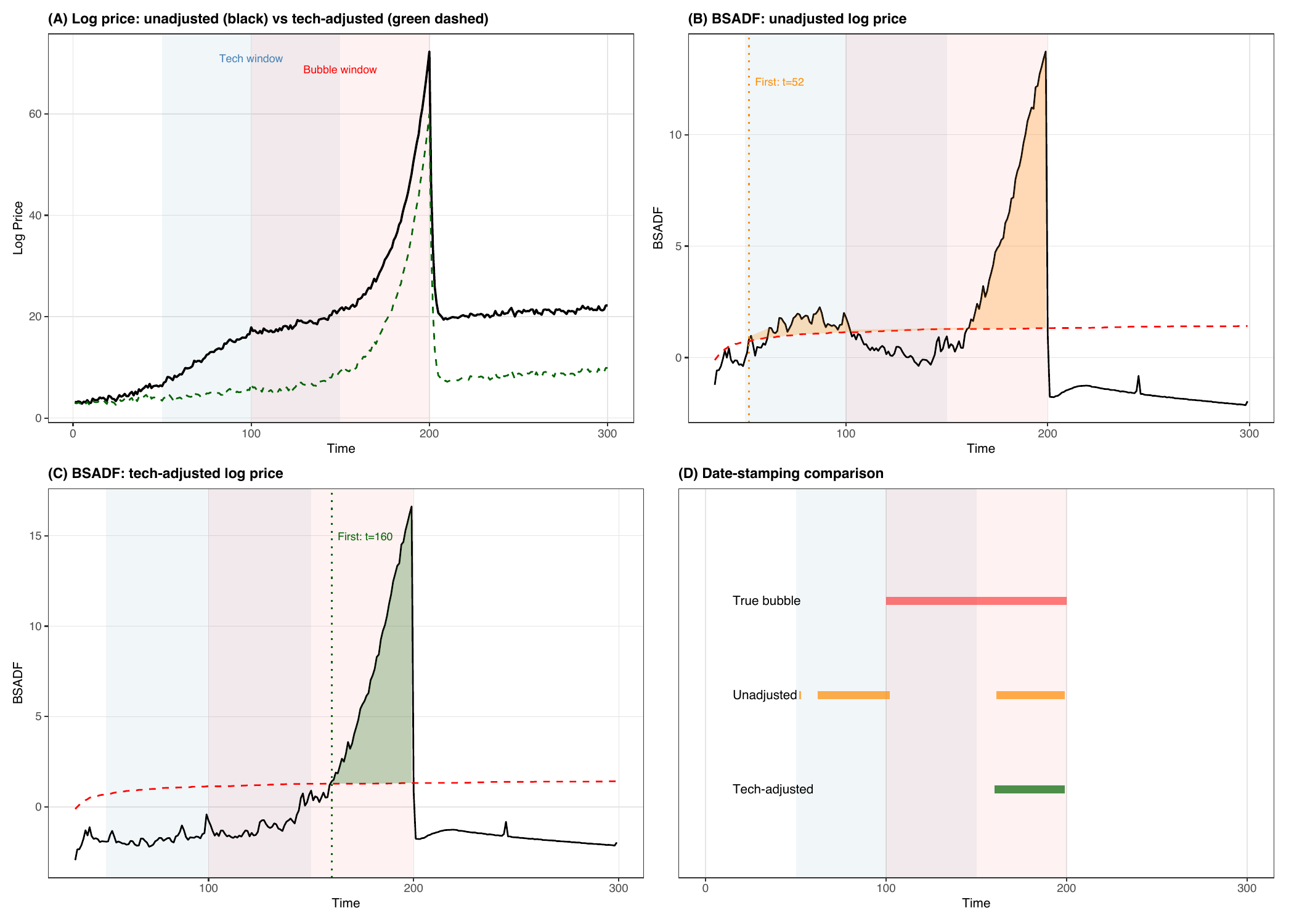}}
\noindent\caption{{\bf The adjusted procedure delays detection until the genuine bubble, while the unadjusted procedure fires during the technology-only phase.} Panel A plots raw and technology-adjusted log prices against time, with the technology window shaded in blue and the bubble window shaded in red. Panel B plots the unadjusted BSADF date-stamping statistic and its 5 percent critical value sequence, Panel C plots the adjusted date-stamping statistic and its 5 percent critical value sequence, and Panel D summarizes the detected episodes on a timeline.\label{fig:overlap}}
\end{figure}

\clearpage
\begin{table}
\noindent\caption{{\bf Baseline calibration for Monte Carlo experiments.} Parameter values used in the simulation design. The discount factor $\rho$, dividend growth rate $c$, and expected return $\rbar$ are calibrated to postwar U.S.\ equity market data. The technology adoption window $[T_1, T_2]$ and peak lag $\tau$ are consistent with observed GPT cycle durations. The triangular technology shock is defined over the adoption window and reaches its peak within that window. The reported critical-value replications are used to construct the 5 percent Monte Carlo critical values for PSY.\label{tab:calibration}}
\centering
\begin{tabular}{llll}
\toprule
\textbf{Parameter} & \textbf{Symbol} & \textbf{Baseline Value} & \textbf{Description} \\
\midrule
Discount factor & $\rho$ & 0.95 & Campbell--Shiller log-linearization \\
Dividend growth & $c$ & 0.02 & Steady-state log dividend growth rate \\
Expected return & $\rbar$ & 0.06 & Mean log return \\
Innovation s.d. & $\sigma_\varepsilon$ & 0.10 & Std.\ dev.\ of dividend growth shocks \\
Sample length & $T$ & 300 & Total number of periods \\
Technology shock shape & --- & Triangular & See Remark~\ref{rmk:triangular} \\
Adoption window & $[T_1, T_2]$ & $[80, 200]$ & 120 periods \\
Minimum window fraction & $r_0$ & $0.01 + 1.8/\sqrt{T}$ & PSY recursion \\
Number of lags & $K$ & 1 & Augmented Dickey--Fuller lag order \\
Critical-value replications & --- & 2{,}000 & Monte Carlo critical values \\
\bottomrule
\end{tabular}
\end{table}

\clearpage
\begin{table}
\noindent\caption{{\bf Technology shocks sharply raise false rejection rates unless the adjustment is applied.} Panel A reports rejection rates for detrended log prices as the technology shock peak $\dmax$ varies with discount factor $\rho = 0.95$, Panel B reports rejection rates for the log price-dividend ratio as the technology shock peak $\dmax$ varies, and Panel C reports rejection rates for detrended log prices as the discount factor $\rho$ varies with technology shock peak $\dmax = 0.15$. Across all panels, the adjusted procedure keeps rejection rates near the nominal level while the unadjusted procedure over-rejects. The unadjusted column reports rejection rates from standard PSY applied to the raw simulated series, the technology-adjusted column reports rejection rates after removing the present-value technology component $\mathcal{T}_t$, and critical values are computed from 2,000 Monte Carlo replications.\label{tab:mc_results}}
\centering
\begin{tabular}{lcc}
\toprule
\textbf{Parameter Value} & \textbf{Unadjusted (\%)} & \textbf{Technology-Adjusted (\%)} \\
\midrule
\multicolumn{3}{l}{\textit{Panel A: Detrended Log Prices, $\rho = 0.95$}} \\
\midrule
$\dmax = 0.00$ & 4.5 & 4.5 \\
$\dmax = 0.02$ & 6.0 & 4.5 \\
$\dmax = 0.04$ & 9.5 & 4.5 \\
$\dmax = 0.06$ & 20.5 & 4.5 \\
$\dmax = 0.08$ & 34.0 & 4.5 \\
$\dmax = 0.10$ & 56.5 & 4.5 \\
$\dmax = 0.12$ & 74.5 & 4.5 \\
$\dmax = 0.14$ & 91.0 & 4.5 \\
$\dmax = 0.16$ & 95.5 & 4.5 \\
$\dmax = 0.18$ & 99.0 & 4.5 \\
$\dmax = 0.20$ & 100.0 & 4.5 \\
\midrule
\multicolumn{3}{l}{\textit{Panel B: Log Price-Dividend Ratio}} \\
\midrule
$\dmax = 0.00$ & 2.0 & 2.0 \\
$\dmax = 0.02$ & 3.0 & 2.0 \\
$\dmax = 0.04$ & 8.5 & 2.0 \\
$\dmax = 0.06$ & 29.5 & 2.0 \\
$\dmax = 0.08$ & 64.0 & 2.0 \\
$\dmax = 0.10$ & 91.5 & 2.0 \\
$\dmax = 0.12$ & 99.5 & 2.0 \\
$\dmax = 0.14$ & 100.0 & 2.0 \\
$\dmax = 0.16$ & 100.0 & 2.0 \\
$\dmax = 0.18$ & 100.0 & 2.0 \\
$\dmax = 0.20$ & 100.0 & 2.0 \\
\midrule
\multicolumn{3}{l}{\textit{Panel C: Varying $\rho$, $\dmax = 0.15$}} \\
\midrule
$\rho = 0.90$ & 95.0 & 4.5 \\
$\rho = 0.93$ & 94.0 & 4.5 \\
$\rho = 0.95$ & 93.5 & 4.5 \\
$\rho = 0.97$ & 88.0 & 4.5 \\
$\rho = 0.98$ & 76.0 & 4.5 \\
$\rho = 0.99$ & 42.0 & 4.5 \\
\bottomrule
\end{tabular}
\end{table}

\clearpage
\begin{table}
\noindent\caption{{\bf Prior uncertainty amplifies false rejections most strongly in the price-dividend ratio specification.} Panel A reports rejection rates and percentage-point changes for the log price-dividend ratio as the coefficient of variation of cumulative technology impact, $\sigma_{\bar{\delta}}/\mu_{\bar{\delta}}$, increases with technology shock peak $\dmax = 0.04$; Panel B reports the same quantities for detrended log prices with technology shock peak $\dmax = 0.06$. The table shows that uncertainty raises distortion markedly in Panel A, while the adjusted procedure keeps rejection rates low in both panels. The final row reports the technology-adjusted rejection rates for the two specifications.\label{tab:mc_stochastic}}
\centering
\begin{tabular}{lcccc}
\toprule
\textbf{CV} & \textbf{A: Rej.\ (\%)} & \textbf{A: $\Delta$ (pp)} & \textbf{B: Rej.\ (\%)} & \textbf{B: $\Delta$ (pp)} \\
\midrule
0.00 (deterministic) & 10.4 & --- & 22.8 & --- \\
0.05 & 10.6 & +0.2 & 21.4 & $-$1.4 \\
0.10 & 10.8 & +0.4 & 20.8 & $-$2.0 \\
0.15 & 12.0 & +1.6 & 21.0 & $-$1.8 \\
0.20 & 12.6 & +2.2 & 21.4 & $-$1.4 \\
0.30 & 16.2 & +5.8 & 21.0 & $-$1.8 \\
0.40 & 20.0 & +9.6 & 22.0 & $-$0.8 \\
0.50 & 23.2 & +12.8 & 22.8 & +0.0 \\
0.75 & 26.4 & +16.0 & 26.6 & +3.8 \\
1.00 & 29.0 & +18.6 & 31.2 & +8.4 \\
\midrule
Technology-adjusted & 0.8 &  & 6.8 &  \\
\bottomrule
\end{tabular}
\end{table}

\clearpage
\section{Robustness Tests: Identification and Specification}
\label{app:robustness}

This appendix reports five complementary diagnostics that assess the separability of technology proxies from the speculative bubble component, which is the paper's main identification assumption. Proposition~\ref{prop:microfoundation} provides a structural benchmark for separability, while Section~\ref{subsec:feedback_bounds} shows how the residual changes when separability fails. The tests below evaluate whether the empirical conclusions are fragile to observable signs of feedback and specification choice.

\subsection{Granger Causality Tests}
\label{subsec:granger}

Under the separability assumption, stock returns should not Granger-cause the technology proxies during speculative periods: if the bubble component $b_t$ is orthogonal to fundamentals, explosive price dynamics should not feed back into real technology variables. During non-speculative periods, however, bidirectional causality between technology and returns is expected, since technology fundamentals drive prices and stock market valuations affect corporate investment. These tests are diagnostic rather than decisive; failure to reject return-to-technology predictability supports separability but cannot rule out all slower feedback channels.

We test this prediction by estimating bivariate Granger causality regressions between NASDAQ returns and each of the three technology proxies (IT-investment growth, TFP growth, and patent growth), separately for speculative and non-speculative regimes. The PSY-detected bubble periods are too narrow (5--6~months for the dot-com episode) to support meaningful inference, so we define broader speculative periods based on the run-up phase: January~1997 to March~2000 for the dot-com era ($n = 39$) and January~2023 to December~2025 for the AI era ($n = 33$--$35$). Lag lengths are selected by the Bayesian Information Criterion (BIC). HC3 heteroskedasticity-robust $p$-values could not be computed because of near-singular covariance matrices in the small speculative-period samples; we therefore report both standard $F$-test and wild bootstrap $p$-values, which yield consistent conclusions throughout.

Table~\ref{tab:granger_causality} reports the results. During both candidate speculative periods, no technology proxy is predicted by lagged returns in the Granger sense: all $p$-values exceed~0.70 for the dot-com episode and~0.54 for the AI era. The reverse direction is also insignificant: technology variables do not predict returns in the Granger sense during these periods. This pattern holds uniformly across all three technology proxies and both eras, providing evidence consistent with the separability assumption.

The non-speculative periods tell a complementary story. In the dot-com non-speculative sample ($n = 136$--$141$), only one marginally significant relationship emerges: returns Granger-cause IT investment at the~10\% level ($F = 2.476$, $p = 0.062$), consistent with an accelerator channel whereby equity market valuations affect IT spending with a lag. No other direction is significant. In the AI non-speculative sample ($n = 36$), the results are richer: IT investment, TFP, and patent grants all Granger-cause returns at the~5\% level or better ($p = 0.019$, $0.015$, and $0.005$, respectively), while returns also Granger-cause TFP ($p = 0.016$). This bidirectional causality during normal periods is expected and economically sensible: technology fundamentals drive asset prices, while stock market wealth effects and Tobin's~$q$ channels feed back into productivity. The absence of this bidirectional link during candidate speculative regimes is consistent with the separability benchmark, but the small samples mean the result should be read as supportive rather than conclusive.

\subsection{Placebo Tests: Non-Technology Covariates}
\label{subsec:placebo}

If the technology-adjusted test eliminates explosive signals specifically because the technology proxies capture fundamental content, then replacing these proxies with less relevant covariates should not systematically reproduce the same pattern. We construct ``placebo'' counterfactuals by regressing log NASDAQ prices on three non-technology variables (consumer sentiment (University of Michigan Index), the VIX implied volatility index, and industrial production (INDPRO)) via DOLS over the same training windows, then apply PSY to the resulting price gaps. Table~\ref{tab:placebo_tests} reports the results alongside the technology baseline.

For the AI era, all three placebos yield GSADF statistics well below their respective 10 percent critical values ($-0.223$, $0.712$, and $0.869$ against $\text{CV}_{0.10} = 1.334$), consistent with the baseline technology adjustment (test statistic $= -0.275$). This is expected: when no explosive dynamic exists in the data, neither the correct nor incorrect covariates will produce a rejection.

The dot-com results are more nuanced. Industrial production does not reject ($\text{test statistic} = 1.460 < \text{CV}_{0.10} = 1.768$; $R^2 = 0.743$), showing that a generic macroeconomic variable does not replicate the technology adjustment's residual pattern. The VIX could not be tested for the dot-com era because the CBOE VIX series begins in January~1990, providing only 12 training observations, which is insufficient for reliable DOLS estimation. However, consumer sentiment rejects at both the 10 and 5 percent levels ($\text{test statistic} = 2.283$, $\text{CV}_{0.10} = 1.768$, $\text{CV}_{0.05} = 2.025$). The result requires careful interpretation. The sentiment-based DOLS specification achieves a low $R^2$ of only~0.308, indicating that sentiment is a poor predictor of log price levels; the resulting price gap retains most of the original price variation, including both the fundamental and speculative components. Moreover, during the dot-com boom, consumer sentiment was itself heavily influenced by the technology-driven equity rally through wealth effects \citep{Shiller2015}, making it endogenous to the very dynamics it is intended to ``placebo'' against. The sentiment rejection therefore does not undermine the identification strategy, but it reinforces the broader point that covariate choice and exogeneity matter.

\subsection{Cross-Sectional Identification: PCA of Magnificent Seven Price Gaps}
\label{subsec:pca}

The Magnificent Seven panel provides a cross-sectional test of the technology-adjustment methodology. We apply PCA to the technology-adjusted price gaps $\hat{g}_{i,t}$ of the seven firms, computed from OLS regressions of each firm's log price on firm-specific Compustat fundamentals (real revenue per share and real R\&D expenditure per share) using the 2006--2019 training window (excluding each firm's own bubble months). The analysis uses 72~months of common post-training observations.

Table~\ref{tab:pca_mag7} reports the variance decomposition and PSY results for each principal component. The first principal component (PC1) explains 60.0\% of the total variation, revealing a dominant common factor in the adjusted price gaps. The loadings on PC1 are broadly similar across six of the seven firms (GOOGL~($-0.472$), TSLA~($-0.437$), NVDA~($-0.423$), AMZN~($-0.374$), META~($-0.371$), and AAPL~($-0.358$)), but conspicuously low for MSFT~($-0.054$). Microsoft's near-zero loading indicates that its price-gap dynamics are driven by idiosyncratic factors rather than the common AI-driven technology repricing that affects the other six firms. PC2 accounts for 20.6\% of variance and is loaded most heavily on MSFT~(0.754) and AMZN~($-0.435$), capturing the idiosyncratic divergence between these two firms. The first three components jointly explain 92.3\% of variation.

The key result is that none of the seven principal components exhibits explosive behavior. All GSADF statistics fall below the 10 percent critical value of~1.334: PC1 yields $\text{GSADF} = 0.844$, the closest to rejection is PC3 at $\text{GSADF} = 0.683$, and the remaining components range from $-1.218$ to $0.043$. The absence of explosiveness in any linear combination of the adjusted price gaps provides cross-sectional evidence against a common residual AI-era bubble in this sample. Even the dominant common factor, which loads on all firms except Microsoft, is stationary, consistent with the interpretation that the firm-level technology adjustment removes the common fundamental repricing component, leaving no common residual explosive factor. Figure~\ref{fig:pca_mag7} displays the scree plot and the time series of the first principal component.

\subsection{Training-Window Stability Analysis}
\label{subsec:training_stability}

A potential concern with the technology-adjusted test is that results depend on the training period used to estimate the DOLS coefficients. We systematically vary the training window end-date: from January~1985 to December~1994 for the dot-com application (with a fixed start of January~1975) and from January~2010 to December~2019 for the AI application (with a fixed start of January~2006). We reestimate the DOLS coefficients on each subsample and apply PSY to the resulting price gaps. Table~\ref{tab:training_stability} reports representative windows; Figure~\ref{fig:training_stability} displays the complete trajectories.

The two eras present a clear contrast. For the AI era, the GSADF statistic ranges from $-0.30$ to $0.84$ across 121~training windows, never approaching the 10 percent critical value of~1.176. The ``no AI bubble'' conclusion is entirely insensitive to training window choice. This robustness holds despite substantial variation in the DOLS coefficients: the IT-investment coefficient $\hat{\beta}_{\text{IT}}$ ranges from~0.35 to~2.65, and the patent coefficient $\hat{\beta}_{\text{Pat}}$ spans $-0.16$ to~1.19, yet the adjusted price gap never produces a rejection. The $R^2$ remains high throughout (0.82--0.91), indicating that the technology proxies explain a large and stable share of price variation regardless of the specific training window.

For the dot-com era, the picture is more complex. The test statistic varies from approximately~1.30 (training ending September~1987) to~2.76 (training ending June~1985), a range of~1.46. With the 10 percent critical value at~1.745, early training windows (ending before mid-1987) yield rejections, a gap from approximately mid-1987 through late~1988 yields non-rejections, then windows ending from late~1988 through early~1992 return to rejection, before settling below the critical value for windows ending after mid-1992. The source of this instability is primarily the patent coefficient, which varies from $-0.39$ (January~1985) to~$+1.04$ (December~1994), while the IT coefficient is comparatively stable at~$0.42$--$0.49$. This sensitivity reflects the changing role of patenting in the technology-growth relationship: in the early~1980s, patent grants were a poor proxy for innovation \citep{Kogan2017}, while their informational content improved substantially through the 1990s.

\subsection{Hansen (1992) Parameter Instability Test}
\label{subsec:hansen}

The training-window analysis reveals coefficient variation, but does not formally test whether the cointegrating relationship is stable within each training period. We apply the \citet{Hansen1992} $L_c$ test for parameter instability in cointegrating regressions, which tests the null hypothesis that the DOLS coefficients are constant against the alternative of a martingale parameter process. Table~\ref{tab:hansen_test} reports the results.

The $L_c$ test rejects parameter stability in both eras and both specifications. For the dot-com baseline (technology-only), $L_c = 1.845$ ($p = 0.009$); for the AI-era baseline, $L_c = 3.994$ ($p < 0.001$). The extended specifications including the term spread yield similar conclusions ($L_c = 1.780$, $p = 0.018$ for dot-com; $L_c = 3.941$, $p < 0.001$ for the AI era). All statistics far exceed the 1\% critical values (1.160 for baseline, 1.278 for extended), indicating strong evidence of parameter instability.

This finding has important implications. The cointegrating relationship between log prices and technology proxies is not stable over the training periods. For the dot-com era (1975--1990), this is unsurprising: the technology-growth nexus evolved dramatically as IT transitioned from mainframes to personal computers to networked systems. For the AI era (2006--2019), instability reflects the transition from mobile and cloud computing to the nascent AI paradigm, with the patent coefficient (capturing innovation composition) absorbing most of the structural change, as the training-window analysis of Section \ref{subsec:training_stability} confirms.

The Hansen rejection does not invalidate the technology-adjusted test, but it qualifies the interpretation. The DOLS estimates represent an average technology-price relationship over the training period, and the adjustment is valid insofar as this average captures the typical equilibrium. The training-window analysis provides reassurance for the AI-era application: the non-rejection finding holds across all 121~training windows, even as individual coefficients vary substantially. The dot-com results are more fragile, with the GSADF statistic crossing the critical value as the training window shifts, though residual explosiveness around the 1999--2000 peak appears across many specifications. Future work could address parameter instability more formally through time-varying coefficient approaches, such as the \citet{BierensMartins2010} time-varying cointegration framework or state-space models with Kalman-filtered parameters.

\clearpage
\begin{figure}
\centerline{\includegraphics[width=0.95\textwidth]{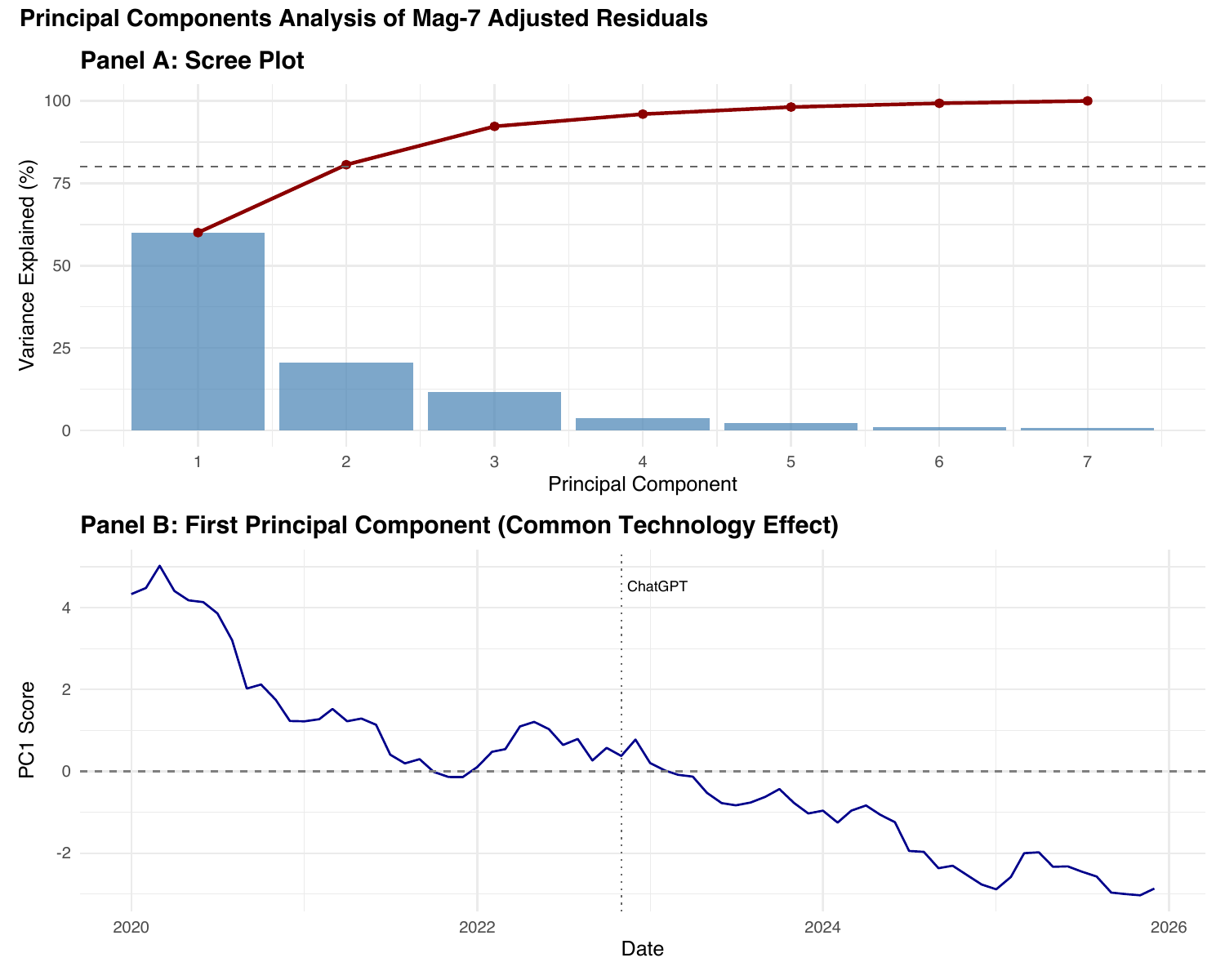}}
\noindent\caption{{\bf The dominant common factor in adjusted Magnificent Seven price gaps is large but not explosive.} The left panel plots the scree diagram with var.\ expl.\ by each principal component, and the right panel plots the time series of the first principal-component scores against calendar time. The first component captures most common variation, but it does not generate an explosive signal.\label{fig:pca_mag7}}
\end{figure}

\clearpage
\begin{figure}
\centerline{\includegraphics[width=0.95\textwidth]{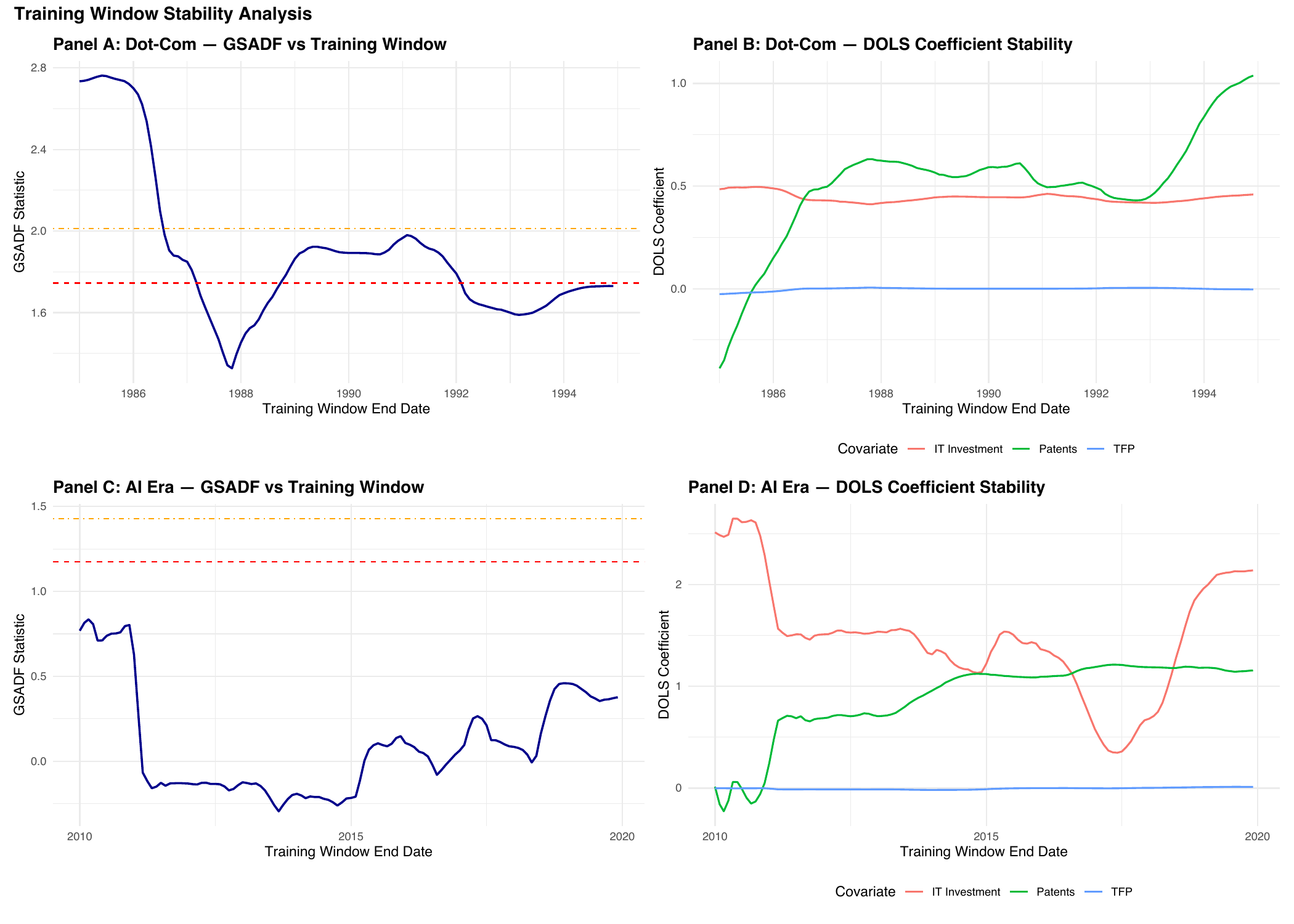}}
\noindent\caption{{\bf Training-window variation leaves the AI-era result intact but moves the dot-com result across the threshold.} The horizontal axis is the training-window end date, and the vertical axes plot GSADF statistics in the top panels and DOLS coefficients in the bottom panels. The left column corresponds to the dot-com era, the right column corresponds to the AI era, and the horizontal lines mark the 10 and 5 percent Monte Carlo critical values.\label{fig:training_stability}}
\end{figure}

\clearpage
\begin{figure}
\centerline{\includegraphics[width=0.95\textwidth]{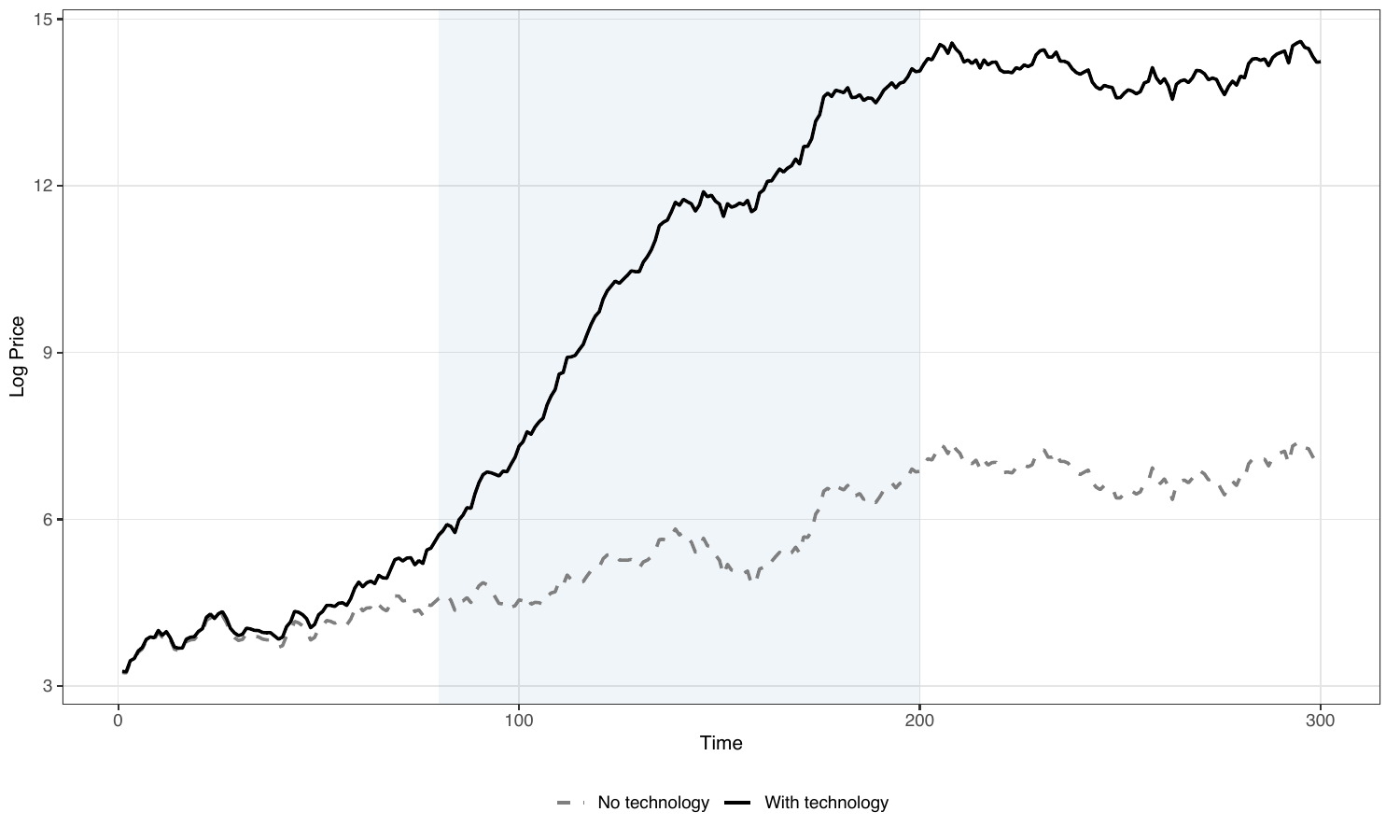}}
\noindent\caption{{\bf A temporary technology shock leaves the simulated fundamental price on a permanently higher path.} The horizontal axis is time, the solid line is the simulated fundamental price with the technology component, and the dashed line is the counterfactual price without that component. Under the illustrative calibration, the technology-augmented path accelerates during adoption and then settles at a higher level.\label{fig:price_paths}}
\end{figure}

\clearpage
\begin{figure}
\centerline{\includegraphics[width=0.95\textwidth]{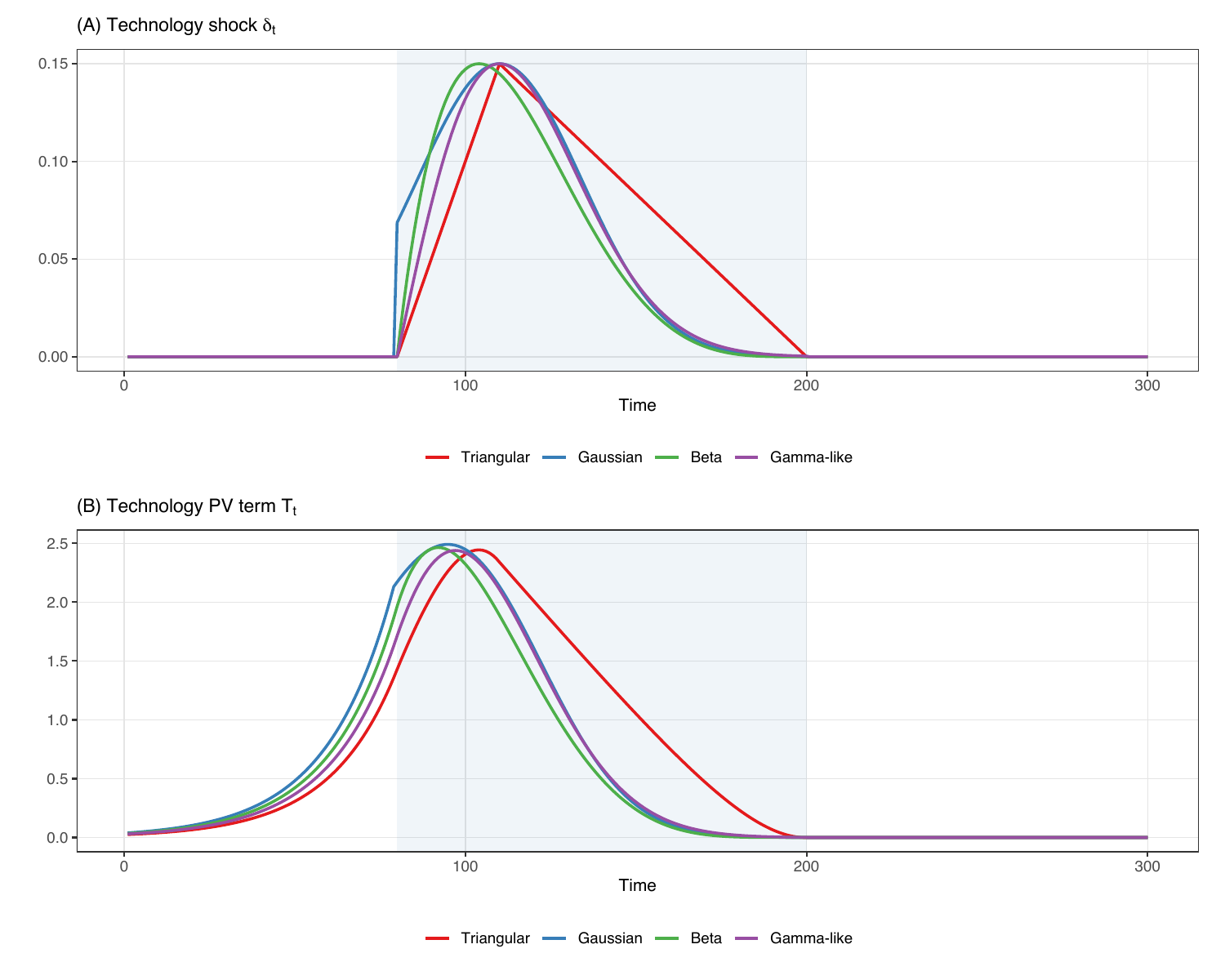}}
\noindent\caption{{\bf Alternative hump-shaped technology profiles differ in timing and smoothness, not in qualitative shape.} The horizontal axis is time over the adoption window from $T_1 = 80$ to $T_2 = 200$, and the vertical axis is the technology shock $\delta_t$. The four plotted profiles share the same technology shock peak $\dmax = 0.15$ and show the alternative shapes used in the robustness simulations.\label{fig:delta_shapes}}
\end{figure}

\clearpage
\begin{figure}
\centerline{\includegraphics[width=0.95\textwidth]{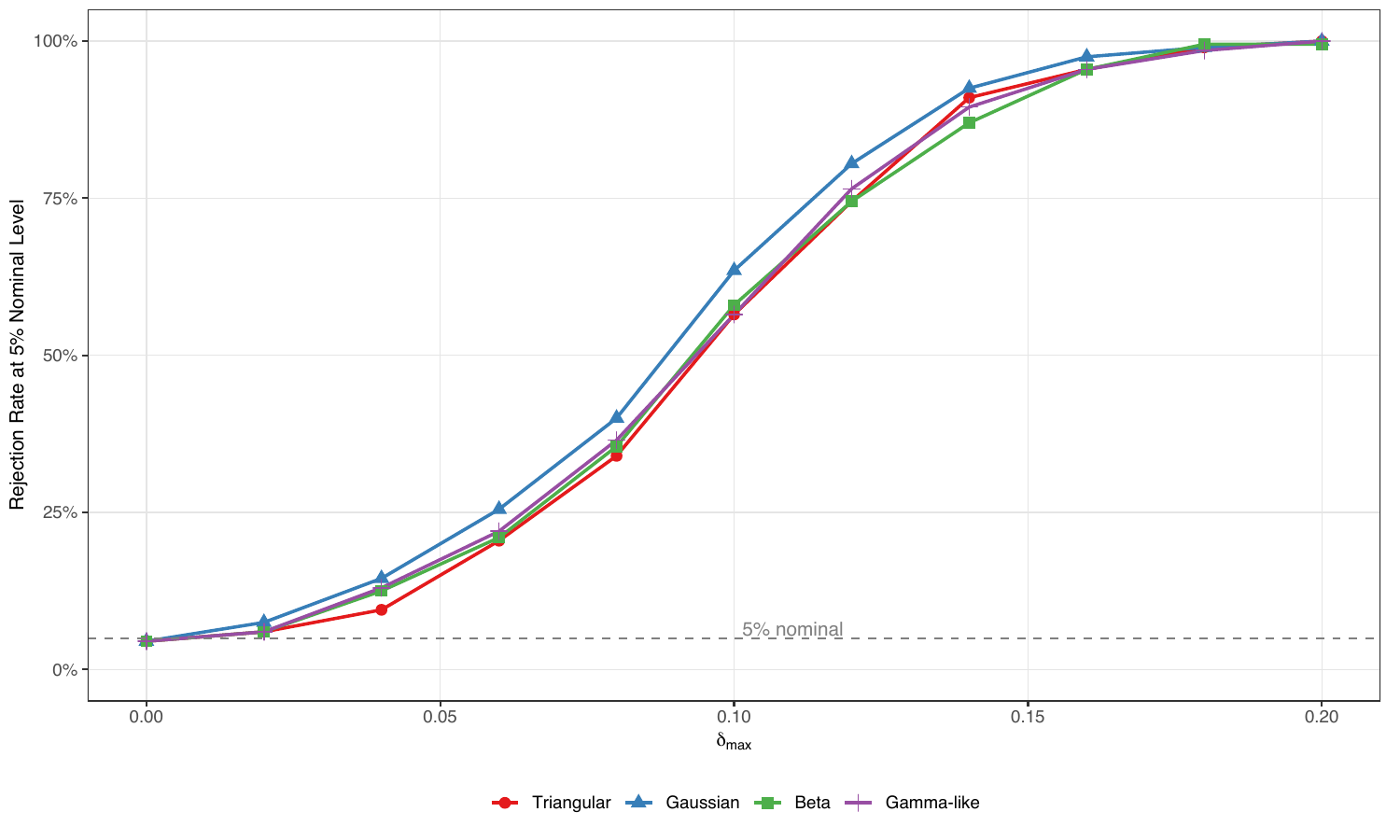}}
\noindent\caption{{\bf Rejection rates rise with the technology shock peak for every shock shape.} The horizontal axis is the technology shock peak $\dmax$, and the vertical axis is the rejection rate at the 5 percent nominal level. The Gaussian profile produces the largest distortion at most values, while the dashed technology-adjusted line stays near the nominal rate regardless of shock shape.\label{fig:mc_shapes}}
\end{figure}

\clearpage
\begin{figure}
\centering
\begin{subfigure}[b]{0.95\textwidth}
\centering
\includegraphics[width=\textwidth]{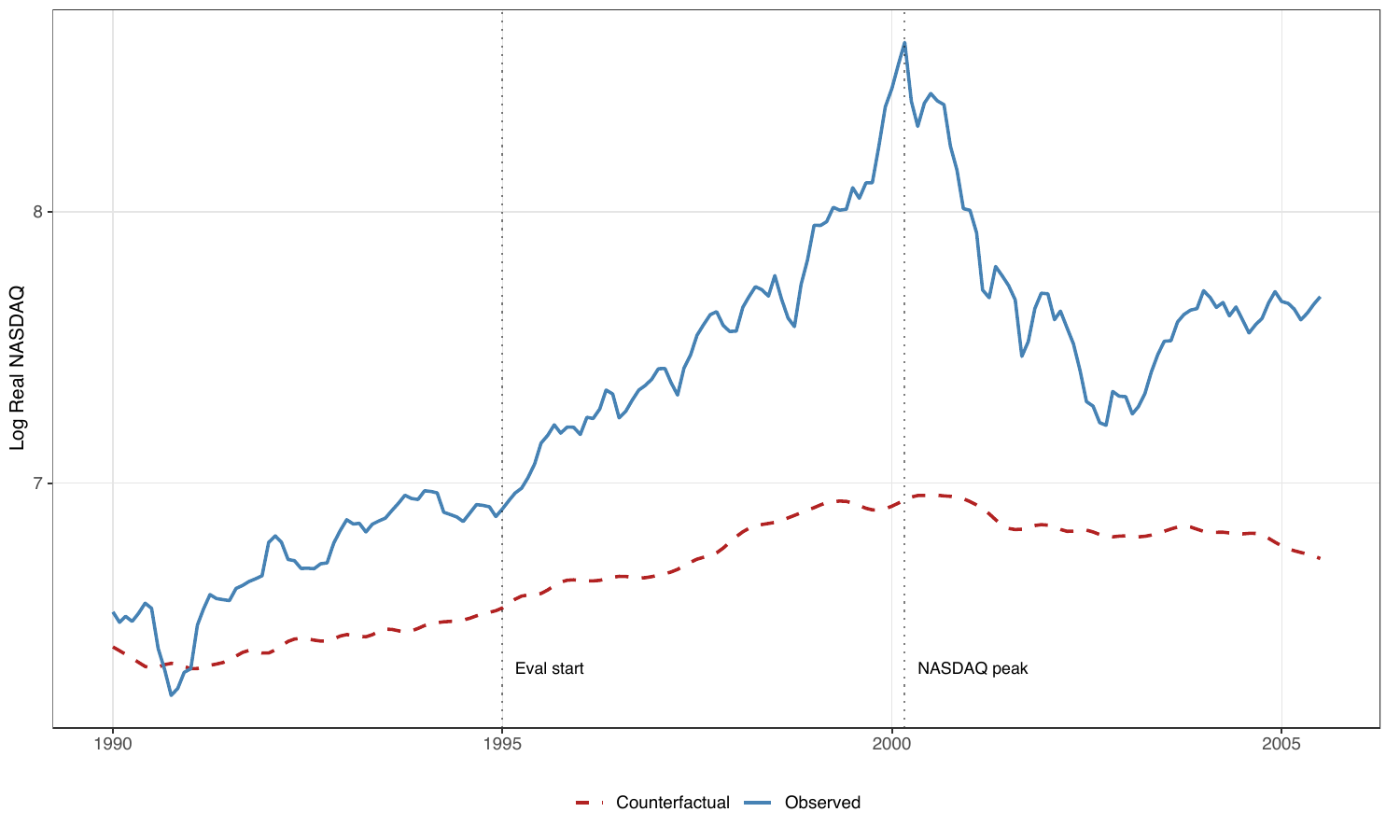}
\caption{Observed vs.\ counterfactual fundamental log NASDAQ price, 1995--2005. Solid line: observed log real NASDAQ Composite. Dashed line: DOLS counterfactual $\hat{f}_t$ estimated from the 1975--1990 training period using log IT investment, TFP, and log patent grants. The counterfactual tracks the observed price through 1997, but the residual gap widens dramatically during 1998--2000.}
\label{fig:counterfactual_empirical}
\end{subfigure}

\vspace{0.2cm}

\begin{subfigure}[b]{0.95\textwidth}
\centering
\includegraphics[width=\textwidth]{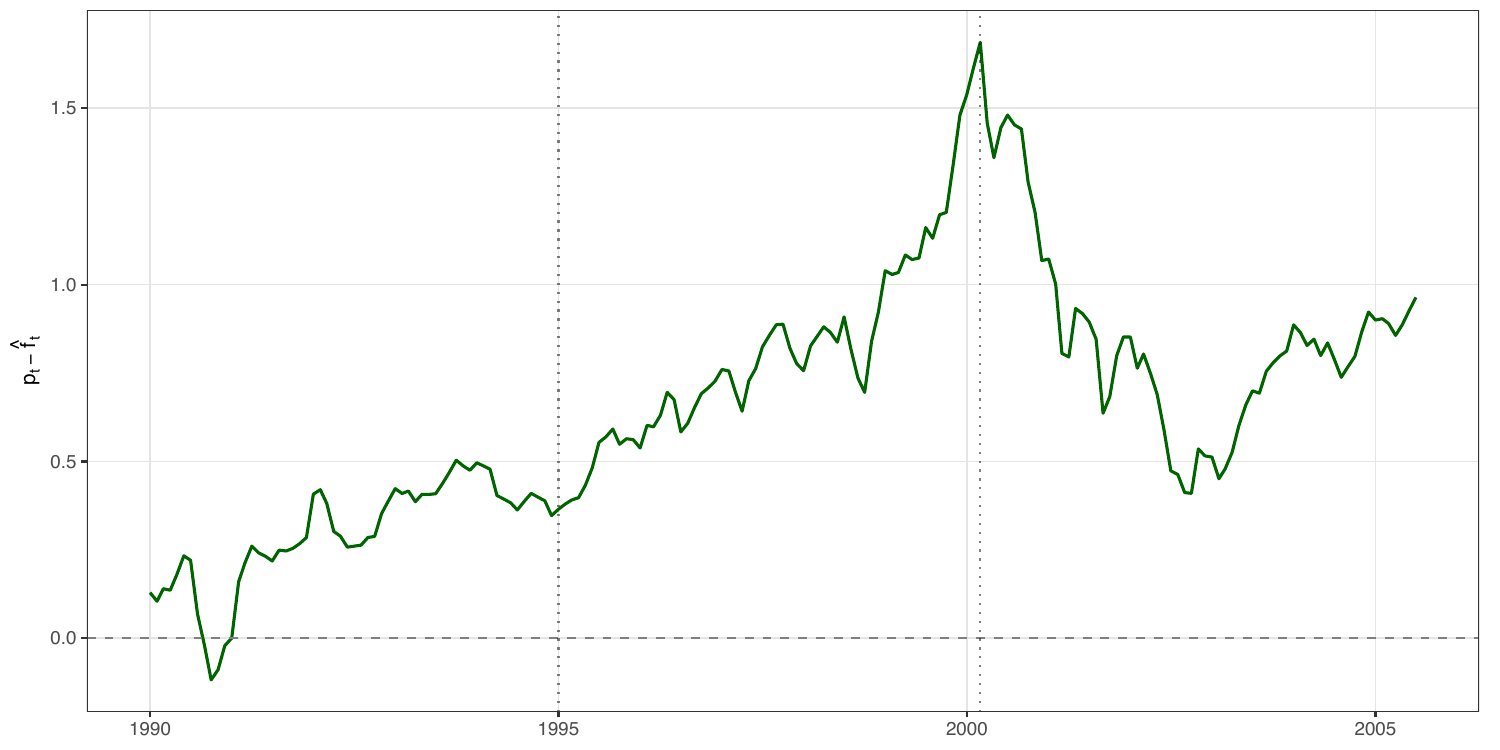}
\caption{Price gap $p_t - \hat{f}_t$ (observed minus counterfactual), 1995--2005. The technology-adjusted series to which PSY is applied. The gap rises sharply during 1998--2000 and narrows after the crash, consistent with a speculative bubble inflating and then collapsing.}
\label{fig:price_gap}
\end{subfigure}
\noindent\caption{{\bf The counterfactual-fundamental view of the dot-com episode isolates a widening residual gap near the peak.} Panel~(a) compares observed and counterfactual prices; Panel~(b) plots the corresponding price gap. Together, the two panels show that residual overvaluation becomes large only late in the run-up.\label{fig:counterfactual_and_gap_dotcom}}
\end{figure}

\clearpage
\begin{landscape}
\begin{table}
\noindent\caption{{\bf Returns do not predict technology proxies during speculative periods.} The table reports bivariate Granger-causality $F$ tests between NASDAQ returns and each technology proxy, estimated separately for speculative and non-speculative regimes. The key result is the absence of return-to-technology predictability during speculative periods, which supports separability. Lag lengths are selected by BIC, speculative periods are January~1997--March~2000 for the dot-com era and January~2023--December~2025 for the AI era, bootstrap $p$-values are based on 2{,}000 wild bootstrap replications, and $^{*}$, $^{**}$, and $^{***}$ denote significance at the 10, 5, and 1 percent levels, respectively.\label{tab:granger_causality}}
\centering
\begin{tabular}{llccccc}
\toprule
\textbf{Direction} & \textbf{Tech.\ var.} & \textbf{Spec.\ $F$} & \textbf{Spec.\ $p$ (std)} & \textbf{Spec.\ $p$ (boot)} & \textbf{Non-Spec.\ $F$} & \textbf{Non-Spec.\ $p$ (std)} \\
\midrule
\multicolumn{7}{l}{\textit{Panel A: Dot-Com Era}} \\
Ret.\ $\to$ Tech.\ & IT Inv. & 0.254 & 0.776 & 0.777 & 2.476 & 0.062$^{*}$ \\
Ret.\ $\to$ Tech.\ & TFP & 0.242 & 0.785 & 0.783 & 1.409 & 0.246 \\
Ret.\ $\to$ Tech.\ & Patents & 0.446 & 0.721 & 0.733 & 0.684 & 0.505 \\
Tech.\ $\to$ Ret.\ & IT Inv. & 0.726 & 0.488 & 0.482 & 1.956 & 0.121 \\
Tech.\ $\to$ Ret.\ & TFP & 0.752 & 0.476 & 0.485 & 0.618 & 0.540 \\
Tech.\ $\to$ Ret.\ & Patents & 1.010 & 0.395 & 0.392 & 0.206 & 0.814 \\
\midrule
\multicolumn{7}{l}{\textit{Panel B: Artificial-Intelligence Era}} \\
Ret.\ $\to$ Tech.\ & IT Inv. & 0.239 & 0.868 & 0.877 & 1.890 & 0.160 \\
Ret.\ $\to$ Tech.\ & TFP & 0.716 & 0.548 & 0.569 & 3.792 & 0.016$^{**}$ \\
Ret.\ $\to$ Tech.\ & Patents & 0.547 & 0.586 & 0.599 & 0.577 & 0.450 \\
Tech.\ $\to$ Ret.\ & IT Inv. & 1.413 & 0.251 & 0.256 & 4.258 & 0.019$^{**}$ \\
Tech.\ $\to$ Ret.\ & TFP & 1.046 & 0.381 & 0.387 & 3.809 & 0.015$^{**}$ \\
Tech.\ $\to$ Ret.\ & Patents & 0.390 & 0.681 & 0.670 & 8.586 & 0.005$^{***}$ \\
\bottomrule
\end{tabular}
\end{table}
\end{landscape}

\clearpage
\begin{table}
\noindent\caption{{\bf Non-technology placebos do not reproduce the technology-adjustment result.} The table reports GSADF statistics for price gaps constructed from placebo counterfactuals that use non-technology covariates alongside the technology baseline. The main takeaway is that generic macro and sentiment variables do not systematically remove explosive signals. Critical values come from 2{,}000 Monte Carlo replications, the technology baseline uses log IT investment, TFP, and log patent grants, $R^2$ is from the DOLS training regression, and the Chicago Board Options Exchange Volatility Index is omitted for the dot-com era because only 12 pre-1990 training observations are available.\label{tab:placebo_tests}}
\centering
\begin{tabular}{llcccccc}
\toprule
\textbf{Era} & \textbf{Covariate} & \textbf{GSADF} & \textbf{CV$_{0.10}$} & \textbf{CV$_{0.05}$} & \textbf{Rej.\ 10\%} & \textbf{$R^2$} & \textbf{Train.\ obs.} \\
\midrule
\multirow{4}{*}{Dot-Com} & Technology (baseline) & 1.952 & 1.918 & 2.175 & Yes & 0.835 & --- \\
& Consumer Sentiment & 2.283 & 1.768 & 2.025 & Yes & 0.308 & 191 \\
& VIX & --- & --- & --- & --- & --- & 12$^\dagger$ \\
& Industrial Production & 1.460 & 1.768 & 2.025 & No & 0.743 & 192 \\
\midrule
\multirow{4}{*}{AI Era} & Technology (baseline) & $-0.275$ & 1.810 & 2.046 & No & 0.907 & --- \\
& Consumer Sentiment & $-0.223$ & 1.334 & 1.623 & No & 0.727 & 168 \\
& VIX & 0.712 & 1.334 & 1.623 & No & 0.410 & 168 \\
& Industrial Production & 0.869 & 1.334 & 1.623 & No & 0.521 & 168 \\
\bottomrule
\end{tabular}
\end{table}

\clearpage
\begin{landscape}
\begin{table}
\noindent\caption{{\bf No common factor in the adjusted Magnificent Seven price gaps is explosive.} The table reports the variance share of each principal component, cumulative variance explained, GSADF statistics, critical values, and rejection decisions. The leading common component explains 60.0 percent of variation but still remains below the rejection threshold. PCA is applied to technology-adjusted price gaps of the Magnificent Seven over 72 common months in 2020--2025, each firm's gap is computed from OLS on firm-specific Compustat fundamentals (real revenue per share and real R\&D expenditure per share) using 2006--2019 training data that exclude bubble months, and PSY uses 2{,}000 Monte Carlo critical values.\label{tab:pca_mag7}}
\centering
\begin{tabular}{lcccccc}
\toprule
\textbf{PC} & \textbf{Var.\ Expl.} & \textbf{Cum.\ Var.} & \textbf{GSADF} & \textbf{CV$_{0.10}$} & \textbf{CV$_{0.05}$} & \textbf{Rej.\ 10\%} \\
\midrule
PC1 & 60.0\% & 60.0\% & 0.844 & 1.334 & 1.623 & No \\
PC2 & 20.6\% & 80.7\% & 0.043 & 1.334 & 1.623 & No \\
PC3 & 11.6\% & 92.3\% & 0.683 & 1.334 & 1.623 & No \\
PC4 & 3.7\% & 96.0\% & $-0.425$ & 1.334 & 1.623 & No \\
PC5 & 2.2\% & 98.1\% & $-0.499$ & 1.334 & 1.623 & No \\
PC6 & 1.1\% & 99.3\% & $-1.218$ & 1.334 & 1.623 & No \\
PC7 & 0.7\% & 100.0\% & $-1.053$ & 1.334 & 1.623 & No \\
\midrule
\multicolumn{7}{l}{\textit{PC1 loadings:} GOOGL $-0.472$; TSLA $-0.437$; NVDA $-0.423$; AMZN $-0.374$; META $-0.371$; AAPL $-0.358$; MSFT $-0.054$} \\
\bottomrule
\end{tabular}
\end{table}
\end{landscape}

\clearpage
\begin{table}
\noindent\caption{{\bf AI-era conclusions are stable across training windows, while dot-com conclusions vary more with sample choice.} The table reports representative GSADF statistics, critical values, DOLS coefficients, and coefficients of determination across alternative training-window end dates. The main takeaway is stable non-rejection in the AI era and greater sensitivity in the dot-com era. Dot-com windows run from January~1985 to December~1994 with evaluation over 1991--2005, AI windows run from January~2010 to December~2019 with evaluation over 2020--2025, and critical values come from 2{,}000 Monte Carlo replications.\label{tab:training_stability}}
\centering
\begin{tabular}{lcccccc}
\toprule
\textbf{Training End} & \textbf{GSADF} & \textbf{CV$_{0.10}$} & \textbf{$\hat{\beta}_{\text{IT}}$} & \textbf{$\hat{\beta}_{\text{TFP}}$} & \textbf{$\hat{\beta}_{\text{Pat}}$} & \textbf{$R^2$} \\
\midrule
\multicolumn{7}{l}{\textit{Panel A: Dot-Com Era}} \\
Jan 1985 & 2.734 & 1.745 & 0.485 & $-0.026$ & $-0.388$ & 0.842 \\
Jan 1987 & 1.849 & 1.745 & 0.430 & 0.001 & 0.497 & 0.789 \\
Jan 1988 & 1.453 & 1.745 & 0.417 & 0.004 & 0.624 & 0.805 \\
Jan 1989 & 1.864 & 1.745 & 0.445 & 0.000 & 0.566 & 0.814 \\
Dec 1990 & 1.952 & 1.745 & 0.456 & $-0.000$ & 0.513 & 0.835 \\
Jan 1992 & 1.792 & 1.745 & 0.436 & 0.002 & 0.492 & 0.842 \\
Jan 1993 & 1.600 & 1.745 & 0.418 & 0.004 & 0.450 & 0.859 \\
Dec 1994 & 1.730 & 1.745 & 0.459 & $-0.004$ & 1.038 & 0.882 \\
\midrule
\multicolumn{7}{l}{\textit{Panel B: AI Era}} \\
Jan 2010 & 0.769 & 1.176 & 2.515 & $-0.002$ & 0.013 & 0.909 \\
Jan 2012 & $-0.131$ & 1.176 & 1.512 & $-0.013$ & 0.688 & 0.872 \\
Jan 2014 & $-0.192$ & 1.176 & 1.315 & $-0.018$ & 0.959 & 0.864 \\
Jan 2016 & 0.108 & 1.176 & 1.366 & 0.000 & 1.095 & 0.849 \\
Jan 2018 & 0.085 & 1.176 & 0.682 & 0.002 & 1.189 & 0.871 \\
Jan 2019 & 0.459 & 1.176 & 1.961 & 0.010 & 1.184 & 0.888 \\
Dec 2019 & 0.377 & 1.176 & 2.141 & 0.012 & 1.157 & 0.907 \\
\bottomrule
\end{tabular}
\end{table}

\clearpage
\begin{table}
\noindent\caption{{\bf Parameter-instability tests reject coefficient constancy in both eras.} The table reports Hansen's $L_c$ statistics, $p$-values, critical values, and sample sizes for the baseline and extended cointegrating regressions. The key takeaway is that DOLS coefficients vary over time in both the dot-com and AI samples. The baseline specification uses log IT investment, TFP, and log patent grants, the extended specification adds the term spread, critical values come from \citet{Hansen1992}, and $^{**}$ and $^{***}$ denote significance at the 5 and 1 percent levels, respectively.\label{tab:hansen_test}}
\centering
\begin{tabular}{llccccc}
\toprule
\textbf{Era} & \textbf{Specification} & \textbf{$L_c$} & \textbf{$p$-Val.} & \textbf{CV$_{0.05}$} & \textbf{CV$_{0.01}$} & \textbf{$n$} \\
\midrule
Dot-Com & Baseline (technology-only) & 1.845 & 0.009$^{***}$ & 0.788 & 1.160 & 174 \\
Dot-Com & Extended (technology and spread) & 1.780 & 0.018$^{**}$ & 0.884 & 1.278 & 174 \\
AI Era & Baseline (technology-only) & 3.994 & ${<}0.001^{***}$ & 0.788 & 1.160 & 168 \\
AI Era & Extended (technology and spread) & 3.941 & ${<}0.001^{***}$ & 0.884 & 1.278 & 168 \\
\bottomrule
\end{tabular}
\end{table}

\clearpage
\begin{table}
\noindent\caption{{\bf All hump-shaped technology shocks produce severe over-rejection in the unadjusted simulations.} The table reports rejection rates at the 5 percent nominal level for detrended log prices across four technology-shock shapes and multiple values of the technology shock peak $\dmax$. The main takeaway is that Gaussian, triangular, Beta, and Gamma-like profiles all generate the same qualitative distortion pattern. Rejection rates are based on the GSADF statistic with $M = 200$ replications, $T = 300$, and $\rho = 0.95$, and the technology-adjusted counterfactual rejection rate is 0.045 for all shapes and all values of $\dmax$ and is therefore omitted from the table body.\label{tab:mc_shapes}}
\centering
\begin{tabular}{ccccc}
\toprule
\textbf{$\dmax$} & \textbf{Triangular} & \textbf{Gaussian} & \textbf{Beta(2,5)} & \textbf{Gamma-Like} \\
\midrule
0.00 & 0.045 & 0.045 & 0.045 & 0.045 \\
0.02 & 0.060 & 0.075 & 0.060 & 0.060 \\
0.04 & 0.095 & 0.145 & 0.125 & 0.130 \\
0.06 & 0.205 & 0.255 & 0.210 & 0.220 \\
0.08 & 0.340 & 0.400 & 0.355 & 0.365 \\
0.10 & 0.565 & 0.635 & 0.580 & 0.565 \\
0.12 & 0.745 & 0.805 & 0.745 & 0.765 \\
0.14 & 0.910 & 0.925 & 0.870 & 0.895 \\
0.16 & 0.955 & 0.975 & 0.955 & 0.955 \\
0.18 & 0.990 & 0.990 & 0.995 & 0.985 \\
0.20 & 1.000 & 1.000 & 0.995 & 1.000 \\
\bottomrule
\end{tabular}
\end{table}

\clearpage
\begin{table}
\noindent\caption{{\bf The empirical design combines technology proxies, financial fundamentals, and discount-rate controls.} The table lists each variable, its economic interpretation, and its data source for the January 1975 to December 2005 sample. It shows how the empirical analysis maps observable series into the counterfactual-fundamental framework. All nominal variables are deflated by CPI, and quarterly variables are interpolated to monthly frequency by cubic spline.\label{tab:data_variables}}
\centering
\begin{tabular}{lp{9.5cm}l}
\toprule
\textbf{Variable} & \textbf{Description} & \textbf{Source} \\
\midrule
\multicolumn{3}{l}{\textit{Target Variable}} \\
$p_t$ & Log NASDAQ Composite index (monthly close) & FRED \\
$d_t$ & Log aggregate dividends, NASDAQ-listed firms & CRSP/WRDS \\
\midrule
\multicolumn{3}{l}{\textit{Group 1: Technology Proxies}} \\
$\mathrm{RD}_t$ & Log real aggregate R\&D expenditure, NASDAQ firms & Compustat/WRDS \\
$\mathrm{PAT}_t$ & Log U.S. Patent and Trademark Office patent grants (IT sector) & USPTO \\
$\mathrm{ITINV}_t$ & Log real private fixed investment in IT equipment & BEA/FRED \\
$\mathrm{TFP}_t$ & Utilization-adjusted TFP (log level) & SF Fed \\
\midrule
\multicolumn{3}{l}{\textit{Group 2: Financial Fundamentals}} \\
$\mathrm{REV}_t$ & Log real aggregate revenue, NASDAQ firms & Compustat/WRDS \\
$\mathrm{BV}_t$ & Log real aggregate book equity, NASDAQ firms & Compustat/WRDS \\
$\mathrm{IP}_t$ & Log industrial production index & FRED \\
\midrule
\multicolumn{3}{l}{\textit{Group 3: Discount-Rate Controls}} \\
$\mathrm{GS10}_t$ & 10-year Treasury yield & FRED \\
$\mathrm{SPREAD}_t$ & BAA minus AAA corporate credit spread & FRED \\
$\mathrm{CPI}_t$ & Log CPI (for deflation) & FRED \\
\bottomrule
\end{tabular}
\end{table}

\clearpage
\begin{table}
\noindent\caption{{\bf The 2006--2019 training regression links NASDAQ valuations strongly to technology fundamentals.} The table reports DOLS coefficient estimates with one lead and one lag of differenced covariates, Newey--West standard errors, and the residual cointegration test. IT investment, patent grants, and TFP all load positively in this later sample. All nominal variables are deflated by CPI, quarterly variables are interpolated to monthly frequency by cubic spline, and the baseline specification uses technology-only covariates.\label{tab:coint_regression_ai}}
\centering
\begin{tabular}{lrrrr}
\toprule
\textbf{Variable} & \textbf{Coef.} & \textbf{Newey--West S.E.} & \textbf{$t$-Stat.} & \textbf{$p$-Val.} \\
\midrule
Intercept & $-20.7103$ & 1.9408 & $-10.671$ & $<0.001$ \\
Log IT investment & $2.1412$ & 0.4089 & $5.236$ & $<0.001$ \\
TFP (utilization-adjusted) & $0.0121$ & 0.0043 & $2.848$ & $0.005$ \\
Log patent grants & $1.1574$ & 0.1202 & $9.628$ & $<0.001$ \\
\midrule
$R^2$ & \multicolumn{4}{l}{0.9065 \quad (Adjusted $R^2 = 0.8992$)} \\
Engle--Granger ADF test statistic & \multicolumn{4}{l}{$-2.6699$ \quad (Rejects at 1\%)} \\
Train.\ obs. & \multicolumn{4}{l}{165 months (Jan 2006--Dec 2019)} \\
\bottomrule
\end{tabular}
\end{table}

\end{document}